\documentclass[
letterpaper, 
11pt, 
onecolumn, 
openany, 
]{memoir}

\usepackage[utf8]{inputenc} 
\usepackage[T1]{fontenc}    %
\usepackage[english]{babel} 
\usepackage[final]{microtype} 
\usepackage{xcolor}
\usepackage{enumitem}

\usepackage{amsmath,amssymb,mathtools} 

\usepackage{graphicx} 

\usepackage{natbib}

%
%
\usepackage{latexsym}
\usepackage{amsmath}
\usepackage{amstext}
\usepackage{amsfonts}
\usepackage{amsopn}

\usepackage{pifont}

\usepackage{ifthen}

%
%

\usepackage{theorem}

\newtheorem{theorem}                            {Theorem}[chapter]
\newtheorem{lemma}              [theorem]       {Lemma}
\newtheorem{fact}               [theorem]       {Fact}

\newtheorem{corollary}          [theorem]       {Corollary}
\newtheorem{proposition}               [theorem]       {Proposition}
{\theorembodyfont{\rmfamily} }
{\theorembodyfont{\rmfamily} \newtheorem{remark}                [theorem]
{Remark}}
{\theorembodyfont{\rmfamily} }
{\theorembodyfont{\rmfamily} }
{\theorembodyfont{\rmfamily} }
{\theorembodyfont{\rmfamily} }
{\theorembodyfont{\rmfamily} }
{\theorembodyfont{\rmfamily} }
{\theorembodyfont{\rmfamily} }
\theoremstyle{break}
{\theorembodyfont{\rmfamily} }

\newenvironment{proof}{\noindent {\em {Proof:}}}{$\blacksquare$\vskip
\belowdisplayskip}

\newenvironment{prevproof}[2]{\noindent {\em {Proof of
{#1}~\ref{#2}:}}}{$\blacksquare$\vskip \belowdisplayskip}

\newcommand{\prob}[2][]{\textbf{Pr}\ifthenelse{\not\equal{}{#1}}{_{#1}}{}\!\left[#2\right]}
\newcommand{\expect}[2][]{\textbf{E}\ifthenelse{\not\equal{}{#1}}{_{#1}}{}\!\left[#2\right]}

%
%

%
%

%
%

%
%

%
%
%

%
%

\DeclareMathOperator{\argmin}{argmin}


%
%

\newcommand{\ip}[2]{
{\langle {#1} , {#2} \rangle}
}
\newcommand{\n}[1]{
{\| {#1} \|}
}

\usepackage{titlesec}
\usepackage{titletoc}

\newcommand{\secmark}{}
\newcommand{\marktotoc}[1]{\renewcommand{\secmark}{#1}}

\titleformat{\section}
{\normalfont\bfseries\centering}
{\makebox[1.5em][l]{\llap{\secmark}\thesection}}
{0.4em}
{}
\titlecontents{section}[3.7em]{}{\contentslabel[\llap{\secmark}\thecontentslabel]{2.3em}}{\hspace*{-2.3em}}{\titlerule*{}\contentspage}

\newpagestyle{front}{%
  \headrule
  \sethead[\thepage][][\sectiontitle]{\sectiontitle}{}{\thepage}
  \setfoot[][][]{}{}{}
}

\newpagestyle{main}{%
  \headrule
  \sethead[\thepage][][\chaptertitle]{\ifthesection{\llap{\secmark}\thesection\quad\sectiontitle}{\sectiontitle}}{}{\thepage}
  \setfoot[][][]{}{}{}
}


\newpagestyle{back}{%
  \headrule
  \sethead[\thepage][][\subsectiontitle]{\sectiontitle}{}{\thepage}
  \setfoot[][][]{}{}{}
}

\newcommand\invisiblesection[1]{%
  \refstepcounter{section}%
  \addcontentsline{toc}{section}{#1}%
  \sectionmark{#1}}%




\setlrmarginsandblock{0.15\paperwidth}{*}{1} 
\setulmarginsandblock{0.2\paperwidth}{*}{1}  
\checkandfixthelayout


\maxsecnumdepth{subsection} 
\setsecnumdepth{subsection}

\makeatletter %
\makechapterstyle{standard}{
  \setlength{\beforechapskip}{0\baselineskip}
  \setlength{\midchapskip}{1\baselineskip}
  \setlength{\afterchapskip}{8\baselineskip}
  
  \renewcommand{\chapnamefont}{\centering\normalfont\Large}
  \renewcommand{\printchaptername}{\chapnamefont \@chapapp}

}
\makeatother

\addto\captionsenglish{}

\chapterstyle{chappell}

\setsecheadstyle{\normalfont\large\bfseries}
\setsubsecheadstyle{\normalfont\normalsize\bfseries}
\setparaheadstyle{\normalfont\normalsize\bfseries}
\setparaindent{0pt}\setafterparaskip{0pt}


\newsubfloat{figure}

\makeatletter                  
\renewcommand\fps@figure{htbp} %
\renewcommand\fps@table{htbp}  %
\makeatother                   %

\captiondelim{\space } 
\captionnamefont{\small\bfseries} 
\captiontitlefont{\small\normalfont} 

\changecaptionwidth          
\captionwidth{1\textwidth} %


\setlength{\absleftindent}{0.1\textwidth} 
\setlength{\absrightindent}{\absleftindent}


\makepagestyle{standard} 

\makeatletter                 
\makeevenfoot{standard}{}{}{} %
\makeoddfoot{standard}{}{}{}  %
\makeevenhead{standard}{\thepage}{}{\normalfont\qquad\small\leftmark}
\makeoddhead{standard}{\small\rightmark}{}{\thepage}
 \makeheadrule{standard}{\textwidth}{\normalrulethickness}
\makeatother                  %

\makeatletter
\makepsmarks{standard}{
\createmark{chapter}{both}{nonumber}{}{}
\createmark{section}{right}{shownumber}{}{ \quad }
\createplainmark{toc}{both}{\contentsname}
\createplainmark{lof}{both}{\listfigurename}
\createplainmark{lot}{both}{\listtablename}
\createplainmark{bib}{both}{\bibname}
\createplainmark{index}{both}{\indexname}
\createplainmark{glossary}{both}{\glossaryname}
}
\makeatother                               %

\makepagestyle{chap} 
\makeatletter
\makeevenfoot{chap}{}{\small\bfseries\thepage}{} 
\makeoddfoot{chap}{}{\small\bfseries\thepage}{}  %
\makeevenhead{chap}{}{}{}   %
\makeoddhead{chap}{}{}{}    %
\makeatother

\nouppercaseheads
\pagestyle{standard}               
\aliaspagestyle{chapter}{chap} %

\setsecheadstyle{\bfseries\centering}
\setsubsecheadstyle{\bfseries\centering}
\setsubsubsecheadstyle{\bfseries\centering}








\maxtocdepth{subsection} 
\settocdepth{subsection}

\AtEndDocument{\addtocontents{toc}{\par}} 




\usepackage{hyperref}   
\hypersetup{
pdfborder={0 0 0},      
pdfauthor={I am the Author} 
}
\usepackage{memhfixc}   %


\usepackage{xspace}
\newcommand{\var}[2][]{\textbf{Var}\ifthenelse{\not\equal{}{#1}}{_{#1}}{}\!\left[#2\right]}
\def\H{\mathcal{H}}
\def\P{\mathcal{P}}
\newcommand\eps{\epsilon}
\newcommand\sse{\subseteq}
\newcommand\sm{\setminus}
\newcommand\RR{\mathbb{R}}
\def\bfx{\mathbf{x}}
\def\bfy{\mathbf{y}}
\def\bfz{\mathbf{z}}
\def\bfr{\mathbf{r}}
\def\bfa{\mathbf{a}}
\def\bfb{\mathbf{b}}
\def\bfe{\mathbf{e}}
\def\bfp{\mathbf{p}}
\def\bfq{\mathbf{q}}
\def\bfv{\mathbf{v}}
\def\bfd{\mathbf{d}}
\def\bfaz{\mathbf{a(z)}}
\def\joint{(\bfx,\bfy)}
\newcommand\disj{\textsc{Disjointness}\xspace}
\def\inputs{(\bfx,\bfy)}
\def\inputsp{(\bfx',\bfy')}
\def\Inputs{X \times Y}
\def\pmsqrt{(1 \pm \tfrac{1}{\sqrt{n}})}
\newcommand\kldisj{\textsc{$(k,\ell)$-Disjointness}\xspace}
\newcommand\streaming{\textsc{Streaming}\xspace}
\renewcommand\index{\textsc{Index}\xspace}
\newcommand\gh{\textsc{Gap-Hamming}\xspace}
\newcommand\eq{\textsc{Equality}\xspace}
\newcommand\ght{\textsc{Gap-Hamming($t$)}\xspace}
\def\A{\mathbf{A}}
\def\Ax{\mathbf{Ax}}
\def\Ay{\mathbf{Ay}}
\def\Az{\mathbf{Az}}
\def\yh{\hat{\mathbf{y}}}
\def\wh{\hat{\mathbf{w}}}
\def\ones{\mathbf{1}}
\def\res{\text{res}}
\def\enc{\text{enc}}
\def\lnk{k \log \tfrac{n}{k}}
\def\lxn{\log |X| \log n}
\newcommand\aindex{\textsc{Augmented Index}\xspace}
\newcommand\gt{\textsc{Greater-Than}\xspace}
\newcommand\cis{\textsc{Clique-Independent Set}\xspace}
\def\zo{\{0,1\}}
\newcommand\nui{\textsc{$\neg$Unique-Intersection}\xspace}
\newcommand\noneq{\textsc{$\neg$Equality}\xspace}
\def\Cx{\mathbf{Cx}}
\def\Cv{\mathbf{Cv}}
\def\Dy{\mathbf{Dy}}
\def\xP{\bfx \in P}
\newcommand\fvp{\textsc{Face-Vertex($P$)}\xspace}
\newcommand\fvc{\textsc{Face-Vertex(\cor)}\xspace}
\def\ax{\mathbf{a^Tx}}
\def\av{\mathbf{a^Tv}}
\def\ab{\bfa,b}
\def\cor{\textsf{COR}\xspace}
\def\supp{\ensuremath{\text{supp}}}
\def\rect{\ensuremath{\text{rect}}}
\def\ndcc{nondeterministic communication complexity\xspace}
\def\C{\mathbf{C}}
\def\D{\mathbf{D}}
\def\zero{\mathbf{0}}
\def\axb{\ax \le b}

\newcommand\udisj{\textsc{Unique-Disjointness}\xspace}
\def\inputss{(\bfx'a,\bfy'b)}
\newcommand\epsgh{\textsc{$\eps$-Gap Hamming}\xspace}
\newcommand\qd{\textsc{Query-Database}\xspace}
\newcommand\epsdisj{\textsc{$(\tfrac{1}{\eps^2},n)$-Disjointness}\xspace}
\newcommand\edisj{\textsc{$(\tfrac{1}{\eps},n)$-Disjointness}\xspace}
\newcommand\mdisj{\textsc{Multi-Disjointness}\xspace}
\def\minputs{(\bfx_1,\ldots,\bfx_k)}
\newcommand{\wm}[1][k]{\textsc{Welfare-Maximization}($#1$)}

\newcommand{\action}{a}
\newcommand{\actions}{{\mathbf \action}}

\newcommand{\actioni}[1][i]{\action_{#1}}
\newcommand{\Action}{A}
\newcommand{\Actioni}[1][i]{\Action_{#1}}

\newcommand{\sigmasmi}{{\mathbf \sigma}_{-i}}

\newcommand{\V}{{\mathcal{V}}}

\def\nes{{Nash equilibria}\xspace}
\def\ne{{Nash equilibrium}\xspace}
\def\ene{{$\eps$-Nash equilibrium}\xspace}
\def\enes{{$\eps$-Nash equilibria}\xspace}
\def\poly{\ensuremath{\mbox{poly}}}
\def\xmi{\mathbf{x}_{-i}}
\def\xmj{\mathbf{x}_{-j}}
\def\FF{\mathbb{F}}
\def\ZZ{\mathbb{Z}}


\author{\LARGE Tim Roughgarden}
\title{\Huge Communication Complexity\\ (for Algorithm
  Designers)\vspace{.5in}}
\date{}

\begin{document}

\frontmatter
\addtocounter{page}{-1}

\thispagestyle{empty}

\maketitle

\newpage

\thispagestyle{empty}

\begin{vplace}
\begin{center}
\copyright~Tim          
  Roughgarden~2015
\end{center}
\end{vplace}

\newpage

\chapter{Preface}

The best algorithm designers prove both possibility and
impossibility results --- both upper and lower bounds. For example,
every serious computer scientist knows a collection of canonical
NP-complete problems and how to reduce them to other problems of
interest. Communication complexity offers a clean theory that is
extremely useful for proving lower bounds for lots of different
fundamental problems.  Many of the most significant algorithmic
consequences  of the theory follow from its most elementary
aspects.

This document collects the lecture notes from my course
``Communication Complexity (for Algorithm Designers),'' taught at
Stanford in the winter quarter of 2015.
The two primary goals of the course are: 
\begin{itemize}

\item [(1)] Learn several canonical problems in communication
  complexity that are
  useful for proving lower bounds for algorithms (\disj, \index, \gh,
  etc.).

\item [(2)] 
Learn how to reduce lower bounds for fundamental algorithmic
problems to communication complexity lower bounds. 

\end{itemize}
Along the way, we'll also: 
\begin{itemize}

\item [(3)] Get exposure to lots of cool computational models and some famous
results about them --- data streams and linear sketches, compressive
sensing, space-query time trade-offs in data structures,
sublinear-time algorithms, and the extension complexity of linear
programs. 

\item [(4)] Scratch the surface of techniques for proving
  communication complexity lower bounds (fooling sets, 
  corruption arguments, etc.).

\end{itemize}
Readers are assumed to be familiar with undergraduate-level
algorithms, as well as the statements of standard large deviation
inequalities (Markov, Chebyshev, and Chernoff-Hoeffding).

The course begins in Lectures~\ref{cha:data-stre-algor}--\ref{cha:lower-bounds-compr}
with the simple case of one-way communication protocols --- where only a
single message is sent --- and their relevance to algorithm design.  
Each of these lectures depends on the previous one.
Many of the ``greatest hits'' of communication complexity
applications, including lower bounds for small-space streaming
algorithms and compressive sensing, are already implied by lower bounds
for one-way protocols.  Reasoning about one-way protocols also provides
a gentle warm-up to the standard model of general two-party
communication protocols, which is the subject of
Lecture~\ref{cha:boot-camp-comm}.  
Lectures~\ref{cha:lower-bounds-extens}--\ref{cha:lower-bounds-prop}
translate communication complexity lower bounds into lower bounds in
several disparate problem domains:
the extension complexity
of polytopes, data structure design, algorithmic game theory, and
property testing.  Each of these final four lectures depends only on Lecture~\ref{cha:boot-camp-comm}.

The course Web page
(\url{http://theory.stanford.edu/~tim/w15/w15.html}) contains links
to relevant large deviation inequalities, links to many of the papers
cited in these notes, and a partial list of exercises.
Lecture notes and videos on several other topics in theoretical
computer science are available from my Stanford home page.

I am grateful to the Stanford students who took the course, for their
many excellent questions: 
Josh Alman,
Dylan Cable,
Brynmor Chapman,
Michael Kim,
Arjun Puranik,
Okke Schrijvers,
Nolan Skochdopole, 
Dan Stubbs,
Joshua Wang,
Huacheng Yu,
Lin Zhai, and several auditors whose names I've forgotten.
I am also indebted to Alex Andoni, Parikshit Gopalan, Ankur Moitra,
and C.\ Seshadhri for their advice on some of these lectures.
The writing of these notes was supported in part by NSF award
CCF-1215965.

I always appreciate suggestions and corrections from readers.

\vspace{.3in}

\begin{flushleft}
Tim Roughgarden\\
474 Gates Building, 353 Serra Mall\\
Stanford, CA 94305\\
Email: \texttt{tim@cs.stanford.edu}\\
WWW: \url{http://theory.stanford.edu/~tim/}
\end{flushleft}


\clearpage

\tableofcontents*
\clearpage

\listoffigures

\clearpage

\mainmatter

\chapter{Data Streams: Algorithms and Lower Bounds}
\label{cha:data-stre-algor}

\section{Preamble}

This class is mostly about impossibility results --- lower bounds on
what can be accomplished by algorithms.  However, our perspective will
be unapologetically that of an algorithm designer.\footnote{Already in
  this lecture, over half our discussion will be about algorithms and
  upper bounds!}
We'll learn lower bound technology on a ``need-to-know'' basis,
guided by fundamental algorithmic problems that we care about
(perhaps theoretically, perhaps practically).  That said, we will wind
up learning quite a bit of complexity theory --- specifically,
communication complexity --- along the way.  We hope this viewpoint
makes this course and these notes complementary to the numerous
excellent courses, books (\cite{J12} and~\cite{KN96}), and surveys
(e.g.~\cite{lee,lovasz,pitassi,razborov})
on communication complexity.\footnote{See~\cite{patrascu} for a
  series of four blog posts on data structures that share some spirit
  with our approach.}  The theme of communication complexity 
lower bounds also
provides a convenient excuse to take a guided tour of numerous models,
problems, and algorithms that are central to modern research in the
theory of algorithms but missing from many algorithms textbooks:
streaming algorithms, space-time trade-offs in data structures,
compressive sensing, sublinear algorithms, extended formulations for
linear programs, and more.

Why should an algorithm designer care about lower bounds?
The best mathematical researchers can work on an open problem
simultaneously from both sides.  Even if you have a strong prior
belief about whether a given mathematical statement is true or false,
failing to prove one direction usefully informs your efforts to prove
the other.  (And for most us, the prior belief is wrong surprisingly often!)
In algorithm design, working on both sides means striving
simultaneously for better algorithms and for better lower bounds.  For
example, a good undergraduate algorithms course teaches you both how
to design polynomial-time algorithms and how to prove that a problem
is $NP$-complete --- since when you encounter a new computational
problem in your research or workplace, both are distinct possibilities.
There are many other algorithmic problems where
the fundamental difficulty is not the amount of time required, but rather
concerns communication or information transmission.  The goal of this
course is to equip you with the basic tools of communication
complexity --- its canonical hard problems, the canonical reductions
from computation in various models to the design of low-communication
protocols, and a little bit about its lower bound techniques --- in
the service of becoming a better algorithm designer.

This lecture and the next study the {\em data stream} model of
computation.  There are some nice upper bounds in this model (see
Sections~\ref{s:f2} and~\ref{s:f0}), and the model also naturally
motivates a 
severe but useful restriction of the general communication complexity
setup (Section~\ref{s:1way}).  We'll cover many computational models
in the course, so whenever you get sick of one, don't worry, a new one
is coming up around the corner.

\section{The Data Stream Model}\label{s:streaming}

The data stream model is motivated by applications in which the input
is best thought of as a firehose --- packets arriving to a network
switch at a torrential rate, or data being generated by a telescope at
a rate of one exobyte per day.  In these applications, there's no
hope of storing all the data, but we'd still like to remember useful
summary statistics about what we've seen.

Alternatively, for example in database applications, it could be that
data is not thrown away but resides on a big, slow disk.  Rather than
incurring random access costs to the data, one would like to process
it sequentially once (or a few times), perhaps overnight, remembering
the salient features of the data in a limited main memory for
real-time use.  The daily transactions of Amazon or Walmart, for
example, could fall into this category.

Formally, suppose data elements belong to a known universe $U =
\{1,2,\ldots,n\}$.  The input is a stream $x_1,x_2,\ldots,x_m \in U$
of elements that arrive one-by-one.  Our algorithms will not assume
advance knowledge of $m$, while our lower bounds will hold even if $m$
is known a priori.  With space $\approx m \log_2 n$, it is possible to
store all of the data.  The central question in data stream algorithms
is: what is possible, and what is impossible, using a one-pass
algorithm and much less than $m \log n$ space?  Ideally, the space
usage should be sublinear or even logarithmic in $n$ and
$m$.  We're not going to worry about the computation time used
  by the algorithm (though our positive results in this
  model have low computational complexity, anyway).

Many of you will be familiar with a streaming or one-pass algorithm
from the following common interview question.  Suppose an array $A$,
with length $m$, is promised to have a {\em majority element} --- an
element that occurs strictly more than $m/2$ times.  A simple one-pass
algorithm, which maintains only the current candidate majority element
and a counter for it --- so $O(\log n + \log m)$ bits --- solves this
problem.  (If you haven't seen this algorithm before, see the
Exercises.)  This can be viewed as an exemplary small-space streaming
algorithm.\footnote{Interestingly, the promise that a majority element
  exists is crucial.  A consequence of the next lecture is that there is
  no small-space streaming algorithm to check whether or not a
  majority element exists!}

\section{Frequency Moments}

Next we introduce {\em the} canonical problems in the field of data
stream algorithms: computing the {\em frequency moments} of a stream.
These were studied in the paper that kickstarted
the field~\citep{AMS96}, and the data stream algorithms community has
been obsessed with them ever since.

Fix a data stream $x_1,x_2,\ldots,x_m \in U$.  For an element $j \in
U$, let $f_j \in \{0,1,2,\ldots,m\}$ denote the number of times that
$j$ occurs in the stream.  For a non-negative integer $k$, the {\em
  $k$th frequency moment} of the stream is
\begin{equation}\label{eq:fk}
F_k := \sum_{j \in U} f_j^k.
\end{equation}
Note that the bigger $k$ is, the more the sum in~\eqref{eq:fk} is
dominated by the largest frequencies.  It is therefore natural to
define
\[
F_{\infty} = \max_{j \in U} f_j
\]
as the largest frequency of any element of $U$.

Let's get some sense of these frequency moments.  $F_1$ is
boring --- since each data stream element contributes to exactly one
frequency $f_j$, $F_1 = \sum_{j \in U} f_j$ is simply the stream
length~$m$.  $F_0$ is the number of distinct elements in the stream
(we're interpreting $0^0 = 0$) --- it's easy to imagine wanting to
compute this quantity, for example a network switch might want to know
how many different TCP flows are currently going through it.
$F_{\infty}$ is the largest frequency, and again it's easy to imagine
wanting to know this 
--- 
for example to detect a denial-of-service attack at a network switch,
or identify the most popular product on Amazon yesterday.  Note that
computing $F_{\infty}$ is related to the aforementioned problem of
detecting a majority element.  Finally, $F_2 = \sum_{j \in U} f_2^2$ is
  sometimes called the ``skew'' of the data --- it is a measure of how
  non-uniform the data stream is.  In a database context, it arises
  naturally as the size of a ``self-join'' --- the table you get when
  you join a relation with itself on a particular attribute, with the
  $f_j$'s being the frequencies of various values of this attribute.
  Having estimates of self-join (and more generally join) sizes at the
  ready is useful for query optimization, for example.  We'll talk about
  $F_2$ extensively in the next section.\footnote{The problem of
    computing $F_2$ and the solution we give for it are also quite
    well connected to other important concepts, such as compressive
    sensing and dimensionality reduction.}

It is trivial to compute all of the frequency moments in $O(m \log n)$
space, just by storing the $x_i$'s, or in $O(n \log m)$, space, just
by computing and storing the $f_j$'s (a $\log m$-bit counter for each
of the $n$ universe elements).  Similarly, $F_1$ is trivial to compute
in $O(\log m)$ space (via a counter), and $F_0$ in $O(n)$ space (via a
bit vector).  Can we do better?

Intuitively, it might appear difficult to improve over the trivial
solution.  For $F_0$, for example, it seems like you have to know which
elements you've already seen (to avoid double-counting them), and
there's an exponential (in $n$) number of different possibilities
for what you might have seen in the past.  As we'll see, this is good
intuition for deterministic algorithms, and for (possibly randomized)
exact algorithms.  Thus, the following positive result is arguably
surprising, and very cool.\footnote{The Alon-Matias-Szegedy
  paper~\citep{AMS96} ignited the field of streaming algorithms as a
  hot area, and for this reason won the 2005 G\"odel Prize (a ``test
  of time''-type award in theoretical computer science).  The paper
  includes a number of other upper and lower bounds as well, some of
  which we'll 
  cover shortly.}

\begin{theorem}[\citealt{AMS96}]\label{t:ams}
Both $F_0$ and $F_2$ can be approximated, to within a $(1 \pm \eps)$
factor with probability at least $1-\delta$, in space
\begin{equation}\label{eq:ams}
O\left((\eps^{-2} (\log n + \log m) \log \tfrac{1}{\delta} \right).
\end{equation}
\end{theorem}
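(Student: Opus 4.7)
The plan is to prove both parts via a common template: build a low-space, one-pass randomized sketch whose output is a $(1 \pm \eps)$-approximation of the target with constant success probability, then take the median of $s = \Theta(\log(1/\delta))$ independent copies to drive the failure probability below $\delta$ via a Chernoff bound. With a base sketch of size $O(\eps^{-2}(\log n + \log m))$, the total space matches~\eqref{eq:ams}. Each base sketch is in turn built by a \emph{median-of-means} idea: average $t = \Theta(1/\eps^2)$ weak unbiased estimators and appeal to Chebyshev using a variance that is at most a constant times the squared expectation.

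For $F_2$ I would use the AMS linear sketch. Sample a $4$-wise independent family of signs $\sigma : U \to \{-1,+1\}$, representable in $O(\log n)$ bits, and maintain $Z = \sum_{j \in U} \sigma(j) f_j$ as a running sum as stream elements arrive; since $|Z| \le m$ this counter fits in $O(\log m)$ bits, and $Z$ is updated in $O(1)$ time per item. Pairwise independence of the $\sigma(j)$ gives $\expect{Z^2} = \sum_j f_j^2 = F_2$. Expanding
\[
\expect{Z^4} = \sum_{j_1,j_2,j_3,j_4} f_{j_1} f_{j_2} f_{j_3} f_{j_4} \, \expect{\sigma(j_1)\sigma(j_2)\sigma(j_3)\sigma(j_4)}
\]
and invoking $4$-wise independence, only multi-indices in which every value appears an even number of times survive, yielding $\expect{Z^4} = F_4 + 3(F_2^2 - F_4) = 3 F_2^2 - 2 F_4$ and thus $\var{Z^2} \le 2 F_2^2$. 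Plugging this into the median-of-means scheme produces a base sketch of size $O(\eps^{-2}(\log n + \log m))$, and the outer median then yields the claimed bound.

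For $F_0$ I would use a minimum-hashing sketch. Sample a pairwise-independent hash $h : U \to \{1,\ldots,M\}$ with $M = \mathrm{poly}(n)$, stored in $O(\log n)$ bits; as the stream passes, maintain the set of the $k = \Theta(1/\eps^2)$ smallest \emph{distinct} values of $h(x_i)$ seen, which costs $O(\eps^{-2}\log n)$ bits. If fewer than $k$ distinct hash values are ever observed, report the exact count; otherwise return $kM/v_k$, where $v_k$ is the $k$-th smallest retained value. A standard pairwise-independence second-moment bound on the number of elements hashing below any threshold shows this is a $(1 \pm \eps)$-approximation of $F_0$ with constant probability, and the outer median over $O(\log(1/\delta))$ copies boosts to failure probability $\delta$, again matching~\eqref{eq:ams}.

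The step I expect to be the main obstacle is the $4$-wise independent variance calculation for $F_2$: one has to enumerate which index patterns in the $O(n^4)$-term expansion of $\expect{Z^4}$ have nonzero expectation and see that $4$-wise (not merely pairwise) independence is exactly what kills the odd-parity terms while leaving a total that collapses to $3F_2^2 - 2 F_4$. Everything else --- the analysis of the minimum-hash $F_0$ estimator, the median-of-means boosting via Chebyshev and Chernoff, and the bit-accounting that sums to~\eqref{eq:ams} --- is routine combinatorics and probability once this moment bound is in place.
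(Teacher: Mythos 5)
Your proposal matches the paper's approach essentially verbatim. For $F_2$ you use the same AMS ``tug-of-war'' sketch, the same observation that pairwise independence yields the unbiasedness $\expect{Z^2}=F_2$ while $4$-wise independence suffices for the moment bound $\var{Z^2}\le 2F_2^2$ (your formula $\expect{Z^4}=3F_2^2-2F_4$ agrees exactly with the paper's $\expect{X^2}=\sum_j f_j^4 + 6\sum_{j<\ell}f_j^2f_\ell^2$), and the same median-of-means boosting to trade the $1/\delta$ factor for $\log(1/\delta)$. For $F_0$ the paper only sketches the single-minimum estimator and explicitly defers the variance-reduction refinement to \cite{B+02} and the exercises; your proposal to retain the $k=\Theta(1/\eps^2)$ smallest hash values and estimate via the $k$-th order statistic is precisely that deferred idea, so you have filled in the detail rather than taken a different route.
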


Theorem~\ref{t:ams} refers to two different algorithms, one for $F_0$
and one for $F_2$.  We cover the latter in detail below.
Section~\ref{s:f0} describes the high-order bit of the $F_0$
algorithm, which is a modification of the earlier algorithm
of~\cite{FM83}, with the details in the exercises.
Both algorithms are
randomized, and are approximately correct (to within $(1\pm\eps)$) most
of the time (except with probability $\delta$).  Also, the $\log m$
factor in~\eqref{eq:ams} is not needed for the $F_0$ algorithm, as
you might expect.  Some further optimization are possible; see
Section~\ref{ss:opt}.

The first reason to talk about Theorem~\ref{t:ams} is that it's a
great result in the field of algorithms --- if you only remember one
streaming algorithm, the one below might as well be the
one.\footnote{Either algorithm, for estimating $F_0$ or for $F_2$,
  could serve this purpose.  We present the $F_2$ estimation algorithm
  in detail, because the analysis is slightly slicker and more
  canonical.}
You should never tire of seeing clever algorithms that radically
outperform the ``obvious solution'' to a well-motivated problem.
And Theorem~\ref{t:ams} should serve as inspiration to any algorithm
designer --- even when at first blush there is no non-trivial algorithm for
problem in sight, the right clever insight can unlock a good
solution. 

On the other hand, there unfortunately are some important problems
out there with no non-trivial solutions.  And
it's important for the algorithm designer to know which ones they are
--- the less effort wasted on trying to find something that doesn't
exist, the more energy is available for formulating the motivating
problem in a more tractable way, weakening the desired guarantee,
restricting the problem instances,
and otherwise finding new ways to make algorithmic progress. 
A second interpretation of Theorem~\ref{t:ams} is that it illuminates
why such lower bounds can be so hard to prove.  A  lower bound is
responsible for showing that every algorithm, even fiendishly
clever ones like those employed for Theorem~\ref{t:ams}, cannot make
significant inroads on the problem.

\section{Estimating $F_2$: The Key Ideas}\label{s:f2}

In this section we give a nearly complete proof of Theorem~\ref{t:ams}
for the case of $F_2 = \sum_{j \in U} f^2_2$ estimation (a few details
are left to the Exercises).

\subsection{The Basic Estimator}\label{ss:basic}

The high-level idea is very natural, especially once you start
thinking about randomized algorithms.
\begin{enumerate}

\item Define a randomized unbiased estimator of $F_2$, which can be
  computed in one pass.  Small space seems to force a
  streaming algorithm to lose information, but maybe it's
  possible to produce a result that's correct ``on average.''

\item Aggregate many independent copies of the estimator, computed in
  parallel, to get an 
  estimate that is very accurate with high probability.

\end{enumerate}
This is very hand-wavy, of course --- does it have any hope of
working?  It's hard to answer that question without actually doing
some proper computations, so let's proceed to the estimator devised
in~\cite{AMS96}. 

\vspace{.1in}
\noindent
\textbf{The Basic Estimator:}\footnote{This is sometimes called the
  ``tug-of-war'' estimator.}
\begin{enumerate}

\item Let $h:U \rightarrow \{ \pm 1 \}$ be a function that associates
  each universe element with a random sign.  On a first reading, to
  focus on the main ideas, you
  should assume that $h$ is a totally random function.  Later we'll
  see that relatively lightweight hash functions are good enough
  (Section~\ref{ss:hash}), which enables a small-space implementation
  of the basic ideas.

\item Initialize $Z = 0$.

\item Every time a data stream element $j \in U$ appears,
add $h(j)$ to $Z$.  That is, increment $Z$ if $h(j) =
  +1$ and decrement $Z$ if $h(j) = -1$.\footnote{This is the ``tug of
    war,'' between elements $j$ with $h(j) = +1$ and those with $h(j)
    = -1$.}

\item Return the estimate $X = Z^2$.

\end{enumerate}

\begin{remark}\label{rem:remember}
A crucial point: since the function $h$ is fixed once and for all
before the data stream arrives, an element $j \in U$ is treated
consistently every time it shows up.  That is, $Z$ is either
incremented every time $j$ shows up or is decremented every time $j$
shows up.  In the end, element $j$ contributes $h(j)f_j$ to the final
value of $Z$.
\end{remark}

First we need to prove that the basic estimator is indeed unbiased.
\begin{lemma}\label{l:exp}
For every data stream,
\[
\expect[h]{X} = F_2.
\]
\end{lemma}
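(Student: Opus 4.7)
The plan is a direct second-moment computation. By Remark~\ref{rem:remember}, after processing the stream we have $Z = \sum_{j \in U} h(j) f_j$, so
\[
X = Z^2 = \sum_{j \in U} h(j)^2 f_j^2 \;+\; \sum_{j \neq k} h(j)\, h(k)\, f_j f_k.
\]
I would take the expectation over $h$ term by term using linearity.

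For the diagonal terms, since $h(j) \in \{\pm 1\}$ deterministically, $h(j)^2 = 1$, so these contribute exactly $\sum_{j} f_j^2 = F_2$. For the off-diagonal terms, I would argue that $\expect{h(j) h(k)} = 0$ for $j \neq k$: assuming $h$ is a totally random $\pm 1$ function (as the lemma statement allows on first reading), $h(j)$ and $h(k)$ are independent uniform Rademacher variables, and each has mean zero. Therefore every cross term vanishes in expectation and we are left with $\expect{X} = F_2$.

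There is no real obstacle here — the calculation is routine. The only subtlety worth flagging is that the argument uses just pairwise independence of the values $\{h(j)\}_{j \in U}$ together with $\expect{h(j)} = 0$ and $h(j)^2 = 1$; full independence is not needed. This is exactly the foothold that will later let us replace the ``totally random'' $h$ with a small-space hash family, as promised in Section~\ref{ss:hash}, without breaking unbiasedness.
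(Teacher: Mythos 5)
Your proof is correct and follows essentially the same route as the paper: expand $Z^2$ into diagonal and cross terms, use linearity of expectation, observe $h(j)^2 = 1$, and kill the cross terms via $\expect{h(j)h(k)} = 0$. You also make the same closing observation the paper makes right after the lemma, namely that only pairwise independence of $h$ (not full independence) is actually used, which is what later justifies the small-space hash family.
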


\begin{proof}
We have
\begin{eqnarray}\nonumber
\expect{X} & = & \expect{Z^2}\\ \nonumber
& = & \expect{ \left( \sum_{j \in U} h(j)f_j \right)^2 }\\ \nonumber
& = & \expect{ \sum_{j \in U} \underbrace{h(j)^2}_{=1}f_j^2 
+ 2 \sum_{j < \ell} h(j)h(\ell)f_jf_{\ell} }\\ \label{eq:exp1}
& = & \underbrace{\sum_{j \in U} f_j^2}_{=F_2}
+ 2 \sum_{j < \ell} f_jf_{\ell}
\underbrace{\expect[h]{h(j)h(\ell)}}_{=0}\\ \label{eq:exp2}
& = & F_2,
\end{eqnarray}
where in~\eqref{eq:exp1} we use linearity of expectation and the fact
that $h(j) \in \{ \pm 1\}$ for every $j$, and in~\eqref{eq:exp2} we
use the fact that, for every distinct $j,\ell$, all four sign patterns
for $(h(j),h(\ell))$ are equally likely.
\end{proof}
Note the reason for both incrementing and decrementing in the running
sum $Z$ --- it ensures that the ``cross terms'' $h(j)h(\ell)f_jf_{\ell}$
in our basic estimator $X = Z^2$ cancel out in expectation.
Also note, for future reference, that the only time we used the
assumption that $h$ is a totally random function was
in~\eqref{eq:exp2}, and we only used the property that all four sign patterns
for $(h(j),h(\ell))$ are equally likely --- that $h$ is ``pairwise
independent.''

Lemma~\ref{l:exp} is not enough for our goal, since 
the basic estimator $X$ is not guaranteed to be close to its
expectation with high probability.  A natural idea is to take
the average of many independent copies of the basic estimator.  That
is, we'll use $t$ independent functions $h_1,h_2,\ldots,h_t:U
\rightarrow \{ \pm 1 \}$ to define estimates $X_1,\ldots,X_t$.
On the arrival of a new data stream element, we update all
$t$ of the counters $Z_1,\ldots,Z_t$ appropriately, with some getting
incremented and others decremented.
Our final estimate will be
\[
Y = \frac{1}{t} \sum_{i=1}^t X_i.
\]
Since the $X_i$'s are unbiased estimators, so is $Y$ (i.e.,
$\expect[h_1,\ldots,h_t]{Y} = F_2$).  To see how quickly the variance
decreases with the number~$t$ of copies, note that
\[
\var{Y} = \var{\frac{1}{t} \sum_{i=1}^t X_i} 
= \frac{1}{t^2} \sum_{i=1}^t \var{X_i} = \frac{\var{X}}{t},
\]
where $X$ denotes a single copy of the basic estimator.  That is,
averaging reduces the variance by a factor equal to the number of
copies.  Unsurprisingly, the number of copies $t$
(and in the end, the space) that we need to get the performance
guarantee that we want is governed by the variance of the basic
estimator.  So there's really no choice but to roll up our sleeves and
compute it.

\begin{lemma}\label{l:var}
For every data stream,
\[
\var[h]{X} \le 2F_2^2.
\]
\end{lemma}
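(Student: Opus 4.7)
The plan is to compute $\var[h]{X}$ directly from the definition $\var{X} = \expect{X^2} - (\expect{X})^2$. Since Lemma~\ref{l:exp} already tells us that $\expect{X} = F_2$, the task reduces to estimating $\expect{X^2} = \expect{Z^4}$ and showing it is at most $3F_2^2$, so that the variance is bounded by $2F_2^2$.

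The main work will be expanding $Z^4 = \big(\sum_{j \in U} h(j) f_j\big)^4$ as a sum over ordered quadruples $(j_1,j_2,j_3,j_4) \in U^4$ of the monomials $h(j_1)h(j_2)h(j_3)h(j_4)\, f_{j_1}f_{j_2}f_{j_3}f_{j_4}$, and then taking expectations termwise by linearity. Since $h(j) \in \{\pm 1\}$, the quantity $\expect{h(j_1)h(j_2)h(j_3)h(j_4)}$ equals $1$ when every distinct index appearing in the quadruple appears an even number of times, and equals $0$ otherwise (here I use that $h$ is $4$-wise independent, which follows from the totally-random assumption; in a subsequent remark I will note that $4$-wise independence of the hash family suffices, paralleling the pairwise-independence observation after Lemma~\ref{l:exp}). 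So only two types of quadruples survive: those with all four indices equal, contributing $\sum_{j} f_j^4$, and those that split into two pairs of distinct equal indices, contributing $\binom{4}{2}\sum_{j \neq \ell} f_j^2 f_\ell^2 = 3\sum_{j \neq \ell} f_j^2 f_\ell^2$.

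Combining these, $\expect{Z^4} = \sum_{j} f_j^4 + 3\sum_{j \neq \ell} f_j^2 f_\ell^2$. Subtracting $(\expect{X})^2 = F_2^2 = \sum_j f_j^4 + \sum_{j \neq \ell} f_j^2 f_\ell^2$ yields $\var{X} = 2\sum_{j \neq \ell} f_j^2 f_\ell^2 \le 2\big(\sum_j f_j^2\big)^2 = 2F_2^2$, as required.

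The main obstacle is really just the bookkeeping in the fourth-moment expansion: making sure the combinatorial factor $3$ (the number of ways to partition four ordered positions into two unordered pairs) is correct, and then noticing that the ``all four equal'' term appears with coefficient $1$ in both $\expect{Z^4}$ and $F_2^2$ and therefore cancels, leaving only the cross terms. No inequality beyond dropping the $j = \ell$ restriction to pass from $\sum_{j \neq \ell} f_j^2 f_\ell^2$ to $F_2^2$ is needed.
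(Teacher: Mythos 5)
Your proof is correct and follows essentially the same approach as the paper: expand $\expect{Z^4}$ by $4$-wise independence, isolate the two surviving term types, and compare against $F_2^2$. (The paper bounds $\expect{X^2}\le 3F_2^2$ and then subtracts, while you first cancel the $\sum_j f_j^4$ term exactly and bound only the cross terms; this is a cosmetic difference. Note one slip: you write $\binom{4}{2}\sum_{j\neq\ell}=3\sum_{j\neq\ell}$, but $\binom{4}{2}=6$ is the count per \emph{unordered} pair $\{j,\ell\}$, so it should read $\binom{4}{2}\sum_{j<\ell}=3\sum_{j\neq\ell}$; your parenthetical remark that $3$ is the number of partitions of four positions into two unordered pairs shows you have the right count, so this is only a typo.)
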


Lemma~\ref{l:var} states the standard deviation of the basic estimator
is in the same ballpark as its expectation.  That might sound
ominous, but it's actually great news --- a constant (depending on
$\eps$ and $\delta$ only) number of copies is good enough for our
purposes.  Before proving Lemma~\ref{l:var}, let's see why.

\begin{corollary}\label{cor:f2}
For every data stream, with $t = \frac{2}{\eps^2\delta}$, 
\[
\prob[h_1,\ldots,h_t]{Y \in (1 \pm \eps) \cdot F_2} \ge 1-\delta.
\]
\end{corollary}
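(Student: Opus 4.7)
The plan is to deduce the corollary directly from Chebyshev's inequality, combined with the mean computation in Lemma~\ref{l:exp} and the variance bound in Lemma~\ref{l:var}. The key observation is that the final estimator $Y = \tfrac{1}{t}\sum_{i=1}^t X_i$ is an average of $t$ i.i.d.\ copies of the basic estimator $X$, and standard averaging both preserves the mean and shrinks the variance by a factor of $t$.

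First I would record $\expect{Y} = F_2$ by linearity of expectation and Lemma~\ref{l:exp}. Next, since the hash functions $h_1,\ldots,h_t$ are drawn independently, the $X_i$'s are mutually independent, so $\var{Y} = \tfrac{1}{t^2}\sum_{i=1}^t \var{X_i} = \var{X}/t$, which by Lemma~\ref{l:var} is at most $2F_2^2/t$. This is the computation already written just before the corollary statement, so it is essentially given.

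Then I would apply Chebyshev's inequality to $Y$ with deviation $\eps F_2$:
\[
\prob{\,|Y - F_2| \ge \eps F_2\,} \;\le\; \frac{\var{Y}}{\eps^2 F_2^2} \;\le\; \frac{2F_2^2/t}{\eps^2 F_2^2} \;=\; \frac{2}{\eps^2 t}.
\]
Plugging in $t = 2/(\eps^2\delta)$ makes the right-hand side equal to $\delta$, which yields $\prob{Y \in (1\pm \eps)\cdot F_2} \ge 1-\delta$, exactly the claim.

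There is no real obstacle here: the entire content has been front-loaded into Lemmas~\ref{l:exp} and~\ref{l:var}, and Chebyshev is plug-and-play once one has the mean and a variance bound of the form $\var{X} = O(\expect{X}^2)$. The only thing worth flagging is the independence of the $X_i$'s, which relies on the $h_i$'s being drawn independently of one another (whereas each individual $h_i$ only needed pairwise independence internally, as noted after Lemma~\ref{l:exp}); I would mention this explicitly so the variance-of-sum step is unambiguous.
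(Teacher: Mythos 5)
Your proof is correct and is essentially the paper's own argument: apply Chebyshev to $Y$ with deviation $\eps F_2$, use $\var{Y} = \var{X}/t \le 2F_2^2/t$ from Lemma~\ref{l:var}, and solve for $t$. Your explicit remark about the mutual independence of the $h_i$'s (and hence the $X_i$'s) is a sound clarification of a step the paper leaves to the variance computation preceding the corollary.
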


\begin{proof}
Recall that {\em Chebyshev's inequality} is the inequality you want 
when bounding the deviation of a random variable from
its mean parameterized by the number of standard deviations.
Formally, it states that for every random variable $Y$ with finite
first and second moments, and every $c > 0$,
\begin{equation}\label{eq:chebyshev}
\prob{|Y - \expect{Y}| > c} \le \frac{\var{Y}}{c^2}.
\end{equation}
Note that~\eqref{eq:chebyshev} is non-trivial (i.e., probability less
than 1) once $c$ exceeds $Y$'s standard deviation, and the probability
goes down quadratically with the number of standard deviations.  It's
a simple inequality to prove; see the separate notes on tail
inequalities for details.

We are interested in the case where $Y$ is
the average of $t$ basic estimators, with variance
as in Lemma~\ref{l:var}.
Since we want to guarantee a $(1 \pm \eps)$-approximation,
the deviation $c$ of interest to us is $\eps F_2$.  We also want the
right-hand side of~\eqref{eq:chebyshev} to be equal to $\delta$.
Using Lemma~\ref{l:var} and solving, we
get $t = 2/\eps^2\delta$.\footnote{The dependence on
  $\tfrac{1}{\delta}$ can be decreased to logarithmic; see
  Section~\ref{ss:opt}.} 
\end{proof}

We now stop procrastinating and prove Lemma~\ref{l:var}.

\vspace{.1in}
\noindent
\begin{prevproof}{Lemma}{l:var}
Recall that
\begin{equation}\label{eq:var1}
\var{X} = \expect{X^2} - \left( \underbrace{\expect{X}}_{=F_2 \text{
    by Lemma~\ref{l:exp}}} \right)^2.
\end{equation}
Zooming in on the $\expect{X^2}$ term, recall that $X$ is already defined
as the square of the running sum $Z$, so $X^2$ is $Z^4$.  Thus,
\begin{equation}\label{eq:var2}
\expect{X^2} = \expect{ \left( \sum_{j \in U} h(j)f_j \right)^4}.
\end{equation}

Expanding the right-hand side of~\eqref{eq:var2} yields $|U|^4$ terms,
each of the form\\
$h(j_1)h(j_2)h(j_3)h(j_4)f_{j_1}f_{j_2}f_{j_3}f_{j_4}$.  (Remember:
the $h$-values are random, the $f$-values are fixed.)
This might seem unwieldy.  But, just as in the
computation~\eqref{eq:exp2} in the proof of Lemma~\ref{l:exp}, most of
these are zero in expectation.  For example, suppose $j_1,j_2,j_3,j_4$
are distinct.  Condition on the $h$-values of the first three.  Since
$h(j_4)$ is equally likely to be +1 or -1, the conditional expected
value (averaged over the two cases) of the corresponding term is~0.
Since this holds for all possible values of $h(j_1),h(j_2),h(j_3)$,
the unconditional expectation of this term is also~0.  This same
argument applies to any term in which some element $j \in U$ appears
an odd number of times.  Thus, when the dust settles, we have
\begin{eqnarray}\label{eq:var3}
\expect{X^2} & = & \expect{ \sum_{j \in U}
  \underbrace{h(j)^4}_{=1} f_j^4
+ 6 \sum_{j < \ell} \underbrace{h(j)^2}_{=1}
\underbrace{h(\ell)^2}_{=1}f_j^2f_{\ell}^2}\\ \nonumber
& = & \sum_{j \in U} f^4_j + 6 \sum_{j < \ell} f_j^2f_{\ell}^2,
\end{eqnarray}
where the ``6'' appears because 
a given $h(j)^2h(\ell)^2f_j^2f_{\ell}^2$ term with $j < \ell$
arises in $\binom{4}{2} = 6$ different ways.

Expanding terms, we see that
\[
F_2^2 = \sum_{j \in U} f^4_j + 2 \sum_{j < \ell} f_j^2f_{\ell}^2
\]
and hence 
\[
\expect{X^2} \le 3F_2^2.
\]
Recalling~\eqref{eq:var1} proves that $\var{X} \le 2F_2^2$, as
claimed.
\end{prevproof}

Looking back over the proof of Lemma~\ref{l:var}, we again see
that we only used the fact that~$h$ is random in a limited way.
In~\eqref{eq:var3} we used that, for every set of four
distinct universe elements, their 16 possible sign patterns (under
$h$) were equally likely.  (This implies the required property that,
if $j$ appears in a term an odd number of times, then even after
conditioning on the $h$-values of all other universe elements in the
term, $h(j)$ is equally likely to be +1 or -1.)  That is, we only used
the ``4-wise independence'' of the function $h$.

\subsection[4-Wise                             
  Independent Hash Functions]{Small-Space Implementation via 4-Wise
  Independent Hash Functions}\label{ss:hash}

Let's make sure we're clear on the final algorithm.
\begin{enumerate}

\item Choose functions $h_1,\ldots,h_t:U \rightarrow \{ \pm 1 \}$,
  where $t = \tfrac{2}{\eps^2\delta}$.

\item Initialize $Z_i = 0$ for $i=1,2,\ldots,t$.

\item When a new data stream element $j \in U$ arrives, add $h_i(j)$
  to $Z_i$ for every $i=1,2,\ldots,t$.

\item Return the average of the $Z_i^2$'s.

\end{enumerate}
Last section proved that, if the $h_i$'s are chosen uniformly at
random from all functions, then the output of this algorithm lies in $(1
\pm \eps)F_2$ with probability at least $1-\delta$.

How much space is required to implement this algorithm?  There's
clearly a factor of $\tfrac{2}{\eps^2\delta}$, since we're effectively
running this many streaming algorithms in parallel, and each needs its
own scratch space.  How much space does each of these need?  To
maintain a counter $Z_i$, which always lies between $-m$ and $m$, we
only need $O(\log m)$ bits.  But it's easy to forget that we have to
also store the function $h_i$.  Recall from Remark~\ref{rem:remember}
the reason: we need to treat an element $j \in U$ consistently every
time it shows up in the data stream.  Thus, once we choose a sign $h_i(j)$
for $j$ we need to remember it forevermore.  Implemented naively, with
$h_i$ a totally random function, we would need to remember one bit for
each of the possibly $\Omega(n)$ elements that we've seen in the
past, which is a dealbreaker.

Fortunately, as we noted after the proofs of Lemmas~\ref{l:exp}
and~\ref{l:var}, our entire analysis has relied only on
4-wise independence --- that when we look at an arbitrary 4-tuple of
universe elements, the projection of $h$ on their 16 possible sign
patterns is uniform.  (Exercise: go back through this section in
detail and double-check this claim.)  And happily, there are small
families of simple hash functions that possess this property.

\begin{fact}\label{fact:4wise}
For every universe $U$ with $n = |U|$, there is a family $\H$ of
4-wise independent hash functions (from $U$ to $\{ \pm 1 \}$) with
size polynomial in~$n$.
\end{fact}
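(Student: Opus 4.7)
The plan is to exhibit an explicit construction based on random low-degree polynomials over a finite field, following the classical Wegman--Carter / Joffe approach to $k$-wise independence. The construction is tiny (a hash function is described by $O(\log n)$ bits), so the family has polynomial size in $n$, and the 4-wise independence boils down to a Vandermonde/interpolation fact.

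First I would pick a finite field $\F$ with $|\F| \ge n$ whose size is even, so that we can later use a perfectly balanced map to $\{\pm 1\}$. The cleanest choice is $\F = \F_{2^k}$ for the smallest $k$ with $2^k \ge n$; then $|\F| \le 2n$. Identify $U$ with a subset of $\F$ (say, via the first $n$ elements in some fixed ordering). Now define, for each 4-tuple $(a_0,a_1,a_2,a_3) \in \F^4$, the polynomial
\[
p_{a_0,a_1,a_2,a_3}(x) = a_0 + a_1 x + a_2 x^2 + a_3 x^3,
\]
and let $\phi : \F \to \{\pm 1\}$ be any exactly balanced function (for instance, $\phi(y) = +1$ iff the leading bit of $y$ is $0$ under the standard representation of $\F_{2^k}$). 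Define $h_{a_0,\ldots,a_3}(x) = \phi(p_{a_0,\ldots,a_3}(x))$ and
\[
\H = \bigl\{\, h_{a_0,a_1,a_2,a_3} \;:\; (a_0,a_1,a_2,a_3) \in \F^4\,\bigr\}.
\]
Then $|\H| = |\F|^4 \le (2n)^4$, which is polynomial in $n$, and each $h \in \H$ is specified by four field elements and so by $O(\log n)$ bits.

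The key step is to verify 4-wise independence. Fix any four distinct $x_1,x_2,x_3,x_4 \in U \subseteq \F$. The map
\[
(a_0,a_1,a_2,a_3) \;\longmapsto\; \bigl(p(x_1),p(x_2),p(x_3),p(x_4)\bigr)
\]
is linear over $\F$, with matrix equal to the $4 \times 4$ Vandermonde matrix in $x_1,\ldots,x_4$. Since the $x_i$ are distinct, this matrix is invertible, so the map is a bijection from $\F^4$ to $\F^4$. Choosing $(a_0,\ldots,a_3)$ uniformly from $\F^4$ therefore induces the uniform distribution on $(p(x_1),\ldots,p(x_4)) \in \F^4$. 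Composing coordinatewise with the balanced function $\phi$, and using $|\F|$ even so that $\phi^{-1}(+1)$ and $\phi^{-1}(-1)$ have exactly the same size, the induced distribution on $(h(x_1),\ldots,h(x_4)) \in \{\pm 1\}^4$ is uniform over all $16$ sign patterns. This is precisely 4-wise independence.

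The main potential obstacle is the balancedness requirement on $\phi$: if one chose $\F = \F_q$ for a prime $q$, then $q$ is odd and no exact balancing exists, so either one would absorb an $O(1/q)$ bias into the estimates of Lemmas~\ref{l:exp} and~\ref{l:var} (harmless since $q \ge n$), or one would work over $\F_{2^k}$ as above to avoid the issue entirely. With the $\F_{2^k}$ choice, the construction, the size bound $|\H| = \mathrm{poly}(n)$, and the $O(\log n)$-bit description complexity all follow immediately, completing the proof of the fact.
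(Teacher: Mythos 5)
Your proposal is correct and follows essentially the same route the paper sketches in the remark after Fact~\ref{fact:4wise}: take all cubic polynomials over a finite field of size $\Theta(n)$, post-compose with a two-valued map $\phi$, and deduce 4-wise independence from the invertibility of the $4\times 4$ Vandermonde matrix on any four distinct evaluation points. The one place you are more careful than the paper's sketch is the choice of $\phi$: the paper's ``output $+1$ iff the value is even'' is exactly balanced only when $|\F|$ is even, and your choice of $\F = \F_{2^k}$ together with a perfectly balanced $\phi$ makes the 16 sign patterns genuinely uniform, rather than uniform up to an $O(1/|\F|)$ bias (which, as you note, would also have been harmless in Lemmas~\ref{l:exp} and~\ref{l:var}).
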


Fact~\ref{fact:4wise} and our previous observations imply that, to
enjoy an approximation of $(1 \pm \eps)$ with probability at least
$1-\delta$, our streaming algorithm can get away with choosing the
functions $h_1,\ldots,h_t$ uniformly and independently from~$\H$.

If you've never seen a construction of a $k$-wise independent family of
hash functions with $k > 2$, check out the Exercises for details.  The
main message is to realize that you shouldn't be scared of them --- heavy
machinery is not required.  For example, it suffices to map the
elements of $U$ injectively into a suitable finite field (of size
roughly $|U|$), and then
let $\H$ be the set of all cubic polynomials (with all operations
occurring
in this field).  The final output is then +1 if the polynomial's output
(viewed as an integer) is even, and -1 otherwise.
Such a hash function is easily specified with $O(\log n)$ bits (just
list its four coefficients), and can also be evaluated in $O(\log n)$
space (which is not hard, but we won't say more about it here).

Putting it all together, we get a space bound of
\begin{equation}\label{eq:delta}
O\left(\underbrace{\frac{1}{\eps^2\delta}}_{\text{\# of copies}} \cdot
\left( \underbrace{\log m}_{\text{counter}} + \underbrace{\log
  n}_{\text{hash function}} \right) \right).
\end{equation}

\subsection{Further Optimizations}\label{ss:opt}

The bound in~\eqref{eq:delta} is worse than that claimed in
Theorem~\ref{t:ams}, with a dependence on $\tfrac{1}{\delta}$ instead
of $\log \tfrac{1}{\delta}$.  A simple trick yields the better bound.
In Section~\ref{ss:basic}, we averaged $t$ copies of the basic
estimator to accomplish two conceptually different things: to improve
the approximation ratio to $(1 \pm \eps)$, for which we suffered an
$\tfrac{1}{\eps^2}$ factor, and to improve the success probability to
$1-\delta$, for which we suffered an additional $\tfrac{1}{\delta}$.
It is more efficient to implement these improvements one
at a time, rather than in one shot.  The smarter implementation first
uses $\approx \tfrac{1}{\eps^{2}}$ copies to obtain an approximation of
$(1\pm\eps)$ with probably at least $\tfrac{2}{3}$ (say).  To boost the
success probability from $\tfrac{2}{3}$ to $1-\delta$, it is enough to
run $\approx \log \tfrac{1}{\delta}$ different copies of this solution, and
then take the {\em median} of their $\approx \log
\tfrac{1}{\delta}$ different estimates.  Since we expect at least
two-thirds of these estimates to lie in the interval $(1 \pm
\eps)F_2$, it is very likely that the median of them lies in this
interval.  The details are easily made precise using a Chernoff bound
argument; see the Exercises for details.

Second, believe it or not, the $\log m$ term in Theorem~\ref{t:ams}
can be improved to $\log \log m$.  The reason is that we don't need to
count the $Z_i$'s exactly, only approximately and with high
probability.  This relaxed counting problem can be solved using {\em
  Morris's algorithm}, which can be implemented as a streaming
algorithm that uses $O(\eps^{-2} \log \log m \log
\tfrac{1}{\delta})$ space.  See the Exercises for further details.

\section{Estimating $F_0$: The High-Order Bit}\label{s:f0}

Recall that $F_0$ denotes the number of distinct elements present in a
data stream.
The high-level idea of the $F_0$ estimator 
is the same as
for the $F_2$ estimator above.  
The steps are to define a basic estimator that is essentially
unbiased, and then reduce the variance by taking averages and
medians.  (Really making this work takes an additional idea;
see~\cite{B+02} and the Exercises.)

The basic estimator for $F_0$ --- originally
from~\cite{FM83} and developed further in~\cite{AMS96} and~\cite{B+02}
--- is as simple as but quite different
from that used to estimate $F_2$.  The first step is to choose a
random permutation $h$ of $U$.\footnote{Or rather, a simple hash
  function with the salient properties of a random permutation.}
Then, just remember (using $O(\log n)$ space) the minimum value of
$h(x)$ that ever shows up in the data stream.

Why use the minimum?  One intuition comes from the suggestive match
between the idempotence of $F_0$ and of the minimum --- adding 
duplicate copies of an element to the input has no effect on
the answer.  

Given the minimum $h(x)$-value in the data stream, how do we
extract from it an estimate of $F_0$, the number of distinct elements?
For intuition, think about the uniform distribution on $[0,1]$ (Figure~\ref{f:uniform}).
Obviously, the expected value of one draw from the distribution
is~$\tfrac{1}{2}$.  For two i.i.d.\ draws, simple calculations show
that 
the expected minimum and maximum are $\tfrac{1}{3}$ and
$\tfrac{2}{3}$, respectively.  More generally, the expected order
statistics of $k$ i.i.d.\ draws split the interval into $k+1$ segments of
equal length.  In particular, the expected minimum is
$\tfrac{1}{k+1}$.  In other words, if you are told that some number of
i.i.d.\ draws were taken from the uniform distribution on $[0,1]$ and
the smallest draw was $c$, you might guess that there were roughly
$1/c$ draws. 

\begin{figure}
\centering
\includegraphics[width=.6\textwidth]{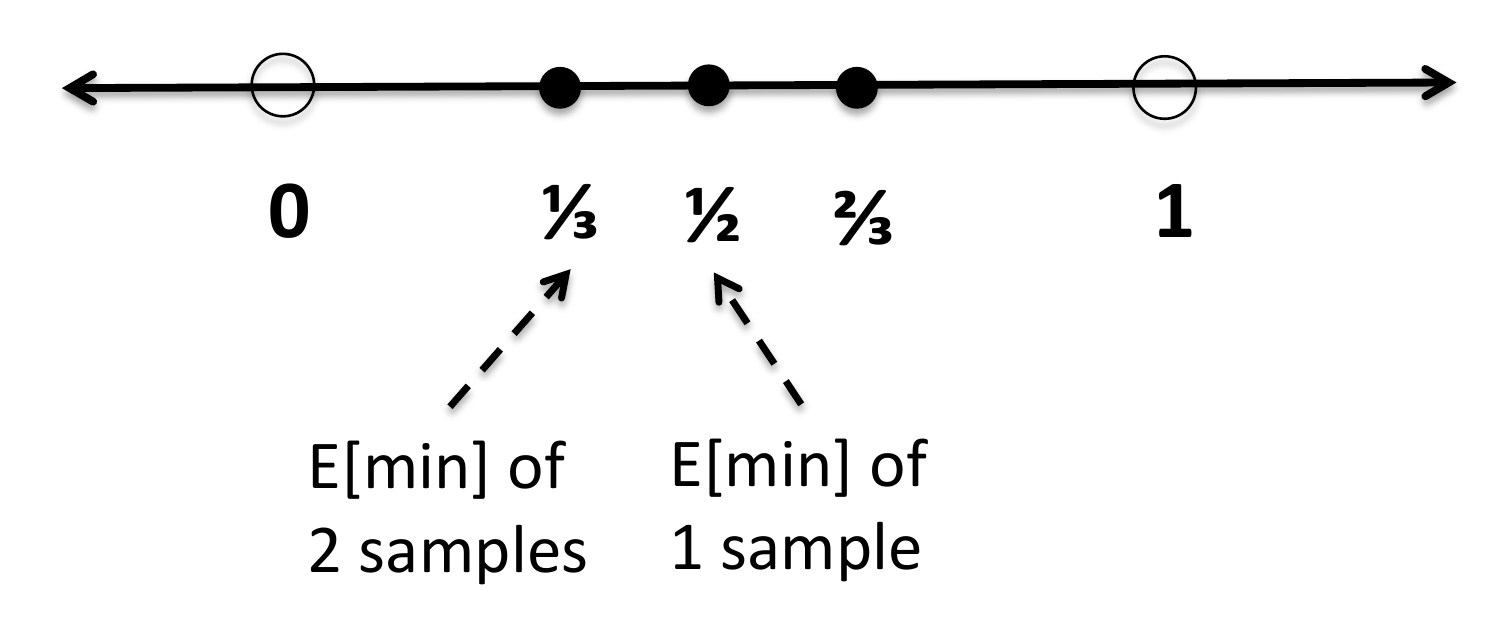}
\caption[Expected order statistics of i.i.d.\ samples from the
   uniform distribution]{The expected order statistics of i.i.d.\ samples from the
uniform distribution on the unit interval are spaced out evenly over
the interval.}\label{f:uniform}
\end{figure}

Translating this idea to our basic $F_0$ estimator, if there are $k$
distinct elements in a data stream $x_1,\ldots,x_m$, then there are
$k$ different (random) 
hash values $h(x_i)$, and we expect the smallest of these to be roughly
$|U|/k$.  This leads to the basic estimator $X = |U|/(\min_{i=1}^m
h(x_i))$.  Using averaging and an extra idea to reduce the variance,
and medians to boost the success probability, this leads to the bound
claimed in Theorem~\ref{t:ams} (without the $\log m$ term).  The
details are outlined in the exercises.

\begin{remark}
You'd be right to ask if this high-level approach to probabilistic and
approximate estimation applies to all of the frequency moments,
not just $F_0$ and $F_2$.  The approach can indeed be used to
estimate $F_k$ for all $k$.  However, the
variance of the basic estimators will be different for different frequency
moments.  For example, as $k$ grows, the statistic $F_k$ becomes quite
sensitive to small changes in the input, resulting in
probabilistic estimators with large variance, necessitating a large
number of independent copies to obtain a good approximation.
More generally, no frequency moment $F_k$ with
$k \not\in \{0,1,2\}$ can be computed using only a logarithmic amount
of space (more details to come).
\end{remark}

\section{Can We Do Better?}\label{s:better}

Theorem~\ref{t:ams} is a fantastic result.  But a good algorithm
designer is never satisfied, and always wants more.  So what are the
weaknesses of the upper bounds that we've proved so far?
\begin{enumerate}

\item We only have interesting positive results for $F_0$ and $F_2$
  (and maybe $F_1$, if you want to count that).  What about for $k > 2$
  and $k = \infty$?

\item Our $F_0$ and $F_2$ algorithms only approximate the
  corresponding frequency moment.  Can we compute it exactly, possibly
  using a randomized algorithm?

\item Our $F_0$ and $F_2$ algorithms are randomized, and with
  probability $\delta$ fail to
  provide a good approximation.  (Also, they
  are Monte Carlo algorithms, in that we can't tell when they fail.)
  Can we compute $F_0$ or $F_2$ deterministically, at least
  approximately?

\item Our $F_0$ and $F_2$ algorithms use $\Omega(\log n)$ space.  Can
  we reduce the dependency of the space on the universe
  size?\footnote{This might seem like a long shot, but you never know.
    Recall our comment about reducing the space dependency on $m$ from
    $O(\log m)$ to $O(\log \log m)$ via probabilistic approximate
    counters.}

\item Our $F_0$ and $F_2$ algorithms use $\Omega(\eps^{-2})$ space.
Can the dependence on $\eps^{-1}$ be improved? 
The $\eps^{-2}$ dependence can be painful in practice, where you might want
  to take $\eps = .01$, resulting in an extra factor of 10,000 in the
  space bound.  An improvement to $\approx \eps^{-1}$, for example,
  would be really nice.

\end{enumerate}
Unfortunately, we can't do better --- the rest of this lecture and the
next (and the exercises) explain why {\em all} of these compromises
are necessary for positive results.  This is kind of amazing, and it's
also pretty amazing that we can prove it without
overly heavy machinery.  Try to think of other basic computational
problems where, in a couple hours of lecture and with minimal
background, you can explain complete proofs of both a non-trivial
upper bound and an unconditional (independent of $P$ vs.\ $NP$, etc.)
matching lower bound.\footnote{OK, comparison-based sorting, sure.
  And we'll see a couple others later in this course.  But I don't know of
  that many examples!}

\section{One-Way Communication Complexity}\label{s:1way}

We next describe a simple and clean formalism that is extremely useful
for proving lower bounds on the space required by streaming algorithms
to perform various tasks.
The model will be a quite restricted form of the general communication
model that we study later --- and this is good for us, because the
restriction makes it easier to prove lower bounds.  Happily,
even lower bounds for this restricted model typically translate to
lower bounds for streaming algorithms.

In general, communication complexity is a sweet spot.  It is a general
enough concept to capture the essential hardness lurking in many
different models of computation, as we'll see throughout the course.
At the same time, it is possible to prove numerous different lower
bounds in the model --- some of these require a lot of work, but many
of the most important ones are easier that you might have
guessed.  These lower bounds are ``unconditional'' --- they are simply
true, and don't depend on any unproven (if widely believed)
conjectures like $P \neq NP$.  Finally, because the 
model is so clean and free of distractions, it naturally guides one
toward the development of the ``right'' mathematical techniques needed
for proving new lower bounds. 

In (two-party) communication complexity, there are two parties, Alice
and Bob.  Alice has an input $\bfx \in \{0,1\}^a$, Bob an input $\bfy
\in \{0,1\}^b$.  Neither one has any idea what the other's input is.
Alice and Bob want to cooperate to compute a Boolean function (i.e., a
predicate) $f:\{0,1\}^a
\times \{0,1\}^b \rightarrow \{0,1\}$ that is defined on their joint input.
We'll discuss several examples of such functions shortly.

For this lecture and the next, we can get away with restricting
attention to {\em one-way communication protocols}.  All that is
allowed here is the following:
\begin{enumerate}

\item Alice sends Bob a message $\bfz$, which is a function of her
  input $\bfx$
  only.

\item Bob declares the output $f(\bfx,\bfy)$, as a function of Alice's
  message $\bfz$ and his input $\bfy$ only.

\end{enumerate}
Since we're interested in both deterministic and randomized
algorithms, we'll discuss both deterministic and randomized one-way
communication protocols.

The {\em one-way communication complexity} of a Boolean function $f$
is the minimum worst-case number of bits used by any one-way protocol
that correctly decides the function.  (Or for randomized protocols,
that correctly decides it with probability at least $2/3$.)
That is, it is
\[
\min_{\P} \max_{\bfx,\bfy} \{ \text{length (in bits) of Alice's message $\bfz$
when Alice's input is $\bfx$} \},
\]
where the minimum ranges over all correct protocols.

Note that the one-way communication complexity of a function $f$ is
always at most $a$, since Alice can just send her entire $a$-bit input
$\bfx$ to Bob, who can then certainly correctly compute~$f$.  The
question is to understand when one can do better.  This will depend on
the specific function~$f$.  For example, if $f$ is the
parity function (i.e., decide whether the total number of 1s in
$(\bfx,\bfy)$ is even or odd), then the one-way communication complexity
of $f$ is~1 (Alice just sends the parity of $\bfx$ to Bob, who's then in
a position to figure out the parity of $(\bfx,\bfy)$).

\section{Connection to Streaming Algorithms}

If you care about streaming algorithms, then you should also care
about one-way communication complexity.  Why?  Because of the
unreasonable effectiveness of the following two-step plan to proving
lower bounds on the space usage of streaming algorithms.
\begin{enumerate}

\item Small-space streaming algorithms imply low-communication one-way
  protocols.

\item The latter don't exist.

\end{enumerate}
Both steps of this plan are quite doable in many cases.  

Does the connection in the first step above surprise you?  It's the
best kind of statement --- genius and near-trivial at the same time.
We'll be formal about it shortly, but it's worth remembering a cartoon
meta-version of the connection, illustrated in
Figure~\ref{f:streamlb}.  
Consider a problem that can be solved using a streaming algorithm $S$
that uses space only $s$.  How can we use it to define a
low-communication protocol?  The idea is for Alice and Bob to treat
their inputs as a stream $(\bfx,\bfy)$, with all of $\bfx$ arriving
before all of $\bfy$.  Alice can feed $\bfx$ into $S$ without
communicating with Bob (she knows $\bfx$ and $S$).  After processing
$\bfx$, $S$'s state is completely summarized by the $s$ bits in its
memory.  Alice sends these bits to Bob.  Bob can then simply restart
the streaming algorithm $S$ seeded with this initial memory, and then
feed his input $\bfy$ to the algorithm.  The algorithm $S$ winds up
computing some function of $(\bfx,\bfy)$, and Alice only needs to
communicate $s$ bits to Bob to make it happen.
The communication cost of the induced protocol is
exactly the same as the space used by the streaming algorithm.

\begin{figure}
\centering
\includegraphics[width=.8\textwidth]{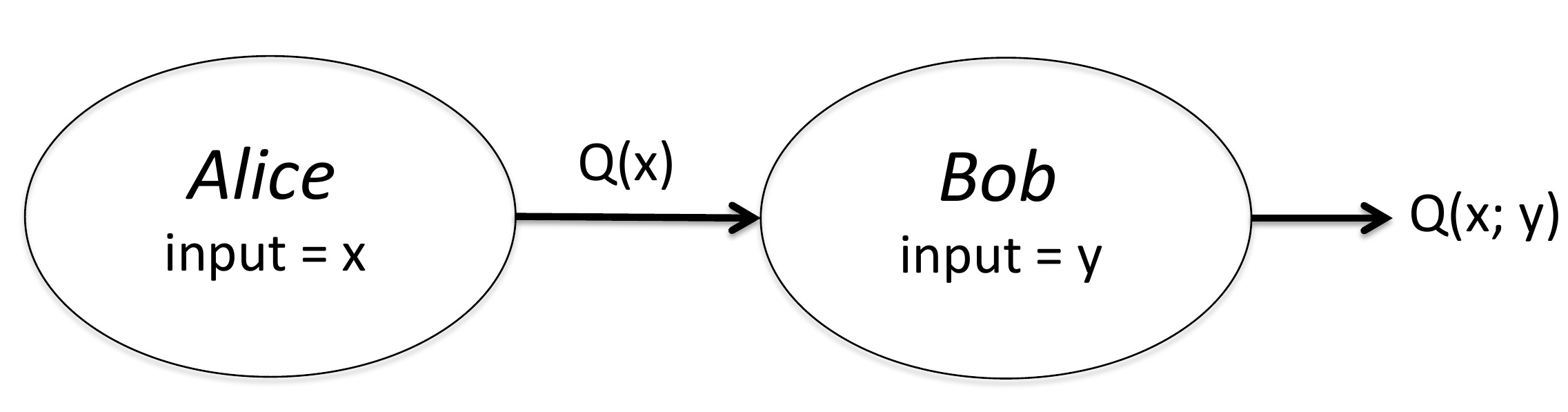}
\caption[A small-space streaming algorithm induces a
low-communication one-way protocol]{Why a small-space streaming algorithm induces a
low-communication one-way protocol.  Alice runs the streaming
algorithm on her input, sends the memory contents of the algorithm to
Bob, and Bob resumes the execution of the algorithm where Alice left
off on his input.}\label{f:streamlb}
\end{figure}

\section{The Disjointness Problem}

To execute the two-step plan above to prove lower bounds
on the space usage of streaming algorithms, we need to come up with a
Boolean function that (i) can be reduced to a streaming problem that
we care about and (ii) does not admit a low-communication one-way
protocol.

\subsection{Disjointness Is Hard for One-Way Communication}

If you only remember one problem that is hard for communication protocols,
it should be the \disj problem.  This is the
canonical hard problem in communication complexity, analogous
to satisfiability (SAT) in the theory of
$NP$-completeness.  We'll see more reductions from the
\disj problem than from any other in this course.

In an instance of \disj, both Alice and Bob hold $n$-bit
vectors $\bfx$ and $\bfy$.  We interpret these as characteristic
vectors of two subsets of the universe $\{1,2,\ldots,n\}$, with the
subsets corresponding to the ``1'' coordinates.  We then define
the Boolean function $DISJ$ in the obvious way, with $DISJ(\bfx,\bfy)
= 0$ if there is an index $i \in \{1,2,\ldots,n\}$ with $x_i = y_i =
1$, and $DISJ(\bfx,\bfy) = 1$ otherwise.

To warm up, let's start with an easy result.
\begin{proposition}\label{prop:disj_det}
Every deterministic one-way communication protocol that computes the
function $DISJ$ uses at least $n$ bits of communication in the worst
case.
\end{proposition}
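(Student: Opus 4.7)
The plan is a standard ``distinct inputs force distinct messages'' argument, which in this one-way setting is essentially a fooling-set argument applied to Alice's side.

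First I would fix any correct deterministic one-way protocol $\P$ for $DISJ$, and let $M:\{0,1\}^n \to \{0,1\}^*$ be the function describing Alice's message as a function of her input. The main claim will be that $M$ is injective on $\{0,1\}^n$; since any injection into $\{0,1\}^*$ must have an image that contains some string of length at least $\log_2 2^n = n$, this immediately gives the worst-case lower bound of $n$ bits.

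To prove injectivity, I would suppose for contradiction that Alice sends the same message $\bfz$ on two distinct inputs $\bfx \neq \bfx'$. Because the protocol is one-way and deterministic, Bob's declared output is a function of $(\bfz,\bfy)$ only, so for every $\bfy \in \{0,1\}^n$ the protocol returns the same value on $(\bfx,\bfy)$ and on $(\bfx',\bfy)$. I would then exhibit a single $\bfy$ that breaks this: pick an index $i$ where $\bfx$ and $\bfx'$ differ, say $x_i = 1$ and $x'_i = 0$ (otherwise swap their roles), and take $\bfy = \bfe_i$, the indicator vector of $\{i\}$. Then $x_i = y_i = 1$ witnesses $DISJ(\bfx,\bfy) = 0$, while $\bfy$ has support $\{i\}$ and $x'_i = 0$, so $\bfx'$ and $\bfy$ are disjoint and $DISJ(\bfx',\bfy) = 1$. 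This contradicts correctness of $\P$.

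There is no real obstacle here; the only thing to be slightly careful about is the worst-case (not average-case) nature of the bound, which is why I pass from injectivity to ``some message has length at least $n$'' rather than to an average message length. I would also note that the argument uses the full power of determinism and the one-way restriction: Bob's behavior must be determined by $(\bfz,\bfy)$ alone, with no further interaction that could disambiguate $\bfx$ from $\bfx'$. The same argument will be the template for the randomized and two-way extensions later in the course, but those require more sophisticated ideas (distributional arguments, information complexity) to replace the simple pigeonhole step.
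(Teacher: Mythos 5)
Your proposal is correct and follows essentially the same pigeonhole argument as the paper: two distinct inputs forced to the same message are separated by the basis-vector input $\bfy = \bfe_i$ at a coordinate where they differ, contradicting correctness. Your accounting (injectivity of $M$, then observing that $2^n$ distinct strings in $\{0,1\}^*$ cannot all have length at most $n-1$) is in fact a bit more careful than the paper's informal ``$2^{n-1}$ distinct messages'' count, but the substance is identical.
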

That is, the trivial protocol is optimal among deterministic
protocols.\footnote{We'll see later that the 
  communication complexity remains $n$ even when we allow general
  communication protocols.}  The proof follows pretty
straightforwardly from the Pigeonhole Principle --- you might want to
think it through before reading the proof below.

Formally, consider any one-way communication protocol where Alice always
sends at most $n-1$ bits.  This means that, ranging over the $2^n$
possible inputs $\bfx$ that Alice might have, she only sends $2^{n-1}$
distinct messages.  By the Pigeonhole Principle, there are distinct
messages $\bfx^1$ and 
$\bfx^2$ where Alice sends the same message $\bfz$ to Bob.  Poor Bob,
then, has to compute $DISJ(\bfx,\bfy)$ knowing only $\bfz$ and $\bfy$
and not knowing $\bfx$ --- $\bfx$ could be $\bfx^1$, or it could be
$\bfx^2$.  Letting $i$ denote an index in which $\bfx^1$ and $\bfx^2$
differ (there must be one), Bob is really in trouble if his input
$\bfy$ happens to be the $i$th basis vector (all zeroes except
$y_i=1$).  For then, whatever Bob says upon receiving the message
$\bfz$, he will be wrong for exactly one of the cases $\bfx = \bfx^1$
or $\bfx = \bfx^2$.  We conclude that the protocol is not correct.

A stronger, and more useful, lower bound also holds.
\begin{theorem}\label{t:disj}
Every randomized one-way protocol\footnote{There are different flavors of
  randomized protocols, such as ``public-coin'' vs.\ ``private-coin''
  versions.  These distinctions won't matter until next lecture, and
  we elaborate on them then.}
that, for every input $(\bfx,\bfy)$, correctly decides the
function $DISJ$ with probability at least $\tfrac{2}{3}$,
uses $\Omega(n)$ communication in the worst case.
\end{theorem}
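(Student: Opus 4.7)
My plan is to prove the lower bound by reducing from the \index problem, in which Alice holds $\bfx \in \zo^n$, Bob holds an index $i \in \{1,\ldots,n\}$, and Bob must output $x_i$. The reduction is immediate: on an \index instance $(\bfx,i)$, Alice and Bob run the given \disj protocol with Alice's input $\bfx$ and Bob's input equal to $\bfe_i$, the $i$th standard basis vector. Then $DISJ(\bfx,\bfe_i)=1$ exactly when $x_i=0$, so Bob recovers $x_i$ from the \disj output without any extra communication. Any $c(n)$-bit randomized one-way protocol for \disj thus yields one for \index of the same cost, so it suffices to prove that \index requires $\Omega(n)$ communication in the randomized one-way model.

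For \index, I would pass to the distributional setting: a randomized protocol that errs with probability at most $1/3$ on every input also errs with probability at most $1/3$ when the input is drawn uniformly from $\zo^n \times \{1,\ldots,n\}$, so by averaging over the coins there is a deterministic one-way protocol with the same distributional guarantee. Fix such a deterministic protocol in which Alice sends messages of length at most $c$, let $M(\bfx)$ denote her message, and let $B(\bfz,i)\in\zo$ denote Bob's output. For each possible message $\bfz$, assemble Bob's $n$ responses into a decoded string $\hat{\bfx}(\bfz)\in\zo^n$ whose $i$th coordinate is $B(\bfz,i)$; there are at most $2^c$ such decoded strings overall.

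The distributional correctness condition rewrites as $\expect[\bfx]{d_H(\bfx,\hat{\bfx}(M(\bfx)))}\le n/3$, where $d_H$ denotes Hamming distance. By Markov's inequality, a constant fraction of $\bfx\in\zo^n$ satisfy $d_H(\bfx,\hat{\bfx}(M(\bfx)))\le 0.4n$, so each such $\bfx$ lies within Hamming distance $0.4n$ of one of the $\le 2^c$ decoded strings. Standard Hamming-volume bounds give $|\{\bfx : d_H(\bfx,\bfy)\le 0.4n\}|\le 2^{H(0.4)\,n}$ for any fixed $\bfy$, where $H$ is binary entropy and $H(0.4)<1$. Comparing counts yields $2^c\cdot 2^{H(0.4)n}=\Omega(2^n)$, which forces $c\ge (1-H(0.4))\,n - O(1)=\Omega(n)$.

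The step I expect to be most delicate is choosing the hard input distribution and verifying the covering calculation: the uniform distribution on \index inputs is hard precisely because the high-dimensional Hamming cube cannot be covered by a subexponential number of balls of radius bounded away from $n/2$. Routing the argument through \index, rather than attacking \disj head-on with a distribution supported on disjoint and near-disjoint sets, lets us use this clean geometric counting in place of a more intricate fooling-set or corruption argument tailored to \disj's sparse structure.
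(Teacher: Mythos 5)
Your proposal is correct and follows essentially the same route as the paper: reduce \index to \disj by setting Bob's input to a standard basis vector, pass to the distributional setting with the uniform distribution via Yao's Lemma, interpret each of Bob's $\le 2^c$ answer vectors as the center of a Hamming ball, and argue that so few balls of radius bounded away from $n/2$ cannot cover a constant fraction of the cube. The only cosmetic difference is that you calibrate the Markov threshold and ball radius (using $0.4n$ and the entropy bound $2^{H(0.4)n}$) so the counting handles error $1/3$ directly, whereas the paper works with radius $n/4$, concludes error at least $1/8$, and then invokes error amplification to bridge to the $1/3$ threshold.
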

The probability in Theorem~\ref{t:disj} is over the coin flips
performed by the protocol (there is no randomness in the input, which
is ``worst-case'').
There's nothing special about the constant $\tfrac{2}{3}$ in the statement
of Theorem~\ref{t:disj} --- it can be replaced by any constant strictly
larger than $\tfrac{1}{2}$.

Theorem~\ref{t:disj} is certainly harder to prove than
Proposition~\ref{prop:disj_det}, but it's not too bad --- we'll kick
off next lecture with a proof.\footnote{A more difficult and important
  result 
  is that the communication complexity of \disj remains $\Omega(n)$ even if we
  allow arbitrary (not necessarily one-way) randomized protocols.
We'll use this stronger result several times later in the course.  We'll
also briefly discuss the proof in Section~\ref{ss:disj_proof} of
Lecture~\ref{cha:boot-camp-comm}.}
For the rest of this lecture, we'll take
Theorem~\ref{t:disj} on faith and use it to derive lower bounds on the
space needed by streaming algorithms.

\subsection[Space Lower Bound for $F_{\infty}$]{Space Lower Bound for $F_{\infty}$ (even with
  Randomization and Approximation)}

Recall from Section~\ref{s:better} that the first weakness of
Theorem~\ref{t:ams} is that it applies only to $F_0$ and $F_2$ (and
$F_1$ is easy).  The next result shows that, assuming
Theorem~\ref{t:disj}, there is no sublinear-space algorithm for
computing $F_{\infty}$, even probabilistically and approximately.
\begin{theorem}[\citealt{AMS96}]\label{t:infty}
Every randomized streaming algorithm that, for every data stream of
length $m$, 
computes $F_{\infty}$ to within a $(1 \pm .2)$ factor with
probability at least $2/3$ uses space $\Omega(\min\{m,n\})$.
\end{theorem}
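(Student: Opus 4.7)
The plan is to reduce \disj to the streaming problem, following the template of Figure~\ref{f:streamlb}. Suppose $S$ is a streaming algorithm that, on streams of length $m$ over universe $[n]$, computes $F_\infty$ up to a $(1\pm 0.2)$ factor with probability $\ge 2/3$ using $s$ bits of memory. I will build a randomized one-way protocol for \disj on inputs of length $n' = \Theta(\min\{m,n\})$ that uses exactly $s$ bits of communication. Theorem~\ref{t:disj} then forces $s = \Omega(n') = \Omega(\min\{m,n\})$.

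Here is the reduction. Alice and Bob agree on an enlarged universe $[n'+2]$, where the two extra symbols $a^\star$ and $b^\star$ serve as ``padding.'' Given her input $\bfx \in \{0,1\}^{n'}$, Alice first inserts the padding symbol $a^\star$ into the stream, then feeds into $S$ the elements $\{\,i : x_i = 1\,\}$ in any order. She then sends the current $s$-bit memory state of $S$ to Bob. Bob resumes the execution of $S$ on his portion of the stream: the padding symbol $b^\star$ followed by the elements $\{\,i : y_i = 1\,\}$. The resulting stream has frequencies $f_{a^\star} = f_{b^\star} = 1$ and, for $i \in [n']$, $f_i \in \{0,1,2\}$, with $f_i = 2$ iff $i \in \bfx \cap \bfy$. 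Consequently $F_\infty = 2$ when $\bfx,\bfy$ intersect and $F_\infty = 1$ when they are disjoint. A $(1\pm 0.2)$-approximation of $F_\infty$ falls in $[1.6, 2.4]$ in the first case and in $[0.8, 1.2]$ in the second; these intervals are disjoint, so Bob can read off $DISJ(\bfx,\bfy)$ from the approximation with success probability $\ge 2/3$.

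Finally, I calibrate $n'$ to the two regimes. In the reduction the total stream length is $|\bfx|_1 + |\bfy|_1 + 2 \le 2n' + 2$ and the universe size is $n' + 2$. Taking $n' = \min\{\lfloor m/2\rfloor - 1,\, n-2\}$ ensures both constraints are met, so $S$ is applicable and the induced protocol solves \disj on $n'$-bit inputs using $s$ bits. Theorem~\ref{t:disj} then yields $s \ge \Omega(n') = \Omega(\min\{m,n\})$, which is the desired conclusion. The padding symbols can be dropped if one is happy to assume a promise that the inputs are nonzero; I included them mainly to handle the degenerate case $\bfx = \bfy = \mathbf{0}$ cleanly. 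The only conceptually nontrivial point is the gap argument in the previous paragraph, but once the padding guarantees $F_\infty \ge 1$, the factor-$2$ gap between the ``disjoint'' and ``intersecting'' cases comfortably exceeds the $(1\pm 0.2)$ approximation error, so no further work is needed.
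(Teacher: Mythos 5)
Your proposal is correct and takes essentially the same approach as the paper's: the reduction from \disj via the streaming-to-protocol simulation of Figure~\ref{f:streamlb}, with Alice feeding the indices where $x_i=1$ into $S$, sending the memory state, and Bob feeding the indices where $y_i=1$ and thresholding the $(1\pm 0.2)$-approximation. The only differences are cosmetic: you add two padding symbols to force $F_\infty \ge 1$ in the degenerate all-zeros case, whereas the paper handles that case implicitly by observing $F_\infty \le 1$ on disjoint inputs and thresholding at $4/3$; and you make the calibration $n' = \Theta(\min\{m,n\})$ explicit, where the paper states the point informally.
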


Theorem~\ref{t:infty} rules out, in a strong sense, extending our
upper bounds for $F_0,F_1,F_2$ to all $F_k$. 
Thus, the different frequency moments vary widely in tractability
in the streaming model.\footnote{For finite
  $k$ strictly 
larger than 2, the optimal space of a randomized $(1 \pm
\eps)$-approximate streaming algorithm turns out to be roughly
$\Theta(n^{1-1/2k})$~\citep{B+02b,CKS03,IW05}.  See the exercises for a bit more
about these problems.}

\vspace{.1in}
\noindent
\begin{prevproof}{Theorem}{t:infty}
The proof simply implements the cartoon in Figure~\ref{f:streamlb},
with the problems of computing $F_{\infty}$ (in the streaming model)
and \disj (in the one-way communication model).  In more
detail, let $S$ be a space-$s$ streaming algorithm that for every
data stream, with probability at least $2/3$, outputs an estimate in
$(1 \pm .2)F_{\infty}$.  Now consider the following one-way
communication protocol $\P$ for solving the \disj problem
(given an input $\joint$):
\begin{enumerate}

\item Alice feeds into $S$ the indices~$i$ for which $x_i = 1$; the
  order can be arbitrary.  Since Alice knows $S$ and $\bfx$, this step
  requires no communication.

\item Alice sends $S$'s current memory state $\sigma$ to Bob.  Since $S$ uses
  space $s$, this can be communicated using $s$ bits.

\item Bob resumes the streaming algorithm $S$ with the memory state
  $\sigma$, and feeds into $S$ the indices $i$ for which $y_i = 1$ (in
  arbitrary order).

\item Bob declares ``disjoint'' if and only if $S$'s final answer is
  at most $4/3$.

\end{enumerate}

To analyze this reduction, observe that the frequency of an index $i
\in \{1,2,\ldots,n\}$ in the data stream induced by $\joint$ is 0 if
$x_i=y_i=0$, 1 if exactly one of $x_i,y_i$ is 1, and 2 if
$x_i=y_i=2$.  Thus, $F_{\infty}$ of this data stream is 2 if $\joint$
is a ``no'' instance of Disjointness, and is at most 1 otherwise.
By assumption, for every ``yes'' (respectively, ``no'') input
$\joint$, with probability at least $2/3$ the algorithm $S$ outputs an
estimate that is at most 1.2 (respectively, at least 2/1.2); in this
case, the protocol $\P$ correctly decides the input $\joint$.  
Since $\P$ is a one-way protocol using $s$ bits of communication, 
Theorem~\ref{t:disj} implies that $s = \Omega(n)$.  Since the data
stream length $m$ is $n$, this reduction also rules out $o(m)$-space
streaming algorithms for the problem.
\end{prevproof}

\begin{remark}[The Heavy Hitters Problem]
Theorem~\ref{t:infty} implies that computing the maximum frequency is a
hard problem in the streaming model, at least for worst-case inputs.
As mentioned, the problem is nevertheless practically quite important,
so it's important to make progress on it despite this lower bound.
For example, consider the
following relaxed version, known as the ``heavy
hitters'' problem: for a parameter $k$, if there are any elements with
frequency bigger than $m/k$, then find one or all such elements.  When
$k$ is constant, there are good solutions to this problem: the
exercises outline the ``Mishra-Gries'' algorithm, and the ``Count-Min
Sketch'' and its variants also give good
solutions~\citep{CCF04,CM05}.\footnote{This does not contradict
  Theorem~\ref{t:disj} --- in the 
hard instances of $F_{\infty}$ produced by that proof, all frequencies
are in $\{0,1,2\}$ and hence there are no heavy hitters.}
The heavy hitters problem captures many of the applications that
motivated the problem of computing $F_{\infty}$.
\end{remark}

\subsection[Space Lower Bound for Exact Computation of
$F_0$ and $F_2$]{Space Lower Bound for Randomized Exact Computation of
$F_0$ and $F_2$}

In Section~\ref{s:better} we also criticized our
positive results for $F_0$ and $F_2$ --- to achieve them, we had to
make two compromises, allowing approximation and a non-zero
chance of failure.  The reduction in the proof of
Theorem~\ref{t:infty} also implies that merely allowing randomization
is not enough.

\begin{theorem}[\citealt{AMS96}]\label{t:rand}
For every non-negative integer $k \neq 1$,
every randomized streaming algorithm that, for every data stream,
computes $F_{\infty}$ exactly with probability at least $2/3$ uses
space $\Omega(\min\{n,m\})$. 
\end{theorem}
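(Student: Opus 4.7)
The plan is to adapt the reduction from the proof of Theorem~\ref{t:infty} almost verbatim, exploiting the fact that an \emph{exact} value of $F_k$ carries enough information to recover the size of $\bfx \cap \bfy$, and hence to decide \disj. (I am reading the statement as being about $F_k$ for arbitrary $k \neq 1$; the $F_\infty$ case is already handled by Theorem~\ref{t:infty}.)

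The key arithmetic observation is the following. Given Alice's input $\bfx$ and Bob's input $\bfy$, consider the data stream obtained by first listing the indices $i$ with $x_i=1$ and then those with $y_i=1$. Let $a=|\bfx|$, $b=|\bfy|$ (Hamming weights), and $t=|\{i:x_i=y_i=1\}|$. Then every index in the symmetric difference of the two supports has frequency $1$, while every index in the intersection has frequency $2$, so
\[
F_k \;=\; (a+b-2t)\cdot 1^k + t\cdot 2^k \;=\; a+b+t(2^k-2).
\]
Crucially, for every nonnegative integer $k\neq 1$ we have $2^k-2\neq 0$ (it equals $-1$ for $k=0$ and is positive for $k\geq 2$), so from the exact value of $F_k$, together with $a$ and $b$, one recovers $t$ exactly, and in particular decides whether $t=0$ (disjoint) or $t\geq 1$ (intersecting).

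With this in hand, the reduction mirrors the one in Theorem~\ref{t:infty}. Suppose $S$ is a randomized streaming algorithm that computes $F_k$ exactly with probability at least $2/3$ using space $s$. Define a one-way protocol $\P$ for \disj on input $\joint$: Alice feeds the indices where $x_i=1$ into $S$, then sends Bob both $S$'s memory state (using $s$ bits) and her Hamming weight $a=|\bfx|$ (using $O(\log n)$ bits). Bob resumes $S$ on the indices where $y_i=1$, reads off $S$'s exact output $F_k$, computes $b=|\bfy|$ locally, solves for $t=(F_k-a-b)/(2^k-2)$, and declares ``disjoint'' iff $t=0$. With probability at least $2/3$ the value of $F_k$ is correct, so $\P$ decides \disj with probability at least $2/3$ using $s+O(\log n)$ bits of communication. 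Theorem~\ref{t:disj} then forces $s+O(\log n)=\Omega(n)$, i.e.\ $s=\Omega(n)$, and since the induced stream has length $m=a+b\leq 2n$ the bound also gives $s=\Omega(m)$ on appropriately padded instances, yielding the claimed $\Omega(\min\{n,m\})$ bound.

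I do not expect a real obstacle here: the argument is essentially the same cartoon as Figure~\ref{f:streamlb}, and the only thing to check carefully is that the linear relation between $F_k$ and $t$ is invertible, which is exactly where the hypothesis $k\neq 1$ enters (when $k=1$ we get $F_1=a+b$ regardless of $t$, and the streaming problem really does become trivial, matching the exclusion in the statement). One small bookkeeping point is that the $O(\log n)$ bits used to transmit $a$ are additive and thus absorbed into the $\Omega(n)$ lower bound; alternatively, Alice can avoid this by having $S$ also track $F_0$ or a counter in parallel, but the direct approach above is cleaner.
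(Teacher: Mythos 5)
Your proposal is correct and matches the paper's (sketched) argument exactly: the paper itself only outlines the proof, noting that it is ``almost identical'' to the $F_\infty$ reduction and adding in a footnote precisely your observation that Alice must also transmit $|\bfx|$ so that Bob can invert the relation between $F_k$ and the intersection size $t$, which is possible because $2^k - 2 \neq 0$ for every nonnegative integer $k \neq 1$. You also correctly identified the typo in the theorem statement ($F_\infty$ should read $F_k$), so no issues.
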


The proof of Theorem~\ref{t:rand} is almost identical to that of
Theorem~\ref{t:disj}.  The reason 
the proof of Theorem~\ref{t:disj} rules out
approximation (even with randomization) is because 
$F_{\infty}$ differs by a factor of~2 in the two different cases
(``yes'' and ``no'' instances of \disj).  
For finite~$k$, the correct value of $F_k$ will be at least slightly
different in the two 
cases, which is enough to rule out a randomized algorithm
that is exact at least two-thirds of the time.\footnote{Actually,
  this is not quite true (why?).  But if Bob also knows the number of
  1's in Alice's input (which Alice can communicate in $\log_2 n$
  bits, a drop in the bucket), then the exact computation of $F_k$
  allows Bob to distinguish ``yes'' and ``no'' inputs of \disj
  (for any $k \neq 1$).}

The upshot of Theorem~\ref{t:rand} is that, even for $F_0$ and $F_2$,
approximation is essential to obtain a sublinear-space algorithm.
It turns out that randomization is also essential --- every
deterministic streaming algorithm that always outputs a $(1 \pm
\eps)$-estimate of $F_k$ (for any $k \neq 1$) uses linear
space~\cite{AMS96}.  
The argument is not overly difficult --- 
see the Exercises for the details.

\section{Looking Backward and Forward}

Assuming that randomized one-way communication protocols require
$\Omega(n)$ communication to solve the \disj problem
(Theorem~\ref{t:disj}), we proved 
that some frequency moments (in particular, $F_{\infty}$) cannot be
computed in sublinear space, even allowing randomization and
approximation.
Also, both randomization and approximation are
essential for our sublinear-space streaming algorithms for $F_0$ and
$F_2$. 

The next action items are:
\begin{enumerate}

\item Prove Theorem~\ref{t:disj}.

\item Revisit the five compromises we made to obtain positive results
  (Section~\ref{s:better}).  We've showed senses in which the first
  three compromises are necessary.  Next lecture we'll see why the
  last two are needed, as well.

\end{enumerate}

\chapter[Lower Bounds for One-Way Communication]{Lower Bounds for One-Way Communication: Disjointness, Index,                    
and Gap-Hamming}
\label{cha:lower-bounds-one}

\section{The Story So Far}

Recall from last lecture the simple but useful model of one-way
communication complexity.  Alice has an input $\bfx \in \{0,1\}^a$,
Bob has an input $\bfy \in \{0,1\}^b$, and the goal is to compute a
Boolean function $f:\{0,1\}^a \times \{0,1\}^b \rightarrow \{0,1\}$ of the joint input
$\inputs$.  The players communicate as in Figure~\ref{f:1way}: Alice
sends a message $\bfz$ to Bob as a function of $\bfx$ only (she
doesn't know Bob's input $\bfy$), and Bob has to decide the
function~$f$ knowing only $\bfz$ and $\bfy$ (he doesn't know Alice's
input $\bfx$).  The {\em one-way communication complexity} of $f$ is
the smallest number of bits communicated (in the worst case over
$\inputs$) of any protocol that computes $f$.  We'll sometimes consider
deterministic protocols but are interested mostly in randomized
protocols, which we'll define more formally shortly.

\begin{figure}
\centering
\includegraphics[width=.8\textwidth]{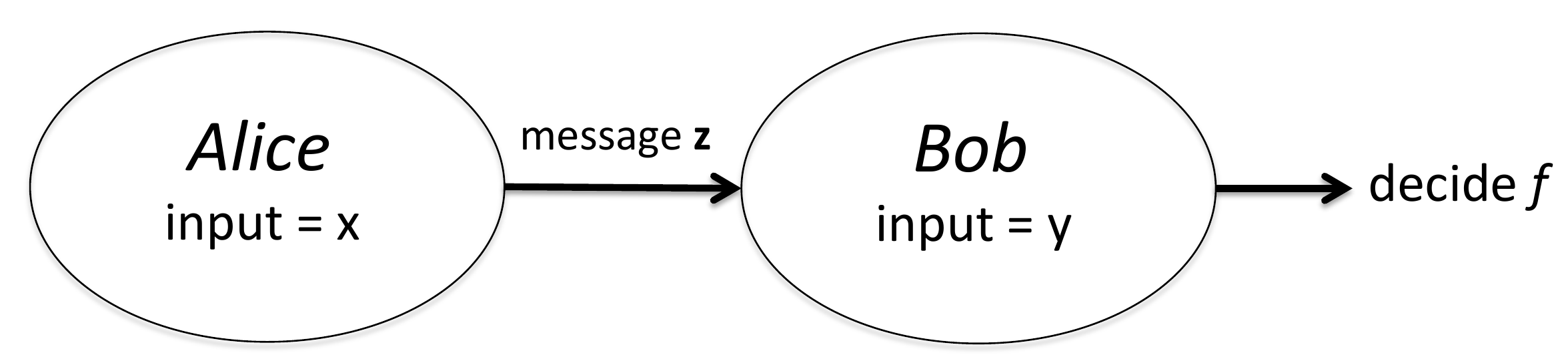}
\caption[A one-way communication protocol]{A one-way communication protocol.  Alice sends a message to
  Bob that depends only on her input; Bob makes a decision based on
  his input and Alice's message.}\label{f:1way}
\end{figure}

We motivated the one-way communication model through applications to
streaming algorithms.  Recall the data stream model, where a data
stream $x_1,\ldots,x_m \in U$ of elements from a universe of $n = |U|$
elements arrive one by one.  The assumption is that there is insufficient
space to store all of the data, but we'd still like to compute useful
statistics of it via a one-pass computation.  Last lecture, we showed
that very cool and non-trivial positive results are possible in this
model.  We presented a slick and low-space ($O(\eps^{-2} (\log n +
\log m) \log \tfrac{1}{\delta})$) streaming algorithm that, with
probability at least $1-\delta$, computes a $(1 \pm
\eps)$-approximation of $F_2 = \sum_{j \in U} f^2_j$, the skew of the
data.  (Recall that $f_j \in \{0,1,2,\ldots,m\}$ is the number of
times that $j$ appears in the stream.)   We also mentioned the main
idea (details in the homework) for an analogous low-space streaming
algorithm that estimates $F_0$, the number of distinct elements in a
data stream.

Low-space streaming algorithms $S$ induce low-communication one-way
protocols $P$, with the communication used by $P$ equal to the space
used by $S$.  Such reductions typically have the following form.
Alice converts her input $\bfx$ to a data stream and feeds it into
the assumed space-$s$ streaming algorithm $S$.  She then sends the
memory of $S$ (after processing $\bfx$) to Bob; this requires only $s$
bits of communication.  Bob then resumes $S$'s execution at the
point that Alice left off, and feeds a suitable representation of his
input $\bfy$ into $S$.  When $S$ terminates, it has computed some kind
of useful function of $\inputs$ with only $s$ bits of communication.
The point is that lower bounds for one-way communication protocols ---
which, as we'll see, we can actually prove in many cases --- imply
lower bounds on the space needed by streaming algorithms.

Last lecture we used without proof the following
result (Theorem~\ref{t:disj}).\footnote{Though we did prove it for the special case of
  deterministic protocols, using a simple Pigeonhole Principle argument.}
\begin{theorem}
The one-way communication complexity of the \disj problem is
$\Omega(n)$, even for randomized protocols.
\end{theorem}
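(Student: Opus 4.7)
The plan is to prove Theorem~\ref{t:disj} in two steps: first reduce \index to \disj, and then prove an $\Omega(n)$ one-way randomized lower bound for \index via Yao's minimax principle and a direct information-theoretic argument.

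For the reduction, recall the \index problem: Alice holds $\bfx \in \zo^n$, Bob holds an index $i \in \{1,\ldots,n\}$, and Bob must output $x_i$ correctly with probability at least $2/3$. Given any one-way \disj protocol with communication cost $s$, I obtain a one-way \index protocol with the same cost by having Bob form the singleton characteristic vector $\bfe_i$ (which has a single $1$ in coordinate $i$) and invoke the \disj protocol on inputs $(\bfx,\bfe_i)$. Since $DISJ(\bfx,\bfe_i)=1-x_i$, correct decision of disjointness is equivalent to correct recovery of $x_i$, and the randomized success probability carries over unchanged. So it suffices to establish an $\Omega(n)$ one-way randomized lower bound for \index.

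For the \index lower bound, I would invoke Yao's minimax principle and fix the product distribution in which $\bfx$ is uniform in $\zo^n$ and $i$ is uniform in $\{1,\ldots,n\}$, independently: it then suffices to show that every \emph{deterministic} one-way protocol that succeeds on this distribution with probability at least $2/3$ must transmit an $\Omega(n)$-bit message. Let $M=M(\bfx)$ denote Alice's message, of length at most $s$, so that $H(M)\le s$. Since the coordinates of $\bfx$ are mutually independent, $H(\bfx)=\sum_i H(x_i)$, while subadditivity of entropy gives $H(\bfx\mid M)\le \sum_i H(x_i\mid M)$; subtracting yields $I(\bfx;M)\ge\sum_{i=1}^n I(x_i;M)$. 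On the other side, for each $i$ Bob's output is a function $g_i(M)$, and if $p_i$ is its error probability on $x_i$ then Fano's inequality (for the binary variable $x_i$) gives $H(x_i\mid M)\le H(p_i)$, hence $I(x_i;M)\ge 1-H(p_i)$. The assumed $2/3$ success guarantee forces $\frac{1}{n}\sum_i p_i\le\frac{1}{3}$, and concavity of binary entropy yields $\frac{1}{n}\sum_i(1-H(p_i))\ge 1-H(1/3)$. Chaining everything together, $s\ge H(M)\ge I(\bfx;M)\ge\sum_i I(x_i;M)\ge n(1-H(1/3))=\Omega(n)$.

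The main obstacle I anticipate is making the information-theoretic steps fully rigorous: namely, the reduction from randomized worst-case to deterministic distributional complexity via Yao's minimax, the justification of the subadditivity $I(\bfx;M)\ge\sum_i I(x_i;M)$ (which relies crucially on the independence of the coordinates of $\bfx$), and the careful application of Fano's inequality to $x_i$ conditioned on $M$ alone (rather than $(M,i)$), which is legitimate because $i$ is independent of $(\bfx,M)$. The quantitative constant $1-H(1/3)\approx 0.08$ is enough for the $\Omega(n)$ conclusion, and the same argument handles any constant success probability strictly above $1/2$ by substituting the corresponding error bound in Fano's inequality.
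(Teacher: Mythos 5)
Your proof is correct, and it shares with the paper the same outer structure: \index reduces to \disj by having Bob play the standard basis vector $\bfe_i$, and then one proves an $\Omega(n)$ lower bound for \index via Yao's distributional principle applied to the uniform product distribution on $(\bfx,i)$. Where you diverge is in the core of the \index lower bound. The paper's argument (Theorem~\ref{t:index}) is combinatorial: it associates to each of Alice's at most $2^{cn}$ messages an ``answer vector'' in $\zo^n$, observes that the per-input error equals the normalized Hamming distance from $\bfx$ to its answer vector, and then uses a volume bound on Hamming balls of radius $n/4$ to show that at most half the cube lies within $n/4$ of any answer vector, forcing error at least $1/8$. Your argument is information-theoretic: independence of the coordinates of $\bfx$ gives superadditivity $I(\bfx;M)\ge\sum_i I(x_i;M)$, Fano's inequality lower-bounds each $I(x_i;M)\ge 1-H(p_i)$, and concavity of binary entropy converts the averaged success guarantee into the bound $\sum_i I(x_i;M)\ge n\bigl(1-H(1/3)\bigr)$. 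Both routes are standard and correct; the paper's is elementary (no entropy) and directly exposes the geometry of Bob's answer vectors, while yours is shorter once one takes Fano and the chain rule for granted, generalizes more cleanly to non-uniform per-coordinate error budgets, and is closer in spirit to the information-complexity technology used elsewhere in the communication complexity literature (e.g., for the two-way \disj lower bound). One small point worth making explicit if you write this up: the superadditivity step requires only that the $x_i$ be mutually independent, not that $M$ factor in any way, and Fano's inequality for a binary source holds without any assumption that $p_i\le 1/2$, so every step goes through as stated.
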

We'll be more precise about the randomized protocols that we consider
in the next section.
Recall that an input of \disj is defined by
$\bfx,\bfy \in \{0,1\}^n$, which we view as characteristic vectors
of two subsets of $\{1,2,\ldots,n\}$, and the output should be ``0''
is there is an index $i$ with $x_i = y_i = 1$ and ``1'' otherwise.

We used Theorem~\ref{t:disj} to prove a few lower bounds on the space
required by streaming algorithms.  A simple reduction showed that
every streaming algorithm that computes $F_{\infty}$, the maximum
  frequency, even approximately and with probability~$2/3$, needs
  linear (i.e., $\Omega(\min\{n,m\})$) space.  This is in sharp
  contrast to our algorithms for approximating $F_0$ and $F_2$, which
  required only logarithmic space.
The same reduction proves that, for $F_0$ and $F_2$, exact computation
requires linear space, even if randomization is allowed.
A different simple argument (see the homework) shows that
randomization is also essential for our positive results: every
deterministic streaming algorithm that approximates $F_0$ or $F_2$ up
to a small constant factor requires linear space.

In today's lecture we'll prove Theorem~\ref{t:disj}, introduce and
prove lower bounds for a couple of other problems that are hard for
one-way communication, and prove via reductions some further space
lower bounds for streaming algorithms.

\section{Randomized Protocols}\label{s:rand}

There are many different flavors of randomized communication protocols.
Before proving any lower bounds, we need to be crystal clear about
exactly which protocols we're talking about.  The good news is that,
for algorithmic applications, we can almost always focus on a
particular type of randomized protocols.  By default, we adopt the
following four assumptions and rules of thumb.  The common theme
behind them is we want to allow as permissible a class of randomized
protocols as possible, to maximize the strength of our lower bounds
and the consequent algorithmic applications.

\textbf{Public coins.}  First, unless otherwise noted, we consider
{\em public-coin} protocols.  This means that, before Alice and Bob ever
show up, a deity writes an infinite sequence of perfectly random bits
on a blackboard visible to both Alice and Bob.  Alice and Bob can
freely use as many of these random bits as they want --- it doesn't
contribute to the communication cost of the protocol.

The {\em private coins} model might seem
more natural to the algorithm designer --- here, Alice and Bob just
flip their own random coins 
as needed.  Coins flipped by one player are unknown to the other
player unless they are explicitly communicated.\footnote{Observe that
  the one-way communication protocols induced by streaming algorithms
  are private-coin protocols --- random coins flipped during the first
  half of the data stream are only available to the second half if
  they are explicitly stored in memory.}
Note that every
private-coins protocol can be simulated with no loss by a public-coins
protocol: for example, Alice uses the shared random bits 1, 3, 5,
etc.\ as needed, while Bob used the random bits 2, 4, 6, etc.

It turns out that while public-coin protocols are strictly more powerful
than private-coin protocols, 
for the purposes of this course, the two models have essentially the
same behavior.  In any case, our lower bounds will generally apply to
public-coin (and hence also private-coin) protocols.

A second convenient fact about public-coin randomized protocols is
that they are equivalent to distributions over deterministic
protocols.  Once the random bits on the blackboard have been fixed,
the protocol proceeds deterministically.  Conversely, every
distribution over deterministic protocols (with rational
probabilities) can be implemented via a public-coin protocol --- just use
the public coins to choose from the distribution.

\textbf{Two-sided error.} We consider randomized algorithms that are
allowed to error with some probability on every input $\inputs$,
whether $f\inputs = 0$ or $f\inputs = 1$.  A stronger requirement would
be one-sided error --- here there are two flavors, one that forbids
false positives (but allows false negatives) and one the forbids false
negatives (but allows false positives).  Clearly, lower bounds that
apply to protocols with two-sided error are at least as strong as
those for protocols with one-sided error --- indeed, the latter lower
bounds are often much easier to prove (at least for one of the two
sides).   Note that the one-way
protocols induces by the streaming algorithms in the last lecture are
randomized protocols with two-sided error.  There are other problems
for which the natural randomized solutions have only
one-sided error.\footnote{One can also consider ``zero-error''
  randomized protocols, which always output the correct answer but use
  a random amount of communication.  We won't need to discuss such
  protocols in this course.}

\textbf{Arbitrary constant error probability.}  A simple but important
fact is that all constant error probabilities $\eps \in
(0,\tfrac{1}{2})$ yield the same communication complexity, up to a
constant factor.  The reason is simple: the success probability of a
protocol can be boosted through amplification (i.e., repeated
trials).\footnote{We mentioned a similar ``median of means'' idea
last lecture
(developed further in the homework), when we discussed how to reduce
  the $\tfrac{1}{\delta}$ factor in the space usage of our streaming
  algorithms to a factor of$\log \tfrac{1}{\delta}$.}
In more detail, suppose $P$ uses $k$ bits on communication and has
success at least 51\% on every input.  Imagine repeating $P$ 10000
times.  To preserve one-way-ness of the protocol, all of the repeated
trials need to happen in parallel, with the public coins providing the
necessary 10000 independent random strings.  Alice sends 10000
messages to Bob, Bob imagines answering each one --- some answers will
be ``1,'' others ``0'' --- and concludes by reporting the majority
vote of the 10000 answers.  In expectation 5100 of the trials give
the correct answer, and the probability that more than 5000
of them are correct is big (at least 90\%, say).  In general, a
constant number of trials, followed by a majority vote, 
boosts the success probability of a protocol from any constant
bigger than $\tfrac{1}{2}$ to any other constant less than $1$.  These
repeated trials increase the amount of communication by only a
constant factor.  See the exercises and the separate notes on Chernoff
bounds for further details.

This argument justifies being sloppy about the exact (constant) error
of a two-sided protocol.  For upper bounds, we'll be content to achieve
error 49\% --- it can be reduced to an arbitrarily small constant
with a constant blow-up in communication.  For lower bounds, we'll be
content to rule out protocols with error \%1 --- the same
communication lower bounds hold, modulo a constant factor, even for
protocols with error 49\%.

\textbf{Worst-case communication.}  When we speak of the communication
used by a randomized protocol, we take the worst case over inputs
$\inputs$ {\em and} over the coin flips of the protocol.  So if a
protocol uses communication at most $k$, then Alice always sends at
most $k$ bits to Bob.

This definition seems to go against our
guiding rule of being as permissive as possible.  Why not measure only
the expected communication used by a protocol, with respect to its
coin flips?  This objection is conceptually justified but technically
moot --- for protocols that can err, passing to the technically more
convenient worst-case measure can only increase the communication
complexity of a problem by a constant factor.

To see this, consider a protocol $R$ that, for every input $\inputs$,
has two-sided error at most $1/3$ (say) and uses at most $k$ bits of
communication on average over its coin flips.  
This
protocol uses at most $10k$ bits of communication at least 90\% of the
time --- if it used more than $10k$ bits more than 10\% of the time,
its expected communication cost would be more than $k$.  Now consider
the following protocol $R'$: simulate $R$ for up to $10k$ steps; if
$R$ fails to terminate, then abort and output an arbitrary answer.
The protocol $R'$ always sends at most $10k$ bits of communication and
has error at most that of $R$, plus 10\% (here, $\approx 43\%$).  This
error probability of $R'$ can be reduced back down (to
$\tfrac{1}{3}$, or whatever) through repeated trials, as before.

In light of these four standing assumptions and rules, we can restate
Theorem~\ref{t:disj} as follows.
\begin{theorem}\label{t:disj2}
Every public-coin randomized one-way protocol for \disj
that has two-sided error at most a constant $\eps \in (0,\tfrac{1}{2})$
uses $\Omega(\min\{n,m\})$ communication in the worst case (over
inputs and coin flips).
\end{theorem}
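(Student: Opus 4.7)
The plan is to reduce from the simpler \index problem, where Alice holds $\bfx \in \{0,1\}^n$ and Bob holds an index $i \in [n]$, and Bob must output $x_i$. I will first show that a protocol for \disj yields one for \index with the same communication and error, and then prove directly that \index requires $\Omega(n)$ one-way communication. The tool for the second step will be Yao's minimax principle together with a short information-theoretic calculation; since we are in the \emph{one-way} setting, Alice's message is a random variable depending only on $\bfx$, which makes the information-theoretic bookkeeping particularly clean.

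\textbf{The reduction.} Given any public-coin one-way protocol $P$ for \disj with error $\eps$ and communication $k$, I build a protocol $P'$ for \index as follows: Alice treats her \index input $\bfx$ as the \disj input and sends exactly the same message; Bob, holding index $i$, forms the characteristic vector $\bfe_i \in \{0,1\}^n$ of $\{i\}$, runs Bob's side of $P$ on $(\bfz,\bfe_i)$, and outputs $1$ minus that bit. Since $\bfx$ and $\bfe_i$ intersect iff $x_i = 1$, $P'$ correctly returns $x_i$ with probability $\ge 1-\eps$ on every input, using only $k$ bits. Hence it suffices to lower-bound the one-way communication of \index.

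\textbf{The information-theoretic bound for \index.} By Yao's minimax principle and the equivalence of public-coin randomized protocols with distributions over deterministic ones, it suffices to fix a hard distribution $\mu$ on inputs and show that every deterministic one-way protocol with average error $\le \eps$ under $\mu$ requires $\Omega(n)$ bits. I take $\mu$ uniform on $\{0,1\}^n \times [n]$. Let $\bfX = (X_1,\ldots,X_n)$ denote Alice's input and $M = M(\bfX)$ her message, a $k$-bit string. First, since Alice sends at most $k$ bits, $I(\bfX;M) \le H(M) \le k$. Second, since the coordinates $X_i$ are mutually independent and uniform, a standard chain-rule argument gives
\[
I(\bfX;M) \;\ge\; \sum_{i=1}^{n} I(X_i;M).
\]
Finally, on index $i$ Bob outputs a function of $M$ and $i$ that must predict $X_i$ with error $\le \eps$; Fano's inequality then yields $H(X_i \mid M) \le H_2(\eps)$, so $I(X_i;M) \ge 1 - H_2(\eps)$ for every $i$. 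Combining the three bounds gives $k \ge n(1 - H_2(\eps)) = \Omega(n)$ whenever $\eps < \tfrac{1}{2}$ is a constant.

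\textbf{Main obstacle.} The reduction step is essentially free, so the work is concentrated in the \index lower bound. The subtlety I expect to be most delicate is justifying the inequality $I(\bfX;M) \ge \sum_i I(X_i;M)$ --- this requires the independence of the $X_i$'s and a careful application of the chain rule (it is not true in general without independence), and it is the one place where the choice of the uniform product distribution $\mu$ is essential. Once that inequality is in hand, the Fano step and the arithmetic are routine. A purely combinatorial proof via fooling sets or a counting/encoding argument is also possible and would avoid information theory entirely, but the information-theoretic route generalizes more cleanly and is the approach I would take.
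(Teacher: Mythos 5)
Your proof is correct and has the same high-level structure as the paper's: first a linear one-way lower bound for \index, then the reduction to \disj in which Alice uses her \index string $\bfx$ unchanged and Bob uses the standard basis vector $e_i$, so that $(\bfx,e_i)$ is a disjoint pair if and only if $x_i = 0$. Where you diverge is in the proof of the \index lower bound itself. The paper proves Theorem~\ref{t:index} by a combinatorial volume argument: after applying Yao's lemma with the uniform distribution, it observes that a deterministic protocol using $cn$ bits has at most $2^{cn}$ possible ``answer vectors'' $\bfa(\bfz)$, and then bounds the volume of Hamming balls of radius $n/4$ to show that these balls cover less than half of $\{0,1\}^n$; the remaining ``bad'' inputs $\bfx$ lie at Hamming distance at least $n/4$ from every answer vector, which forces Bob's error (averaged over a uniform index $i$) to be at least $1/4$ on those inputs, hence at least $1/8$ overall. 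You instead run an information-theoretic argument: $k \ge H(M) \ge I(X;M) \ge \sum_i I(X_i;M)$ by independence of the coordinates, and Fano's inequality converts the per-coordinate accuracy of Bob's decoder into $I(X_i;M) \ge 1 - H_2(\eps_i)$. Your route is shorter and delivers a sharp explicit bound $k \ge n(1-H_2(\eps))$, at the cost of invoking mutual-information machinery; the paper's route is wholly elementary and self-contained but produces only a loose constant. One small step worth making explicit when you write it up: the per-index error $\eps_i = \Pr[\hat X_i \neq X_i]$ is controlled only on average, $\tfrac{1}{n}\sum_i \eps_i \le \eps$, so you need Jensen's inequality applied to the concave function $H_2$, together with the monotonicity of $H_2$ on $[0,\tfrac12]$, to pass from the per-index bounds to $\sum_i H_2(\eps_i) \le n\,H_2(\eps)$; similarly, $I(X;M) \ge \sum_i I(X_i;M)$ follows from $H(X) = \sum_i H(X_i)$ (independence) combined with the unconditional subadditivity $H(X\mid M) \le \sum_i H(X_i\mid M)$, and you correctly flagged the independence requirement as the crux.
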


Now that we are clear on the formal statement of our lower bound, how
do we prove it?

\section{Distributional Complexity}

Randomized protocols are much more of a pain to reason about than
deterministic protocols.  For example, recall 
our Pigeonhole Principle-based argument last lecture for deterministic
protocols: if Alice holds an $n$-bit input and always sends at most
$n-1$ bits, then there are distinct inputs $\bfx,\bfx'$ such that
Alice sends the same message $\bfz$.  (For \disj,
this ambiguity left Bob in a lurch.)  In a randomized protocol where
Alice always sends at most $n-1$ bits, Alice can use a different
distribution over $(n-1)$-bit messages for each of her $2^n$ inputs
$\bfx$, and the naive argument breaks down.  While Pigeonhole
Proof-type arguments can sometimes be pushed through for randomized
protocols, this section introduces a different approach.

{\em Distributional complexity} is the main methodology by which one
proves lower bounds on the communication complexity of randomized
algorithms.  The point is to reduce the goal to proving lower
bounds for {\em deterministic protocols only}, with respect to a
suitably chosen input distribution.

\begin{lemma}[\citealt{Y83}]\label{l:yao}
Let $D$ be a distribution over the space of inputs $\inputs$ to a
communication problem, and $\eps \in (0,\tfrac{1}{2})$.  Suppose that
every deterministic one-way protocol $P$ with
\[
\prob[\inputs \sim D]{\text{$P$ wrong on $\inputs$}} \le \eps
\]
has communication cost at least $k$.  Then every (public-coin)
randomized one-way protocol $R$ with (two-sided) error at most $\eps$ on every
input has communication cost at least $k$.
\end{lemma}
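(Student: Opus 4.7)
The plan is a standard averaging (minimax) argument: turn a hypothetical low-communication randomized protocol into a deterministic protocol that still has small error when inputs are drawn from $D$, and then invoke the hypothesis.

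First, I would use the equivalence (already noted in the excerpt) between public-coin randomized protocols and distributions over deterministic protocols. So suppose for contradiction that $R$ is a public-coin one-way randomized protocol with two-sided error at most $\eps$ on every input, using worst-case communication strictly less than $k$. Writing $R$ as a distribution over deterministic one-way protocols $\{P_r\}_r$ indexed by the public random string $r$, every $P_r$ in the support uses at most $R$'s worst-case communication (since worst-case is taken over inputs \emph{and} coins), so each $P_r$ uses fewer than $k$ bits.

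Next, the key averaging step. By hypothesis, for every fixed input $\joint$,
\[
\prob[r]{P_r \text{ wrong on } \joint} \le \eps.
\]
Taking expectation over $\joint \sim D$ and swapping the two expectations (Fubini / linearity of expectation),
\[
\expect[r]{\prob[\joint \sim D]{P_r \text{ wrong on } \joint}}
= \expect[\joint \sim D]{\prob[r]{P_r \text{ wrong on } \joint}}
\le \eps.
\]
Since the expectation over $r$ of the quantity $\prob[\joint \sim D]{P_r \text{ wrong on } \joint}$ is at most $\eps$, there must exist some particular $r^*$ achieving this bound: a \emph{deterministic} one-way protocol $P_{r^*}$ with distributional error at most $\eps$ under $D$.

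Finally, $P_{r^*}$ is deterministic, uses fewer than $k$ bits, and has error at most $\eps$ with respect to $D$, which contradicts the hypothesis of the lemma. Hence no such $R$ exists, and every randomized protocol with two-sided error $\eps$ on every input must use at least $k$ bits of worst-case communication. There is no real obstacle here; the only subtlety worth flagging is the distinction between ``error on every input'' (what $R$ gives us, with randomness over $r$) and ``error on average over $D$'' (what the hypothesis talks about, with no randomness), which is exactly what the swap of expectations bridges, together with the fact that worst-case communication of a distribution over protocols upper-bounds the communication of each protocol in its support.
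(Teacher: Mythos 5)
Your proof is correct and is essentially the paper's argument run in the contrapositive direction: the paper assumes every low-communication deterministic protocol errs with probability $>\eps$ under $D$, averages to show $R$ errs with probability $>\eps$ under $D$ and the coins, and then extracts a bad input, whereas you assume $R$ errs with probability $\le\eps$ on every input, average over $D$ and swap the order of expectation, and extract a good deterministic protocol $P_{r^*}$. Both hinge on the same two ingredients — the identification of a public-coin protocol with a distribution over deterministic protocols (each inheriting the worst-case communication bound) and a single Fubini/averaging step — so this counts as the same proof.
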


In the hypothesis of Lemma~\ref{l:yao}, all of the randomness is in
the input --- $P$ is deterministic, $\inputs$ is random.  In the
conclusion, all of the randomness is in the protocol $R$ --- the input
is arbitrary but fixed, while the protocol can flip coins.
Not only is Lemma~\ref{l:yao} extremely useful, but it is easy to
prove.

\vspace{.1in}
\noindent
\begin{prevproof}{Lemma}{l:yao}
Let $R$ be a randomized protocol with communication cost less than
$k$.  Recall that such an $R$ can be written as a distribution over
deterministic protocols, call them $P_1,P_2,\ldots,P_s$.  Recalling
that the communication cost of a randomized protocol is defined as the
worst-case communication (over both inputs and coin flips), each
deterministic protocol $P_i$ always uses less than $k$ bits of
communication.  By assumption,
\[
\prob[\inputs \sim D]{\text{$P_i$ wrong on $\inputs$}} > \eps
\]
for $i=1,2,\ldots,s$.  Averaging over the $P_i$'s, we have
\[
\prob[\inputs \sim D; R]{\text{$R$ wrong on $\inputs$}} > \eps.
\]
Since the maximum of a set of numbers is at least is average, there
exists an input $\inputs$ such that
\[
\prob[R]{\text{$R$ wrong on $\inputs$}} > \eps,
\]
which completes the proof.
\end{prevproof}

The converse of Lemma~\ref{l:yao} also holds --- whatever the true
randomized communication complexity of a problem, there exists a bad
distribution $D$ over inputs that proves it~\citep{Y83}.  The proof is
by strong linear programming duality or, equivalently, von Neumann's
Minimax Theorem for zero-sum games (see the exercises for details).
Thus, the distributional methodology is ``complete'' for proving lower
bounds --- one ``only'' needs to find the right distribution~$D$ over
inputs.  In general this is a bit of a dark art, though in today's
application $D$ will just be the uniform distribution.

\section{The \textsc{Index} Problem}

We prove Theorem~\ref{t:disj2} in two steps.  The first step is to
prove a linear lower bound on the randomized communication complexity
of a problem called \index,  which is widely useful for proving
one-way communication complexity lower bounds.
The second step, which is easy, reduces \index to \disj.

In an instance of \index, Alice gets an $n$-bit string $\bfx \in
\{0,1\}^n$ and Bob gets an integer $i \in \{1,2,\ldots,n\}$, encoded
in binary using $\approx \log_2 n$ bits.  The goal is simply to compute
$x_i$, the $i$th bit of Alice's input.

Intuitively, since Alice has no idea which of her bits Bob is
interested in, she has to send Bob her entire input.  This intuition
is easy to make precise for deterministic protocols, by a Pigeonhole
Principle argument.  The intuition also holds for randomized
protocols, but the proof takes more work.

\begin{theorem}[\citealt{KNR95}]\label{t:index} The randomized one-way
  communication   complexity of \index is $\Omega(n)$.
\end{theorem}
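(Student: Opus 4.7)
The plan is to use Yao's distributional framework (Lemma~\ref{l:yao}) together with an information-theoretic argument. By Lemma~\ref{l:yao}, it suffices to fix the natural ``hard'' distribution $D$ in which $\bfx \in \{0,1\}^n$ is uniform and $i \in \{1,\ldots,n\}$ is independently uniform, and show that any deterministic one-way protocol $P$ that errs with probability at most $\eps < \tfrac{1}{2}$ under $D$ must transmit $\Omega(n)$ bits. Any such deterministic $P$ is described by Alice's encoding $\bfz = M(\bfx) \in \{0,1\}^k$ together with Bob's decoders $B_i(\bfz) \in \{0,1\}$, one for each index $i$.

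First I would reduce the joint-distribution guarantee to per-coordinate guarantees. If $p_i := \prob[\bfx]{B_i(\bfz) = x_i}$, then $\frac{1}{n}\sum_i p_i \ge 1-\eps$, so on average over $i$ the decoder recovers $x_i$ from $\bfz$ with error at most $\eps$. The key step is then to argue via Fano's inequality that, per coordinate, $H(x_i \mid \bfz) \le H_2(1-p_i)$, where $H_2$ is the binary entropy function. Summing over $i$ and using concavity of $H_2$ gives $\sum_{i=1}^n H(x_i \mid \bfz) \le n \cdot H_2(\eps)$.

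Next I would combine this with the two standard entropy inequalities: subadditivity, $H(\bfx \mid \bfz) \le \sum_i H(x_i \mid \bfz)$, and $H(\bfx) = n$ (since the coordinates of $\bfx$ are i.i.d.\ uniform bits). Writing $H(\bfx) = I(\bfx;\bfz) + H(\bfx \mid \bfz)$, we get
\[
n \;=\; I(\bfx;\bfz) + H(\bfx\mid \bfz) \;\le\; I(\bfx;\bfz) + n\, H_2(\eps).
\]
Finally, $I(\bfx;\bfz) \le H(\bfz) \le k$ since $\bfz$ is a $k$-bit string, so $k \ge n(1-H_2(\eps)) = \Omega(n)$ for any constant $\eps \in (0,\tfrac{1}{2})$.

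The main obstacle is really just setting up the right information-theoretic inequalities; each individual step (Yao's lemma, Fano's inequality, subadditivity of entropy, and the trivial $H(\bfz) \le k$) is standard, but one has to be careful about the averaging over $i$ — the distributional bound gives a guarantee only on the average error across coordinates, and concavity of $H_2$ is exactly what is needed to pass from ``average error $\le \eps$'' to ``average conditional entropy $\le H_2(\eps)$.'' An alternative, more combinatorial approach avoids information theory entirely by a ``one-way'' rectangle/encoding argument (showing that a short message induces a large family of inputs on which Bob's decoders are inconsistent), but the information-theoretic route above is cleaner and the one I would pursue.
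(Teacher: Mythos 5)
Your proof is correct, but it takes a genuinely different route from the paper's. The paper also invokes Yao's Lemma with the uniform distribution over $(\bfx,i)$, but then argues combinatorially: it associates to each of Alice's $\le 2^{cn}$ messages an ``answer vector'' $\bfa(\bfz)\in\{0,1\}^n$ recording Bob's outputs for each index, observes that the per-input error probability over random $i$ is exactly $d_H(\bfx,\bfa(\bfz))/n$, and then uses a Hamming-ball volume bound (via $\binom{n}{k}\le (en/k)^k$) to show that when $c$ is small, at least half the $\bfx$'s are farther than $n/4$ from every answer vector, forcing error at least $\tfrac{1}{8}$. Your argument replaces this geometric counting with Fano's inequality plus subadditivity of conditional entropy and the identity $H(\bfx)=I(\bfx;\bfz)+H(\bfx\mid\bfz)$, arriving at the cleaner explicit bound $k\ge n(1-H_2(\eps))$. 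Both proofs are standard and correct; your handling of the averaging over $i$ via concavity of $H_2$ is exactly the right move and matches the role the paper's ``at least half the inputs are bad'' step plays. The trade-off is that your route is shorter and yields a sharper constant, while the paper's route is more elementary (it needs only Stirling's approximation, no Fano or mutual information), which suits the self-contained expository goal of the notes. One small point worth making explicit in your write-up: $H(\bfz)\le k$ requires that Alice's message is a string of at most $k$ bits in the \emph{worst case} (the paper's definition of one-way cost), so you may want to pad to exactly $k$ bits or note that variable length costs at most one extra bit, neither of which affects the $\Omega(n)$ conclusion.
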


With a general communication protocol, where Bob can also send
information to Alice, \index is trivial to solve using only
$\approx \log_2 n$ bits of information --- Bob just sends $i$ to
Alice.  Thus \index nicely captures the difficulty of
designing non-trivial one-way communication protocols, above and
beyond the lower bounds that already apply to general protocols.

Theorem~\ref{t:index} easily implies Theorem~\ref{t:disj2}.

\vspace{.1in}
\noindent
\begin{prevproof}{Theorem}{t:disj2}
We show that \disj reduces to \index.
Given an input $(\bfx,i)$ of \index, Alice forms the input $\bfx' =
\bfx$ while Bob forms the input $\bfy' = e_i$; here $e_i$ is the
standard basis vector, with a ``1'' in the $i$th coordinate and ``0''s
in all other coordinates.  Then, $(\bfx',\bfy')$ is a ``yes'' instance
of \disj if and only if $x_i = 0$.  Thus, every one-way
protocol for \index induces one for \disj, with the same
communication cost and error probability.
\end{prevproof}

We now prove Theorem~\ref{t:index}.  While some computations are
required, the proof is conceptually pretty straightforward.

\vspace{.1in}
\noindent
\begin{prevproof}{Theorem}{t:index}
We apply the distributional complexity methodology.  This requires
positing a distribution $D$ over inputs.  Sometimes this takes
creativity.  Here, the first thing you'd try --- the uniform
distribution $D$, where $\bfx$ and $i$
are chosen independently and uniformly at random --- works.

Let $c$ be a sufficiently small constant (like .1 or less) and assume that $n$
is sufficiently large (like 300 or more).  We'll show that every
deterministic 
one-way protocol that uses at most $cn$ bits of communication has error
(w.r.t.\ $D$) at least $\tfrac{1}{8}$.  By Lemma~\ref{l:yao}, this
implies that every randomized protocol has error at least
$\tfrac{1}{8}$ on some input.  Recalling the discussion about error
probabilities in Section~\ref{s:rand}, this implies that for every
error $\eps' > 0$, there is a constant $c' > 0$ such that every
randomized protocol that uses at most $c'n$ bits of communication has
error bigger than $\eps'$.

Fix a deterministic one-way protocol $P$ that uses at most $cn$ bits
of communication.  Since $P$ is deterministic, there are only $2^{cn}$
distinct messages $\bfz$ that Alice ever sends to Bob (ranging over
the $2^n$ possible inputs $\bfx$).  We need to formalize the intuition
that Bob typically (over $\bfx$) doesn't learn very much about $\bfx$,
and hence typically (over $i$) doesn't know what $x_i$ is.

Suppose Bob gets a message $\bfz$ from Alice, and his input is $i$.
Since $P$ is deterministic, Bob has to announce a bit, ``0'' or ``1,'' as a
function of $\bfz$ and $i$ only.  (Recall Figure~\ref{f:1way}).
Holding $\bfz$ fixed and considering
Bob's answers for each of his possible inputs $i=1,2,\ldots,n$, we get an
$n$-bit vector --- Bob's {\em answer vector} $\bfaz$ when he receives
message $\bfz$ from Alice.  Since there are at most $2^{cn}$ possible
messages $\bfz$, there are at most $2^{cn}$ possible answer vectors
$\bfaz$.  

Answer vectors are a convenient way to express the error of the
protocol $P$, with respect to the randomness in Bob's input.  Fix
Alice's input $\bfx$, which results in the message $\bfz$.  The
protocol is correct if Bob holds an input $i$ with $\bfaz_i = x_i$,
and incorrect otherwise.  Since Bob's index $i$ is chosen uniformly at
random, and independently of $\bfx$, we have
\begin{equation}\label{eq:hdist}
\prob[i]{\text{$P$ is incorrect} \,|\, \bfx,\bfz} = \frac{d_H(\bfx,\bfaz)}{n},
\end{equation}
where $d_H(\bfx,\bfaz)$ denotes the Hamming distance between the
vectors $\bfx$ and $\bfaz$ (i.e., the number of coordinates in which
they differ).  Our goal is to show that, with constant probability
over the choice of $\bfx$, the expression~\eqref{eq:hdist} is bounded
below by a constant.

Let $A = \{ \bfa(\bfz(\bfx)) \,:\, \bfx \in \{0,1\}^n \}$ denote the
set of all answer vectors used by the protocol $P$.  Recall that
$|A| \le 2^{cn}$.  Call Alice's input $\bfx$ {\em good} if there
exists an answer vector $\bfa \in A$ with $d_H(\bfx,\bfa) <
\tfrac{n}{4}$, and {\em bad} otherwise.  Geometrically, you should
think of each answer vector $\bfa$ as the center of a ball of radius
$\tfrac{n}{4}$ in the Hamming cube --- the set $\{0,1\}^n$ equipped
with the Hamming metric.  See Figure~\ref{f:balls}.
The next claim states that, because there
aren't too many balls (only $2^{cn}$ for a small constant
$c$) and their radii aren't too big (only $\tfrac{n}{4}$), the union of
all of the balls is less than half of the Hamming cube.\footnote{More
  generally, the following is good intuition about the Hamming cube
  for large $n$: as you blow up a ball of radius $r$ around a point,
  the ball includes very few points until $r$ is almost equal to
  $n/2$; the ball includes roughly half the points for $r \approx
  n/2$; and for $r$ even modestly larger than $r$, the ball contains
  almost all of the points.}

\begin{figure}
\centering
\includegraphics[width=.7\textwidth]{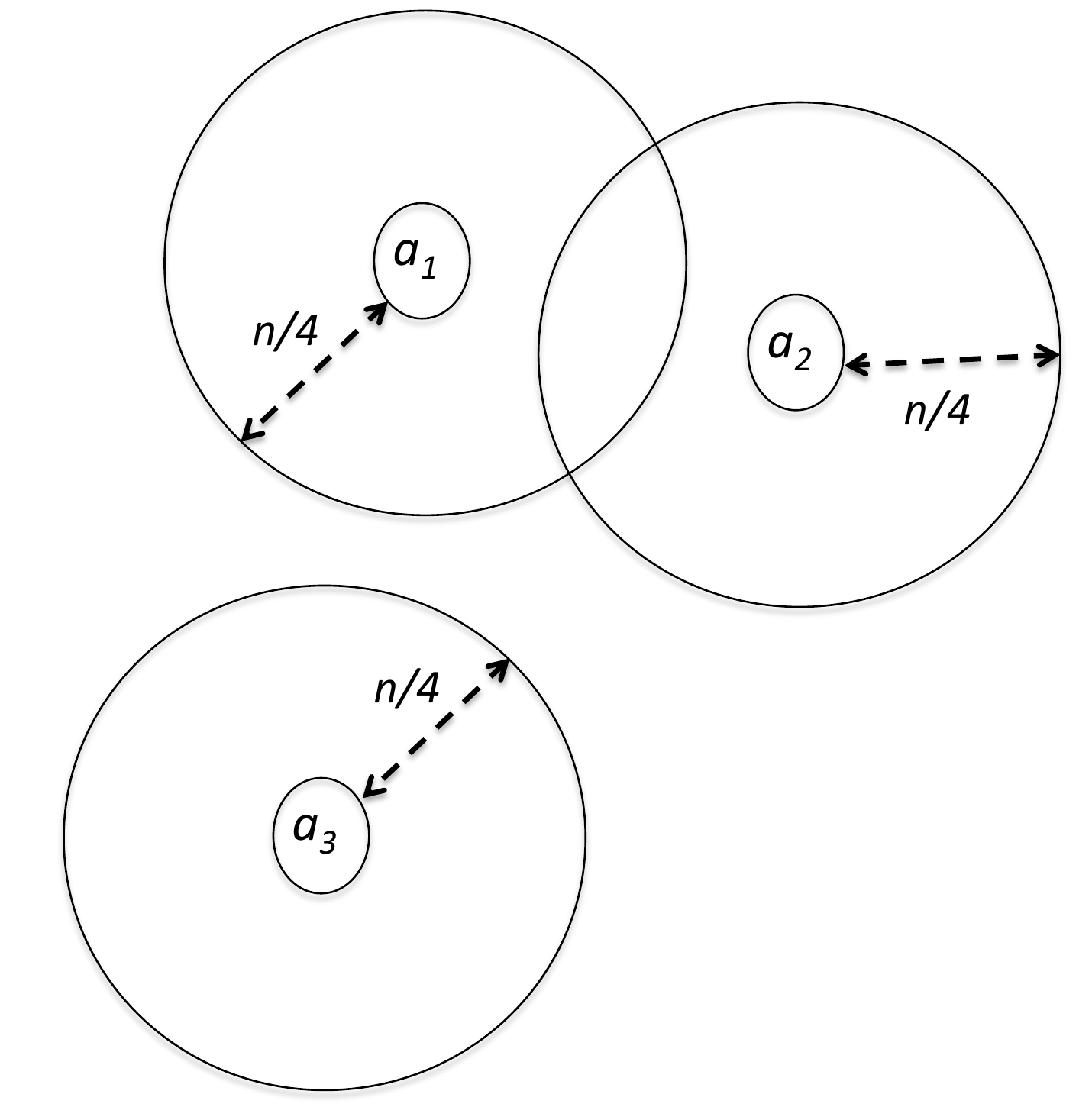}
\caption[Balls of radius $n/4$ in the Hamming metric]{Balls of radius $n/4$ in the Hamming metric, centered at the
  answer vectors used by the protocol~$P$.}\label{f:balls}
\end{figure}

\vspace{.1in}
\noindent
\textbf{Claim: } Provided $c$ is sufficiently small and $n$ is
sufficiently large, there are at least $2^{n-1}$ bad inputs $\bfx$.

\vspace{.1in}
Before proving the claim, let's see why it implies the theorem.  We
can write
\begin{eqnarray*}
\prob[\inputs \sim D]{\text{$D$ wrong on $\inputs$}}
& = &
\underbrace{\prob{\text{$\bfx$ is good}} \cdot \prob{\text{$D$ wrong
      on $\inputs$} \,|\,
  \text{$\bfx$ is good}}}_{\ge 0}\\
&& +
\underbrace{\prob{\text{$\bfx$ is bad}}}_{\ge 1/2 \text{ by Claim}} \cdot
\prob{\text{$D$ wrong on $\inputs$} \,|\,   \text{$\bfx$ is bad}}.
\end{eqnarray*}
Recalling~\eqref{eq:hdist} and the definition of a bad input $\bfx$,
we have
\begin{eqnarray*}
\prob[\inputs]{\text{$D$ wrong on $\inputs$} \,|\, \text{$\bfx$ is bad}}
& = &
\expect[\bfx]{\frac{d_H(\bfx,\bfa(\bfz(\bfx)))}{n}
   \,\left|\right.\, \text{$\bfx$ is bad}}\\
& \ge & 
\expect[\bfx]{\underbrace{\min_{\bfa \in A}
      \frac{d_H(\bfx,\bfa)}{n}}_{\ge 1/4 \text{ since $\bfx$ is bad}}
   \,\left|\right.\, \text{$\bfx$ is bad}}\\
& \ge & \frac{1}{4}.
\end{eqnarray*}
We conclude that the protocol $P$ errs on the distribution $D$ with
probability at last $\tfrac{1}{8}$, which implies the theorem.  We
conclude by proving the claim.

\vspace{.1in}
\noindent
\textbf{Proof of Claim: } 
Fix some answer vector $\bfa \in A$.  The number of inputs $\bfx$ with
Hamming distance at most $\tfrac{n}{4}$ from $\bfa$ is
\begin{equation}\label{eq:ball}
\underbrace{1}_{\bfa} + \underbrace{\binom{n}{1}}_{d_H(\bfx,\bfa) = 1} +
\underbrace{\binom{n}{2}}_{d_H(\bfx,\bfa) = 2} + \cdots + \underbrace{\binom{n}{n/4}}_{d_H(\bfx,\bfa)
  = n/2}.
\end{equation}
Recalling the inequality
\[
\binom{n}{k} \le \left( \frac{en}{k} \right)^k,
\]
which follows easily from Stirling's approximation of the factorial
function (see the exercises), we can crudely bound~\eqref{eq:ball}
above by
\[
n(4e)^{n/4} = n2^{\log_2(4e) \tfrac{n}{4}} \le n2^{.861n}.
\]
The total number of good inputs $\bfx$ --- the union of all the
balls --- is at most $|A|2^{.861n} \le 2^{(.861+c)n}$, which is at
most $2^{n-1}$ for 
$c$ sufficiently small (say .1) and $n$ sufficiently large (at least
300, say).
\end{prevproof}

\section{Where We're Going}

Theorem~\ref{t:index} completes our first approach to proving lower
bounds on the space required by streaming algorithms to compute
certain statistics.  To review, we proved from scratch that \index
is hard for one-way communication protocols
(Theorem~\ref{t:index}), reduced \index to \disj to extend the
lower bound to the latter problem (Theorem~\ref{t:disj2}), and
reduced \disj to various streaming computations 
(last lecture).  See also Figure~\ref{f:plan1}.  
Specifically, we
showed that linear space is necessary to compute the highest frequency
in a data stream ($F_{\infty}$), even when randomization and approximation
are allowed, and that linear space is necessary to compute exactly
$F_0$ or $F_2$ by a randomized streaming algorithm
with success probability~$2/3$.

\begin{figure}[h]
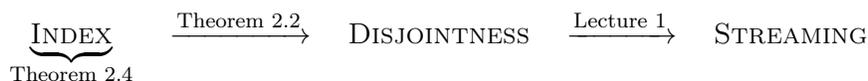

\centering
$\underbrace{\text{\index}}_{\text{Theorem~\ref{t:index}}}
\quad
\xrightarrow{\text{Theorem~\ref{t:disj2}}}
\quad
\text{\disj}
\quad
\xrightarrow{\text{Lecture~\ref{cha:data-stre-algor}}}
\quad
\text{\streaming}$
\caption[Proof structure of linear space     
lower bounds for streaming algorithms]{Review of the proof structure of linear (in $\min\{n,m\}$) space lower bounds for streaming algorithms.  Lower bounds travel from left to right.}\label{f:plan1}
\end{figure}

We next focus on the dependence on the approximation parameter $\eps$
required by a streaming algorithm to compute a $(1 \pm
\eps)$-approximation of a frequency moment.  Recall that the
streaming algorithms that we've seen for $F_0$ and $F_2$ have
quadratic dependence on $\eps^{-1}$. 
Thus an approximation of 1\% would require a blowup of 10,000 in
the space.  Obviously, it would be useful to have algorithms with 
a smaller dependence on $\eps^{-1}$.  
We next prove that space quadratic in $\eps^{-1}$ is necessary,
even allowing randomization and even for $F_0$ and $F_2$, to achieve a
$(1 \pm \eps)$-approximation.

Happily, we'll prove this via reductions, and won't need
to prove from scratch any new communication lower bounds.
We'll follow the
path in Figure~\ref{f:plan2}.  
First we introduce a new problem, also
very useful for proving lower bounds, called the \gh
problem.  
Second, we give a quite clever reduction from \index to \gh.
Finally, it is straightforward to show that one-way protocols for \gh
with sublinear communication induce streaming algorithms that can
compute a $(1 \pm \eps)$-approximation of $F_0$ or $F_2$ in
$o(\eps^{-2})$ space.


\begin{figure}[h]
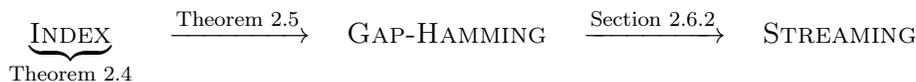

\centering
$\underbrace{\text{\index}}_{\text{Theorem~\ref{t:index}}}
\quad
\xrightarrow{\text{Theorem~\ref{t:gh}}}
\quad
\text{\gh}
\quad
\xrightarrow{\text{Section~\ref{ss:ghf0}}}
\quad
\text{\streaming}$
\caption[Proof plan for $\Omega(\eps^{-2})$ space lower bounds]{Proof plan for $\Omega(\eps^{-2})$ space lower bounds for
  (randomized) streaming algorithms that approximate $F_0$ or $F_2$ up
  to a $1 \pm
  \eps$ factor.  Lower bounds travel from left to right.}\label{f:plan2}
\end{figure}

\section{The \gh Problem}

Our current goal is to prove that every streaming algorithm that computes
a $(1 \pm \eps)$-approximation of $F_0$ or $F_2$ needs
$\Omega(\eps^{-2})$ space.  Note that we're not going to prove this
when $\eps \ll 1/\sqrt{n}$, since we can always compute a frequency
moment exactly in linear or near-linear space.  So the extreme case of
what we're trying to prove is that a $\pmsqrt$-approximation requires
$\Omega(n)$ space.  This special case already requires all of the
ideas needed to prove a lower bound of $\Omega(\eps^{-2})$ for all
larger $\eps$ as well. 

\subsection{Why \disj Doesn't Work}

Our goal is also to prove this lower bound through reductions, rather
than from scratch.  We don't know too many hard problems yet, and we'll
need a new one.  To motivate it, let's see why 
\disj is not good enough for our purposes.  

Suppose we have a streaming algorithm $S$ that gives a
$\pmsqrt$-approximation to $F_0$ --- how could we use it to solve
\disj?  The obvious idea is to follow the reduction
used last lecture for $F_{\infty}$.  Alice converts her input $\bfx$
of \disj and converts it to a stream, feeds this stream into
$S$, sends the final memory state of $S$ to Bob, 
and Bob converts his input $\bfy$ of \disj into a stream and resumes
$S$'s computation 
on it.  With healthy probability, $S$ returns a
$\pmsqrt$-approximation of $F_0$ of the stream induced by $\inputs$.
But is this good for anything?

Suppose $\inputs$ is a ``yes'' instance to Disjointness.  Then, $F_0$
of the corresponding stream is $|\bfx|+|\bfy|$, where $|\cdot|$
denotes the number of 1's in a bit vector.  If $\inputs$
is a ``no'' instance of Disjointness, then $F_0$ is somewhere between
$\max\{|\bfx|,|\bfy|\}$ and $|\bfx|+|\bfy|-1$.  A particularly hard
case is when $|\bfx|=|\bfy|=n/2$ and $\bfx,\bfy$ are either disjoint
or overlap in exactly one element --- $F_0$ is then either $n$ or $n-1$.
In this case, a $\pmsqrt$-approximation of $F_0$ translates to
additive error $\sqrt{n}$, which is nowhere near enough resolution to
distinguish between ``yes'' and ``no'' instances of \disj.

\subsection{Reducing \gh to $F_0$ Estimation}\label{ss:ghf0}

A $\pmsqrt$-approximation of $F_0$ is insufficient to
solve \disj --- but perhaps there is some other
hard problem that it does solve?  
The answer is yes, and the problem
is estimating the Hamming distance between two vectors $\bfx,\bfy$ ---
the number of coordinates in which $\bfx,\bfy$ differ.

To see the connection between $F_0$ and Hamming distance, consider
$\bfx,\bfy \in \{0,1\}^n$ and the usual data stream (with elements in
$U=\{1,2,\ldots,n\}$) induced by them.  As usual, we can interpret
$\bfx,\bfy$ as characteristic vectors of subsets $A,B$ of $U$
(Figure~\ref{f:venn}).  Observe that the Hamming distance $d_H\inputs$
is the just the size of the symmetric difference,
$|A \sm B| + |B \sm A|$.  Observe also that $F_0 = |A \cup B|$,
so $|A \sm B| = F_0 - |B|$ and $|B \sm A| = F_0 - |A|$, and hence
$d_H\inputs = 2F_0 - |\bfx|-|\bfy|$.  Finally, Bob knows $|\bfy|$, and
Alice can send $|\bfx|$ to Bob using $\log_2 n$ bits.

\begin{figure}
\centering
\includegraphics[width=.6\textwidth]{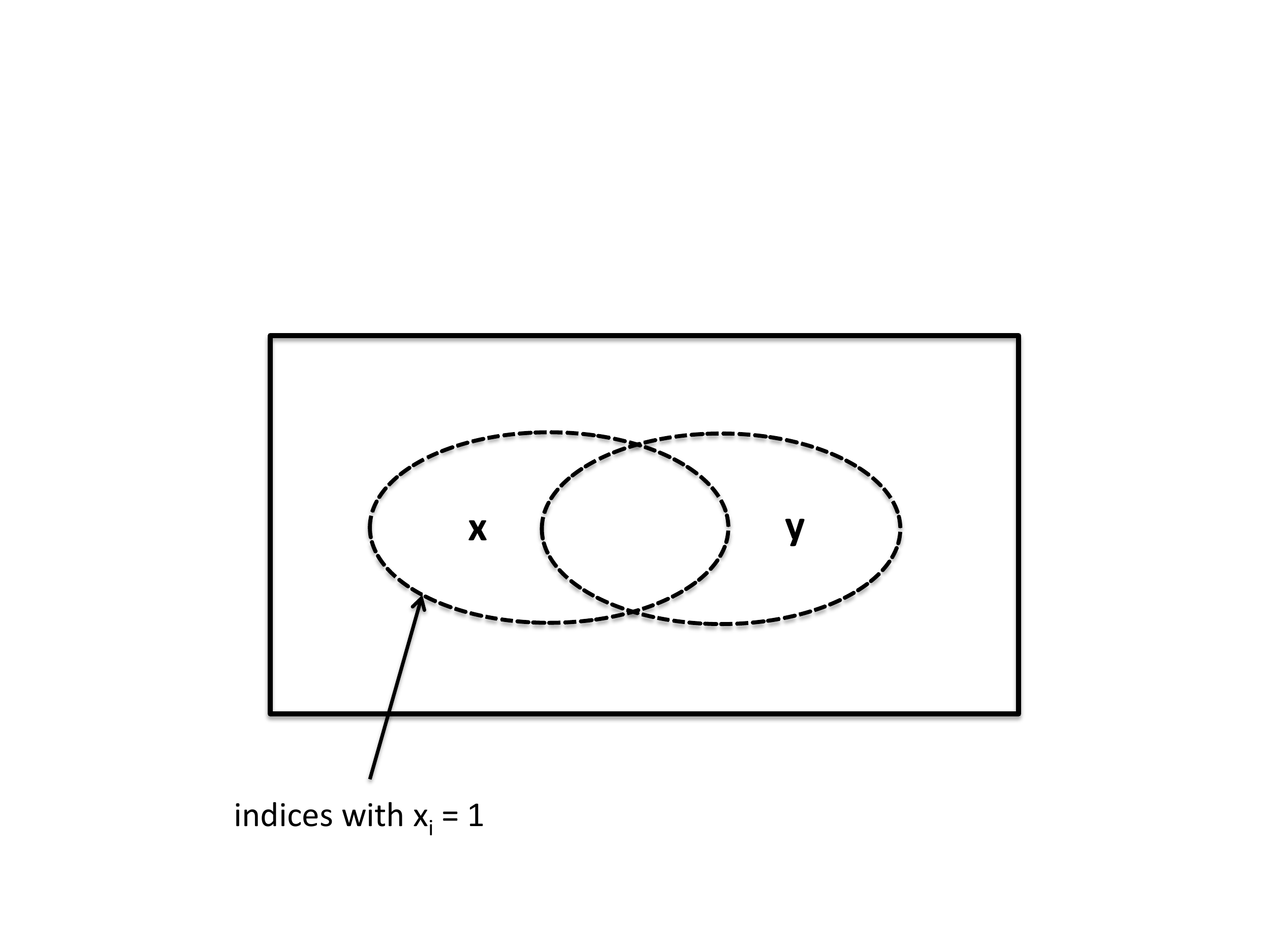}
\caption[Hamming distance and symmetric difference]{The Hamming distance between two bit vectors equals the size
  of the symmetric difference of the corresponding subsets of
  1-coordinates.}\label{f:venn}
\end{figure}

The point is that a one-way protocol that computes $F_0$ with
communication $c$ yields a one-way protocol that computes $d_H\inputs$
with communication $c + \log_2 n$.  More generally, a
$\pmsqrt$-approximation of $F_0$ yields a protocol that estimates
$d_H\inputs$ up to $2F_0/\sqrt{n} \le 2\sqrt{n}$ additive error, with
$\log_2 n$ extra communication.

This reduction from Hamming distance estimation to $F_0$ estimation is
only useful to us if the former problem has large communication
complexity.  It's technically convenient to convert Hamming distance
estimation into a decision problem.  We do this using a ``promise
problem'' --- intuitively, a problem where we only care about a
protocol's correctness when the input satisfies some conditions (a
``promise'').  Formally, for a parameter $t$,
we say that a protocol correctly solves \ght
if it outputs ``1'' whenever $d_H\inputs < t - c\sqrt{n}$ and
outputs ``0'' whenever $d_H\inputs > t + c\sqrt{n}$, where $c$ is a
sufficiently small constant.
Note that the
protocol can output whatever it wants, without penalty, on inputs for
which $d_H\inputs = t \pm c\sqrt{n}$.

Our reduction above shows that, for every $t$, \ght
reduces to the $(1 \pm \tfrac{c}{\sqrt{n}})$-approximation of
$F_0$.  Does it 
matter how we pick $t$?  Remember we still need to prove that the
\ght problem does not admit low-communication one-way
protocols.  If we pick $t=0$, then the problem becomes a special case
of the \eq problem (where $f\inputs = 1$ if and only $\bfx
= \bfy$).  We'll see next lecture that the one-way randomized
communication complexity of \eq
is shockingly low --- only $O(1)$ for
public-coin protocols.  Picking $t=n$ has the same issue.  Picking $t
= \tfrac{n}{2}$ seems more promising.  For example, it's easy to
certify a ``no'' instance of \eq --- just exhibit an
index where $\bfx$ and $\bfy$ differ.  How would you succinctly certify
that $d_H\inputs$ is either at least $\tfrac{n}{2}+\sqrt{n}$ or at most
$\tfrac{n}{2} - \sqrt{n}$?  For more intuition, think about two
vectors $\bfx,\bfy \in \{0,1\}^n$ chosen uniformly at random.  The
expected Hamming distance between them is $\frac{n}{2}$, with a
standard deviation of $\approx \sqrt{n}$.  Thus deciding an instance
of \textsc{Gap-Hamming}($\tfrac{n}{2}$)
has the flavor of learning an unpredictable fact
about two random strings, and it seems difficult to do this without
learning detailed information about the particular strings at hand.

\section[Lower Bound for \gh]]{Lower Bound on the One-Way
  Communication Complexity of \gh}\label{s:ghlb}

This section dispenses with the hand-waving and formally proves that
every protocol that solves \gh --- with
$t=\tfrac{n}{2}$ and $c$ sufficiently small --- requires linear
communication.   
\begin{theorem}[\citealt{JKS08,W04,W07}]\label{t:gh} The randomized one-way
  communication   complexity of \gh is
  $\Omega(n)$.
\end{theorem}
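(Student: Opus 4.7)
The plan is to reduce \index to \gh, so that a sublinear-communication protocol for \gh on $T$-bit inputs would yield a sublinear-communication protocol for \index on $n$-bit inputs, contradicting Theorem~\ref{t:index}. The target parameter is $T = \Theta(n)$: a protocol with communication $o(T)$ for \gh would then give communication $o(n)$ for \index.

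First I would set up the randomized reduction. Alice holds $\bfx \in \zo^n$ (recoded as $\bar{\bfx} \in \{-1,+1\}^n$ via $\bar{x}_j = 2x_j - 1$) and Bob holds $i \in [n]$. Using the public coins, both parties sample $T = \Theta(n)$ independent uniform vectors $\bfr^{(1)},\ldots,\bfr^{(T)} \in \{-1,+1\}^n$. Alice forms $\bfu \in \zo^T$ with $u_t = \mathbf{1}[\langle \bar{\bfx}, \bfr^{(t)}\rangle > 0]$; Bob forms $\bfv \in \zo^T$ with $v_t = \mathbf{1}[r^{(t)}_i = +1]$. They then run the assumed \gh protocol on $(\bfu,\bfv)$, with Alice sending the message she would send in that protocol. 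The total communication matches the \gh protocol up to an $O(\log n)$ additive overhead.

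Second I would carry out the probabilistic analysis connecting $d_H(\bfu,\bfv)$ to $x_i$. Fix $t$ and write $S = \sum_{j \neq i} \bar{x}_j r^{(t)}_j$, a sum of $n-1$ independent $\pm 1$ variables. The pivotal anti-concentration fact, which follows from Stirling's approximation applied to the central binomial coefficient, is $\prob{S = 0} = \Theta(1/\sqrt{n})$. Conditioning on $r^{(t)}_i$ and on the sign of $S$, a short case analysis yields
\[
\prob{u_t = v_t} \;=\; \tfrac{1}{2} \;+\; \bar{x}_i \cdot \gamma, \qquad \gamma = \Theta(1/\sqrt{n}).
\]
Hence $\expect{d_H(\bfu,\bfv)} = T/2 - \bar{x}_i \gamma T$, so the expected Hamming distance differs by $\Theta(T/\sqrt{n}) = \Theta(\sqrt{T})$ across the two cases $x_i = 0$ and $x_i = 1$ (where we use $T = \Theta(n)$).

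Third I would close with a concentration argument. Since the $T$ pairs $(u_t,v_t)$ are independent across $t$ (the $\bfr^{(t)}$ are independent), a Hoeffding bound shows that $d_H(\bfu,\bfv)$ concentrates within $O(\sqrt{T})$ of its mean with probability at least $1 - o(1)$ over the public coins. Choosing the implicit constants so that the $\Theta(\sqrt{T})$ separation comfortably dominates both the concentration error and the $c\sqrt{T}$ slack in the promise of \ght (with $t = T/2$), we conclude that with constant probability the \gh protocol's output correctly recovers $x_i$. Amplifying and derandomizing the coins inside the reduction (or simply using the standard two-sided error convention) then gives an \index protocol violating Theorem~\ref{t:index}, proving the claimed $\Omega(n)$ lower bound. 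The main obstacle I expect is the clean derivation of the $\Theta(1/\sqrt{n})$ bias $\gamma$: one must carefully track the conditional probabilities to show that the bias has the correct sign and magnitude, and in particular that the $\Theta(1/\sqrt{n})$ anti-concentration of $S$ really does translate into a usable $\Theta(\sqrt{T})$ gap after summing $T = \Theta(n)$ trials rather than being swamped by lower-order terms.
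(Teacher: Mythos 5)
Your proposal is correct and is essentially the paper's own argument, just phrased in $\{-1,+1\}$/inner-product language rather than the paper's $\{0,1\}$/Hamming-distance language: the paper sets $b = r_i$ and $a = \mathbf{1}[d_H(\bfx,\bfr) < n/2]$, which is identical to your $v_t$ and $u_t$, and the case analysis on whether $\bfx$ and $\bfr_{-i}$ tie (your $S=0$) vs.\ don't tie is the same. The only things to tidy up are the tacit assumption that $n$ is odd (so that $S$ is even and the inner product is never zero), and the phrasing of the Hoeffding step, which gives deviation $O(\sqrt{T})$ only with \emph{constant} failure probability rather than $1-o(1)$ --- but that constant probability, made small by taking $T = qn$ with $q$ large, is exactly what you in fact use.
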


\begin{proof}
The proof is a randomized reduction from \index, and is more clever than the
other reductions that we've seen so far.  Consider an input to 
\index, where Alice holds an $n$-bit string $\bfx$ and Bob holds an
index $i \in \{1,2,\ldots,n\}$.  We assume, without loss of
generality, that $n$ is odd and sufficiently large.

Alice and Bob generate, without any communication, an input
$\inputsp$ to \gh.  They do this one bit at a time, using
the publicly available randomness.  
To generate the first bit of the \gh input, Alice and Bob
interpret the first $n$ public coins as a random string $\bfr$.  Bob
forms the bit $b = r_i$, the $i$th bit of the random string.
Intuitively, Bob says ``I'm going to pretend that $\bfr$ is actually
Alice's input, and report the corresponding answer $r_i$.''
Meanwhile, Alice checks whether $d_H(\bfx,\bfr) < \tfrac{n}{2}$ or
$d_H(\bfx,\bfr) > \tfrac{n}{2}$.  (Since $n$ is odd, one of these
holds.)  In the former case, Alice forms the bit $a=1$ to indicate
that $\bfr$ is a decent proxy for her input $\bfx$.  Otherwise, she
forms the bit $a=0$ to indicate that $1-\bfr$ would have been a better
approximation of reality (i.e., of $\bfx$).

The key and clever point of the proof is that $a$ and $b$ are
correlated --- positively if $x_i = 1$ and negatively if $x_i = 0$,
where $\bfx$ and $i$ are the given input to \index.  To see this,
condition on the $n-1$ bits of $\bfr$ other than $i$.  There are two
cases.  In the first case, $\bfx$ and $\bfr$ agree on strictly less
than or strictly greater than $(n-1)/2$ of the bits so-far.  In this
case, $a$ is already determined (to 0 or 1, respectively).  Thus, in
this case, $\prob{a=b} = \prob{a=r_i} = \tfrac{1}{2}$, using that
$r_i$ is independent of all the other bits.  In the second case,
amongst the $n-1$ bits of $\bfr$ other than $r_i$, exactly half of
them agree with $\bfx$.  In this case, $a=1$ if and only if $x_i =
r_i$.  Hence, if $x_i = 1$, then $a$ and $b$ always agree (if $r_i =
1$ then $a = b = 1$, if $r_i = 0$ then $a = b = 0$).  If $x_i = 0$,
then $a$ and $b$ always disagree (if $r_i=1$, then $a=0$ and $b=1$, if
$r_i = 0$, then $a=1$ and $b=0$).  

The probability of the second case is the probability of getting
$(n-1)/2$ ``heads'' out of $n-1$ coin flips, which is
$\binom{n-1}{(n-1)/2}$.  Applying Stirling's approximation of the
factorial function shows that this probability is bigger than you
might have expected, namely $\approx \tfrac{c'}{\sqrt{n}}$ for a
  constant $c'$ (see Exercises for details).
We therefore have
\begin{eqnarray*}
\prob{a=b} & = &
\underbrace{\prob{\text{Case 1}}}_{1-\tfrac{c'}{\sqrt{n}}} \cdot
\underbrace{\prob{a=b \,|\, \text{Case 1}}}_{=\tfrac{1}{2}}
+
\underbrace{\prob{\text{Case 2}}}_{\tfrac{c'}{\sqrt{n}}} \cdot \underbrace{\prob{a=b \,|\,
  \text{Case 2}}}_{\text{1 or 0}}\\
& = & \left\{ 
\begin{array}{cl}
\frac{1}{2} - \frac{c'}{\sqrt{n}} & \text{if $x_i = 1$}\\
\frac{1}{2} + \frac{c'}{\sqrt{n}} & \text{if $x_i = 0$.}
\end{array}
\right.
\end{eqnarray*}

This is pretty amazing when you think about it --- Alice and Bob have
no knowledge of each other's inputs and yet, with shared randomness
but no explicit communication, can generate bits correlated with
$x_i$!\footnote{This 
  would clearly not be possible with a private-coin protocol.
But we'll see later than the (additive) difference between the
private-coin and public-coin communication complexity of a problem is
$O(\log n)$, so a linear communication lower bound for one type
automatically carries over to the other type.}

The randomized reduction from \index to \gh now proceeds as one
would expect.  Alice and Bob repeat the bit-generating experiment
above $m$ independent times to generate $m$-bit inputs $\bfx'$ and
$\bfy'$ of \gh.  Here $m = qn$ for a sufficiently large
constant $q$.
The expected Hamming distance between $\bfx'$ and $\bfy'$ is at most
$\tfrac{m}{2} - c'\sqrt{m}$ (if $x_i = 1$) or at least
$\tfrac{m}{2} + c'\sqrt{m}$ (if $x_i = 0$).  A routine application of
the Chernoff bound (see Exercises) implies that, for a sufficiently
small constant $c$ and large constant $q$, with probability at least
$\tfrac{8}{9}$ (say), 
$d_H(\bfx',\bfy') < \tfrac{m}{2} - c\sqrt{m}$ (if $x_i = 1$) and
$d_H(\bfx',\bfy') > \tfrac{m}{2} + c\sqrt{m}$ (if $x_i = 0$).
When this event holds, Alice and Bob can correctly compute the answer
to the original input $(\bfx,i)$ to \index by simply invoking any
protocol $P$ for \gh on the input $(\bfx',\bfy')$.  The communication
cost is that of $P$ on inputs of length $m = \Theta(n)$.  The error is
at most the combined error of the randomized reduction and of the
protocol $P$ --- whenever the reduction and $P$ both proceed as
intended, the correct answer to the \index input $(\bfx,i)$ is computed.

Summarizing, our randomized reduction implies that, if there is a
(public-coin) randomized protocol for \gh with (two-sided)
error $\tfrac{1}{3}$ and sublinear communication, then there is a
randomized protocol for \index with error $\tfrac{4}{9}$.  Since we've
ruled out the latter, the former does not exist.
\end{proof}

Combining Theorem~\ref{t:gh} with our reduction from \gh to
estimating $F_{\infty}$, we've proved the following.

\begin{theorem}\label{t:f0}
There is a constant $c > 0$ such that the following statement holds:
There is no sublinear-space randomized streaming algorithm that, for
every data stream, computes $F_{0}$ to within a $1 \pm
\tfrac{c}{\sqrt{n}}$ factor with probability at least $2/3$.
\end{theorem}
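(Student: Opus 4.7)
The plan is to complete the reduction chain in Figure~\ref{f:plan2} by showing that a sufficiently accurate streaming algorithm for $F_0$ would yield a sublinear-communication one-way protocol for \gh, contradicting Theorem~\ref{t:gh}. The core algebraic identity is the one already highlighted in Section~\ref{ss:ghf0}: if $\bfx,\bfy\in\{0,1\}^n$ are interpreted as characteristic vectors of sets $A,B\subseteq U$ and we form the stream that lists the 1-coordinates of $\bfx$ followed by those of $\bfy$, then $F_0 = |A\cup B|$ and $d_H(\bfx,\bfy) = 2F_0 - |\bfx| - |\bfy|$.

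Suppose for contradiction that $S$ is a randomized streaming algorithm that, on any stream over $U=\{1,\ldots,n\}$, returns a $(1\pm c/\sqrt{n})$-approximation of $F_0$ with probability at least $2/3$, using $s(n)=o(n)$ space. I would have Alice feed the indices of her 1-bits of $\bfx$ into $S$, send Bob both the resulting memory state (at most $s(n)$ bits) and the integer $|\bfx|$ (at most $\lceil\log_2(n+1)\rceil$ bits), after which Bob resumes $S$ on the indices of his 1-bits of $\bfy$. Let $\hat{F_0}$ be the output; Bob then reports ``1'' (i.e.\ $d_H$ small) if $2\hat{F_0} - |\bfx|-|\bfy| < n/2$ and ``0'' otherwise. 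The total communication is $s(n) + O(\log n)$.

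For the analysis, note that $F_0\le n$, so with probability at least $2/3$ we have $|\hat{F_0} - F_0| \le (c/\sqrt{n})\cdot F_0 \le c\sqrt{n}$, hence $|(2\hat{F_0}-|\bfx|-|\bfy|) - d_H(\bfx,\bfy)| \le 2c\sqrt{n}$. Let $c^\star$ be the gap constant from Theorem~\ref{t:gh} for the \textsc{Gap-Hamming}($n/2$) problem. Choosing $c < c^\star/2$ guarantees that the additive error $2c\sqrt{n}$ is strictly smaller than $c^\star\sqrt{n}$, so Bob's thresholded answer correctly decides \gh on every promise input with probability at least $2/3$. Since the induced one-way protocol uses $s(n)+O(\log n)$ bits of communication, Theorem~\ref{t:gh} forces $s(n)=\Omega(n)$, a contradiction.

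The reduction itself is essentially mechanical — the ``hard part'' is really just choosing $c$ small enough relative to the implicit constant in Theorem~\ref{t:gh}, and being careful that the additive $O(\log n)$ overhead for transmitting $|\bfx|$ does not swamp the $\Omega(n)$ lower bound (which it does not). All of the genuine difficulty has already been discharged upstream: in Theorem~\ref{t:index} (hardness of \index) and, more substantively, in the clever randomized reduction of Theorem~\ref{t:gh} that transmits bits correlated with $x_i$ via the $\binom{n-1}{(n-1)/2}\sim 1/\sqrt{n}$ anticoncentration. The present argument just cashes in those lower bounds via the $F_0$-to-Hamming-distance dictionary.
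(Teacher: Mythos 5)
Your proposal is correct and follows essentially the same route as the paper: the reduction via the identity $d_H(\bfx,\bfy) = 2F_0 - |\bfx| - |\bfy|$, the transmission of the streaming algorithm's memory state together with $|\bfx|$, the additive-error bound $2c\sqrt{n}$, and the choice of $c$ small relative to the gap constant in Theorem~\ref{t:gh} are exactly what Section~\ref{ss:ghf0} sets up and what the paper invokes to conclude Theorem~\ref{t:f0}. The only difference is that you spell out the threshold test and the constant comparison a bit more explicitly than the paper does.
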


A variation on the same reduction proves the same lower bound for
approximating $F_2$; see the Exercises.

Our original goal was to prove that the $(1 \pm
\eps)$-approximate computation of $F_0$ requires space
$\Omega(\eps^{-2})$, when $\eps \ge \tfrac{1}{\sqrt{n}}$.
Theorem~\ref{t:f0} proves this in the special case where $\eps =
\Theta(\tfrac{1}{\sqrt{n}})$.  This can be extended to larger $\eps$ by
a simple ``padding'' trick.  Fix your favorite values of $n$ and $\eps \ge
\tfrac{1}{\sqrt{n}}$ and modify the proof of Theorem~\ref{t:f0} as
follows.  Reduce from \gh on inputs of length $m =
\Theta(\eps^{-2})$.  Given an input $\inputs$ of \gh, form
$\inputsp$ by appending $n-m$ zeroes to $\bfx$ and $\bfy$.  A streaming
algorithm with space $s$ that estimates $F_0$ on the induced data
stream up to a $(1 \pm \eps)$ factor induces a randomized protocol
that solves this special case of \gh with communication $s$.
Theorem~\ref{t:gh} implies that every randomized
protocol for the latter problem uses communication
$\Omega(\eps^{-2})$, so this lower bound carries over to the space
used by the streaming algorithm.

\chapter{Lower Bounds for Compressive Sensing}
\label{cha:lower-bounds-compr}

\section[Randomized Communication Complexity of the             
  Equality Function]{An Appetizer: Randomized Communication Complexity of \eq}

We begin with an appetizer before starting the lecture proper --- an
example that demonstrates that randomized one-way communication
protocols can sometimes exhibit surprising power.

It won't surprise you that the \eq function ---
with $f\inputs = 1$ if and only if $\bfx = \bfy$ --- is a central problem
in communication complexity.  It's easy to prove, by the Pigeonhole
Principle, that its deterministic
one-way communication complexity is $n$, where $n$ is the length of
the inputs $\bfx$ and $\bfy$.\footnote{We'll see later that this lower
  bound applies to general deterministic protocols, not just to
  one-way protocols.} What about its randomized communication
complexity?  Recall from last lecture that by default, our randomized
protocols can use public coins\footnote{Recall the public-coin model:
  when Alice and Bob show up there is already an infinite stream of
  random bits written on a blackboard, which both of them can see.
  Using shared randomness does not count toward the communication
  cost of the protocol.} and can have two-sided error $\eps$,
where $\eps$ is any constant less than $\tfrac{1}{2}$.

\begin{theorem}[\citealt{Y79}]\label{t:eq}
The (public-coin) randomized one-way communication complexity of
\eq is $O(1)$.
\end{theorem}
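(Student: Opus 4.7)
The plan is to exploit public randomness to produce a constant-size ``fingerprint'' of Alice's input that matches Bob's fingerprint exactly when $\bfx = \bfy$ and matches with probability only $\tfrac{1}{2}$ otherwise; then amplify by repetition.

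First, I would describe the one-bit protocol. Alice and Bob read $n$ shared random bits off the blackboard, interpreting them as a uniformly random vector $\bfr \in \{0,1\}^n$. Alice sends Bob the single bit $a = \langle \bfr, \bfx \rangle \bmod 2$, and Bob computes $b = \langle \bfr, \bfy \rangle \bmod 2$ and declares ``equal'' iff $a = b$. If $\bfx = \bfy$ then $a = b$ deterministically, so there are no false negatives. If $\bfx \neq \bfy$, let $S$ be the (nonempty) set of coordinates where $\bfx$ and $\bfy$ differ; then $a \oplus b = \bigoplus_{i \in S} r_i$, and since at least one $r_i$ with $i \in S$ is a uniformly random bit independent of the others, this XOR is $1$ with probability exactly $\tfrac{1}{2}$. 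So the protocol has one-sided error $\tfrac{1}{2}$ using one bit of communication.

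Next, I would amplify by running $k$ independent copies in parallel, using disjoint blocks of $k n$ public random bits to generate independent vectors $\bfr_1, \ldots, \bfr_k$. Alice sends the $k$-bit string $(a_1, \ldots, a_k)$ where $a_j = \langle \bfr_j, \bfx \rangle \bmod 2$, Bob computes the analogous $(b_1, \ldots, b_k)$, and declares ``equal'' iff $a_j = b_j$ for all $j$. Equal inputs still produce no error, and unequal inputs survive all $k$ checks with probability $2^{-k}$. Choosing $k$ to be any constant with $2^{-k} \le \eps$ gives total communication $k = O(1)$ and two-sided (in fact one-sided) error at most $\eps$, which suffices since $\eps$ is any constant below $\tfrac{1}{2}$.

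There is no real obstacle here; the only thing to verify carefully is the independence claim used for the $\tfrac{1}{2}$-error bound, namely that if $\bfx \neq \bfy$ then $\bigoplus_{i \in S} r_i$ is an unbiased bit. This follows immediately because $S$ is nonempty and the $r_i$ are i.i.d.\ uniform. The protocol crucially uses the shared blackboard: a private-coin protocol cannot produce correlated fingerprints at constant cost, and indeed the private-coin complexity of \eq is $\Theta(\log n)$, which we may want to remark on for contrast.
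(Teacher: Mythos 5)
Your proof is correct and follows essentially the same approach as the paper's: public-coin random inner products mod~2 as one-bit fingerprints with one-sided error, amplified by independent repetition. The only cosmetic difference is that you present the single-bit protocol and then amplify to $k$ bits, whereas the paper jumps directly to two parallel inner products (giving error $1/4$), but the underlying argument — including the observation that $a \oplus b = \bigoplus_{i \in S} r_i$ is unbiased when $\bfx \neq \bfy$ — is identical.
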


Thus, the randomized communication complexity of a problem can be
radically smaller than its deterministic communication complexity.
A similar statement follows from our upper and lower bound
results for estimating the frequency moments $F_0$ and $F_2$ using
small-space streaming algorithms, but Theorem~\ref{t:eq} illustrates
this point in a starker and clearer way.  

Theorem~\ref{t:eq} provides a
cautionary tale: sometimes we expect a problem to be hard for a class
of protocols and are proved wrong by a clever protocol; other times,
clever protocols don't provide non-trivial solutions to a problem but
still make 
proving strong lower bounds technically difficult.  Theorem~\ref{t:eq}
also suggests that, if we want to prove strong communication lower
bounds for randomized protocols via a reduction, there might not be
too many natural problems out there to reduce from.\footnote{Recall our
  discussion about \gh last lecture: for the
  problem to be hard, it is important to choose the midpoint $t$ to
  be $\tfrac{n}{2}$.  With $t$ too close to 0 or $n$, the problem is a
  special case of \eq and is therefore easy for randomized
  protocols.}

\vspace{.1in}

\begin{prevproof}{Theorem}{t:eq}
The protocol is as follows.
\begin{enumerate}

\item Alice and Bob interpret the first $2n$ public coins as random
  strings $\bfr_1,\bfr_2 \in \{0,1\}^n$.  This requires no
  communication.

\item Alice sends the two random inner products 
$\ip{\bfx}{\bfr_1}
  \bmod 2$
and $\ip{\bfx}{\bfr_2} \bmod 2$ to Bob.  This requires two bits of
communication.

\item Bob reports ``1'' if and only if his random inner products match
  those of Alice: $\ip{\bfy}{\bfr_i} = \ip{\bfx}{\bfr_i} \bmod 2$ for
  $i=1,2$.  Note that Bob has all of the information needed to
  perform this computation.

\end{enumerate}
We claim that the error of this protocol is at most 25\% on every
input.  The protocol's error is one-sided: when $\bfx=\bfy$ the
protocol always accepts, so there are no false negatives.  Suppose
that $\bfx \neq \bfy$.  We use the Principle of Deferred Decisions to
argue that, for each $i=1,2$, the inner products 
$\ip{\bfy}{\bfr_i}$ and $\ip{\bfx}{\bfr_i}$ are different (mod~2) with
probability exactly 50\%.  To see this, pick an index $i$ where $x_i
\neq y_i$ and condition on all of the bits of a random string except
for the $i$th one.  Let $a$ and $b$ denote the values of the inner
products-so-far of $\bfx$ and $\bfy$ (modulo 2) with the random
string.  If the $i$th random bit is a 0, then the final inner products are
also $a$ and $b$.  If the $i$th random bit is a 1, then one inner product
stays the same while the other flips its value (since exactly one of
$x_i,y_i$ is a 1).  Thus, whether $a = b$ or $a \neq b$, exactly one
of the two random bit values (50\% probability) results in the final two
inner products having different values (modulo 2).
The probability that two unequal strings have equal inner products
(modulo 2) in two independent experiments is 25\%.
\end{prevproof}

The proof of Theorem~\ref{t:eq} gives a 2-bit protocol with (1-sided)
error 25\%.  As usual, executing many parallel
copies of the protocol reduces the error to an arbitrarily small
constant, with a constant blow-up in the communication complexity.

The protocol used to prove Theorem~\ref{t:eq} makes crucial use of
public coins.  We'll see later that the private-coin one-way
randomized communication complexity is $\Theta(\log n)$, 
which is worse than public-coin protocols but still radically better
than deterministic protocols.
More generally, next lecture we'll prove Newman's theorem, which states
that the private-coin randomized communication complexity of a problem 
is at most $O(\log n)$ more than its public-coin randomized communication
complexity. 

The protocol in the proof of Theorem~\ref{t:eq} effectively gives each
of the two strings $\bfx,\bfy$ a 2-bit ``sketch'' or ``fingerprint''
such that the property of distinctness is approximately preserved.
Clearly, this is the same basic idea as hashing.  This is a useful
idea in both theory and practice, and we'll use it again shortly.

\begin{remark}
The computational model studied in communication
complexity is potentially very powerful --- for example, Alice and Bob
have unlimited computational power --- and the primary point of the
model is to prove lower bounds.  Thus, whenever you see an {\em upper
  bound} result in communication complexity, like Theorem~\ref{t:eq},
it's worth asking what the point of the result is.  In many cases, a
positive result is really more of a ``negative negative result,''
intended to prove the tightness of a lower bound rather than offer a
practical solution to a real problem.  In other cases, 
the main point is demonstrate separations between different notions of
communication complexity or between different problems.  For example,
Theorem~\ref{t:eq} shows that the one-way deterministic and randomized
communication complexity of a problem can be radically different, even
if we insist on one-sided error.  It also shows that the randomized
communication complexity of \eq is very different than that of the
problems we studied last lecture: \disj, \index, and \gh.

Theorem~\ref{t:eq} also uses a quite reasonable protocol,
which is not far from a practical solution to probabilistic
equality-testing. 
In some applications, the public coins can be replaced by a hash
function that is published 
in advance; in other applications, one party can choose a random hash
function that can be specified with a reasonable number of bits and
communicate it to other parties.  
\end{remark}

\section{Sparse Recovery}

\subsection{The Basic Setup}

The field of sparse recovery has been a ridiculously hot area for the
past ten years, in applied mathematics, machine learning, and
theoretical computer science.  We'll study sparse recovery in the
standard setup of ``compressive sensing'' (also called ``compressed
sensing'').  There is an unknown ``signal'' --- i.e., a real-valued
vector $\bfx \in \RR^n$ --- that we want to learn.  The bad news is
that we're only allowed to access the signal through ``linear
measurements;'' the good news is that we have the freedom to choose
whatever measurements we want.  Mathematically, we want to design a
matrix $\A \in \RR^{m \times n}$, with $m$ as small as possible, such
that we can recover the unknown signal $\bfx$ from the linear
measurements $\Ax$ (whatever $\bfx$ may be).

As currently stated, this is a boring problem.  It is clear that $n$
measurements are sufficient -- just take $\A = I$, or any other
invertible $n \times n$ matrix.  It is also clear that $n$
measurements are necessary: if $m < n$, then there is a entire
subspace of dimension $n-m$ of vectors that have image $\Ax$ under
$\A$, and we have no way to know which one of them is the actual
unknown signal. 

The problem becomes interesting when we also assume that the unknown
signal $\bfx$ is ``sparse,'' in senses we define shortly.  The hope is
that under the additional promise that $\bfx$ is sparse, we can get away
with much fewer than $n$ measurements (Figure~\ref{f:cs}).

\begin{figure}
\centering
\includegraphics[width=.8\textwidth]{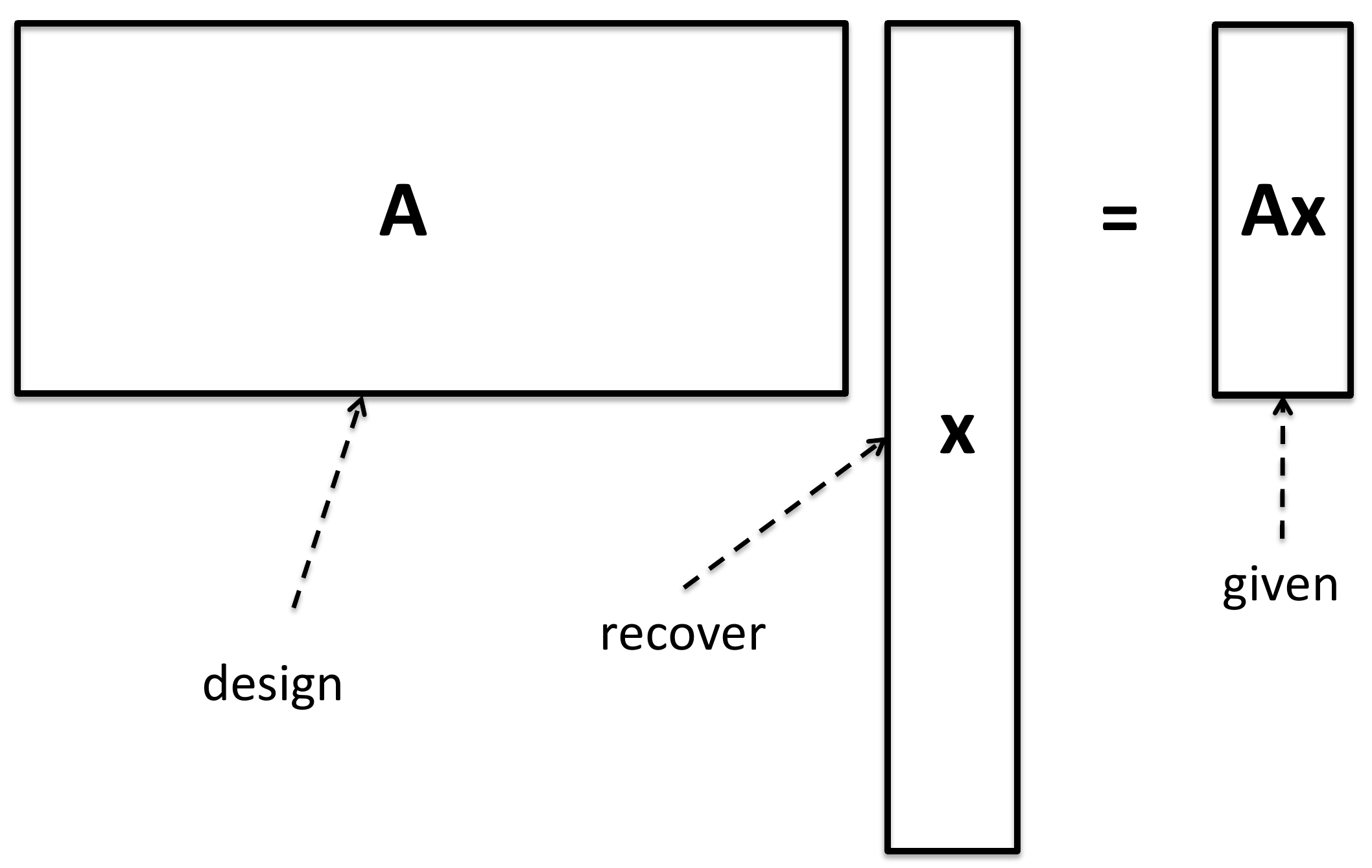}
\caption[Compressive sensing]{The basic compressive sensing setup.
The goal is to design a matrix $\A$ such that an unknown sparse signal $\bfx$
can be recovered from the linear measurements $\Ax$.}\label{f:cs}
\end{figure}

\subsection{A Toy Version}\label{ss:toy}

To develop intuition for this problem, let's explore a toy version.
Suppose you are promised that $\bfx$ is a 0-1 vector with exactly $k$
1's (and hence $n-k$ 0's).  Throughout this lecture, $k$ is the
parameter that measures the sparsity of the unknown signal $\bfx$ --- it
could be anything, but you might want to keep $k \approx \sqrt{n}$ in
mind as a canonical parameter value.  Let $X$ denote the set of all
such $k$-sparse 0-1 vectors, and note that $|X| = \binom{n}{k}$.

Here's a solution to the sparse recovery problem under that guarantee
that $\bfx \in X$.  Take $m = 3 \log_2 |X|$, and choose each of the $m$ rows
of the sensing matrix $\A$ independently and uniformly at random from
$\{0,1\}^n$.  By the fingerprinting argument used in our randomized
protocol for \eq (Theorem~\ref{t:eq}), for fixed distinct $\bfx,\bfx'
\in X$, we have
\[
\prob[\A]{\Ax = \Ax' \bmod 2} = \frac{1}{2^m} = \frac{1}{|X|^3}.
\]
Of course, the probability that $\Ax = \Ax'$ (not modulo~2) is only
less.  Taking a Union Bound over the at most $|X|^2$ different distinct
pairs $\bfx,\bfx' \in X$, we have
\[
\prob[\A]{\text{there exists } \bfx \neq \bfx' \text{ s.t. } \Ax = \Ax'}
\le \frac{1}{|X|}.
\]
Thus, there is a matrix $\A$ that maps all $\bfx \in X$ to distinct
$m$-vectors.  (Indeed, a random $\A$ works with high probability.)
Thus, given $\Ax$, one can recover $\bfx$ --- if nothing else, one can
just compute $\Ax'$ for every $\bfx' \in X$ until a match is
found.\footnote{We won't focus on computational efficiency in this
  lecture, but positive results in compressive sensing generally also
  have computationally efficient recovery algorithms.  Our lower
  bounds will hold even for recovery algorithms with unbounded
  computational power.}

The point is that $m = \Theta(\log |X|)$ measurements are sufficient
to recover exactly $k$-sparse 0-1 vectors.  Recalling that
$|X| = \binom{n}{k}$ and that
\[
\left( \frac{n}{k} \right)^k \underbrace{\le}_{\text{easy}}
\binom{n}{k} \underbrace{\le}_{\text{Stirling's approx.}} \left(
\frac{en}{k} \right)^k,
\]
we see that $m = \Theta(k \log \tfrac{n}{k})$ rows
suffice.\footnote{If you read through the compressive sensing
  literature, you'll be plagued by ubiquitous ``$k \log \tfrac{n}{k}$''
  terms --- remember this is just $\approx \log \binom{n}{k}$.}

This exercise shows that, at least for the special case of exactly
sparse 0-1 signals, we can indeed achieve recovery with far fewer than
$n$ measurements.  For example, if $k \approx \sqrt{n}$, we need only
$O(\sqrt{n} \log n)$ measurements.

Now that we have a proof of concept that recovering an unknown sparse
vector is an interesting problem, we'd like to do better in two
senses.  First, we want to move beyond the toy version of the problem
to the ``real'' version of the problem.  Second, we have to wonder
whether even fewer measurements suffice.

\subsection{Motivating Applications}

To motivate the real version of the problem, we mention a couple of
canonical applications of compressive sensing.
One buzzword you can
look up and read more about is the ``single-pixel camera.''  The standard
approach to taking pictures is to first take a high-resolution picture
in the ``standard basis'' --- e.g., a light intensity for each pixel
--- and then to compress the picture later (via software).  Because
real-world images are typically sparse in a suitable basis, they can
be compressed a lot.  The compressive sensing approach asks, then why
not just capture the image directly in a compressed form --- in a
representation where its sparsity shines through?  For example, one can
store random linear combinations of light intensities (implemented via
suitable mirrors) rather than the light intensities themselves.  This
idea leads to a reduction in the number of pixels needed to capture an
image at a given resolution.
Another application of compressive sensing is in MRI.  Here, decreasing
the number of measurements decreases the time necessary for a scan.
Since a patient needs to stay motionless during a scan --- in some cases, not even
breathing --- shorter scan times can be a pretty big deal.

\subsection{The Real Problem}

If the unknown signal $\bfx$ is an image, say, there's no way it's an
exactly sparse 0-1 vector.  We need to consider more general unknown
signals $\bfx$ that are real-valued and only ``approximately sparse.''  
To measure approximate sparsity, with respect to a choice of $k$,
we define the {\em residual} $\res(\bfx)$ of a vector $\bfx \in \RR^n$ as
the contribution to $\bfx$'s $\ell_1$ norm by its $n-k$ coordinates with
smallest magnitudes.  Recall that the $\ell_1$ norm is just $\|\bfx\|_1
= \sum_{i=1}^n |x_i|$.  If we imagine sorting the coordinates by
$|x_i|$, then the residual of $\bfx$ is just the sum of the $|x_i|$'s of
the final $n-k$ terms.  If $\bfx$ has exactly $k$-sparse, it has at least
$n-k$ zeros and hence $\res(\bfx) = 0$.

The goal is to design a sensing matrix $\A$ with a small number $m$ of
rows such that an unknown approximately sparse vector $\bfx$ can be
recovered from $\Ax$.  Or rather, given that $\bfx$ is only
approximately sparse, we want to recover a close approximation of
$\bfx$.

The formal guarantee we'll seek for the matrix $\A$ is the following:
for every $\bfx \in \RR^n$, we can compute from $\Ax$ a vector $\bfx'$
such that 
\begin{equation}\label{eq:rec}
\n{\bfx'-\bfx}_1 \le c \cdot \res(\bfx).
\end{equation}
Here $c$ is a constant, like~2.  The guarantee~\eqref{eq:rec} is very
compelling.  First, if $\bfx$ is exactly $k$-sparse, then $\res(\bfx) = 0$
and so~\eqref{eq:rec} demands exact recovery of $\bfx$.  The guarantee
is parameterized by how close $\bfx$ is to being sparse --- the
recovered vector $\bfx'$ should lie in a ball (in the $\ell_1$ norm)
around $\bfx$, and the further $\bfx$ is from being $k$-sparse, the bigger
this ball is.  Intuitively, the radius of this ball has to depend on
something like $\res(\bfx)$.  For example, 
suppose that $\bfx'$ is exactly $k$-sparse (with $\res(\bfx) = 0$).  The
guarantee~\eqref{eq:rec} forces the algorithm to return $\bfx'$ for
every unknown signal $\bfx$ with $\Ax = \Ax'$.  Recall that when $m < n$,
there is an $(n-m)$-dimensional subspace of such signals $\bfx$.
In the extreme case where there is such an $\bfx$ with $\bfx - \res(\bfx) =
\bfx'$ -- i.e., where $\bfx$ is $\bfx'$ with a little noise added to its
zero coordinates --- the recovery algorithm is forced to return a
solution $\bfx'$ with  $\n{\bfx'-\bfx}_1 = \res(\bfx)$.

Now that we've defined the real version of the problem, is there an
interesting solution?  Happily, the real version can be solved as well
as the toy version.
\begin{fact}[\citealt{CRT06,D06}]\label{fact:cs}
With high probability, 
a random $m \times n$ matrix $\A$ with $\Theta(k \log \tfrac{n}{k})$
rows admits a recovery algorithm that satisfies the guarantee
in~\eqref{eq:rec} for every $\bfx \in \RR^n$.
\end{fact}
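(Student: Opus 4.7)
The plan is to take the recovery algorithm to be basis pursuit, i.e., $\bfx' := \arg\min \{\|\bfz\|_1 : \A\bfz = \Ax\}$, and to show that a random matrix $\A$ (say with i.i.d.\ Gaussian entries scaled by $1/\sqrt{m}$) with $m = \Theta(k \log (n/k))$ rows makes this recovery succeed simultaneously for every $\bfx \in \RR^n$ with high probability.

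First I would reduce the guarantee~\eqref{eq:rec} to a \emph{null space condition} on $\A$: every nonzero $\bfv \in \ker(\A)$ satisfies $\|\bfv_S\|_1 \le \tfrac{1}{2}\|\bfv_{\bar S}\|_1$ for every index set $S$ of size $k$, where $\bfv_S$ denotes the restriction of $\bfv$ to $S$. Given this property, the recovery guarantee follows from a short algebraic calculation: let $S$ be the support of the top $k$ entries of the true signal $\bfx$ in magnitude, and set $\bfv = \bfx' - \bfx \in \ker(\A)$. Using the optimality $\|\bfx'\|_1 \le \|\bfx\|_1$ together with $\bfx' = \bfx + \bfv$ and the triangle inequality on $S$ and $\bar S$ yields $\|\bfv_{\bar S}\|_1 - \|\bfv_S\|_1 \le 2\,\res(\bfx)$; combined with the null space inequality, this gives $\|\bfv\|_1 \le c \cdot \res(\bfx)$ for an absolute constant $c$.

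The heart of the proof is verifying the null space condition for a random $\A$. The standard route is through the \emph{restricted isometry property} (RIP): for some constant $\delta < 1$ and every $2k$-sparse vector $\bfv$,
\[
(1-\delta)\|\bfv\|_2 \le \|\A\bfv\|_2 \le (1+\delta)\|\bfv\|_2,
\]
from which the null space condition follows by a short triangle-inequality argument (decompose a hypothetical bad null-space vector into chunks of size $k$ by decreasing magnitude and compare $\ell_2$ norms of adjacent chunks). To prove RIP, I would fix a support set $S \subseteq \{1,\ldots,n\}$ of size $2k$ and analyze $\A$ restricted to the $2k$-dimensional subspace of vectors supported on $S$. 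Gaussian concentration gives that any \emph{single} unit $\bfv$ supported on $S$ has $\|\A\bfv\|_2 \in [1-\delta/2, 1+\delta/2]$ except with probability $\le e^{-c_1 m}$. A covering argument using an $\eps$-net of size $(3/\eps)^{2k}$ on the unit sphere of this subspace, followed by a standard passage from the net to the whole sphere, extends this bound to all vectors supported on $S$ at a cost of an extra $e^{O(k)}$ factor in the failure probability. A union bound over the $\binom{n}{2k} \le (en/k)^{O(k)}$ choices of $S$ then succeeds provided $m = \Theta(k \log(n/k))$ — exactly the claimed bound.

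The main obstacle is this last probabilistic step: the union bound over $2k$-subsets is where the $k \log(n/k)$ factor materializes (the same factor that already appeared in the toy version of Section~\ref{ss:toy}), and one has to marry Gaussian (chi-squared-type) concentration of $\|\A\bfv\|_2^2$ with the discretization of a continuous set of directions inside each $2k$-dimensional coordinate subspace. By comparison, the deduction of~\eqref{eq:rec} from the null space condition, and of the null space condition from RIP, are routine.
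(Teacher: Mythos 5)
The paper does not prove Fact~\ref{fact:cs} at all --- it explicitly calls it ``non-trivial and well outside the scope of this course'' and refers the reader to the cited papers (and to~\cite{moitra}) for the argument. So there is no in-text proof for you to diverge from; the relevant comparison is to the literature, and your outline matches the now-standard route there: basis pursuit as the recovery algorithm, reduction of~\eqref{eq:rec} to a null space condition on $\A$, derivation of the null space condition from RIP, and a proof of RIP for Gaussian $\A$ via an $\eps$-net on the unit sphere of each $2k$-dimensional coordinate subspace together with a union bound over the $\binom{n}{2k}$ supports. Your algebra is right: the null space ratio $\rho = \tfrac{1}{2}$ together with $\|\bfx'\|_1 \le \|\bfx\|_1$ gives $\|\bfv\|_1 \le \tfrac{2(1+\rho)}{1-\rho}\res(\bfx) = 6\,\res(\bfx)$, and the failure probability $e^{-c_1 m} \cdot e^{O(k)} \cdot \binom{n}{2k}$ is small exactly when $m = \Theta(k \log(n/k))$, the same $\log\binom{n}{k}$ combinatorics as in the toy version of Section~\ref{ss:toy}. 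The one detail to tighten: RIP with ``some constant $\delta < 1$'' does \emph{not} imply the null space property; the chunking argument that compares $\ell_2$ norms of consecutive $k$-blocks also uses near-orthogonality of $\A$'s columns on disjoint supports and requires $\delta_{2k}$ below a specific threshold (roughly $\sqrt{2}-1$). Since Gaussian concentration lets you drive $\delta$ as small as you like at the cost of a constant factor in $m$, this does not change the final statement, but the threshold should be named rather than left as ``$\delta < 1$.''
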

Fact~\ref{fact:cs} is non-trivial and well outside the scope of this
course. The fact is true for several different distributions over
matrices, including the case where each matrix entry is an independent
standard Gaussian.  Also, there are computationally efficient recovery
algorithms that achieve the guarantee.\footnote{See e.g.~\cite{moitra} for an introduction to such positive results about
  sparse recovery.
Lecture \#9 of the
 instructor's CS264 course also gives a brief overview.}

\section{A Lower Bound for Sparse Recovery}

\subsection{Context}

At the end of Section~\ref{ss:toy}, when we asked ``can we do
better?,'' we
meant this in two senses.  First, can we extend the positive results
for the toy problem to a more general problem?  Fact~\ref{fact:cs}
provides a resounding affirmative answer.  Second, can we get away
with an even smaller value of $m$ --- even fewer measurements?  In
this section we prove a relatively recent result of~\cite{D+10}, who showed 
that the answer is ``no.''\footnote{Such a lower bound was previously
  known for various special cases --- for particular classes of
  matrices, for particular families of recovery algorithms, etc.}
Amazingly, they proved this fundamental
result via a reduction to a lower bound in communication complexity
(for \index).  
\begin{theorem}[\citealt{D+10}]\label{t:cs}
If an $m \times n$ matrix $\A$ admits a recovery algorithm $R$ that,
for every $\bfx \in \RR^n$, computes from $\Ax$ a vector $\bfx'$ that
satisfies~\eqref{eq:rec}, then $m = \Omega(k \log \tfrac{n}{k})$.
\end{theorem}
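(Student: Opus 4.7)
The plan is to reduce from the Augmented Index problem --- the variant of Index in which Bob, in addition to his target index $i$, sees the prefix $x_1,\ldots,x_{i-1}$ of Alice's input. Augmented Index has the same $\Omega(N)$ one-way randomized communication lower bound as Index, proved by essentially the same distributional argument as Theorem~\ref{t:index} (the crucial point is that the bits past position $i$ remain uniformly random to Bob). I will build, from a sensing matrix $\A$ with $m$ rows satisfying~\eqref{eq:rec}, a one-way protocol for Augmented Index on $N = T \cdot \lfloor\log_2\binom{n}{k}\rfloor = \Theta(Tk\log(n/k))$ bits that uses only $O(m \log n)$ communication. Choosing $T = \Theta(\log n)$ and invoking the Augmented Index lower bound then forces $m\log n = \Omega(k\log(n/k)\cdot \log n)$, i.e.\ $m = \Omega(k\log(n/k))$.

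For the reduction, Alice interprets her $N$-bit input as $T$ blocks, each indexing one of the $\binom{n}{k}$ exactly $k$-sparse $\{0,1\}$-vectors in $\{0,1\}^n$; call these $\bfy^{(1)},\ldots,\bfy^{(T)}$. She forms the geometrically scaled superposition $\bfz = \sum_{j=1}^T D^{T-j}\bfy^{(j)}$ for a parameter $D$ chosen slightly larger than a suitable multiple of the recovery constant $c$ in~\eqref{eq:rec}. Alice and Bob share $\A$ via public coins; after a preliminary discretization step that rounds $\A$ to entries of polynomial magnitude (absorbing the perturbation into $c$), Alice sends $\A\bfz$ rounded to $O(\log n)$ bits per coordinate, for total communication $O(m\log n)$.

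To decode bit $x_i$ of his input, lying in some block $\tau$, Bob uses his Augmented-Index side information (he knows $\bfy^{(1)},\ldots,\bfy^{(\tau-1)}$) to form
\[
\A\bfz'' \;:=\; \A\bfz - \sum_{j<\tau} D^{T-j}\A\bfy^{(j)} \;=\; \A\Bigl(D^{T-\tau}\bfy^{(\tau)} + \sum_{j>\tau} D^{T-j}\bfy^{(j)}\Bigr).
\]
The tail has $\ell_1$ norm at most $O(kD^{T-\tau-1})$, which bounds the residual of $\bfz''$ as an approximately $k$-sparse vector, so the recovery guarantee~\eqref{eq:rec} yields a vector $\bfx'$ with $\|\bfx' - \bfz''\|_1 \le O(ckD^{T-\tau-1})$. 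The ``signal'' entries of $\bfz''$ have magnitude $D^{T-\tau}$ on $\supp(\bfy^{(\tau)})$, while off-support entries have magnitude only $O(D^{T-\tau-1})$; for $D$ chosen large enough, the $\ell_1$ perturbation is not enough to swap support with off-support among the top-$k$ entries of $\bfx'$, so Bob reads off $\bfy^{(\tau)}$ and hence $x_i$.

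The delicate part of the argument is the precision accounting. I need to simultaneously keep $D$ small enough that $D^T$ is polynomial in $n$ (so that each coordinate of $\A\bfz$ fits in $O(\log n)$ bits), and large enough that the $\ell_1$ recovery error cannot swap a support-entry with an off-support-entry of $\bfx'$. A naive worst-case $\ell_\infty \le \ell_1$ bound in the latter step would force $D = \Omega(k)$, costing a $\log k$ factor in the final bound; avoiding this requires a more careful argument that charges the recovery error against the number of signal entries it can simultaneously knock below threshold, using the $\ell_1$ budget only once across all $k$ of them rather than separately per entry. Together with verifying that the preliminary rounding of $\A$ to polynomial-magnitude entries preserves~\eqref{eq:rec} up to constants, this precision analysis is the technical heart of~\cite{D+10} and is where I expect the main work to lie.
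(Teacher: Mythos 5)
Your high-level architecture matches the paper's proof of Theorem~\ref{t:cs}: interpret Alice's input as $\Theta(\log n)$ blocks each encoding a $k$-sparse vector, form a geometric superposition, send $\A$ applied to it, and have Bob subtract off the blocks he can determine (via peeling for deterministic recovery, or via Augmented Index side information) before invoking $R$ to decode the current block. The bit-complexity discretization you flag is also the same issue the paper handles in its ``Aside on Bit Complexity.'' However, there is one genuine gap, and it is exactly where you wave your hands.

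You encode each block using \emph{all} $\binom{n}{k}$ exactly $k$-sparse vectors and propose that Bob decodes by reading off the top-$k$ coordinates of $\bfx'$, hoping that a clever budget-charging argument rescues a constant choice of $D$. This cannot work. To misidentify the current block, Bob only needs a \emph{single} support coordinate of $\bfz''$ to fall below a single off-support coordinate in $\bfx'$: the resulting top-$k$ set is then a \emph{different} $k$-subset, whose encoding $\enc(\cdot)$ can disagree with $\enc(\bfy^{(\tau)})$ in the one bit Bob cares about. One such swap only costs $\Theta(D^{T-\tau})$ of $\ell_1$ error, while the recovery budget is $\Theta\bigl(ck\,D^{T-\tau-1}\bigr)$; for constant $D$ and $k$ growing, the budget dwarfs the swap cost by a factor $\Theta(ck/D)$, so there is no charging scheme that saves you --- the adversarial perturbation simply needs to execute one swap, not $k$ of them. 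Forcing $D = \Omega(ck)$ makes the argument sound but, as you note, costs a $\log k$ factor in communication, which does not yield the claimed bound.

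The missing idea is in the paper's Lemma~\ref{l:size} and the nearest-neighbor decoder: restrict the codebook $X$ to a subset of $k$-sparse 0-1 vectors with \emph{pairwise Hamming distance at least $0.1k$} (a greedy or probabilistic construction still gives $\log_2|X| = \Omega(k\log\tfrac{n}{k})$, losing only constants), and have Bob decode by computing $\arg\min_{\bfx \in X}\|\bfx' - \alpha^{j}\bfx\|_1$ rather than reading off the top-$k$ coordinates. Then the distance between any two candidate codewords is $\ge 0.1k\alpha^{j}$ (the paper's inequality~\eqref{eq:far}), while the recovery error stays $\le 0.02k\alpha^j$ for a constant $\alpha \ge 200c$ (inequalities~\eqref{eq:close}--\eqref{eq:close3}), so the nearest codeword is unambiguously correct. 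It is precisely the $\Omega(k)$ minimum distance of the code --- not a tighter analysis of the top-$k$ decoder --- that lets the scale parameter be a constant and avoids the $\log k$ loss. Without this ingredient your reduction establishes only $m = \Omega\bigl(k\log\tfrac{n}{k}/\log k\bigr)$.
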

The lower bound is information-theoretic,
and therefore applies also to recovery algorithms with unlimited
computational power.

Note that there is an easy lower bound of $m \ge k$.\footnote{For
  example, consider just the subset of vectors $\bfx$ that are zero in
  the final $n-k$ coordinates.  The guarantee~\eqref{eq:rec} demands
  exact recovery of all such vectors.  Thus, we're back to the boring
  version of the problem mentioned at the beginning of the lecture,
  and $\A$ has to have rank at least $k$.}  
Theorem~\ref{t:cs} offers a non-trivial improvement when $k=o(n)$;
in this case, we're asking whether
or not it's possible to shave off the log factor in
Fact~\ref{fact:cs}.  Given how fundamental the problem is, and how
much a non-trivial improvement could potentially matter in
practice, this question is well worth asking.

\subsection{Proof of Theorem~\ref{t:cs}: First Attempt}\label{ss:first}

Recall that we first proved our upper bound of $m=O(\lnk)$ in the toy
setting of Section~\ref{ss:toy}, and then stated in Fact~\ref{fact:cs}
that it can be extended to the general version of the problem.  
Let's first try to prove a matching lower bound on $m$ that applies
even in the toy setting.  Recall that $X$ denotes the set of all 0-1
vectors that have exactly $k$ 1's, and that $|X| = \binom{n}{k}$.

For vectors $\bfx \in X$, the guarantee~\eqref{eq:rec} demands exact
recovery.  Thus, the sensing matrix $\A$ that we pick has to satisfy
$\Ax \neq \Ax'$ for all distinct $x,x' \in X$.  That is, $\Ax$ encodes
$\bfx$ for all $\bfx \in X$.  But $X$ has $\binom{n}{k}$ members, so the
worst-case encoding length of $\Ax$ has to be at least $\log_2
\binom{n}{k} = \Theta(\lnk)$.  So are we done?

The issue is that we want a lower bound on the number of rows $m$ of
$\A$, not on the worst-case length of $\Ax$ in bits.  What is the
relationship between these two quantities?  Note that, even if $\A$ is
a 0-1 matrix, then each entry of $\Ax$ is generally of magnitude
$\Theta(k)$, requiring $\Theta(\log k)$ bits to write down.  For
example, when $k$ is polynomial in $n$ (like our running choice $k =
\sqrt{n}$), then $\Ax$ generally requires $\Omega(m \log n)$ bits to
describe, even when $\A$ is a 0-1 matrix.  Thus our lower bound of
$\Theta(\lnk)$ on the length of $\Ax$ does not yield a lower bound on
$m$ better than the trivial lower bound of $k$.\footnote{This argument
  does imply that $m=\Omega(\lnk)$ is we only we report $\Ax$
  modulo~2 (or some other constant), since in this case the length of
  $\Ax$ is $\Theta(m)$.}

The argument above was doomed to fail.  The reason is that, if you
only care about recovering exactly $k$-sparse vectors $\bfx$ --- rather
than the more general and robust guarantee in~\eqref{eq:rec} --- then
$m=2k$ suffices!  One proof is via ``Prony's Method,'' which uses the
fact that a $k$-sparse vector $\bfx$ can be recovered exactly from its first
$2k$ Fourier coefficients (see e.g.~\cite{moitra}).\footnote{This
  method uses a sensing matrix $\A$ for which $\Ax$ will generally
  have $\Omega(m \log n) = \Omega(k \log n)$ bits, so this does not
  contradict out lower bound on the necessary length of $\Ax$.}
Our argument above only invoked the
requirement~\eqref{eq:rec} for $k$-sparse vectors $\bfx$, and such an
argument cannot prove a lower bound of the form $m = \Omega(\lnk)$.

The first take-away from this exercise is that, to prove the lower
bound that we want, we need to use the fact that the matrix
$\A$ satisfies the guarantee~\eqref{eq:rec} also for non-$k$-sparse
vectors $\bfx$.  The second take-away is that we need a
smarter argument --- a straightforward application of the Pigeonhole
Principle is not going to cut it.

\subsection{Proof of Theorem~\ref{t:cs}}

\subsubsection{A Communication Complexity Perspective}

We can interpret the failed proof attempt in Section~\ref{ss:first} as
an attempted reduction from a ``promise version'' of \index.
Recall that in this 
communication problem, Alice has an input $\bfx \in \{0,1\}^n$, Bob
has an index $i \in \{1,2,\ldots,n\}$, specified using $\approx \log_2
n$ bits, and the goal is to compute $x_i$ using a one-way protocol.
We showed last lecture that the deterministic communication complexity
of this problem is $n$ (via an easy counting argument) and its
randomized communication complexity is $\Omega(n)$ (via a harder
counting argument).  

The previous proof attempt can be rephrased as follows.
Consider a matrix $\A$ that permits exact recovery for all $\bfx \in X$.
This induces a one-way protocol for solving \index whenever
Alice's input $\bfx$ lies in $X$ --- Alice simply sends $\Ax$ to Bob,
Bob recovers $\bfx$, and Bob can then solve the problem, whatever
his index $i$ might be.  The communication cost of this protocol is
exactly the length of $\Ax$, in bits.  The deterministic one-way
communication complexity of this promise version of \index
is $\lnk$, by the usual counting argument, so this lower bound applies
to the length of $\Ax$.

\subsubsection{The Plan}

How can we push this idea further?  We begin by assuming the existence
of an $m \times n$ matrix~$\A$, a recovery algorithm $R$,
and a constant $c \ge 1$ such that,
for every $\bfx \in \RR^n$, $R$ computes from $\Ax$ a vector
$\bfx' \in \RR^n$ that satisfies
\begin{equation}\label{eq:req2}
\n{\bfx'-\bfx}_1 \le c \cdot \res(\bfx).
\end{equation}
Our goal is to show that if $m << \lnk$, then we can solve \index
with sublinear communication.

For simplicity, we assume that the recovery algorithm $R$ is
deterministic.  The lower bound continues to hold for randomized
recovery algorithms that have success probability at least
$\tfrac{2}{3}$, but the 
proof requires more work; see Section~\ref{ss:rand}.

\subsubsection{An Aside on Bit Complexity}

We can assume that the sensing matrix $\A$ has reasonable bit
complexity.  Roughly:
(i) we can assume without
loss of generality that $\A$ has orthonormal rows (by a change of basis
argument); (ii) dropping all but the $O(\log n)$ highest-order bits of
every entry has negligible effect on the recovery algorithm $R$.
We leave the details to the Exercises.

When every entry of the matrix $\A$ and of a vector $\bfx$ can be
described in $O(\log n)$ bits --- equivalently, by scaling, is a
polynomially-bounded integer --- the same is true of $\Ax$.  In this
case, $\Ax$ has length $O(m \log n)$.

\subsubsection{Redefining $X$}

It will be useful later to redefine the set $X$.  Previously, $X$ was
all 0-1 $n$-vectors with exactly $k$ 1's.  
Now it will be a subset of such vectors, subject to the
constraint that
\[
d_H(\bfx,\bfx') \ge .1k
\]
for every distinct pair $\bfx,\bfx'$ of vectors in the set.  Recall the
$d_H(\cdot,\cdot)$ denotes the Hamming distance between two vectors
--- the number of coordinates in which they differ.  Two distinct 0-1
vectors with $k$ 1's each have Hamming distance between~2 and~$2k$, so
we're restricting to a subset of such vectors that have mostly
disjoint supports.  The following lemma says that 
there exist sets of such vectors with size
not too much smaller than the number of all 0-1 vectors with $k$ 1's;
we leave the proof to the Exercises.
\begin{lemma}\label{l:size}
Suppose $k \le .01n$.
There is a set $X$ of 0-1 $n$-bits vectors such that each $\bfx \in X$
has $k$ 1s, each distinct $\bfx,\bfx' \in X$ satisfy $d_H(\bfx,\bfx') \ge
.1k$, and
$\log_2 |X| = \Omega(\lnk)$.
\end{lemma}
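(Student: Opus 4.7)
The plan is to construct $X$ by a standard greedy (equivalently, sphere-packing) argument: take all weight-$k$ binary vectors, repeatedly pick one and delete everything within Hamming distance $<0.1k$ of it, and bound the resulting set size from below by dividing the total count $\binom{n}{k}$ by the size of a Hamming ball of radius $0.1k$ restricted to weight-$k$ vectors. First I would observe that two distinct weight-$k$ vectors $\bfx,\bfx'$ agreeing on $t$ of their one-positions satisfy $d_H(\bfx,\bfx')=2(k-t)$, so $d_H(\bfx,\bfx')<0.1k$ is equivalent to swapping fewer than $s_0 := \lceil 0.05k \rceil$ ones out for zeros. Hence the number of weight-$k$ vectors within Hamming distance $<0.1k$ of a fixed $\bfx$ is exactly
\[
B \;=\; \sum_{s=0}^{s_0-1}\binom{k}{s}\binom{n-k}{s}.
\]

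Next I would show $|X|\ge \binom{n}{k}/B$ by the usual greedy deletion (each chosen vector kills at most $B$ candidates, so the process runs for at least $\binom{n}{k}/B$ steps). It remains to lower-bound $\log_2(\binom{n}{k}/B)$ by $\Omega(\lnk)$. Using the standard bounds $\binom{n}{k}\ge (n/k)^k$ and $\binom{m}{s}\le (em/s)^s$, I would estimate
\[
B \;\le\; s_0\cdot \binom{k}{s_0}\binom{n-k}{s_0}\;\le\; s_0\cdot \left(\tfrac{ek}{s_0}\right)^{s_0}\left(\tfrac{e(n-k)}{s_0}\right)^{s_0}\;\le\; k\cdot \bigl(C n/k\bigr)^{0.05k}
\]
for an absolute constant $C$ (using $s_0\ge 0.05k$ so $ek/s_0\le 20e$, and $e(n-k)/s_0\le 20e\cdot n/k$). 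Therefore
\[
\log_2 |X| \;\ge\; k\log_2(n/k) \;-\; 0.05k\log_2(Cn/k) \;-\; \log_2 k \;=\; (0.95-o(1))\,k\log_2(n/k) - O(k),
\]
and since $k\le 0.01n$ gives $\log_2(n/k)\ge \log_2 100 > 6$, the $O(k)$ term is absorbed into a slightly smaller constant, yielding $\log_2 |X|=\Omega(\lnk)$.

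The only step that requires any care is making sure the ball-size bound is genuinely smaller than $\binom{n}{k}$ by a factor whose logarithm is still $\Omega(\lnk)$ rather than a strictly smaller quantity; this is exactly where the hypothesis $k\le 0.01n$ is used, to guarantee that $\log_2(n/k)$ is at least a moderately large constant so that the $0.95$ factor above stays bounded away from $1$ and the $-O(k)$ slack is negligible compared to the main term. Everything else is routine: the existence of $X$ follows immediately from the greedy/averaging argument, and the bit-counting is a one-line application of the inequalities on $\binom{n}{k}$ already used elsewhere in the notes.
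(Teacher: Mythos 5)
The paper does not actually prove this lemma; it states that the result "is reminiscent of the fact that there are large error-correcting codes with large distance," suggests a proof "via the probabilistic method," and defers the details to the Exercises. Your proof is therefore not directly comparable to a written argument in the text, but it is a correct and self-contained derivation via the classical greedy (Gilbert--Varshamov) sphere-packing argument, which is the natural deterministic counterpart of the probabilistic alteration argument the paper gestures at. Both routes hinge on the same counting estimate --- that a Hamming ball of radius $0.1k$ intersected with the weight-$k$ slice has $\exp\bigl(O(k)+0.05\,k\log(n/k)\bigr)$ points, leaving a factor of $2^{\Omega(\lnk)}$ to spare against $\binom{n}{k}\ge (n/k)^k$ --- so they buy essentially the same thing; the greedy version avoids having to argue about probabilities of collisions, at the cost of being nonconstructive only in the mild "iterate until the pool is empty" sense.

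Your key observations are all right: two distinct weight-$k$ vectors have even Hamming distance $2(k-t)$ where $t$ is the overlap of their supports, so the forbidden ball around a codeword consists of vectors obtained by moving $s<\lceil 0.05k\rceil$ ones to new positions, giving $B=\sum_{s=0}^{s_0-1}\binom{k}{s}\binom{n-k}{s}$ with $s_0=\lceil 0.05k\rceil$; greedy deletion then yields $|X|\ge\binom{n}{k}/B$; and the ratio of logs works out to a positive constant times $\lnk$ because the hypothesis $k\le 0.01n$ forces $\log_2(n/k)\ge\log_2 100$, large enough to swallow both the $(1-0.05)$ loss and the $O(k)$ lower-order terms. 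Two minor points of bookkeeping you glossed over: replacing $s_0$ by $0.05k$ in the exponent when bounding $B$ really costs you an extra factor of $Cn/k$ (since $s_0\le 0.05k+1$), and the additive $\log_2 k$ term needs the observation $\log_2 k\le k$ so it too is $O(k)$. Neither affects the conclusion, since both extra terms are $O(k)=O\!\left(\lnk/\log_2 100\right)$ and hence absorbed by the same slack, but in a careful write-up they should be stated explicitly.
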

Lemma~\ref{l:size} is reminiscent of the fact
that there are large error-correcting codes with large
distance. 
One way to prove Lemma~\ref{l:size} is via the
probabilistic method --- by showing that a suitable randomized
experiment yields a set with the desired size with positive probability.

Intuitively, our proof attempt in Section~\ref{ss:first} used that,
because each $\bfx \in X$ is exactly $k$-sparse, it can be recovered
exactly from $\Ax$ and thus there is no way to get confused between
distinct elements of $X$ from their images.  In the following proof,
we'll instead need to recover ``noisy'' versions of the $\bfx$'s ---
recall that our proof cannot rely only on the fact that the matrix $\A$
performs exact recovery of exactly sparse vectors.
This means we
might get confused between two different 0-1 vectors $\bfx,\bfx'$ that
have small (but non-zero) Hamming distance.  The above redefinition of
$X$, which is essentially costless by Lemma~\ref{l:size}, fixes the
issue by restricting to vectors that all look very different from one
another.

\subsubsection{The Reduction}

To obtain a lower bound of $m=\Omega(\lnk)$ rather than $m=\Omega(k)$,
we need a more sophisticated reduction than in
Section~\ref{ss:first}.\footnote{We assume from now on that $k =
  o(n)$, since otherwise there is nothing to prove.}
The parameters offer some strong clues as to what the reduction should
look like.  If we use a protocol where Alice sends $\Ay$ to Bob, where
$\A$ is the assumed sensing matrix with a small number $m$ of rows and
$\bfy$ is some vector of Alice's choosing --- perhaps a noisy version of
some $\bfx \in X$ --- then the communication cost will be $O(m \log
n)$.  We want to prove that $m = \Omega(\lnk) = \Omega(\log |X|)$
(using Lemma~\ref{l:size}).  This means we need a communication lower
bound of $\Omega(\lxn)$.  Since the communication lower
bound for \index is linear, this suggests considering
inputs to \index where Alice's input has length $\lxn$, and
Bob is given an index $i \in \{1,2,\ldots,\lxn\}$.

To describe the reduction formally, let $\enc:X \rightarrow
\{0,1\}^{\log_2 |X|}$ be a binary encoding of the vectors of
$X$.  (We can assume that $|X|$ is a power of~2.)
Here is the communication protocol for solving \index.
\begin{itemize}

\item [(1)] 
Alice interprets her $(\lxn)$-bit input as $\log n$ blocks
  of $\log |X|$ bits each.  For each $j=1,2,\ldots,\log n$,
she interprets the bits of the $j$th block
  as $\enc(\bfx_j)$ for some $\bfx_j \in X$.  See Figure~\ref{f:blocks}.

\item [(2)] Alice computes a suitable linear combination of the
  $\bfx_j$'s:
\[
\bfy = \sum_{i=1}^{\log n} \alpha^j\bfx_j,
\]
with the details provided below.  Each entry of $\bfy$ will be a
polynomially-bounded integer.

\item [(3)] Alice sends $\Ay$ to Bob.  This uses $O(m \log n)$ bits of
  communication.

\item [(4)] Bob uses $\Ay$ and the assumed recovery algorithm $R$ to
  recover all of $\bfx_1,\ldots,\bfx_{\log n}$.  (Details TBA.)

\item [(5)] Bob identifies the block $j$ of $\log |X|$ bits that
  contains his given index $i \in \{1,2,\ldots,\lxn\}$, and outputs
  the relevant bit of $\enc(\bfx_j)$.

\end{itemize}
If we can implement steps~(2) and~(4), then we're done: this five-step
deterministic protocol would solve \index on $\lxn$-bit
inputs using $O(m \log n)$ communication.  Since the communication
complexity of \index is linear, we conclude that $m =
\Omega(\log |X|) = \Omega(\lnk)$.\footnote{Thus even the deterministic
  communication lower bound for \textsc{Index}, which is near-trivial to prove
  (Lectures~\ref{cha:data-stre-algor}--\ref{cha:lower-bounds-one}),
  has very interesting implications.  See
  Section~\ref{ss:rand} for the stronger implications provided by the
  randomized communication complexity lower bound.}

\begin{figure}
\centering
\includegraphics[width=.8\textwidth]{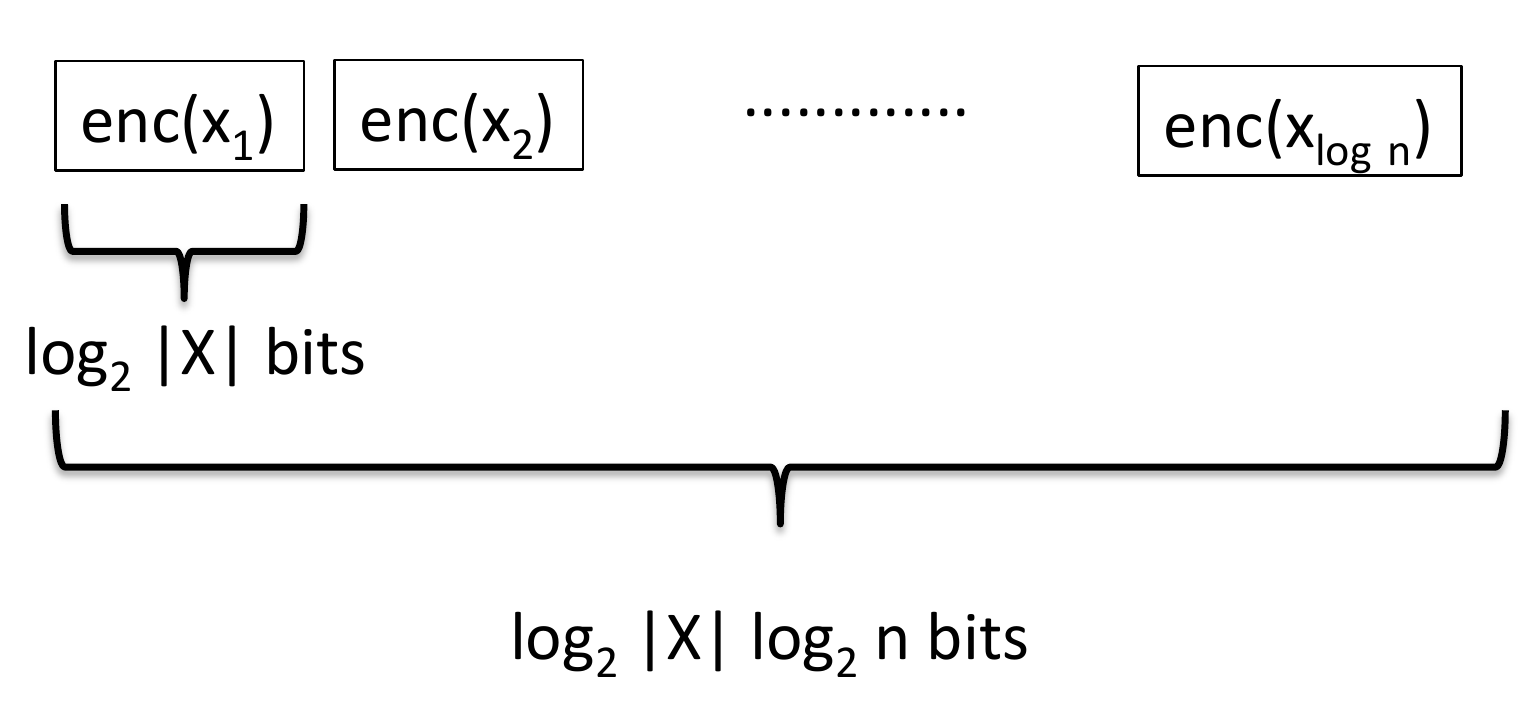}
\caption[How Alice interprets her $\lxn$-bit input]{In the reduction, Alice interprets her $\lxn$-bit input as
$\log n$ blocks, with each block encoding some vector $x_j \in X$.}
\label{f:blocks}
\end{figure}

For step~(2), let $\alpha \ge 2$ be a sufficiently large constant, depending
on the constant~$c$ in the recovery guarantee~\eqref{eq:req2} that the
matrix $\A$ and algorithm $R$ satisfy.  Then
\begin{equation}\label{eq:y}
\bfy = \sum_{i=1}^{\log n} \alpha^j \bfx_j.
\end{equation}
Recall that each $\bfx_j$ is a 0-1 $n$-vector with exactly $k$ 1's, so
$\bfy$ is just a superposition of $\log n$ scaled copies of such
vectors.  For example, the $\ell_1$ norm of $\bfy$ is simply $k
\sum_{j=1}^{\log n} \alpha^j$.
In particular, since $\alpha$ is a constant, the entries of
$\bfy$ are polynomially bounded non-negative integers, as promised
earlier, and $\Ay$ can be described using $O(m \log n)$ bits.

\subsubsection{Bob's Recovery Algorithm}

The interesting part of the protocol and analysis is step~(4), where
Bob wants to recover the vectors $\bfx_1,\ldots,\bfx_{\log n}$ encoded by
Alice's input using only knowledge of $\Ay$ and black-box access to
the recovery algorithm $R$.  To get started, we explain how Bob can
recover the last vector $\bfx_{\log n}$, which suffices to solve 
\index in the lucky case where Bob's index $i$ is one of the
last $\log_2 |X|$ positions.  Intuitively, this is the easiest case,
since $\bfx_{\log n}$ is by far (for large $\alpha$) the largest
contributor to the vector $\bfy$ Alice computes in~\eqref{eq:y}.  With
$\bfy = \alpha^{\log n} \bfx_{\log n} + \text{ noise}$, we might hope that
the recovery algorithm can extract $\alpha^{\log n} \bfx_{\log n}$ from
$\Ay$.

Bob's first step is the only thing he can do: invoke the recovery
algorithm $R$ on the message $\Ay$ from Alice.  By assumption, $R$
returns a vector $\yh$ satisfying
\[
\n{\yh-\bfy}_1 \le c \cdot \res(\bfy),
\]
where $\res(\bfy)$ is the contribution to $\bfy$'s $\ell_1$ norm by its
smallest $n-k$ coordinates.  

Bob then computes $\yh$'s nearest neighbor $\bfx^*$ in a scaled version of $X$
under the $\ell_1$ norm --- $\argmin_{\bfx \in X} \n{\yh - \alpha^{\log
    n}\bfx}_1$.  Bob can do this computation by brute force.

The key claim is that the computed vector $\bfx^*$ is indeed $\bfx^{\log
  n}$.  This follows from a geometric argument, pictured in
Figure~\ref{f:triangle}.  Briefly: (i) because $\alpha$ is large, $\bfy$
and $\alpha^{\log n}\bfx_{\log n}$ are close to each other; (ii) since
$\bfy$ is approximately $k$-sparse (by~(i)) and since $R$ satisfies the
approximate recovery guarantee in~\eqref{eq:req2}, $\yh$ and $\bfy$ are
close to each other and hence $\alpha^{\log n}\bfx_{\log n}$ and $\yh$
are close; and (iii) since distinct vectors in $X$ have large Hamming
distance, for every $\bfx \in X$ other than $\bfx_{\log n}$, $\alpha^{\log
  n}\bfx$ is far from $\bfx_{\log n}$ and hence also far from $\yh$.  We
conclude that $\alpha^{\log n}\bfx_{\log n}$ is closer to $\yh$ than any
other scaled vector from $X$.

\begin{figure}
\centering
\includegraphics[width=.8\textwidth]{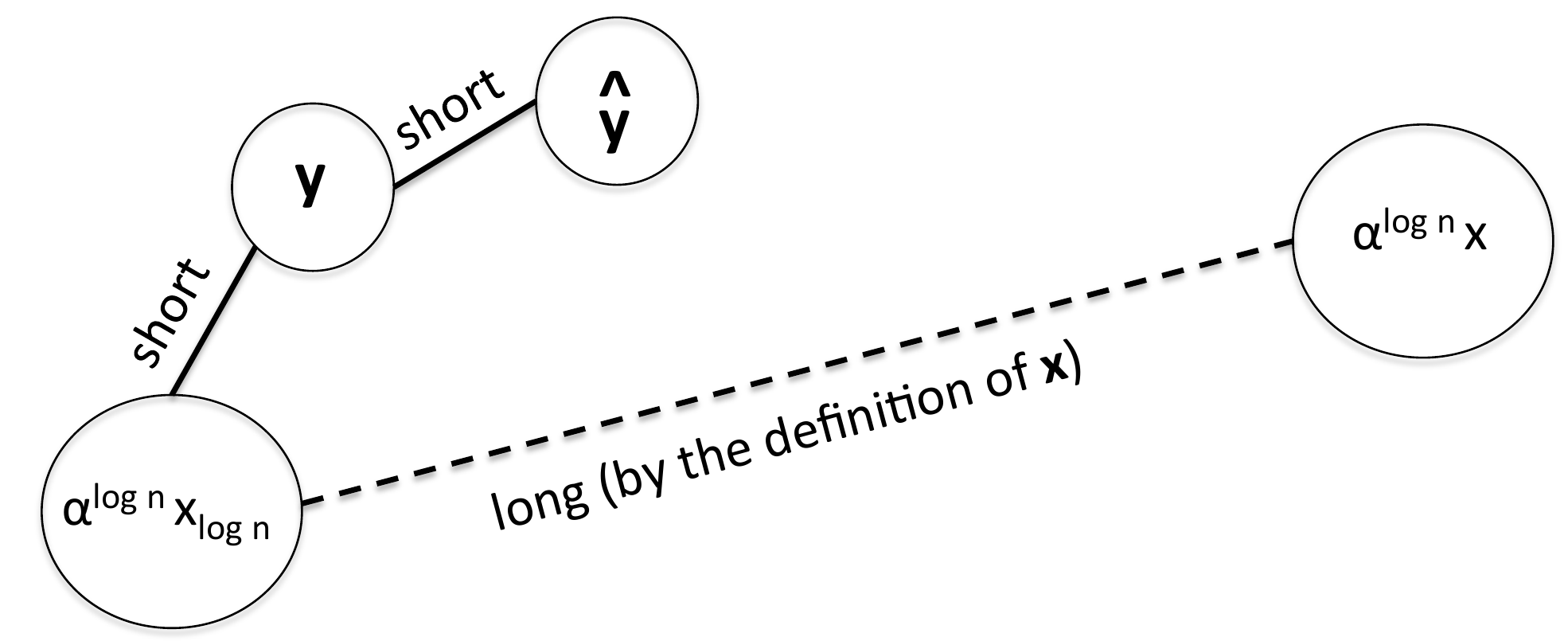}
\caption[The triangle inequality implies that Bob's recovery is correct]{The triangle inequality implies that the vector $\bfx^*$ computed
  by Bob from $\yh$ must be $\bfx_{\log n}$.}
\label{f:triangle}
\end{figure}

We now supply the details.
\begin{itemize}

\item [(i)] Recall that $\bfy$ is the superposition (i.e., sum) of
  scaled versions of the vectors $\bfx_1,\ldots,\bfx_{\log n}$.
$\bfy - \alpha^{\log n}\bfx_{\log n}$ is just $\bfy$ with the last
  contributor omitted.  Thus, 
\[
\n{\bfy - \alpha^{\log n}\bfx_{\log n}}_1 =
  \sum_{j=1}^{\log n -1} \alpha^j k.
\]
Assuming that $\alpha \ge \max\{2,200c\}$, where $c$ is the constant
that $\A$ and $R$ satisfy in~\eqref{eq:req2}, we can bound 
from above the geometric sum and derive
\begin{equation}\label{eq:close}
\n{\bfy - \alpha^{\log n}\bfx_{\log n}}_1 \le \frac{.01}{c}k\alpha^{\log n}.
\end{equation}

\item [(ii)] By considering the $n-k$ coordinates of $\bfy$ other than
  the $k$ that are non-zero in $\bfx_{\log n}$, we can upper bound the
  residual $\res(\bfy)$ by the contributions to the $\ell_1$ weight by
  $\alpha\bfx_1,\ldots,\alpha^{\log n-1}\bfx_{\log n - 1}$.  The
  sum of these contributions is $k \sum_{j=1}^{\log n -1} \alpha^j$,
  which we already bounded above in~\eqref{eq:close}.  In light of the
  recovery guarantee~\eqref{eq:req2}, we have
\begin{equation}\label{eq:close2}
\n{\yh-\bfy}_1 \le .01k\alpha^{\log n}.
\end{equation}

\item [(iii)] Let $\bfx,\bfx' \in X$ be
  distinct.  By the definition of $X$, $d_H(\bfx,\bfx') \ge .1k$.
Since $\bfx,\bfx'$ are both 0-1 vectors,
\begin{equation}\label{eq:far}
\n{\alpha^{\log n}\bfx - \alpha^{\log n}\bfx'}_1 \ge .1k\alpha^{\log n}.
\end{equation}

\end{itemize}
Combining~\eqref{eq:close} and~\eqref{eq:close2} with the triangle
inequality, we have
\begin{equation}\label{eq:close3}
\n{\yh-\alpha^{\log n}\bfx_{\log n}}_1 \le .02k\alpha^{\log n}.
\end{equation}
Meanwhile, for every other $\bfx \in X$, combining~\eqref{eq:far}
and~\eqref{eq:close3} with the triangle inequality gives
\begin{equation}\label{eq:far2}
\n{\yh-\alpha^{\log n}\bfx}_1 \ge .08k\alpha^{\log n}.
\end{equation}
Inequalities~\eqref{eq:close3} and~\eqref{eq:far2} imply that Bob's
nearest-neighbor computation will indeed recover $\bfx_{\log n}$.

You'd be right to regard the analysis so far with skepticism.  The same
reason that Bob can recover $\bfx_{\log n}$ --- because $\bfy$ is just a scaled
version of $\bfx_{\log n}$, plus some noise --- suggests that Bob should
not be able to recover the other $\bfx_j$'s, and hence unable to solve
\index for indices $i$ outside of the last block of Alice's
input.

The key observation is that, after recovering $\bfx_{\log n}$, Bob can
``subtract it out'' without any further communication from Alice and
then recover $x_{\log n-1}$.  Iterating this argument allows Bob to
reconstruct all of $\bfx_1,\ldots,\bfx_{\log n}$ and hence solve \index,
no matter what his index $i$ is.

In more detail, suppose Bob has already reconstructed
$\bfx_{j+1},\ldots,\bfx_{\log n}$.  He's then in a position to form
\begin{equation}\label{eq:z}
\bfz = \alpha^{j+1}\bfx_{j+1} + \cdots + \alpha^{\log n}\bfx_{\log n}.
\end{equation}
Then, $\bfy - \bfz$ equals the first $j$ contributors to $\bfy$ ---
subtracting $\bfz$ undoes the last $\log n - j$ of them --- and is therefore
just a scaled version of $\bfx_j$, plus some relatively small noise (the
scaled $\bfx_{\ell}$'s with $\ell < j$).  This raises the hope that Bob
can recover a scaled version of $\bfx_j$ from $\A(\bfy-\bfz)$.  How can
Bob get his hands on the latter vector?  (He doesn't know $\bfy$, and we don't
want Alice to send any more bits to Bob.)  The trick is to use the
linearity of $\A$ --- Bob knows $\A$ and $\bfz$ and hence can compute
$\Az$, Alice has already sent him $\Ay$, so Bob just computes $\Ay -
\Az = \A(\bfy-\bfz)$!

After computing $\A(\bfy-\bfz)$, Bob invokes the recovery algorithm $R$ to
obtain a vector $\wh \in \RR^n$ that satisfies 
\[
\n{\wh - (\bfy-\bfz)}_1 \le c \cdot \res(\bfy-\bfz),
\]
and computes (by brute force) the vector $\bfx \in X$ minimizing $\n{\wh
  - \alpha^j\bfx }_1$.  The minimizing vector is $\bfx_j$ --- the reader
should check that the proof of this is word-for-word the same as our
recovery proof for $\bfx_{\log  n}$, with every ``$\log n$'' replaced
by ``$j$,'' every ``$\bfy$'' replaced by ``$\bfy-\bfz$,'' and every
``$\yh$'' replaced by ``$\wh$.''

This completes the implementation of step~(4) of the protocol, and
hence of the reduction from \index to the design of a sensing matrix
$\A$ and recovery algorithm~$R$ that satisfy~\eqref{eq:req2}.  We
conclude that $\A$ must have $m=\Omega(\lnk)$, which completes the
proof of Theorem~\ref{t:cs}.

\subsection{Lower Bounds for Randomized Recovery}\label{ss:rand}

We proved the lower bound in Theorem~\ref{t:cs} only for fixed
matrices $\A$ and deterministic recovery algorithms $R$.
This is arguably the most relevant case, but it's also worth asking
whether or not better positive results (i.e., fewer rows) are possible
for a randomized recovery requirement, where recovery can fail with
constant probability.  
Superficially, the randomization can come from two sources:
first, one can use a distribution over matrices $\A$; second, the
recovery algorithm (given $\Ax$) can be randomized.  
Since we're not worrying about computational efficiency, we can assume
without loss of generality that $R$ is deterministic --- a randomized
recovery algorithm can be derandomized just be enumerating over all of
its possible executions and taking the majority vote.

Formally, the
relaxed requirement for a positive result is: there exists a
constant $c \geq 1$, a distribution $D$ over $m \times n$ matrices
$\A$, and a (deterministic) recovery algorithm $R$ such that,
for every $\bfx \in \RR^n$, with probability at least $2/3$ (over the
choice of $\A$), $R$ returns a
vector $\bfx' \in \RR^n$ that satisfies
\[
\n{\bfx'-\bfx}_1 \le c \cdot \res(\bfx).
\]

The lower bound in Theorem~\ref{t:cs} applies even to such randomized
solutions.  The obvious idea is to follow the proof of
Theorem~\ref{t:cs} to show that a randomized recovery guarantee yields
a randomized protocol for \index.  Since even randomized protocols for
the latter problem require linear communication, this would imply the
desired lower bound.  

The first attempt at modifying the proof of Theorem~\ref{t:cs} 
has Alice and Bob using the public coins to pick a sensing matrix
$\A$ at random from the assumed distribution --- thus, $\A$ is known
to both Alice and Bob with no communication.
Given the choice of
$\A$, Alice sends $\Ay$ to Bob as before and Bob runs the assumed
recovery algorithm $R$.  With probability at least 2/3, the result is
a vector $\yh$ from which Bob can recover $\bfx_{\log n}$.  The issue is
that Bob has to run $R$ on $\log n$ different inputs, once to recover
each of $\bfx_1,\ldots,\bfx_{\log n}$, and there is a failure probability
of $\tfrac{1}{3}$ each time.

The obvious fix is to reduce the failure probability by independent
trials.  So Alice and Bob use the public coins to pick
$\ell=\Theta(\log \log n)$ matrices $\A^1,\ldots,\A^{\ell}$
independently from the assumed distribution.  Alice sends
$\A^1\bfy,\ldots,\A^{\ell}\bfy$ to Bob, and Bob runs the recovery
algorithm $R$ on each of them and computes the corresponding vectors
$\bfx_1,\ldots,\bfx_{\ell} \in X$.
Except with probability $O(1/\log n)$, a majority of the $\bfx_j$'s will
be the vector $\bfx_{\log n}$.  By a Union Bound over the $\log n$
iterations of Bob's recovery algorithm, Bob successfully reconstructs
each of $\bfx_1,\ldots,\bfx_{\log n}$ with probability at least $2/3$,
completing the randomized protocol for \index.  This
protocol has communication cost $O(m \log n \log \log n)$ and the
lower bound for \index is $\Omega(\log |X| \log n)$, which yields a
lower bound of $m = \Omega(\lnk / \log \log n)$ for randomized
recovery.

The reduction in the proof of Theorem~\ref{t:cs} can be modified in a
different way to avoid the $\log \log n$ factor and prove the same
lower bound of $m=\Omega(\lnk)$ that we established for deterministic
recovery.  The trick is to modify the problem being reduced from
(previously \index) so that it becomes easier --- and therefore
solvable assuming only a randomized recovery guarantee --- subject to
its randomized communication complexity remaining linear.  

The modified
problem is called \aindex --- it's a contrived problem
but has proved technically convenient in several applications.
Alice gets an input $\bfx \in \{0,1\}^{\ell}$.  Bob gets an index $i \in
\{1,2,\ldots,\ell\}$ {\em and also the subsequent bits
  $x_{i+1},\ldots,x_{\ell}$ of Alice's input}.
This problem is obviously only easier than \index, but it's easy to
show that its deterministic one-way communication complexity is
$\ell$ (see the Exercises).  With some work, it can be shown that its
randomized one-way communication complexity is~$\Omega(\ell)$
(see~\cite{B+04,D+10,M+98}).

The reduction in Theorem~\ref{t:cs} is easily adapted to show that a
randomized approximate sparse recovery guarantee with matrices with
$m$ rows yields a randomized one-way communication protocol for
\aindex with $\lxn$-bit inputs with communication cost $O(m \log n)$ (and
hence $m = \Omega(\log |X|) = \Omega(\lnk)$).  We interpret Alice's
input in the same way as before.  Alice and Bob use the public coins
to a pick a matrix $\A$ from the assumed distribution and Alice sends
$\Ay$ to Bob.  Bob is given an index $i \in
\{1,2,\ldots,\lxn\}$ that belongs to some block $j$.  Bob is also
given all bits of Alice's input after the $i$th one.  These bits
include $\enc(\bfx_{j+1}),\ldots,\enc(\bfx_{\log n})$, so Bob can simply
compute $\bfz$ as in~\eqref{eq:z} (with no error) and invoke the
recovery algorithm $R$ (once).  Whenever $\A$ 
is such that the guarantee~\eqref{eq:req2} holds for $\bfy-\bfz$, 
Bob will successfully reconstruct $\bfx_j$ and therefore compute the
correct answer to \aindex.

\subsection{Digression}

One could ask if communication complexity is ``really needed'' for
this proof of Theorem~\ref{t:cs}.  Churlish observers might complain
that, due to the nature of communication complexity lower bounds (like
those last lecture), this
proof of Theorem~\ref{t:cs}  is ``just counting.''\footnote{Critics
  said (and perhaps still say) the same thing about the probabilistic
  method~\citep{AS08}.}  While not wrong, this attitude is
counterproductive.  The fact is that adopting the language and mindset
of communication complexity has permitted researchers to prove results
that had previously eluded many smart people --- in this case, the ends
justifies the means.

The biggest advantage of using the language of communication complexity
is that one naturally thinks in terms of reductions between different
lower bounds.\footnote{Thinking in terms of reductions seems to be a
  ``competitive advantage'' of theoretical computer scientists ---
  there are several examples where a reduction-based approach yielded
  new progress on old and seemingly intractable mathematical problems.}
Reductions can repurpose a single counting argument,
like our lower bound for \index, for lots of different problems.
Many of the more important lower bounds derived from communication
complexity, including today's main result, involve quite clever
reductions, and it would be difficult to devise from scratch the
corresponding counting arguments.

\chapter{Boot Camp on Communication Complexity}
\label{cha:boot-camp-comm}

\section{Preamble}

This lecture covers the most important basic facts about deterministic
and randomized communication protocols in the general two-party model,
as defined by \cite{Y79}.
Some version of this lecture would normally be the first lecture in a
course on communication complexity.  How come it's the fourth one here? 

The first three lectures were about one-way communication complexity
--- communication protocols where there is only one message, from
Alice to Bob ---
and its applications.  One reason we started with the one-way model is
that several of the ``greatest hits'' of algorithmic lower bounds via
communication complexity, such as space lower bounds for streaming
algorithms and row lower bounds for compressive sensing matrices,
already follow from communication lower bounds for one-way protocols.
A second reason is that considering only one-way protocols is a gentle
introduction to what communication protocols look like.  There
are already some non-trivial one-way protocols, like our randomized
protocol for \eq.  On the other hand, proving lower
bounds for one-way protocols is much easier than proving them for
general protocols, so it's also a good introduction to lower bound
proofs.

The rest of our algorithmic applications require
stronger lower bounds that apply to more than just one-way
protocols.  This lecture gives a ``boot camp'' on the basic model.
 We won't say much about applications in this
lecture, but the final five lectures all focus on applications.
We won't prove any hard results today, and focus instead on
definitions and vocabulary, examples, and some easy results. 
One point of today's lecture is to get a feel for what's involved in
proving a communication lower bound for general protocols.  It
generally boils down to a conceptually simple, if sometimes
mathematically challenging, combinatorial problem --- proving that a
large number of ``rectangles'' of a certain type are need to cover a
matrix.\footnote{There are many other methods for proving
  communication lower bounds, some quite deep and exotic (see
  e.g.~\cite{lee}), but all of our algorithmic applications can
  ultimately be derived from combinatorial covering-type arguments.
For example,
  we're not even going to mention the famous ``rank lower bound.''
For your edification, some other lower bound methods are discussed in
the Exercises.}

\section{Deterministic Protocols}\label{s:det}

\subsection{Protocols}

We are still in the two-party model, where Alice has an input $\bfx \in X$
unknown to Bob, and Bob has an input $\bfy \in Y$ unknown to Alice.  
(Most commonly, $X=Y=\{0,1\}^n$.)
A deterministic communication protocol specifies, as function of the
messages sent so far, whose turn it is to speak.  A protocol always
specifies when the communication has ended and, in each end state,
the value of the computed bit.  Alice and Bob can coordinate in
advance to decide upon the protocol, and both are assumed to
cooperative fully.  The only constraint faced by the players is that
what a player says can depend only on what the player knows --- his or
her own input, and the history of all messages sent so far.

Like with one-way protocols, we define the cost of a protocol as the
maximum number of bits it ever sends, ranging over all inputs.
The communication complexity of a function is then the minimum
communication cost of a protocol that correctly computes
it.

The key feature of general communication protocols absent from the
special case of one-way protocols is {\em interaction} between the two
players.  Intuitively, interaction should allow the players to
communicate much more efficiently.  Let's see this in a concrete example.

\subsection{Example: \cis}

The following problem might seem contrived, but it is fairly central
in communication complexity.  There is a graph $G=(V,E)$ with $|V|=n$
that is known to both players.  Alice's private input is a clique $C$
of $G$ --- a subset of vertices such that $(u,v) \in E$ for every 
distinct $u,v
\in C$.  Bob's private input is an independent set $I$ of $G$ --- a
subset of vertices such that $(u,v) \not\in E$ for every 
distinct $u,v \in I$.
(There is no requirement that $C$ or $I$ is maximal.)  Observe that
$C$ and $I$ are either disjoint, or they intersect in a single vertex
(Figure~\ref{f:cis}).  The players' goal is to figure out which of
these two possibilities is the case.  Thus, this problem is a special
case of \disj, where players' sets are not arbitrary but
rather a clique and an independent set from a known graph.

\begin{figure}
\centering
\includegraphics[width=.8\textwidth]{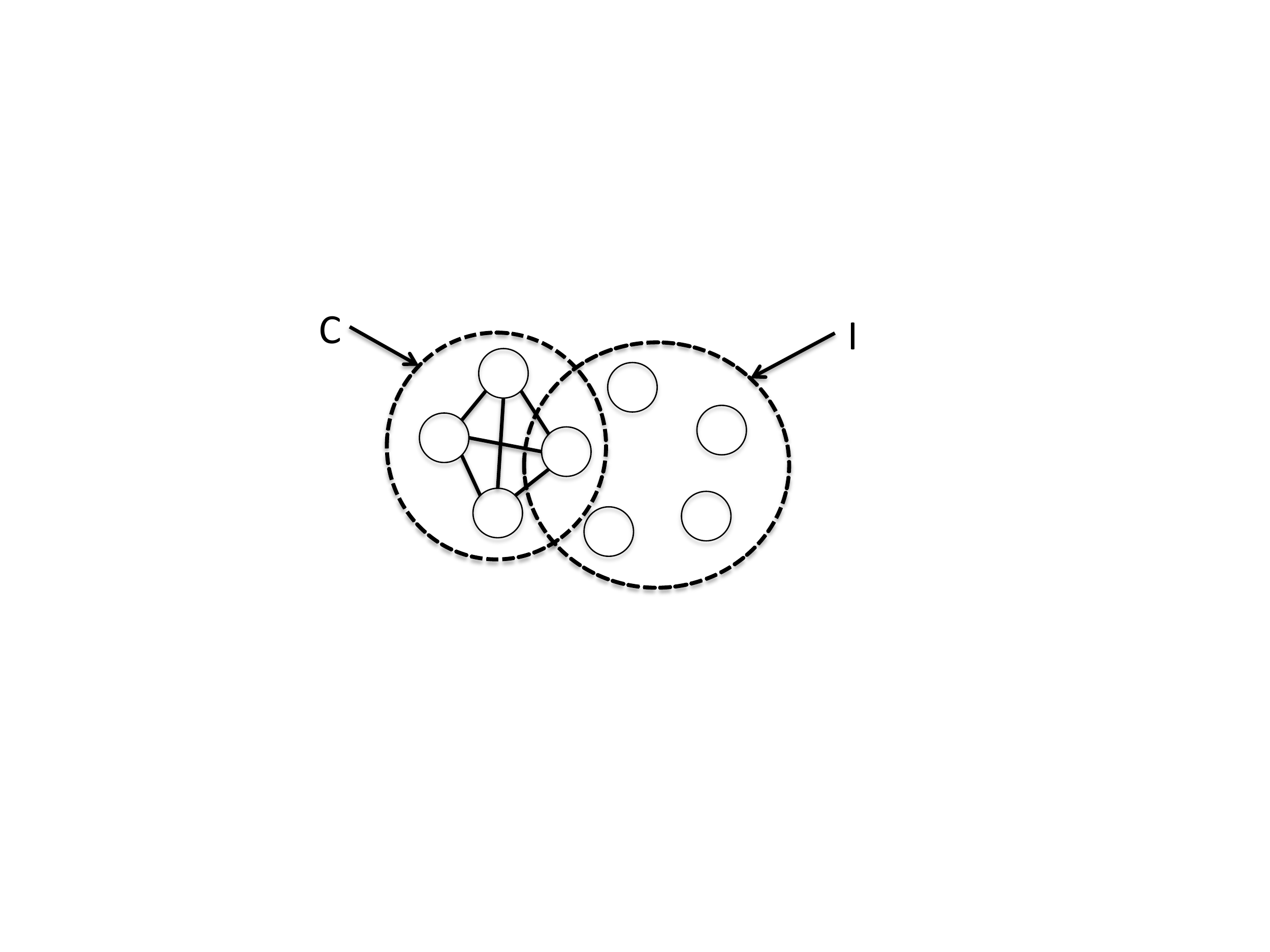}
\caption[A clique and an independent set]{A clique~$C$ and an independent set~$I$ overlap in zero or
  one vertices.}\label{f:cis}
\end{figure}

The naive communication protocol for solving the problem using
$\Theta(n)$ bits  --- Alice can send the characteristic vector of $C$
to Bob, or Bob the characteristic vector of $I$ to Alice, and then the
other player computes the correct answer.  Since the number of cliques
and independent sets of a graph is generally exponential in the number
$n$ of vertices, this protocol cannot be made significantly more
communication-efficient via a smarter encoding.  An easy reduction
from \index shows that one-way protocols, including
randomized protocols, require $\Omega(n)$ communication (exercise).

The players can do much better by interacting.  Here is the protocol.
\begin{enumerate}

\item If there is a vertex $v \in C$ with $\deg(v) < \tfrac{n}{2}$,
  then Alice sends the name of an arbitrary such vertex to Bob
  ($\approx \log_2 n$ bits).  

\begin{enumerate}

\item Bob announces whether or not $v \in I$ (1 bit).  If so, the
  protocol terminates with conclusion ``not disjoint.''

\item Otherwise, Alice
  and Bob recurse on the subgraph $H$ induced by $v$ and its neighbors.

[Note: $H$ contains at most half the nodes of $G$.  It contains all of
  $C$ and, if $I$ intersects $C$, it contains the vertex in their
  intersection.  $C$ and $I$ intersect in $G$ if and only if their
  projections to $H$ intersect in $H$.]

\end{enumerate}

\item Otherwise, Alice sends a ``NULL'' message to Bob ($\approx
  \log_2 n$ bits).

\item If there is a vertex $v \in I$ with $\deg(v) \ge \tfrac{n}{2}$,
  then Bob sends the name of an arbitrary such vertex to Bob ($\approx
  \log_2 n$ bits).

\begin{enumerate}

\item Alice announces whether or not $v \in C$ (1 bit).
If so, the protocol terminates (``not disjoint'').

\item If not, Alice
  and Bob recurse on the subgraph $H$ induced by $v$ and its
  non-neighbors. 

[Note: $H$ contains at most half the nodes of $G$.  It contains all of
  $I$ and, if $C$ intersects $I$, it contains the vertex in their
  intersection.  Thus the function's answer in $H$ is the same as that
  in $G$.]

\end{enumerate}

\item Otherwise, Bob terminates the protocol and declares
  ``disjoint.''

[Disjointness is obvious since, at this point in the protocol, we know
  that $\deg(v) < \tfrac{n}{2}$ for all $v \in C$ and
  $\deg(v) \ge \tfrac{n}{2}$ for all $v \in I$.]

\end{enumerate}

Since each iteration of the protocol uses $O(\log n)$ bits
of communication and cuts the number of vertices of the graph in half (or
terminates), the total communication is $O(\log^2 n)$.  As previously
noted, such a result is impossible without interaction between the
players.

\subsection{Trees and Matrices}

The \cis problem clearly demonstrates that we
need new lower bound techniques to handle general communication
protocols --- the straightforward Pigeonhole Principle arguments that
worked for one-way protocols are not going to be good enough.
At first blush this might seem intimidating --- communication
protocols can do all sorts of crazy things, so how can we reason about
them in a principled way?  How can we connect properties of a protocol
to the function that it computes?  Happily, we can
quickly build up some powerful machinery for answering these questions.

First, we observe that deterministic communication protocols are
really just binary trees.  We'll almost never use
this fact directly, but it should build some confidence that protocols
are familiar and elementary mathematical objects.  

The connection is easiest to see by example; see Figure~\ref{f:tree}.
Consider the following protocol for solving \eq
with $n=2$ (i.e., $f\inputs = 1$ if and only if $\bfx=\bfy$).
Alice begins by sending her first bit.  
If Bob's first bit is different, he terminates the protocol and
announces ``not equal.''
If Bob's first bit is the same, then he transmits the same bit back.
In this case, Alice then sends her second bit.  At this point, Bob
knows Alice's whole input and can therefore compute the correct
answer.

\begin{figure}
\centering
\includegraphics[width=.8\textwidth]{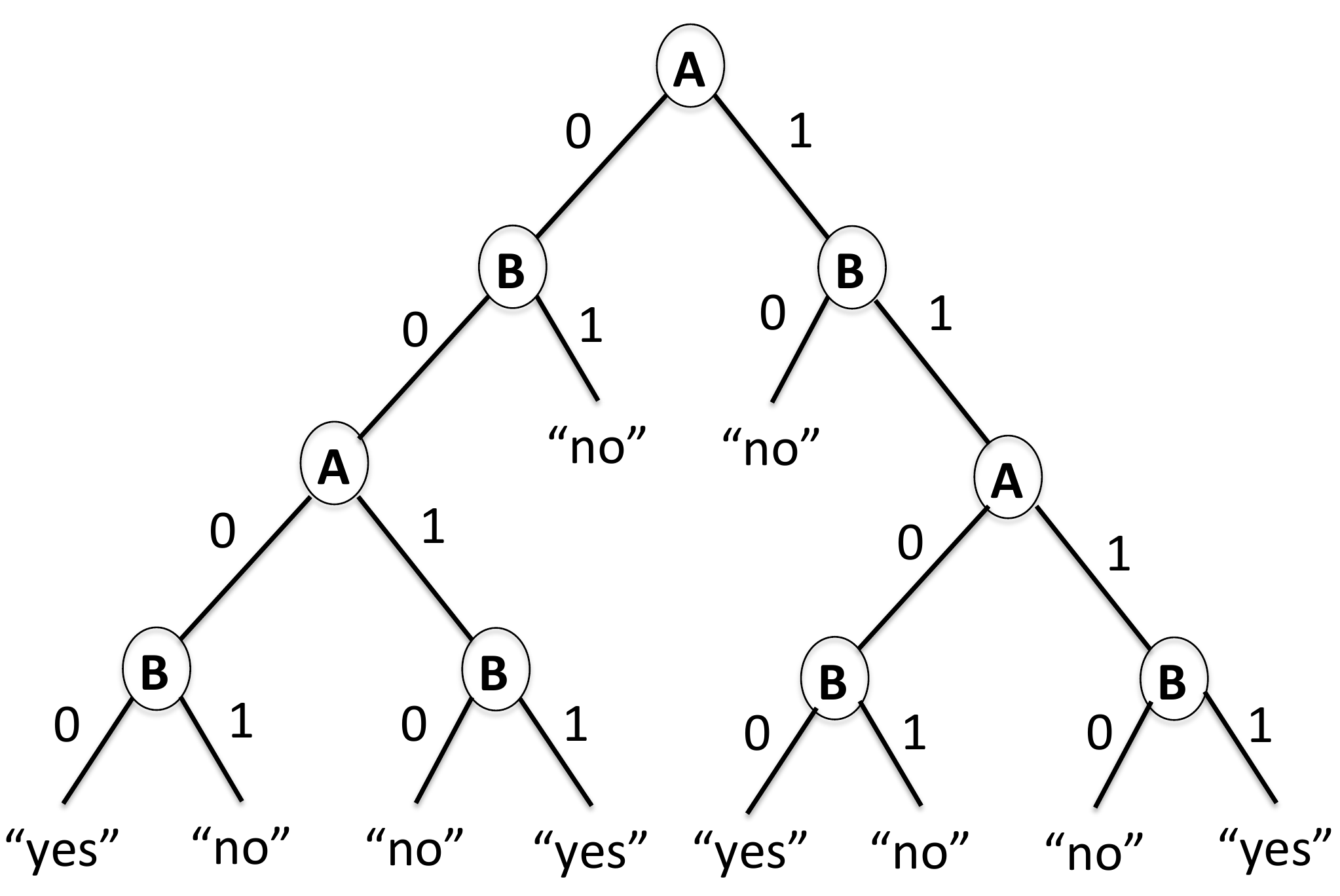}
\caption[The binary tree induced by a protocol for \eq]{The binary tree induced by a communication protocol for \eq
  with $n=2$.}\label{f:tree}
\end{figure}

In Figure~\ref{f:tree}, each node corresponds to a possible state of
the protocol, and is labeled with the player whose turn it is to
speak.  Thus the labels alternate with the levels, with the root
belonging to Alice.\footnote{In general, players need not alternate
  turns in a communication protocol.}  There are 10 leaves,
representing the possible end states of the protocol.  There are two
leaves for the case where Alice and Bob have different first bits and
the protocol terminates early, and eight leaves for the remaining
cases where Alice and Bob have the same first bit.  Note that the
possible transcripts of the protocol are in one-to-one correspondence
with the root-leaf nodes of the tree --- we use leaves and transcripts
interchangeably below.

We can view the leaves as a partition $\{ Z(\ell)\}$ of the input
space $X \times Y$, with $Z(\ell)$ the inputs $\inputs$ such that the
protocol terminates in the leaf $\ell$.
In our example, there
are~10 leaves for the~16 possible inputs $\inputs$, so
different inputs can generate the same transcript --- more on this
shortly.  

Next note that we can represent a function (from $\inputs$ to
$\{0,1\}$) a matrix.  In contrast to the visualization exercise
above, we'll use this matrix representation {\em all the time}.
The rows are labeled with the set $X$ of possible inputs of Alice, the
columns with the set $Y$ of possible inputs of Bob.  Entry $\inputs$
of the matrix is $f\inputs$.  Keep in mind that this matrix is fully
known to both Alice and Bob when they agree on a protocol.

For example, suppose that $X=Y=\{0,1\}^2$, resulting in $4 \times 4$
matrices.  \eq then corresponds to the identity matrix:
\begin{equation}\label{eq:eq}
\bordermatrix{\text{}& 00 & 01 & 10 & 11\cr
                00 & 1 &  0  & 0 & 0\cr
                01 & 0  &  1 & 0 & 0\cr
                10 & 0 & 0 & 1 & 0\cr
                11 & 0 & 0 & 0 & 1}
\end{equation}
If we define the \gt function as~1 whenever $\bfx$ is at least $\bfy$
(where $\bfx$ and $\bfy$ are interpreted as non-negative integers, written in
binary), then we just fill in the lower triangle with 1s:
\[
\bordermatrix{\text{}& 00 & 01 & 10 & 11\cr
                00 & 1 &  0  & 0 & 0\cr
                01 & 1  &  1 & 0 & 0\cr
                10 & 1 & 1 & 1 & 0\cr
                11 & 1 & 1 & 1 & 1}
\]
We also write out the matrix for \disj, which is somewhat
more inscrutable:
\[
\bordermatrix{\text{}& 00 & 01 & 10 & 11\cr
                00 & 1 &  1  & 1 & 1\cr
                01 & 1  &  0 & 1 & 0\cr
                10 & 1 & 1 & 0 & 0\cr
                11 & 1 & 0 & 0 & 0}
\]

\subsection{Protocols and Rectangles}

How can we reason about the behavior of a protocol?  Just visualizing
them as trees is not directly useful.  We know that simple Pigeonhole
Principle-based arguments are not strong enough, but it still feels
like we want some kind of counting argument.

To see what might be true, let's run the 2-bit \eq protocol
depicted in Figure~\ref{f:tree} and track its progress using the
matrix in~\eqref{eq:eq}.  Put yourself in the shoes of an outside
observer, who knows neither $\bfx$ nor $\bfy$, and makes inferences about
$\inputs$ as the protocol proceeds.  When the protocol terminates,
we'll have carved up the matrix into 10 pieces, one for each leaf of
protocol tree --- the protocol transcript reveals the leaf to an
outside observer, but nothing more.

Before the protocol beings, all 16 inputs are fair game.
After Alice sends her first bit, the outside observer can narrow down
the possible inputs into a set of~8 --- the top~8 if Alice sent a 0,
the bottom~8 if she sent a~1.  The next bit sent gives away whether or not
Bob's first bit is a 0 or 1, so the outsider observer learns which
quadrant the input lies in.  
Interestingly, in the northeastern and
southwestern quadrants, all of the entries are 0.  In these cases, even
though ambiguity remains about exactly what the input $\inputs$ is,
the function's value $f\inputs$ has been determined (it is~0, whatever
the input).  It's no coincidence that these two regions correspond to
the two leaves of the protocol in Figure~\ref{f:tree} that stop early,
with the correct answer.  If the protocol continues further, then
Alice's second bit splits the northwestern and southeastern quadrants
into two, and Bob's final bit splits them again, now into singleton
regions.  In these cases, an outside observer learns the entire input
$\inputs$ from the protocol's transcript.\footnote{It's also
  interesting to do an analogous thought experiment from the
  perspective of one of the players.  For example, consider Bob's
  perspective when the input is (00,01).  Initially Bob knows that the
  input lies in the second column but is unsure of the row.  After
  Alice's first message, Bob knows that the input is in the second column
  and one of the first two rows.  Bob still cannot be sure about the
  correct answer, so the protocol proceeds.}

What have we learned?  We already knew that every protocol induces a
partition of the input space $X \times Y$, with one set for each leaf
or, equivalently, for each distinct transcript.
At least for the particular protocol that we just studied, each of the
sets has a particularly nice submatrix form (Figure~\ref{f:eq}).
This is true in general, in the following sense.

\begin{figure}
\centering
\includegraphics[width=.25\textwidth]{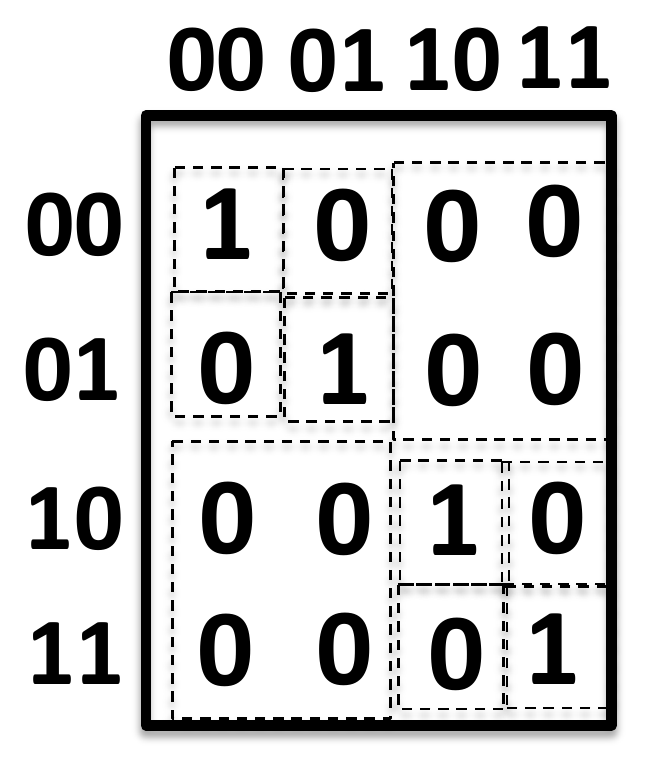}
\caption[Partition of the input space $X \times Y$]{The partition of the input space $X \times Y$ according to
  the~10 different transcripts that can be generated by the \eq
  protocol.}\label{f:eq}
\end{figure}

\begin{lemma}[Rectangles]\label{l:rect}
For every transcript $\bfz$ of a deterministic protocol $P$, the set of
inputs $\inputs$ that generate $\bfz$ are a {\em 
  rectangle}, of the form $A \times B$ for $A \sse X$ and $B \sse Y$.
\end{lemma}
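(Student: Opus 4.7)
The plan is to prove the lemma via a \emph{cut-and-paste} (or ``exchange'') argument: I will show that if $(\bfx_1,\bfy_1)$ and $(\bfx_2,\bfy_2)$ both generate transcript $\bfz$ under $P$, then the swapped pairs $(\bfx_1,\bfy_2)$ and $(\bfx_2,\bfy_1)$ do as well. Closure under this swap is equivalent to the set $R(\bfz) := \{\inputs \,:\, P \text{ generates } \bfz \text{ on } \inputs\}$ being a rectangle: if this property holds, then $R(\bfz) = A \times B$ where $A = \{\bfx \,:\, \exists \bfy, (\bfx,\bfy) \in R(\bfz)\}$ and $B$ is defined symmetrically (the nontrivial containment $A \times B \sse R(\bfz)$ is exactly the cut-and-paste closure).

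To establish cut-and-paste, I would induct on the length $k$ of the transcript prefix $\bfz_{\le k}$ and show that whenever $(\bfx_1,\bfy_1)$ and $(\bfx_2,\bfy_2)$ both generate $\bfz_{\le k}$, so do $(\bfx_1,\bfy_2)$ and $(\bfx_2,\bfy_1)$. The base case $k=0$ is trivial, since the empty prefix is generated by every input. For the inductive step, the crucial structural feature of a deterministic protocol (as defined in Section~\ref{s:det}) is that whose turn it is to speak after seeing $\bfz_{\le k}$ is a function of $\bfz_{\le k}$ alone, and the next bit sent by the speaking player is a function of that player's input and $\bfz_{\le k}$ only. So if it is Alice's turn, then $(\bfx_1,\bfy_1)$ and $(\bfx_1,\bfy_2)$, sharing the same $\bfx_1$ and the same prefix $\bfz_{\le k}$ by the inductive hypothesis, must both produce the same next bit; by hypothesis that bit equals $z_{k+1}$, so $(\bfx_1,\bfy_2)$ extends correctly. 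The symmetric argument handles Bob's turn, and the same reasoning applies to $(\bfx_2,\bfy_1)$.

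The only real obstacle is bookkeeping: I need to make the informal statement ``what a player says depends only on that player's input and the history'' into a precise inductive invariant about partial transcripts. Once that is in place, each inductive step is a one-line appeal to determinism. No counting, no combinatorics, and no use of the fact that the protocol ever terminates is required; the claim is purely about the structure of how transcripts are generated. I also anticipate needing to be slightly careful that $R(\bfz)$ could be empty (in which case the rectangle $A \times B$ is trivially empty as well, so the lemma still holds vacuously).
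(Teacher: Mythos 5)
Your proposal is correct, and the underlying idea --- induction on transcript length, using the fact that a player's next bit depends only on that player's own input and the history --- is exactly the engine behind the paper's proof as well. The difference is in how the inductive invariant is phrased. The paper maintains explicitly that the set of inputs consistent with a transcript prefix is a product $A \times B$, and in the inductive step (say Alice speaks) passes from $A \times B$ to $A_b \times B$, where $A_b$ is the subset of $A$ on which Alice would send bit~$b$. You instead maintain the equivalent ``mix-and-match'' closure property and verify that it is preserved when a bit is appended; only at the end do you convert closure into an explicit product $A \times B$. These are dual formulations of the same fact (the paper notes the equivalence of closure under mix-and-match and being a rectangle elsewhere, and defers it to the exercises). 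Your version avoids naming the sets $A_b$ at intermediate stages, which is mildly cleaner bookkeeping, at the cost of one extra step to translate closure back into the product form. Your observation that termination is never used, and that the empty case is handled vacuously, is accurate; the paper is silent on both but its proof shares these features.
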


A rectangle just means a subset of the input space $X
\times Y$ that can be written as a product.  For example, the set $\{
(00,00),(11,11) \}$ is {\em not} a rectangle, while the
set $\{(00,00),(11,00),(00,11),(11,11) \}$ is.  In general, a subset $S
\sse X \times Y$ is a rectangle if and only if it is
closed under ``mix and match,'' meaning that whenever $(\bfx_1,\bfy_1)$
and $(\bfx_2,\bfy_2)$ are in $S$, so are $(\bfx_1,\bfy_2)$ and $(\bfx_2,\bfy_1)$
(see the Exercises).

Don't be misled by our example (Figure~\ref{f:eq}), where the
rectangles induced by our protocol happen to be
``contiguous.''  For example, if we keep the protocol the same but
switch the order in which we write down the rows and columns
corresponding to 01 and 10, we get an analogous decomposition in which
the two large rectangles are not contiguous.  
In general, you shouldn't even think of $X$ and $Y$ as ordered sets.
Rectangles are sometimes called {\em combinatorial} rectangles to
distinguish them from ``geometric'' rectangles and to emphasize this
point.

Lemma~\ref{l:rect} is extremely important, though its proof is
straightforward --- we just follow the protocol like in our example
above.
Intuitively, each step of a protocol allows an outside observer to
narrow down the possibilities for $\bfx$ while leaving the possibilities
for $\bfy$ unchanged (if Alice speaks) or vice versa (if Bob speaks).

\vspace{.1in}
\begin{prevproof}{Lemma}{l:rect}
Fix a deterministic protocol $P$.
We proceed by induction on the number of bits exchanged.  
For the base case, all inputs $X \times Y$ begin with the empty
transcript.  For the inductive step, consider an arbitrary $t$-bit
transcript-so-far $\bfz$ generated by $P$, with $t \ge 1$.  Assume that
Alice was the most recent player to speak; the other case is
analogous.  Let $\bfz'$ denote $\bfz$ with the final bit $b \in \{0,1\}$
lopped off.  By the inductive hypothesis,
the set of inputs that generate $\bfz'$ has the form $A \times B$.
Let $A_b \sse A$ denote the inputs $\bfx \in A$ such that, in the
protocol $P$, Alice sends the bit~$b$ given the transcript $\bfz'$.
(Recall that the message sent by a player is a function only of his or
her private input and the history of the protocol so far.)
Then the set of inputs that generate $\bfz$ are $A_b \times B$,
completing the inductive step.
\end{prevproof}

Note that Lemma~\ref{l:rect} makes no reference to a function $f$ ---
it holds for any deterministic protocol, whether or not it computes a
function that we care about.  In Figure~\ref{f:eq}, we can clearly see
an additional property of all of the rectangles --- with respect to
the matrix in~\eqref{eq:eq}, every rectangle is {\em  monochromatic},
meaning all of its entries have the same value.  This is true for any
protocol that correctly computes a function $f$.

\begin{lemma}\label{l:mono}
If a deterministic protocol $P$ computes a function $f$, then every
rectangle induced by $P$ is monochromatic in the matrix
$M(f)$. 
\end{lemma}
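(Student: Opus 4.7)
The plan is to combine Lemma~\ref{l:rect} with the definition of a protocol that computes $f$. By Lemma~\ref{l:rect}, each transcript $\bfz$ of $P$ corresponds to a rectangle $A_{\bfz} \times B_{\bfz} \sse X \times Y$ consisting of exactly those inputs $\inputs$ that generate $\bfz$. So I only need to argue that for each such $\bfz$, the function $f$ takes the same value on every input in $A_{\bfz} \times B_{\bfz}$.

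First I would observe that, because $P$ is deterministic, the output bit declared at the end of the protocol is a function of the transcript alone. Concretely, every leaf $\ell$ of the protocol tree is labeled with a single bit $b(\ell) \in \{0,1\}$ (the value Bob, or whichever player ends the protocol, announces), and every input $\inputs$ whose execution reaches $\ell$ causes $P$ to output $b(\ell)$.

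Next I would invoke correctness: since $P$ computes $f$, for every $\inputs$ we have $P(\bfx,\bfy) = f(\bfx,\bfy)$. Fix a transcript $\bfz$ with rectangle $A_{\bfz} \times B_{\bfz}$ and corresponding leaf label $b(\bfz)$. For every $(\bfx,\bfy) \in A_{\bfz} \times B_{\bfz}$, the execution of $P$ on $\inputs$ reaches the leaf $\bfz$, so $f(\bfx,\bfy) = P(\bfx,\bfy) = b(\bfz)$. Thus every entry of $M(f)$ indexed by this rectangle equals the same value $b(\bfz)$, i.e., the rectangle is monochromatic.

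There is no real obstacle here; the lemma is essentially a one-line consequence of Lemma~\ref{l:rect} together with determinism and correctness. The only thing one has to be careful about is cleanly separating the two ingredients: Lemma~\ref{l:rect} gives the combinatorial shape (rectangle) of the set of inputs with a common transcript, while determinism plus correctness pins down the common $f$-value on that set.
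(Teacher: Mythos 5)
Your proof is correct and follows the same approach as the paper: invoke Lemma~\ref{l:rect} to identify the set of inputs generating a common transcript as a rectangle, note that the protocol's output is constant on that rectangle, and conclude from correctness that $f$ is also constant there. You simply spell out the determinism step (output is a function of the transcript alone) a bit more explicitly than the paper does.
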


\begin{proof}
Consider an arbitrary combinatorial rectangle $A \times B$ induced by
$P$, with all inputs in $A \times B$ inducing the same transcript.
The output of $P$ is constant on $A \times B$.  Since $P$ correctly
computes $f$, $f$ is also constant on $A \times B$.
\end{proof}

Amazingly, the minimal work we've invested so far already yields a
powerful technique for lower bounding the deterministic communication
complexity of functions.

\begin{theorem}\label{t:part}
Let $f$ be a function such that every partition of $M(f)$ into
monochromatic rectangles requires at least $t$ rectangles.  Then the
deterministic communication complexity of $f$ is at least $\log_2 t$.
\end{theorem}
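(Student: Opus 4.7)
The plan is to take a deterministic protocol $P$ that correctly computes $f$ with communication cost $c$ (in the worst case), and show that it directly yields a partition of $M(f)$ into at most $2^c$ monochromatic rectangles. Combined with the hypothesis that any such partition needs at least $t$ rectangles, this forces $2^c \ge t$, i.e., $c \ge \log_2 t$, as desired.

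First I would recall that every distinct transcript $\bfz$ of $P$ induces the set of inputs $\inputs$ that generate $\bfz$, and that by Lemma~\ref{l:rect} this set is a combinatorial rectangle, while by Lemma~\ref{l:mono} it is monochromatic in $M(f)$ (using that $P$ correctly computes $f$). Since each input $\inputs$ generates exactly one transcript under a deterministic protocol, the collection of these rectangles, indexed by transcripts actually realized by some input, forms a \emph{partition} of $X \times Y$ into monochromatic rectangles.

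Next I would bound the number of transcripts. Since $P$ uses at most $c$ bits of communication in the worst case, every transcript is a binary string of length at most $c$, so there are at most $\sum_{i=0}^{c} 2^i < 2^{c+1}$ such strings. A slightly cleaner bound, which is what we actually want, comes from viewing $P$ as a binary tree of depth at most $c$: the leaves (equivalently, the realized transcripts) number at most $2^c$. So the partition above has at most $2^c$ parts.

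Finally I would combine the two observations: by hypothesis any partition of $M(f)$ into monochromatic rectangles uses at least $t$ rectangles, hence $2^c \ge t$ and $c \ge \log_2 t$. There is no real obstacle here — the whole argument is a direct consequence of Lemmas~\ref{l:rect} and~\ref{l:mono} together with the trivial counting of leaves in a binary tree of depth $c$; the only mild subtlety is being careful that transcripts with no generating input can simply be discarded so that what remains is a genuine partition (rather than a cover) of $X \times Y$.
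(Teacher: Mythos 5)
Your proof is correct and follows essentially the same route as the paper: invoke Lemmas~\ref{l:rect} and~\ref{l:mono} to get a partition into monochromatic rectangles indexed by transcripts, then bound the number of transcripts by $2^c$ via the protocol-tree/leaf count. The extra care you take about discarding unrealized transcripts is a fine clarification but not a genuinely different argument.
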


\begin{proof}
A deterministic protocol with communication cost $c$ can only generate
$2^c$ distinct transcripts --- equivalently, its (binary) protocol
tree can only have $2^c$ leaves.  If such a protocol computes the
function $f$, then by Lemmas~\ref{l:rect} and~\ref{l:mono} it
partitions $M(f)$ into at most $2^c$ monochromatic rectangles.  By
assumption, $2^c \ge t$ and hence $c \ge \log_2 t$.
\end{proof}

Rather than applying Theorem~\ref{t:part} directly, we'll almost always
be able to prove a stronger and simpler
condition.  To partition a matrix, one needs to cover all of its
entries with disjoint sets.  The disjointness condition is annoying.
So by a {\em covering} of a 0-1 matrix, we
mean a collection of subsets of entries whose union includes all of its
elements --- overlaps between these sets are
allowed.  See Figure~\ref{f:cover}.

\begin{figure}
\centering
\includegraphics[width=.25\textwidth]{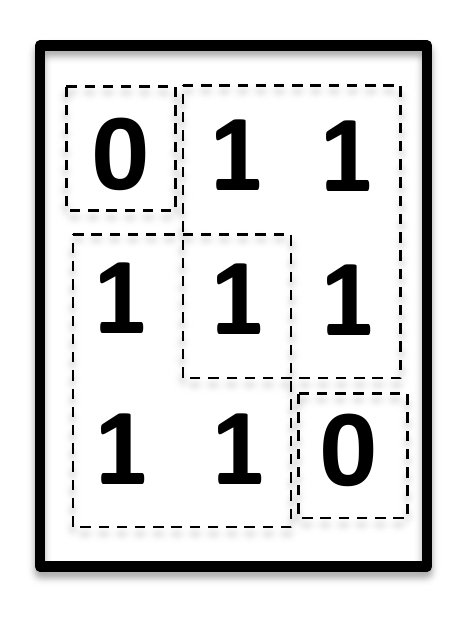}
\caption[A covering by four monochromatic rectangles that is not a
         partition]{A covering by four monochromatic rectangles that is not a
  partition.}\label{f:cover}
\end{figure}

\begin{corollary}\label{cor:cover}
Let $f$ be a function such that every covering of $M(f)$ by
monochromatic rectangles requires at least $t$ rectangles.  Then the
deterministic communication complexity of $f$ is at least $\log_2 t$.
\end{corollary}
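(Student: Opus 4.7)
The plan is to derive this as an immediate corollary of Theorem~\ref{t:part}, exploiting the fact that a partition is a special type of covering. The only real content is to observe that any lower bound against coverings is automatically a lower bound against partitions, because coverings are a strictly more permissive notion (overlaps allowed).

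More concretely, I would argue as follows. Suppose $f$ has the property that every covering of $M(f)$ by monochromatic rectangles uses at least $t$ rectangles. Then every partition of $M(f)$ into monochromatic rectangles uses at least $t$ rectangles, since any such partition is, in particular, a valid covering. Now I apply Theorem~\ref{t:part} to this hypothesis to conclude that the deterministic communication complexity of $f$ is at least $\log_2 t$.

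Since this is essentially a one-line observation, there is no real obstacle; the main thing to make explicit is the set-theoretic point that a partition $\{R_1,\ldots,R_s\}$ of the entries of $M(f)$ into monochromatic rectangles is a covering with the extra property that the $R_i$ are pairwise disjoint and collectively exhaust the matrix, so dropping the disjointness constraint can only make the required number of rectangles (weakly) smaller, never larger. Thus the minimum covering number is at most the minimum partition number, so a lower bound of $t$ on the former transfers to the latter, and Theorem~\ref{t:part} finishes the job.
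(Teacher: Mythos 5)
Your proposal is correct and is exactly the intended argument: the paper itself gives no separate proof of the corollary, leaving implicit the one-line observation that a partition is a special case of a covering, so the covering lower bound transfers to partitions and Theorem~\ref{t:part} applies.
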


Communication complexity lower bounds proved using covers ---
including all of those proved in Section~\ref{ss:lbs} ---
automatically apply also to more general 
``nondeterministic'' communication protocols, as well as
randomized protocols with 1-sided
error.  We'll discuss this more next lecture, when it will be relevant.

\subsection{Lower Bounds for \eq and \disj}\label{ss:lbs}

Armed with Corollary~\ref{cor:cover}, we can quickly prove
communication lower bounds for some functions of interest.  For
example, recall that when $f$ is the \eq function, the matrix
$M(f)$ is the identity.  The key observation about this matrix is:
{\em a monochromatic rectangle that includes a ``1'' contains only one
  element}.  The reason is simple: such a rectangle is not allowed to
contain any 0's since it is monochromatic, and if it included a second
1 it would pick up some 0-entries as well (recall that rectangles
are closed under ``mix and match'').  Since there are $2^n$ 1's in the
matrix, every covering by monochromatic rectangles (even of just the
1's) has size $2^n$.
\begin{corollary}
The deterministic communication complexity of \eq
is at least $n$.\footnote{The 0s can be covered using another
  $2^n$ monochromatic rectangles, one per row (rectangles need not be
  ``contiguous''!).  This gives a lower bound of $n+1$.  The trivial
  upper has Alice sending her input to Bob and Bob announcing the
  answer, which is a $(n+1)$-bit protocol.  Analogous ``+1'' improvements 
  are possible for the other examples in this section.}
\end{corollary}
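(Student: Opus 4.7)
The plan is to apply Corollary~\ref{cor:cover} directly. The matrix $M(f)$ for $\eq$ on $n$-bit inputs is the $2^n \times 2^n$ identity matrix: the diagonal has $2^n$ ones, and every off-diagonal entry is zero. It therefore suffices to show that any covering of $M(f)$ by monochromatic rectangles needs at least $2^n$ rectangles, since then $\log_2 2^n = n$.

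The core step is the observation emphasized in the excerpt just before the statement: \emph{every monochromatic 1-rectangle contains exactly one 1-entry}. I would prove this by the ``mix and match'' characterization of rectangles. Suppose some monochromatic 1-rectangle $A \times B$ contained two distinct diagonal entries $(\bfx_1,\bfx_1)$ and $(\bfx_2,\bfx_2)$ with $\bfx_1 \neq \bfx_2$. Then $\bfx_1,\bfx_2 \in A$ and $\bfx_1,\bfx_2 \in B$, so by the closure of rectangles under mix and match, $(\bfx_1,\bfx_2) \in A \times B$. But $M(f)[\bfx_1,\bfx_2] = 0$, contradicting the assumption that the rectangle is 1-monochromatic.

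From here the rest is immediate: each of the $2^n$ diagonal 1-entries must be covered by some monochromatic rectangle, and by the preceding paragraph no such rectangle covers more than one of them. Hence any monochromatic covering uses at least $2^n$ rectangles, and Corollary~\ref{cor:cover} yields the lower bound of $n$ bits.

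I do not expect any serious obstacle — the entire argument rests on the mix-and-match closure property of combinatorial rectangles (already established in the excerpt) together with the trivial structure of the identity matrix. The only thing to be careful about is to avoid the off-by-one pitfall flagged in the footnote: the clean lower bound here is $n$, and pushing it to $n+1$ would require separately arguing about the covering of the $0$-entries, which the statement does not ask for.
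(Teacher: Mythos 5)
Your proof is correct and follows exactly the paper's argument: observe that $M(\eq)$ is the identity matrix, use mix-and-match closure to show a monochromatic 1-rectangle can contain at most one diagonal entry, count the $2^n$ diagonal 1-entries, and invoke Corollary~\ref{cor:cover}. Nothing to add.
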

The exact same argument gives the same lower bound for the \gt
function.
\begin{corollary}
The deterministic communication complexity of \gt
is at least $n$.
\end{corollary}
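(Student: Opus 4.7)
The plan is to mimic the \eq argument almost verbatim, by exhibiting a ``fooling set'' of $2^n$ matrix entries no two of which can live in the same monochromatic rectangle, and then invoking Corollary~\ref{cor:cover}. The natural candidate is the diagonal of $M(\gt)$: since $\bfx \ge \bfx$ for every $\bfx \in \{0,1\}^n$, the diagonal consists entirely of $1$-entries, and there are $2^n$ of them.

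The key step is to argue that any monochromatic rectangle contains at most one diagonal entry. Suppose for contradiction that a monochromatic rectangle $A \times B$ contains two distinct diagonal entries $(\bfx,\bfx)$ and $(\bfx',\bfx')$ with $\bfx \neq \bfx'$. Then $\bfx,\bfx' \in A$ and $\bfx,\bfx' \in B$, and the mix-and-match closure of rectangles forces $(\bfx,\bfx')$ and $(\bfx',\bfx)$ to lie in $A \times B$ as well. Interpreting $\bfx,\bfx'$ as integers, assume WLOG that $\bfx > \bfx'$. Then $GT(\bfx,\bfx') = 1$ while $GT(\bfx',\bfx) = 0$, so $A \times B$ contains both a $0$ and a $1$, contradicting monochromaticity.

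Combining these observations, any covering of $M(\gt)$ by monochromatic rectangles needs at least one rectangle per diagonal entry, i.e.\ at least $2^n$ rectangles. Corollary~\ref{cor:cover} then yields the deterministic communication complexity lower bound of $\log_2 2^n = n$.

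There is no real obstacle here: the only mild subtlety is noticing that, even though the $1$-region of $M(\gt)$ is a full lower triangle (unlike the isolated diagonal in $M(\eq)$), the diagonal still serves as a fooling set because the ``mirror'' entry $(\bfx',\bfx)$ across the diagonal is a $0$ whenever $(\bfx,\bfx')$ is a $1$. This asymmetry of $GT$ under swapping coordinates is exactly what makes the argument go through.
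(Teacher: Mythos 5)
Your proof is correct and matches the paper's approach: the paper likewise uses the diagonal $F = \{(\bfx,\bfx) : \bfx \in \{0,1\}^n\}$ as a fooling set (stated explicitly just after Corollary~\ref{cor:fool}), with the same observation that $GT(\bfx,\bfx')$ and $GT(\bfx',\bfx)$ have opposite values for distinct $\bfx,\bfx'$. The ``mild subtlety'' you flag is exactly right and is the only place the GT argument diverges from the EQ one.
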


We can generalize this argument as follows. A {\em fooling set} for a
function $f$ is a subset $F \sse \Inputs$ of inputs such that:
\begin{itemize}

\item [(i)] $f$ is constant on $F$;

\item [(ii)] for each distinct pair $(\bfx_1,\bfy_1),(\bfx_2,\bfy_2) \in F$,
  at least one of $(\bfx_1,\bfy_2),(\bfx_2,\bfy_1)$ has the opposite
  $f$-value.

\end{itemize}
Since rectangles are closed under the ``mix and match'' operation,
(i) and (ii) imply that every monochromatic rectangle contains at most
one element of $F$.
\begin{corollary}\label{cor:fool}
If $F$ is a fooling set for $f$, then the deterministic communication
complexity of $f$ is at least $\log_2 |F|$.
\end{corollary}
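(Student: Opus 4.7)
The plan is to apply Corollary \ref{cor:cover}, so it suffices to show that every covering of $M(f)$ by monochromatic rectangles requires at least $|F|$ rectangles. I will establish the stronger claim that every monochromatic rectangle of $M(f)$ contains at most one element of the fooling set $F$. Once this claim is in hand, any cover must use a distinct rectangle for each of the $|F|$ points of $F$, giving the desired lower bound of $\log_2 |F|$.

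To prove the claim, I will argue by contradiction. Suppose some monochromatic rectangle $A \times B$ contains two distinct elements $(\bfx_1,\bfy_1)$ and $(\bfx_2,\bfy_2)$ of $F$. By condition (i), $f(\bfx_1,\bfy_1) = f(\bfx_2,\bfy_2) = b$ for some $b \in \{0,1\}$, and by monochromaticity every entry of $A \times B$ equals $b$. Now invoke the ``mix and match'' closure property of combinatorial rectangles (noted just before Lemma \ref{l:rect}): since $\bfx_1,\bfx_2 \in A$ and $\bfy_1,\bfy_2 \in B$, the crossed inputs $(\bfx_1,\bfy_2)$ and $(\bfx_2,\bfy_1)$ also lie in $A \times B$, and therefore both take the value $b$ under $f$. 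But condition (ii) of the fooling-set definition insists that at least one of these two crossed inputs has $f$-value $1-b$, the desired contradiction.

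Combining the claim with Corollary \ref{cor:cover} finishes the proof, since a cover of $M(f)$ by fewer than $|F|$ monochromatic rectangles would force two elements of $F$ to share a rectangle. There is no real obstacle here: the only thing to be careful about is invoking the mix-and-match closure correctly, which is exactly the characterization of combinatorial (as opposed to geometric) rectangles emphasized earlier in the text. No new machinery beyond Corollary \ref{cor:cover} and the definition of a fooling set is needed.
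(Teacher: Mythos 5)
Your proof is correct and takes exactly the approach the paper uses: show via the mix-and-match closure property of rectangles and fooling-set conditions (i) and (ii) that every monochromatic rectangle contains at most one element of $F$, then invoke Corollary~\ref{cor:cover}. The paper states this argument more tersely, but the reasoning is identical.
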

For \eq and \gt, we were effectively
using the fooling set $F = \{ (\bfx,\bfx) \,:\, \bfx \in \{0,1\}^n \}$.

The fooling set method is powerful enough to prove a strong lower
bound on the deterministic communication complexity of \disj.
\begin{corollary}\label{cor:disj_cover}
The deterministic communication complexity of \disj
is at least $n$.
\end{corollary}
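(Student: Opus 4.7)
The plan is to apply the fooling set method of Corollary~\ref{cor:fool}, which reduces the task to exhibiting a subset $F \subseteq \{0,1\}^n \times \{0,1\}^n$ of size at least $2^n$ on which $DISJ$ is constant and which is ``mix-and-match unstable.''

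The natural candidate I would try is $F = \{ (\bfx, \bar{\bfx}) : \bfx \in \{0,1\}^n \}$, where $\bar{\bfx}$ denotes the bitwise complement of $\bfx$. Clearly $|F| = 2^n$, so it suffices to verify the two fooling-set properties. For property~(i), note that $\bfx$ and $\bar{\bfx}$ have disjoint supports by construction, so $DISJ(\bfx, \bar{\bfx}) = 1$ for every $\bfx$, and $DISJ$ is constantly $1$ on $F$.

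For property~(ii), I would take two distinct pairs $(\bfx_1, \bar{\bfx}_1)$ and $(\bfx_2, \bar{\bfx}_2)$ in $F$. Since $\bfx_1 \neq \bfx_2$, there is a coordinate $i$ where they differ; without loss of generality $x_{1,i} = 1$ and $x_{2,i} = 0$, so $\bar{x}_{2,i} = 1$. Then the mixed pair $(\bfx_1, \bar{\bfx}_2)$ has a common $1$ at coordinate $i$, giving $DISJ(\bfx_1, \bar{\bfx}_2) = 0$, the opposite value. This confirms that at least one of the two mixed pairs has the opposite function value, as required. Applying Corollary~\ref{cor:fool} yields a lower bound of $\log_2 |F| = n$.

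There is no serious obstacle here once the fooling set is guessed correctly --- the verification is a one-line combinatorial check. The only ``creative'' step is choosing the right $F$: a first-time reader might try the diagonal $\{(\bfx,\bfx)\}$ (which worked for \eq and \gt) and fail, since those pairs are almost never disjoint. Pairing each $\bfx$ with its complement is the unique obvious way to make every diagonal pair a ``yes'' instance of \disj while guaranteeing that swapping coordinates of two differing inputs necessarily creates an overlap.
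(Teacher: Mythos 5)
Your proof is correct and matches the paper's argument exactly: the paper's fooling set $\{(\bfx,\ones-\bfx) : \bfx \in \{0,1\}^n\}$ (equivalently $\{(S,S^c)\}$ in set notation) is the same as your $\{(\bfx,\bar{\bfx})\}$, and your coordinate-level verification of property~(ii) is a slightly more explicit phrasing of the paper's observation that $S \cap T^c \neq \emptyset$ or $T \cap S^c \neq \emptyset$ whenever $S \neq T$.
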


\begin{proof}
Take $F = \{ (\bfx,\ones-\bfx) \,:\, \bfx \in \{0,1\}^n \}$ --- or in set
notation, $\{ (S,S^c) \,:\, S \sse \{1,2,\ldots,n\} \}$.  The set $F$ is
a fooling set --- it obviously consists only of ``yes'' inputs of
\disj, while for every $S \neq T$, either $S \cap T^c \neq
\emptyset$ or $T \cap S^c \neq \emptyset$ (or both).  See
Figure~\ref{f:disj}.  Since $|F|=2^n$, Corollary~\ref{cor:fool}
completes the proof.
\end{proof}

\begin{figure}
\centering
\includegraphics[width=.65\textwidth]{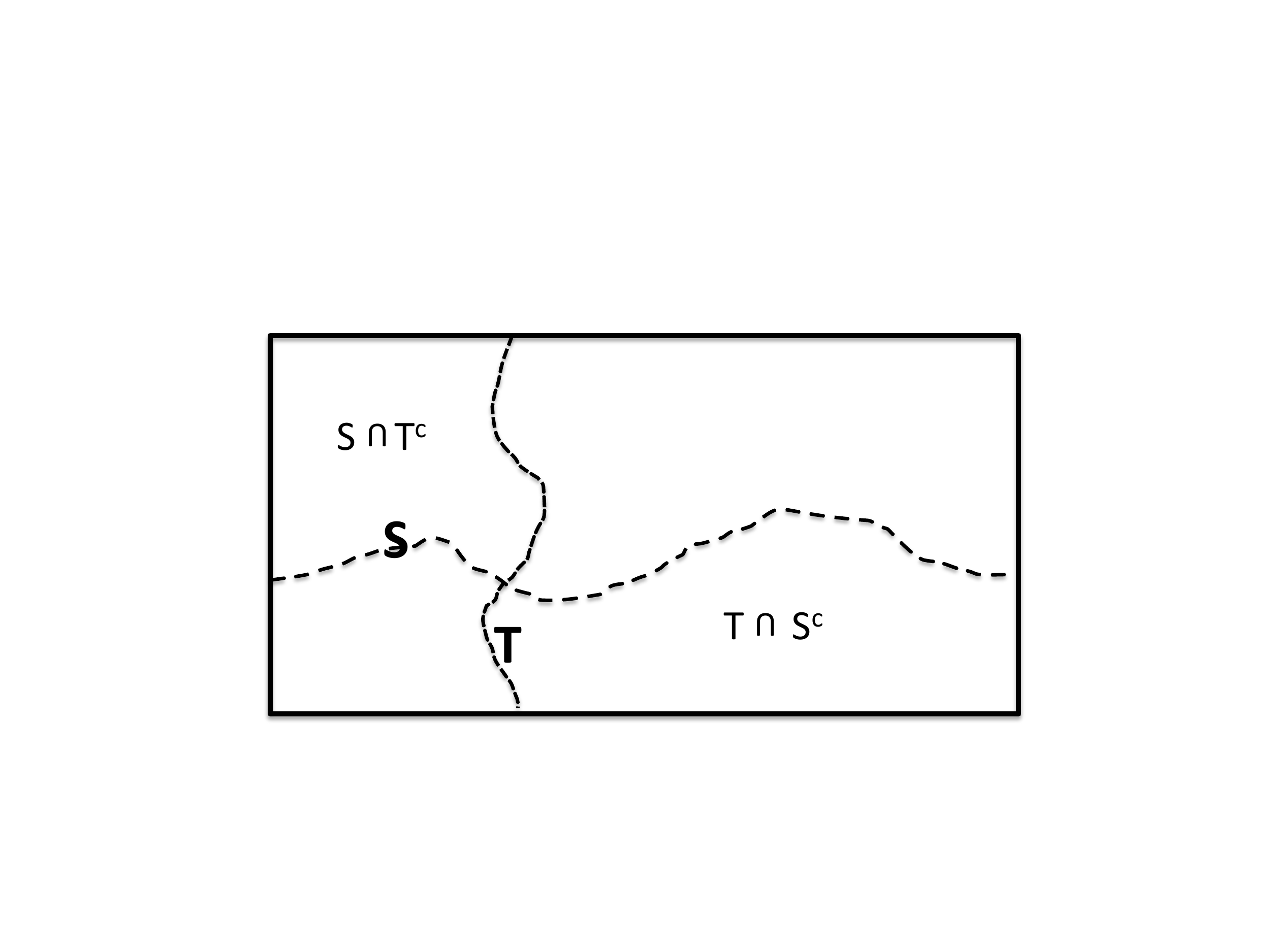}
\caption[If $S$ and $T$ are different sets, then either $S$ and $T^c$
or $T$ and $S^c$ are not disjoint]{If $S$ and $T$ are different sets, then either $S$ and $T^c$ or 
$T$ and $S^c$ are not disjoint.}\label{f:disj}
\end{figure}

\subsection{Take-Aways}

A key take-away point from this section is that, using covering
arguments, we can prove the lower bounds that we want on the
deterministic communication complexity of many functions of interest.  
These lower bounds apply also to nondeterministic protocols (discussed
next week) and randomized protocols with 1-sided error.

As with one-way communication complexity, 
proving stronger lower bounds that apply
also to randomized protocols with two-sided error is more challenging.
Since 
we're usually perfectly happy with a good randomized algorithm --- recall
the $F_2$ estimation algorithm from Section~\ref{s:f2} --- such lower
bounds are very relevant for algorithmic applications.  They are our
next topic.

\section{Randomized Protocols}

\subsection{Default Parameter Settings}

Our discussion of randomized one-way communication protocols in
Section~\ref{s:rand} remains equally relevant for general protocols.  Our
``default parameter settings'' for such protocols will be the same.

\textbf{Public coins.}  By default, we work with public-coin protocols,
where Alice and Bob have shared randomness in the form of
an infinite sequence of perfectly random bits
written on a blackboard in public view.  Such protocols are more
powerful than private-coin protocols, but not
by much (Theorem~\ref{t:newman}).  Recall that public-coin randomized
protocols are equivalent 
to distributions over deterministic protocols.

\textbf{Two-sided error.} We allow a protocol to error with constant
probability ($\tfrac{1}{3}$ by default), whether or not the correct
answer is ``1'' or ``0.''  This is the most permissive error model.

\textbf{Arbitrary constant error probability.}  Recall that all constant
error probabilities in $(0,\tfrac{1}{2})$ are the same --- changing
the error changes the randomized communication complexity by only a
constant factor (by the usual ``independent trials'' argument,
detailed in the exercises).
Thus for upper bounds, we'll be content to achieve
error 49\%; for lower bounds, it is enough to
rule out low-communication protocols with error \%1.

\textbf{Worst-case communication.}  We define the communication cost of a
randomized protocol as the maximum number of bits ever communicated,
over all choices of inputs and coin flips.  Measuring the expected
communication (over the protocol's coin flips) could reduce the
communication complexity of a problem, but only by a constant factor.

\subsection{Newman's Theorem: Public- vs.\ Private-Coin Protocols}

We mentioned a few times that, for our purposes, it usually won't
matter whether we consider public-coin or private-coin randomized
protocols.  What we meant is the following result.

\begin{theorem}[Newman's Theorem~\citeyearpar{N91}]\label{t:newman}
If there is a public-coin protocol for a function~$f$ with $n$-bit inputs
that has two-sided error~$1/3$ and communication cost~$c$, then there
is a private-coin protocol for the problem that has two-sided error
$1/3$ and communication cost $O(c + \log n)$.
\end{theorem}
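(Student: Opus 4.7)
The plan is to use a probabilistic-method argument to replace the (potentially infinite) sequence of public coins by one of only $\poly(n)$ possible random strings, which can then be selected using $O(\log n)$ private coins. Since a public-coin protocol is, by definition, a distribution over deterministic protocols, I can think of the hypothesized protocol as a distribution $\mathcal{D}$ over deterministic protocols $\{P_r\}$, each of communication cost at most $c$, such that for every fixed input $(\bfx,\bfy) \in \{0,1\}^n \times \{0,1\}^n$,
\[
\prob[r \sim \mathcal{D}]{P_r\text{ wrong on }(\bfx,\bfy)} \le \tfrac{1}{3}.
\]
By an initial round of constant-factor amplification (the usual independent-trials-plus-majority trick, which costs only a constant factor in communication), I may assume the error is at most some smaller constant, say $\tfrac{1}{10}$.

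Next I would draw $T$ i.i.d.\ samples $P_{r_1},\ldots,P_{r_T}$ from $\mathcal{D}$, where $T = \Theta(n)$ is to be chosen. For any fixed input $(\bfx,\bfy)$, the indicator variables for ``$P_{r_i}$ errs on $(\bfx,\bfy)$'' are independent Bernoullis of mean at most $\tfrac{1}{10}$, so by a Chernoff bound the empirical error rate exceeds, say, $\tfrac{1}{3}$ with probability at most $e^{-\alpha T}$ for some absolute constant $\alpha > 0$. Taking a union bound over all $2^{2n}$ inputs, the probability that some input has empirical error rate above $\tfrac{1}{3}$ is at most $2^{2n} \cdot e^{-\alpha T}$, which is less than $1$ once $T = \Theta(n)$ is chosen with a sufficiently large hidden constant. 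Hence there \emph{exists} a fixed list $P_{r_1},\ldots,P_{r_T}$ of deterministic protocols, hard-coded into the players' strategies, such that for every input $(\bfx,\bfy)$ at most a $\tfrac{1}{3}$ fraction of these $T$ protocols err.

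With such a list in hand, the private-coin protocol is immediate: Alice uses her private coins to sample an index $i$ uniformly from $\{1,\ldots,T\}$, sends $i$ to Bob using $\lceil \log_2 T\rceil = O(\log n)$ bits, and then both players execute the deterministic protocol $P_{r_i}$ on their inputs. The total communication is $c + O(\log n)$, and for every input the error probability (now over the private coins only) is at most $\tfrac{1}{3}$, as required.

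The only real obstacle is the union bound step: one has to realize that the ``bad event'' to union-bound over is per-\emph{input} rather than per-random-string, and that the number of inputs is only $2^{2n}$, which is exactly matched by the exponential tail $e^{-\alpha T}$ once $T$ is linear in $n$. Everything else is bookkeeping: the initial amplification to drive the error below $\tfrac{1}{3}$ by a constant margin (so that the Chernoff slack can be absorbed), and the observation that sending $\log T$ bits of the chosen index is the only new communication incurred.
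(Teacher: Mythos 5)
Your proof is correct and takes essentially the same approach as the paper's: a probabilistic-method argument (Chernoff plus a union bound over the $2^{2n}$ inputs) to find a fixed list of $\Theta(n)$ public-coin seeds that work simultaneously for all inputs, followed by Alice sending a $O(\log n)$-bit index into that list. The only cosmetic difference is that you amplify the error below $1/3$ \emph{before} sampling, whereas the paper samples first (achieving $60\%$ correctness) and amplifies the resulting $40\%$-error private-coin protocol afterward; both variants are valid bookkeeping for the same argument.
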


Thus, for problems with public-coin randomized communication
complexity $\Omega(\log n)$, like most of the problems that we'll
study in this course, there is no difference 
between the communication complexity of
the public-coin and private-coin
variants (modulo constant factors).

An interesting exception is \eq.  Last lecture,
we gave a public-coin protocol --- one-way, even --- with constant
communication complexity.  Theorem~\ref{t:newman} only implies an
upper bound of $O(\log n)$ communication for private-coin protocols.
(One can also give such a private-coin protocol directly, see the
Exercises.)  There is also a matching lower bound of
$\Omega(\log n)$ for the private-coin communication complexity of 
\eq.  (This isn't very hard to prove, but we won't have
an occasion to do it.)  Thus public-coin protocols can save
$\Theta(\log n)$ bits of communication over private-coin protocols,
but no more. 

\vspace{.1in}
\noindent
\begin{prevproof}{Theorem}{t:newman}
Let $P$ denote a public-coin protocol with two-sided error $1/3$.
We begin with a thought experiment.  Fix an input $\inputs$, with
$\bfx,\bfy \in \{0,1\}^n$.  If we run $P$ on this input, a public string
$\bfr_1$ of random bits is consumed and the output of the protocol is
correct with probability at least $2/3$.
If we run it again, a second (independent) random string $\bfr_2$ is
consumed and another (independent) answer is given, again correct with
probability at least $2/3$.  After $t$ such trials and the consumption of
random strings $\bfr_1,\ldots,\bfr_t$, $P$ produces $t$ answers.  We expect
at least $2/3$ of these to be correct, and Chernoff bounds
(with $\delta = \Theta(1)$ and $\mu = \Theta(t)$) imply that at least
60\% of these answers are correct with probability at least $1 - \exp
\{ -\Theta(t) \}$.  

We continue the thought experiment by taking a Union Bound over the
$2^n \cdot 2^n = 2^{2n}$ choices of the input $\inputs$.  With
probability at least $1 - 2^{2n} \cdot \exp \{ -\Theta(t) \}$ over the
choice of $\bfr_1,\ldots,\bfr_t$, for every input $\inputs$, running the
protocol $P$ with these random 
strings yields at least $.6t$ (out of $t$) correct
answers.  In this event, the single sequence $\bfr_1,\ldots,\bfr_t$ of
random strings ``works'' simultaneously for all inputs $\inputs$.
Provided we take $t = cn$ with a large enough constant $c$,
this probability is positive.  In particular, such a set
$\bfr_1,\ldots,\bfr_t$ of random strings exist.

Here is the private-coin protocol.
\begin{itemize}

\item [(0)] Before receiving their inputs, Alice and Bob agree on a set
  of strings 
  $\bfr_1,\ldots,\bfr_t$ with the property that, for every input
  $\inputs$, running $P$ $t$ times with the random strings
  $\bfr_1,\ldots,\bfr_t$ yields at least 60\% correct answers.

\item [(1)] Alice picks an index $i \in \{1,2,\ldots,t\}$ uniformly at
  random and sends it to Bob.  This requires $\approx \log_2 t =
  \Theta(\log n)$ bit of communication (recall $t = \Theta(n)$).

\item [(2)] Alice and Bob simulate the private-coin protocol $P$ as if
  they had public coins given by $\bfr_i$.

\end{itemize}
By the defining property of the $\bfr_i$'s, this (private-coin) protocol
has error 40\%.  As usual, this can be reduced to $1/3$ via a constant
number of independent repetitions followed by taking the majority
answer.  The resulting communication cost is $O(c + \log n)$, as claimed.
\end{prevproof}

We stated and proved Theorem~\ref{t:newman} for general protocols, but
the exact same statement holds (with the same proof) for the one-way
protocols that we studied in Lectures~\ref{cha:data-stre-algor}--~\ref{cha:lower-bounds-compr}.

\subsection{Distributional Complexity}

Randomized protocols are significantly harder to reason about than
deterministic ones.  For example, we've seen that a deterministic
protocol can be thought of as a partition of the input space into
rectangles.  A randomized protocol is a distribution over such
partitions.  While a deterministic protocol
that computes a function $f$ induces only monochromatic rectangles,
this does not hold for randomized protocols (which can err with some
probability).

We can make our lives somewhat simpler by using Yao's Lemma
to translate distributional lower bounds for deterministic protocols
to worst-case lower bounds for randomized protocols.  Recall the lemma
from Lecture~\ref{cha:lower-bounds-one} (Lemma~\ref{l:yao}).

\begin{lemma}[\citealt{Y83}]
Let $D$ be a distribution over the space of inputs $\inputs$ to a
communication problem, and $\eps \in (0,\tfrac{1}{2})$.  Suppose that
every deterministic protocol $P$ with
\[
\prob[\inputs \sim D]{\text{$P$ wrong on $\inputs$}} \le \eps
\]
has communication cost at least $k$.  Then every (public-coin)
randomized protocol $R$ with (two-sided) error at most $\eps$ on every
input has communication cost at least $k$.
\end{lemma}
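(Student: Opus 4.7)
The plan is to mimic the proof of the one-way version (Lemma~\ref{l:yao}) essentially verbatim, since nothing in that argument used the one-way structure: it relied only on the fact that a public-coin randomized protocol is a distribution over deterministic protocols, and this continues to hold for general two-party protocols. So the proof is really a ``free lunch'' extension.

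Concretely, I would argue the contrapositive. Suppose $R$ is a public-coin randomized protocol for $f$ with worst-case communication cost strictly less than $k$ and two-sided error at most $\eps$ on every fixed input $\inputs$. Since $R$ is public-coin, we can write it as a distribution over deterministic protocols $P_1,\ldots,P_s$, where $P_i$ is the deterministic protocol obtained by fixing the public random string to its $i$th value (with the appropriate weights $\lambda_i$ summing to~$1$). Because worst-case communication is defined as the maximum over inputs and coin flips, each $P_i$ individually uses strictly fewer than $k$ bits on every input. By the assumption of the lemma, this forces each $P_i$ to err with probability greater than $\eps$ under $D$:
\[
\prob[\inputs \sim D]{\text{$P_i$ wrong on $\inputs$}} > \eps
\qquad \text{for each } i=1,\ldots,s.
\]

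Now average over $i$ with weights $\lambda_i$ and swap the order of expectation:
\[
\prob[\inputs \sim D;\, R]{\text{$R$ wrong on $\inputs$}}
= \sum_{i=1}^s \lambda_i \prob[\inputs \sim D]{\text{$P_i$ wrong on $\inputs$}} > \eps.
\]
Since the maximum of a collection of numbers is at least their average, there must exist some particular input $\inputs^*$ such that
\[
\prob[R]{\text{$R$ wrong on $\inputs^*$}} > \eps,
\]
contradicting the hypothesis that $R$ has error at most $\eps$ on every input. Therefore any such $R$ must in fact use at least $k$ bits of communication in the worst case.

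There isn't really a ``hard part'' here --- the only thing to be slightly careful about is the equivalence between public-coin randomized protocols and distributions over deterministic protocols, and the fact that worst-case (rather than expected) communication cost of $R$ translates directly into a worst-case communication bound on each $P_i$ in the support. Both of these were already established in the discussion of randomized protocols, so the proof is essentially a two-line averaging argument identical to the one for the one-way case.
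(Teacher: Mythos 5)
Your proof is correct and is essentially identical to the paper's: the paper itself states that Lemma~\ref{l:yao} "holds verbatim for general communication protocols," and your argument (decompose the public-coin protocol into a distribution over deterministic protocols of cost $<k$, apply the hypothesis to each, average, and use that the max is at least the average) is exactly the one given in Lecture~\ref{cha:lower-bounds-one}.
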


We proved Lemma~\ref{l:yao} in Lecture~\ref{cha:lower-bounds-one} for
one-way protocols, but
the same proof holds verbatim for general communication protocols.
Like in the one-way case, Lemma~\ref{l:yao} is a ``complete'' proof
technique --- whatever the true randomized communication complexity,
there is a hard distribution $D$ over inputs that can in principle be
used to prove it (recall the Exercises).  

Summarizing, proving lower bounds for randomized communication
complexity reduces to:
\begin{enumerate}

\item Figuring out a ``hard distribution'' $D$ over inputs.

\item Proving that every low-communication deterministic protocol 
  has large error w.r.t.\ inputs drawn from $D$.

\end{enumerate}
Of course, this is usually easier said than done.

\subsection{Case Study: \disj}\label{ss:disj_proof}

\subsubsection{Overview}

We now return to the \disj problem.  In
Lecture~\ref{cha:lower-bounds-one} we
proved that the {\em one-way} randomized communication complexity of
this problem is linear (Theorem~\ref{t:disj2}).
We did this by reducing 
\index to \disj --- the former is just a special case of
the latter, where one player has a singleton set (i.e., a standard
basis vector).  
We used Yao's Lemma (with $D$ the uniform distribution)
and a counting argument (about the volume of
small-radius balls in the Hamming cube, remember?) to prove that the
one-way randomized communication complexity of \index is $\Omega(n)$.
Unfortunately, for general communication protocols, the communication
complexity of \index is obviously $O(\log n)$ --- Bob can just send his
index $i$ to Alice using $\approx \log_2 n$ bits, and Alice can
compute the function.  So, it's back to the drawing board.

The following is a major and useful technical achievement.
\begin{theorem}[\citealt{KS92,R92}]\label{t:disj_2sided}
The randomized communication complexity of \disj is
$\Omega(n)$.
\end{theorem}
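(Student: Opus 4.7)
The plan is to prove Theorem~\ref{t:disj_2sided} via the distributional complexity methodology (Lemma~\ref{l:yao}), specializing to a carefully chosen hard distribution $D$ and then using a \emph{corruption} (a.k.a. rectangle) argument to lower-bound the number of monochromatic-ish rectangles that any correct protocol must induce. By Lemma~\ref{l:yao} it suffices to exhibit a distribution $D$ over inputs $(\bfx,\bfy) \in \{0,1\}^n \times \{0,1\}^n$ such that every deterministic protocol with error at most some small constant $\eps$ under $D$ requires $\Omega(n)$ communication.

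First I would set up the hard distribution, following Razborov. Let $k = n/4$. Let $D_1$ be uniform over disjoint pairs $(\bfx,\bfy)$ with $|\bfx|=|\bfy|=k$ (the ``1-inputs''), and let $D_0$ be uniform over pairs $(\bfx,\bfy)$ with $|\bfx|=|\bfy|=k$ and $|\bfx \cap \bfy|=1$ (the ``0-inputs''); set $D = \tfrac{3}{4}D_1 + \tfrac{1}{4}D_0$. Inputs outside the support are not considered. This distribution puts constant mass on each answer, so a protocol can't trivially succeed by guessing one answer. I would then recall, via Lemmas~\ref{l:rect} and~\ref{l:mono}, that a deterministic protocol of cost $c$ partitions $X\times Y$ into at most $2^c$ rectangles, each labeled by the protocol's constant output on it. Small average error under $D$ forces most of the $D$-mass to sit inside ``almost 1-monochromatic'' rectangles, i.e.\ rectangles $R=A\times B$ that are labeled $1$ and satisfy $\mu_0(R) \le \delta\,\mu_1(R)$ for some small constant $\delta$ (here $\mu_b$ denotes the $D_b$-measure).

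The technical heart is the corruption lemma: there exist constants $\delta,\alpha>0$ such that every rectangle $R=A\times B$ with $\mu_0(R) \le \delta\,\mu_1(R)$ has $\mu_1(R) \le 2^{-\alpha n}$. The intuition is that a rectangle rich in disjoint pairs cannot avoid being rich in unique-intersection pairs too, because $A$ and $B$ can then be shown to contain large ``disjoint-heavy'' subfamilies, and a random perturbation of one coordinate (flipping a single bit of $\bfx$ or $\bfy$ from $0$ to $1$) turns many disjoint pairs into unique-intersection pairs that still lie in $R$. Making this precise is the main obstacle, and is the step I expect to grind against: I would first condition on the average set sizes being exactly $k$, then apply a counting/switching argument bounding $\mu_0(R)$ from below by $\Omega(1/n)\cdot\mu_1(R)$ unless $|A|\cdot|B|$ is already exponentially smaller than $|D_1|$. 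Razborov's original argument does this by reducing to a convex-geometric inequality about the measure of axis-aligned rectangles under a product-like distribution; an alternative is the Bar-Yossef--Jayram--Kumar--Sivakumar information-complexity route via a direct-sum theorem for AND on single bits, but I would proceed with the combinatorial corruption version.

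Given the corruption lemma, the conclusion is quick. Let $P$ be any deterministic protocol with $D$-error at most $\eps=\delta/100$ and communication $c$. Let $\mathcal R$ be the set of its rectangles labeled $1$. Since $P$ errs on at most $\eps$ mass, the rectangles in $\mathcal R$ collectively carry $\mu_1$-mass at least $\tfrac{3}{4}-\eps$ and $\mu_0$-mass at most $\eps$, so on average across $\mathcal R$ one has $\mu_0 \le \delta\,\mu_1$, and a Markov-style pruning leaves a subfamily $\mathcal R'\subseteq\mathcal R$ of rectangles each satisfying $\mu_0(R)\le\delta\,\mu_1(R)$ individually and still carrying $\mu_1$-mass $\Omega(1)$. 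By the corruption lemma each such rectangle has $\mu_1(R)\le 2^{-\alpha n}$, so $|\mathcal R'|\ge 2^{\alpha n}$, hence $2^c \ge |\mathcal R'|$ and $c=\Omega(n)$, completing the proof.
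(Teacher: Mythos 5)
Your proposal follows exactly the route the paper outlines in Section~\ref{ss:disj_proof}: reduce to distributional complexity via Lemma~\ref{l:yao}, adopt Razborov's hard distribution (set size $n/4$, mass $3/4$ on disjoint pairs and $1/4$ on uniquely-intersecting pairs), and apply the corruption method. Your closing pruning argument --- extracting, from a protocol with $D$-error at most $\eps$, a subfamily of 1-rectangles each satisfying $\mu_0(R)\le\delta\,\mu_1(R)$ and collectively carrying $\Omega(1)$ of the 1-mass --- is correct and is the paper's Step~(1) verbatim.

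The genuine gap, which you flag yourself, is the corruption lemma: every rectangle with $\mu_0(R)\le\delta\,\mu_1(R)$ has $\mu_1(R)\le 2^{-\Omega(n)}$. The paper also does not prove this (it explicitly declines: ``we won't have time to talk about these and subsequent proofs''), so in that sense you have faithfully reproduced the paper's outline --- but this lemma \emph{is} the theorem, and your informal argument for it does not stand as stated. Flipping a single bit of $\bfx$ or $\bfy$ from $0$ to $1$ changes the set size from $n/4$ to $n/4+1$ and thus leaves the support of $D$ entirely; and any naive perturbation argument risks proving too much, since for product distributions no bound better than $O(\sqrt{n}\log n)$ communication is possible (\cite{BFS86}, cf.\ the discussion preceding Razborov's distribution). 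The correlation baked into Razborov's conditioning (exact sizes, intersection exactly $0$ or $1$) must be used in an essential way: Razborov's actual argument fixes a partition of coordinates into types and applies a concentration/convexity estimate to show that a rectangle dense with disjoint pairs is proportionally dense with uniquely-intersecting pairs; the information-complexity route of \cite{B+02b} instead proves a direct-sum theorem for single-coordinate AND. Either path is where the real technical work lives, and your sketch would need one of them filled in to constitute a proof.
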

Theorem~\ref{t:disj_2sided} was originally proved in~\cite{KS92}; the
simplified proof in~\cite{R92} has been more influential.  More
recently, all the cool kids prove Theorem~\ref{t:disj_2sided} using
``information complexity'' arguments; see~\cite{B+02b}.

If you only remember one result from the entire field of communication
complexity, it should be Theorem~\ref{t:disj_2sided}.  The primary reason is
that the problem is unreasonably effective for proving lower bounds
for other algorithmic problems --- almost every subsequent lecture
will include a significant example.  Indeed, many algorithm designers
simply use Theorem~\ref{t:disj_2sided} as a ``black box'' to prove lower
bounds for other problems, without losing sleep over its
proof.\footnote{Similar to, for example, the PCP 
Theorem and the Parallel Repetition
Theorem in the context of hardness of
approximation (see e.g.~\cite{AL96}).}\footnote{There's no shame in this --- 
  life is short and there's lots of theorems that need proving.}
As a bonus, proofs of Theorem~\ref{t:disj_2sided} tend to showcase techniques
that are reusable in other contexts.  

For a trivial consequence of Theorem~\ref{t:disj_2sided} --- see future
lectures for less obvious ones --- let's return to the setting of
streaming algorithms.  Lectures~\ref{cha:data-stre-algor} and~\ref{cha:lower-bounds-one} considered only
one-pass algorithms.  In some contexts, like a telescope that
generates an exobyte of data per day, this is a hard constraint.  In
other settings, like database applications, a small constant number of
passes over the data might be feasible (as an overnight job, for
example).  Communication complexity lower bounds for one-way protocols
say nothing about two-pass algorithms, while those for general
protocols do.  Using Theorem~\ref{t:disj_2sided}, all of our $\Omega(m)$
space lower bounds for 1-pass algorithms become $\Omega(m/p)$ space
lower bounds for $p$-pass algorithms, via the same
reductions.\footnote{A $p$-pass space-$s$ streaming algorithm $S$ induces
  a communication protocol with $O(ps)$ communication, where Alice and
  Bob turn their inputs into data streams, repeatedly feed them into
  $S$, repeatedly sending the memory state of $S$ back and forth to
  continue the simulation.}  For example, we proved such lower bounds
for computing $F_{\infty}$, the highest frequency of an element, even
with randomization and approximation, and for computing $F_0$ or $F_2$
exactly, even with randomization.

So how would one go about proving Theorem~\ref{t:disj_2sided}?  Recall that
Yao's Lemma reduces the proof to exhibiting a hard distribution $D$
(a bit of dark art)
over inputs and proving that all low-communication deterministic
protocols have large error with respect to $D$ (a potentially tough
math problem).  We next discuss each of these steps in turn.

\subsubsection{Choosing a Hard Distribution}

The uniform distribution over inputs is not 
a hard distribution for \disj.  What is
the probability that a random input $\inputs$ satisfies $f\inputs =
1$?  Independently in each coordinate $i$, there is a 25\% probability
that $x_i = y_i = 1$.  Thus, $f\inputs = 1$ with probability
$(3/4)^n$.  This means that the zero-communication protocol that
always outputs ``not disjoint'' has low error with respect to this
distribution.
The moral is that a hard distribution $D$ must, at the very least,
have a constant probability of producing both ``yes'' and ``no''
instances.  

The next idea, motivated by the Birthday Paradox, is to
define $D$ such that each of Alice and Bob receive a random subset of
$\{1,2,\ldots,n\}$ of size $\approx \sqrt{n}$.  Elementary
calculations show that a random instance $\inputs$ from $D$ has
a constant probability of satisfying each of $f\inputs = 1$ and
$f\inputs = 0$.

An obvious issue with this approach is that there is a trivial
deterministic protocol that uses $O(\sqrt{n} \log n)$ communication
and has zero error: Alice (say) just sends her whole input to Bob by
describing each of her $\sqrt{n}$ elements explicitly by name ($\approx
\log_2 n$ bits each).  So there's no way to prove a linear
communication lower bound using this distribution.  
\cite{BFS86} prove that one can at least prove a
$\Omega(\sqrt{n})$ communication lower bound using this distribution,
which is already quite a non-trivial result (more on this below).
They also showed that for every product distribution $D$ --- meaning
whenever the random choices of $\bfx$ and of $\bfy$ are independent ---
there is a zero-error deterministic protocol that uses only
$O(\sqrt{n} \log n)$ bits of communication (see the
Exercises).\footnote{This does not imply that a linear lower bound is
  impossible.  The proof of the converse of Lemma~\ref{l:yao} --- that
  a tight lower bound on the randomized communication complexity 
of a problem can always
  be proved through a distributional lower bound for a suitable choice
  of $D$ --- generally makes use of distributions in which the choices
  of $\bfx$ and $\bfy$ are correlated.}

Summarizing, if we believe that \disj really requires
$\Omega(n)$ communication to solve via randomized protocols, then we need
to find a distribution $D$ that meets all of the following criteria.
\begin{enumerate}

\item There is a constant probability that $f\inputs = 1$ and that
  $f\inputs = 0$.  (Otherwise, a constant protocol works.)

\item Alice and Bob need to usually receive inputs that correspond to
  sets of size $\Omega(n)$.  (Otherwise, one player can explicitly
  communicate his or her set.)

\item The random inputs $\bfx$ and $\bfy$ are correlated.  (Otherwise, the
  upper bound from~\cite{BFS86} applies.)

\item It must be mathematically tractable to prove good lower bounds
  on the error of all deterministic communication protocols that use a
  sublinear amount of communication.

\end{enumerate}

\cite{R92} proposed a distribution that obviously satisfies
the first three properties and, less obviously, also satisfies the
fourth.  It is:
\begin{enumerate}

\item With probability 75\%:

\begin{enumerate}

\item $\inputs$ is chosen uniformly at random subject to:

\begin{enumerate}

\item $\bfx,\bfy$ each have exactly $n/4$ 1's;

\item there is no index $i \in \{1,2,\ldots,n\}$ with $x_i = y_i = 1$
  (so   $f\inputs = 1$).

\end{enumerate}

\end{enumerate}

\item With probability 25\%:

\begin{enumerate}

\item $\inputs$ is chosen uniformly at random subject to:

\begin{enumerate}

\item $\bfx,\bfy$ each have exactly $n/4$ 1's;

\item there is exactly one index $i \in \{1,2,\ldots,n\}$ with $x_i =
  y_i = 1$ (so   $f\inputs = 0$).

\end{enumerate}

\end{enumerate}

\end{enumerate}
Note that in both cases, the constraint on the number of indices $i$
with $x_i=y_i=0$ creates correlation between the choices of $\bfx$ and
$\bfy$.

\subsubsection{Proving Error Lower Bounds via Corruption Bounds}


Even if you're handed a hard distribution over inputs, there remains
the challenging task of proving a good error lower bound on
low-communication deterministic protocols.  
There are multiple methods for doing this, with the {\em corruption
  method} being the most successful one so far.  We outline this
method next.

At a high level, the corruption method is a natural extension of the
covering arguments of Section~\ref{s:det} to protocols that can err.
Recall that for deterministic protocols, the covering approach argues
that every covering of the matrix $M(f)$ of the function $f$ by
monochromatic rectangles requires a lot of rectangles.  In our
examples, we only bothered to argue about the $1$-inputs of
the function.\footnote{Since $f$ has only two outputs, it's almost
  without loss to pick a single output $z \in \{0,1\}$ of $f$ and
  lower bound only
  the number of monochromatic rectangles needed to cover all of the
  $z$'s.}  We'll do something similar here, weighted by the
distribution $D$ and allowing errors --- arguing that there's
significant mass on the 1-inputs of $f$, and that a lot of nearly
monochromatic rectangles are required to cover them all.

Precisely, suppose you have a distribution $D$ over the inputs of a
problem so that the ``1-mass'' of $D$, meaning $\prob[\inputs \sim
  D]{f(x,y)   = 1}$, is at least a constant, say~.5.
The plan is to prove two properties.
\begin{itemize}

\item [(1)] 
For every deterministic
protocol $P$ with error at most a sufficiently small constant $\eps$,
at least 25\% of the 1-mass of $D$ is contained in
  ``almost monochromatic 1-rectangles'' of $P$ (defined below).  
  We'll see below that this is easy to prove in general by an
  averaging argument.

\item [(2)] An almost monochromatic 1-rectangle contains at most
  $2^{-c}$ mass of the distribution $D$, where $c$ is as large as
  possible (ideally $c = \Omega(n)$).  This is the hard step, and the
  argument will be different for different functions $f$ and different
  input distributions $D$.

\end{itemize}
If we can establish~(1) and~(2), then we have a lower bound of
$\Omega(2^{-c})$ on the number of rectangles induced by $P$, which
proves that $P$ uses communication $\Omega(c)$.\footnote{Why call it the
  ``corruption method''?  Because the argument shows that, if a
  deterministic protocol has low communication, then most of its
  induced rectangles that contain 1-inputs are also ``corrupted'' by lots of
  0-inputs --- its rectangles are so big that~\eqref{eq:corrupt} fails.
  In turn, this implies large error.}

Here's the formal definition of an
{\em almost monochromatic 1-rectangle (AM1R) $R = A \times B$} of a
matrix $M(f)$ with respect to an input distribution $D$:
\begin{equation}\label{eq:corrupt}
\prob[\inputs \sim D]{\inputs \in R \text{ and } f\inputs = 0}
\le
8\eps \cdot \prob[\inputs \sim D]{\inputs \in R \text{ and }
  f\inputs = 1}.
\end{equation}

Here's why property~(1) is true in general.  Let $P$ be a
deterministic protocol with error at most $\eps$ with respect to $D$.
Since $P$ is deterministic, it partitions the matrix $M(f)$ into
rectangles, and in each rectangle, $P$'s output is constant.  Let
$R_1,\ldots,R_{\ell}$ denote the rectangles in which $P$ outputs ``1.''  

At least 50\% of the 1-mass of $D$ --- and hence at least 25\% of
$D$'s overall mass --- must be contained in $R_1,\ldots,R_{\ell}$.
For if not, on at least 25\% of the mass of $D$, $f\inputs = 1$ while
$P$ outputs ``0''.  This contradicts the assumption that $P$ has error
$\eps$ with respect to $D$ (provided $\eps < .25$).

Also, at least 50\% of the mass in $R_1,\ldots,R_{\ell}$ must lie in
AM1Rs.  For if not, using~\eqref{eq:corrupt} and the fact that the
total mass in $R_1,\ldots,R_{\ell}$ is at least .25, it would follow
that $D$ places more than $8\eps \cdot .125 = \eps$ mass on 0-inputs in
$R_1,\ldots,R_{\ell}$.  Since $P$ outputs ``1'' on all of these
inputs, this contradicts the assumption that $P$ has error at most
$\eps$ with respect to $D$.  This completes the proof of step~(1),
which applies
to every problem and every distribution $D$ over inputs with
1-mass at least .5.

Step~(2) is difficult and problem-specific.  
\cite{BFS86}, for their input distribution $D$ over \disj
inputs mentioned above, gave a proof of step~(2) with
$c = \Omega(\sqrt{n})$, thus giving an $\Omega(\sqrt{n})$ lower bound
on the randomized communication complexity of the problem.
\cite{R92} gave, for his input distribution, a proof of
step~(2) with $c = \Omega(n)$, implying the desired lower bound for
\disj.  
Sadly, we won't have time to talk about these and
subsequent proofs (as in~\cite{B+02b}); perhaps in a future course.

\chapter{Lower Bounds for the Extension Complexity of Polytopes}
\label{cha:lower-bounds-extens}

\section{Linear Programs, Polytopes, and Extended Formulations}\label{s:intro}

\subsection{Linear Programs for Combinatorial Optimization Problems}

You've probably seen some polynomial-time algorithms for the problem
of computing a maximum-weight matching of a bipartite
graph.\footnote{Recall that a graph is {\em bipartite} if its
  vertex set can be partitioned into
two sets $U$ and $V$ such that every edge has one
  endpoint in each of $U,V$.  Recall that a {\em matching} of a graph
  is a subset of edges that are pairwise disjoint.}  Many of these,
like the Kuhn-Tucker algorithm~\citep{K55}, are ``combinatorial
algorithms'' that operate directly on the graph.

{\em Linear programming} is also an effective tool for solving many
discrete optimization problems.  
For example, consider the following
linear programming relaxation of the maximum-weight bipartite matching
problem (for a weighted bipartite graph $G=(U,V,E,w)$):
\begin{equation}\label{eq:bm0}
\max \sum_{e \in E} w_ex_e
\end{equation}
subject to
\begin{equation}\label{eq:bm1}
\sum_{e \in \delta(v)} x_e \le 1
\end{equation}
for every vertex $v \in U \cup V$ (where $\delta(v)$ denotes the edges
incident to $v$) and
\begin{equation}\label{eq:bm2}
x_e \ge 0
\end{equation}
for every edge $e \in E$.

In this formulation, each decision variable $x_e$ is intended to
encode whether an edge $e$ is in the matching ($x_e = 1$) or not ($x_e
= 0$).  It is easy to verify that the vectors of $\zo^E$
that satisfy the constraints~\eqref{eq:bm1} and~\eqref{eq:bm2} are
precisely the characteristic 
vectors of the matchings of $G$, with the objective function value of
the solution to the linear program equal to the total weight of the
matching.  

Since every characteristic vector of a matching
satisfies~\eqref{eq:bm1} and~\eqref{eq:bm2}, and the set of feasible
solutions to the linear system defined by~\eqref{eq:bm1}
and~\eqref{eq:bm2} is convex, the convex 
hull of the characteristic vectors of matchings is contained in this
feasible region.\footnote{Recall that a set $S \sse \RR^n$ is {\em
    convex} if it is ``filled in,'' with $\lambda \bfx + (1-\lambda)\bfy
  \in S$ whenever $\bfx,\bfy \in S$ and $\lambda \in [0,1]$.  Recall that
  the {\em convex hull} of a point set $P \sse \RR^n$ is the smallest
  (i.e., intersection of all) convex set that contains it.
  Equivalently, it is the set of all finite convex combinations of points of
  $P$, where a convex combination has the form $\sum_{i=1}^p
  \lambda_i\bfx_i$ for non-negative $\lambda_i$'s summing to 1 and
  $\bfx_1,\ldots,\bfx_p \in P$.}
Also note that every characteristic vector $\bfx$ of a matching is a vertex\footnote{There is an unfortunate
  clash of terminology when talking about linear programming
  relaxations of combinatorial optimization problems: a ``vertex''
  might refer to a node of a graph or to a ``corner'' of a geometric
  set.}
of this feasible region --- since all feasible solutions have all
coordinates bounded by 0 and 1, the 0-1 vector $\bfx$ cannot be written as
a non-trivial convex combination of other feasible solutions.
The worry is does this feasible region contain anything other than
the convex hull of characteristic vectors of matchings?
Equivalently, does it have any vertices that are fractional, and hence
do not correspond to matchings?
(Note that
integrality is not explicitly enforced by~\eqref{eq:bm1}
or~\eqref{eq:bm2}.) 

A nice fact is that the vertices of the feasible region defined
by~\eqref{eq:bm1} and~\eqref{eq:bm2} are precisely the characteristic
vectors of matchings of~$G$.
This is equivalent to the Birkhoff-von Neumann theorem (see
Exercises).
There are algorithms that solve linear programs in polynomial time
(and output a vertex of the feasible region, see e.g.~\cite{GLS88}),
so this implies that the maximum-weight bipartite matching problem can be
solved efficiently using linear programming.

How about the more general problem of maximum-weight matching in
general (non-bipartite) graphs?  While the same linear
system~\eqref{eq:bm1} and~\eqref{eq:bm2} still contains the convex
hull of all characteristic vectors of matchings, and these
characteristic vectors are vertices of the feasible region, there are
also other, fractional, vertices.  To see this, consider the simplest
non-bipartite graph, a triangle.  Every matching contains at most 1
edge.  But assigning $x_e=\tfrac{1}{2}$ for each of the edges~$e$ yields a
fractional solution that satisfies~\eqref{eq:bm1} and~\eqref{eq:bm2}.
This solution clearly cannot be written as a convex combination of
characteristic vectors of matchings.

It is possible to add to~\eqref{eq:bm1}--\eqref{eq:bm2} additional
inequalities ---
``odd cycle inequalities'' stating that, for every odd cycle $C$ of
$G$, $\sum_{e \in C} x_e \le (|C|-1)/2$ --- 
so that the resulting smaller set of feasible solutions is precisely
the convex hull of the characteristic vectors of matchings.  
Unfortunately, many graphs have
an exponential number of odd cycles.  Is it possible to add only
a polynomial
number of inequalities instead?  Unfortunately not --- the convex
hull of the characteristic vectors of matchings can have
$2^{\Omega(n)}$ ``facets''~\citep{PE74}.\footnote{This linear
  programming formulation still leads to a polynomial-time
  algorithm, but using fairly heavy machinery --- the ``ellipsoid
  method''~\citep{K79} and a ``separation oracle'' for the odd cycle
  inequalities~\citep{PR82}.  There are also polynomial-time
  combinatorial algorithms for (weighted) non-bipartite matching,
  beginning with 
  \cite{edmonds}.}   We define facets more formally in
Section~\ref{ss:facets}, but intuitively they are the ``sides'' of a
polytope,\footnote{A {\em polytope} is just a high-dimensional polygon
  --- an intersection of halfspaces that is bounded or, equivalently,
  the convex hull of a finite set of points.} like the $2n$ sides of an
$n$-dimensional cube.  It is intuitively clear that a polytope with
$\ell$ facets needs $\ell$ inequalities to describe --- it's like
cleaving a shape out of marble, with each inequality contributing a
single cut.  We conclude that there is no linear program 
with variables $\{ x_e \}_{e \in E}$ of polynomial
size that captures the maximum-weight (non-bipartite) matching problem.

\subsection{Auxiliary Variables and Extended Formulations}\label{ss:ef}

The exponential lower bound above on the number of linear inequalities
needed to describe the convex hull of characteristic vectors of
matchings of a non-bipartite graph applies to linear systems in
$\RR^E$, with one dimension per edge.  The idea of an {\em extended
  formulation} is to add a polynomial number of auxiliary decision
variables, with the hope that radically fewer inequalities are needed to
describe the region of interest in the higher-dimensional space.

This idea might sound like grasping at straws, but sometimes it
actually works.  For example, 
fix a positive integer $n$, and represent a
permutation $\pi \in S_n$ by the $n$-vector
$\bfx_{\pi} = (\pi(1),\pi(2),\ldots,\pi(n))$, with all coordinates in
$\{1,2,\ldots,n\}$.  
The {\em permutahedron} is the convex hull of all $n!$ such vectors.
The permutahedron is known to have $2^{n/2}-2$ facets (see
e.g.~\cite{goemans}), so a polynomial-sized linear description would
seem out of reach. 

Suppose we add $n^2$ auxiliary variables, $y_{ij}$ for all $i,j \in
\{1,2,\ldots,n\}$.  The intent is for $y_{ij}$ to be a 0-1 variable
that indicates whether or not $\pi(i) = j$ --- in this case, the
$y_{ij}$'s are the entries of the $n \times n$ permutation matrix that 
corresponds to $\pi$.

We next add a set of constraints to enforce the desired semantics of
the $y_{ij}$'s (cf.,~\eqref{eq:bm1} and~\eqref{eq:bm2}):
\begin{equation}\label{eq:perm1}
\sum_{j=1}^n y_{ij} \le 1
\end{equation}
for $i=1,2,\ldots,n$;
\begin{equation}\label{eq:perm2}
\sum_{i=1}^n y_{ij} \le 1
\end{equation}
for $j=1,2,\ldots,n$; and
\begin{equation}\label{eq:perm3}
y_{ij} \ge 0
\end{equation}
for all $i,j \in \{1,2,\ldots,n\}$.
We also add constraints that enforce consistency between the permutation
encoded by the $x_i$'s and by the $y_{ij}$'s:
\begin{equation}\label{eq:perm4}
x_i = \sum_{j=1}^n jy_{ij}
\end{equation}
for all $i=1,2,\ldots,n$.

It is straightforward to check that the vectors $\bfy \in \{0,1\}^{n^2}$
that satisfy~\eqref{eq:perm1}--\eqref{eq:perm3}
are precisely the permutation matrices.  
For such a $\bfy$ corresponding to a permutation $\pi$, the
constraints~\eqref{eq:perm4} force the $x_i$'s to encode the same
permutation $\pi$.
Using again the Birkhoff-von
Neumann Theorem, every vector $\bfy \in \RR^{n^2}$
that satisfies~\eqref{eq:perm1}--\eqref{eq:perm3} is a convex
combination of permutation matrices (see Exercises).
Constraint~\eqref{eq:perm4} implies that the $x_i$'s encode the same
convex combination of permutations.  
Thus, if we take the set 
of solutions in $\RR^{n + n^2}$ 
that satisfy~\eqref{eq:perm1}--\eqref{eq:perm4} and project onto the
$x$-coordinates, we get exactly the permutahedron.  
This is what we mean by an {\em extended formulation} of a polytope.

To recap the remarkable trick we just pulled off: blowing up the
number of variables from $n$ to $n+n^2$ reduced the number of
inequalities needed from $2^{n/2}$ to $n^2 + 3n$.
This allows us to optimize a linear function over the permutahedron in
polynomial time.  Given a linear function (in the $x_i$'s), we
optimize it over the (polynomial-size) extended formulation, and
retain only the $x$-variables of the optimal solution.

Given the utility of polynomial-size extended formulations, we'd
obviously like to understand which problems have them.  
For example, does the non-bipartite matching problem admit such a
formulation? 
The goal of this lecture is to develop communication
complexity-based techniques for ruling out such polynomial-size
extended formulations.  We'll prove an impossibility result for the
``correlation polytope''~\citep{F+12}; similar (but much more involved)
arguments imply that every extended formulation of the non-bipartite
matching problem requires an exponential number of
inequalities~\citep{rothvoss}.

\begin{remark}[Geometric Intuition]
It may seem surprising that adding a relatively small number of
auxiliary variables can radically reduce the number of inequalities
needed to describe a set
--- described in reverse, that projecting onto a subset of variables can
massively blow up the number of sides.  It's hard to draw
(low-dimensional) pictures that illustrate this point.
If you play around with projections of
some three-dimensional polytopes onto
the plane, you'll observe that non-facets of the high-dimensional
polytope (edges) often become facets (again, edges) in the
low-dimensional projection.  Since the number of lower-dimensional
faces of a polytope can be much bigger than the number of facets ---
already in the 3-D cube, there are 12 edges and only 6 sides --- it
should be plausible that a projection could significantly increase the
number of facets.
\end{remark}


\section{Nondeterministic Communication Complexity}

The connection between extended formulations of polytopes and
communication complexity involves {\em nondeterministic} communication
complexity.  We studied this model implicitly in parts of
Lecture~\ref{cha:boot-camp-comm}; this section makes the model explicit.

Consider a function $f:X \times Y \rightarrow \zo$ and the
corresponding 0-1 matrix $M(f)$, with rows indexed by Alice's possible
inputs and columns indexed by Bob's possible inputs.  In Lecture~\ref{cha:boot-camp-comm}
we proved that if every covering of $M(f)$ by monochromatic
rectangles\footnote{Recall that a {\em rectangle} is a subset $S \sse
  X \times Y$ that has a product structure, meaning $S = A \times B$
  for some $A \sse X$ and $B \sse Y$.  Equivalently, $S$ is closed
  under ``mix and match:'' whenever $(\bfx_1,\bfy_1)$ and $(\bfx_2,\bfy_2)$ are
  in $S$, so are $(\bfx_1,\bfy_2)$ and $(\bfx_2,\bfy_1)$.  A rectangle is {\em
    monochromatic} (w.r.t.\ $f$) if it contains only 1-entries
  of $M(f)$ or only 0-entries of $M(f)$.  In these cases, we call it a
{\em 1-rectangle} or a {\em 0-rectangle}, respectively.}
requires at least $t$ rectangles, then the deterministic communication
complexity of $f$ is at least $\log_2 t$ (Theorem~\ref{t:part}).  
The reason is that every
communication protocol computing $f$ with communication cost $c$
induces a partition of $M(f)$ into at most $2^c$ monochromatic
rectangles, and partitions are a special case of coverings.  See also
Figure~\ref{f:cover2}.

\begin{figure}
\centering
\includegraphics[width=.2\textwidth]{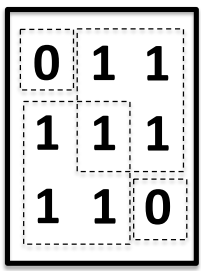}
\caption[A covering by four monochromatic rectangles that is not a
  partition]{A covering by four monochromatic rectangles that is not a
  partition.}\label{f:cover2}
\end{figure}

Communication complexity lower bounds that are proved through
coverings are actually much stronger than we've let on thus far ---
they apply also to {\em nondeterministic} protocols, which we define
next.

You presumably already have a feel for nondeterminism from your study
of the complexity class $NP$.  Recall that one way to define
$NP$ is as the problems for which membership can be verified
in polynomial time.
To see how an
analog might work with communication protocols, consider the
complement of the \eq problem, \noneq.  If a third party wanted to
convince Alice and Bob that their inputs $\bfx$ and $\bfy$ are different,
it would not be difficult: just specify an index $i \in
\{1,2,\ldots,n\}$ for which $x_i \neq y_i$.  Specifying an index
requires $\log_2 n$ bits, and specifying whether or not $x_i = 0$ and
$y_i = 1$ or $x_i = 1$ and $y_i = 0$ requires one additional bit.
Given such a specification, Alice and Bob can check the correctness of
this ``proof of non-equality'' without any communication.
If $\bfx \neq \bfy$, there is always a $(\log_2 + 1)$-bit proof that will
convince Alice and Bob of this fact; if $\bfx = \bfy$, then no such proof
will convince Alice and Bob otherwise.  This means that \noneq has
{\em nondeterministic communication complexity} at most $\log_2 n +
1$.

Coverings of $M(f)$ by monochromatic rectangles are closely related to
the nondeterministic communication complexity of $f$.  We first show
how coverings lead to nondeterministic protocols.  It's easiest to
formally define such protocols after the proof.

\begin{proposition}\label{prop:cover1}
Let $f: X \times Y \rightarrow \{0,1\}$ be a Boolean function and
$M(f)$ the corresponding matrix.
If there is a cover of the 1-entries of $M(f)$ by $t$ 1-rectangles,
then there is a nondeterministic protocol that verifies $f\inputs = 1$
with cost $\log_2 t$.
\end{proposition}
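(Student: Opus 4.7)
The plan is to exhibit the desired nondeterministic protocol directly from the covering. Let $R_1, \ldots, R_t$ be the 1-rectangles that cover all 1-entries of $M(f)$, and write each as $R_i = A_i \times B_i$ with $A_i \sse X$ and $B_i \sse Y$. Since the cover is fixed in advance and known to both players, we can interpret any index $i \in \{1, 2, \ldots, t\}$ as a pointer to one of these rectangles, requiring exactly $\lceil \log_2 t \rceil$ bits to specify.

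The protocol itself is then the obvious one: a third party (the ``prover'') writes an index $i$ on a shared blackboard as a proof that $f\inputs = 1$. Alice, knowing the cover, checks whether $\bfx \in A_i$; Bob checks whether $\bfy \in B_i$. The protocol accepts exactly when both local checks succeed. The nondeterministic communication cost is the length of the certificate, namely $\log_2 t$.

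The correctness argument has two halves. For completeness, if $f\inputs = 1$, then by the covering hypothesis there exists some index $i^*$ with $\inputs \in R_{i^*} = A_{i^*} \times B_{i^*}$; the prover sends $i^*$, and both Alice's and Bob's checks succeed. For soundness, if $f\inputs = 0$, then $\inputs$ lies in no 1-rectangle (since 1-rectangles are monochromatically labeled with $1$s, and any rectangle containing $\inputs$ would then contain a $0$-entry, contradicting monochromaticity). Hence for every index $i$, either $\bfx \notin A_i$ or $\bfy \notin B_i$, so no certificate can make both players accept.

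There isn't really a hard step here --- the result is essentially a translation between two definitions, since rectangles in communication complexity are precisely sets that can be verified ``independently'' by the two parties. The only thing to be careful about is matching the convention for nondeterministic communication cost (certificate length, as opposed to certificate length plus the bits needed for Alice and Bob to exchange their verdicts), and to use monochromaticity of the $R_i$ exactly once, in the soundness direction.
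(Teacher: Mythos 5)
Your proposal is correct and is essentially identical to the paper's own argument: both construct the same prover/verifier protocol in which the prover names a rectangle index, Alice checks membership of $\bfx$ in the $X$-side, Bob checks membership of $\bfy$ in the $Y$-side, and monochromaticity of the 1-rectangles gives soundness. The only superficial difference is that the paper packages the three correctness properties (completeness, soundness, proof length) as a list rather than as a two-halves argument.
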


\begin{proof}
Let $R_1,\ldots,R_t$ denote a covering of the 1s of $M(f)$ by
1-rectangles.  Alice and Bob can agree to this covering in advance of
receiving their inputs.  Now consider the following scenario:
\begin{enumerate}

\item A {\em prover} --- a third party --- sees both inputs $\bfx$
{\em and} $\bfy$.
(This is 
the formal model used for nondeterministic protocols.)

\item The prover writes an index $i \in \{1,2,\ldots,t\}$ --- the name of
  a rectangle $R_i$ --- on a blackboard, in public view.  Since $R_i$
  is a rectangle, it can be written as $R_i = A_i \times B_i$ with
  $A_i \sse X$, $B_i \sse Y$.

\item 
Alice accepts if and only if $\bfx \in A_i$.

\item 
Bob accepts if and only if $\bfy \in B_i$.

\end{enumerate}
This protocol has the following properties:
\begin{enumerate}

\item If $f\inputs = 1$, then there exists a proof such that Alice and
  Bob both accept.  (Since $f\inputs = 1$, $\inputs \in R_i$ for some
  $i$, and Alice and Bob both accept if ``$i$'' is written on the
  blackboard.)

\item If $f\inputs = 0$, there is no proof that both Alice and Bob
  accept.  (Whatever index $i \in \{1,2,\ldots,t\}$ is written on the
  blackboard, since $f\inputs = 0$, either $\bfx \not\in R_i$ or $y
  \not\in R_i$, causing a rejection.)

\item The maximum length of a proof is $\log_2 t$.  (A proof is just
  an index $i \in \{1,2,\ldots,t\}$.)

\end{enumerate}
These three properties imply, by definition,
that the nondeterministic communication complexity of the function $f$
and the output~1 is at most $\log_2 t$.
\end{proof}

The proof of Proposition~\ref{prop:cover1} introduces
our formal model of nondeterministic communication complexity: Alice
and Bob are given a ``proof'' or ``advice string'' by a prover, which
can depend on both of their inputs; the communication cost is the
worst-case length of the proof; and a protocol is said to
compute an output $z \in \{0,1\}$ of a function $f$ if $f\inputs = z$
if and only if there exists proof such that both Alice and Bob accept.

With nondeterministic communication complexity, we speak about
both a function $f$ and an output $z \in \{0,1\}$.
For example, if $f$ is \eq, then we saw that the nondeterministic
communication complexity of $f$ and the output 0 is at most $\log_2 n
+1$.  Since it's not clear how to convince Alice and Bob that their
inputs {\em are} equal without specifying at least one bit for each of
the $n$ coordinates, one might expect the nondeterministic
communication complexity of $f$ and the output 1 to be roughly $n$.  (And
it is, as we'll see.)

We've defined nondeterministic protocols so that Alice and Bob never
speak, and only verify.  This is without loss of generality, since
given a protocol in which they do speak, one could modify it so that
the prover writes on the blackboard everything that they would have
said.  We encourage the reader to formalize an alternative definition
of nondeterministic protocols without a prover and in which Alice and
Bob speak nondeterministically, and to prove that this definition is
equivalent to the one we've given above (see Exercises).

Next we prove the converse of Proposition~\ref{prop:cover2}.
\begin{proposition}\label{prop:cover2}
If the nondeterministic communication complexity of the function $f$
and the output~1 is $c$, then there is a covering of the 1s of
$M(f)$ by $2^c$ 1-rectangles.
\end{proposition}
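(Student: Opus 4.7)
The plan is to invert the construction used in Proposition~\ref{prop:cover1}: from a nondeterministic protocol of cost $c$ we want to read off at most $2^c$ 1-rectangles whose union contains all 1-entries of $M(f)$. The index set of these rectangles will simply be the set of possible proof strings $p \in \{0,1\}^c$ that the prover could write on the blackboard.

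Fix such a protocol $\Pi$ and, for each $p \in \{0,1\}^c$, define
\[
A_p = \{\bfx \in X \suchthat \text{Alice accepts input $\bfx$ given proof $p$}\},
\qquad
B_p = \{\bfy \in Y \suchthat \text{Bob accepts input $\bfy$ given proof $p$}\},
\]
and let $R_p = A_p \times B_p$. The key structural observation --- this is the crux of the proof, and is where the definition of the model does all the work --- is that Alice's accept/reject decision depends only on $\bfx$ and $p$, not on $\bfy$, and symmetrically for Bob. Consequently the set of joint inputs on which both parties accept under proof $p$ is precisely the combinatorial rectangle $R_p$.

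I then need to verify two things. First, each $R_p$ is a 1-rectangle: if $(\bfx,\bfy) \in R_p$ then both Alice and Bob accept with proof $p$, so \emph{some} valid proof exists for input $\inputs$, and by the correctness clause of the nondeterministic model (``$f\inputs = 1$ iff some proof is accepted by both'') we conclude $f\inputs = 1$. Second, the $R_p$'s cover every 1-entry: if $f\inputs = 1$ then by correctness there exists a proof $p^* \in \{0,1\}^c$ that both parties accept, which means $\bfx \in A_{p^*}$ and $\bfy \in B_{p^*}$, i.e., $\inputs \in R_{p^*}$.

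Finally, the number of rectangles is at most the number of proof strings, namely $2^c$ (rectangles arising from proofs that are never accepted on any input can be discarded, or kept as empty rectangles; either way the count is $\le 2^c$). This gives a covering of the 1s of $M(f)$ by at most $2^c$ 1-rectangles, completing the proof. There is no real obstacle --- the only subtlety is recognizing that the ``one-sided'' dependence of each party's decision on its own input is exactly what forces $\{(\bfx,\bfy) : \text{both accept under } p\}$ to be a rectangle.
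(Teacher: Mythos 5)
Your proof is correct and is essentially identical to the paper's: both associate to each possible proof string the rectangle of inputs on which that proof is accepted by both parties, verify these are 1-rectangles by correctness, observe they cover all 1-inputs, and bound their number by $2^c$.
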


\begin{proof}
Let $\P$ denote a nondeterministic communication protocol for $f$ and
the output~1 with communication cost (i.e., maximum proof length) at
most $c$.
For a proof $\ell$, let $Z(\ell)$ denote the inputs $\inputs$ where
both Alice and Bob accept the proof.  We can write $Z(\ell) = A \times
B$, where $A$ is the set of inputs $\bfx \in
X$ of Alice where she accepts the proof $\ell$, and $B$ is the set of
inputs $\bfy \in Y$ of Bob where he accepts the proof.  By the
assumed correctness of $\P$, $f\inputs = 1$ for every $\inputs
\in Z(\ell)$.  That is, $Z(\ell)$ is a 1-rectangle.

By the first property of nondeterministic protocols, for every 1-input
$\inputs$ there is a proof such that both Alice and Bob accept.  That
is, $\cup_{\ell} Z(\ell)$ is precisely the set of 1-inputs of $f$ ---
a covering of the 1s of $M(f)$ by 1-rectangles.  Since the
communication cost of $\P$ is at most $c$, there are at most $2^c$
different proofs~$\ell$.
\end{proof}

Proposition~\ref{prop:cover2} implies that communication complexity
lower bounds derived from covering lower bounds apply to
nondeterministic protocols.
\begin{corollary}
If every covering of the 1s of $M(f)$ by 1-rectangles uses at least
$t$ rectangles, then the nondeterministic communication complexity of
$f$ is at least $\log_2 t$.
\end{corollary}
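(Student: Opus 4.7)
The corollary is essentially the contrapositive of Proposition~\ref{prop:cover2}, so my plan is to invoke that proposition as a black box and rearrange. Concretely, let $c$ denote the nondeterministic communication complexity of $f$ with respect to the output~$1$. I would first apply Proposition~\ref{prop:cover2} to conclude that there exists a covering of the 1-entries of $M(f)$ by at most $2^c$ 1-rectangles. Next I would invoke the hypothesis of the corollary, which says that every such covering has size at least $t$, yielding the inequality $2^c \geq t$. Taking logarithms gives $c \geq \log_2 t$, which is exactly the claimed lower bound.

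Since the entire argument is a one-line contrapositive of a previously established proposition, there is no real obstacle to overcome and no nontrivial computation to carry out. The only thing worth emphasizing in the write-up is that the corollary is specifically about the output~$1$ (consistent with the statement of Proposition~\ref{prop:cover2}), and that an analogous statement holds for the output~$0$ by symmetry, replacing 1-rectangles with 0-rectangles. I would keep the proof to two or three sentences and not belabor what has already been done.
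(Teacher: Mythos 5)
Your proposal is correct and takes exactly the route the paper intends: the corollary is presented as an immediate consequence of Proposition~\ref{prop:cover2}, and your contrapositive rearrangement ($2^c \geq t$, hence $c \geq \log_2 t$) is the intended one-line argument. Your remark about the analogous statement for output~$0$ is also accurate.
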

Thus our arguments in Lecture~\ref{cha:boot-camp-comm}, while simple, were even more
powerful than we realized --- they prove that the nondeterministic
communication complexity of \eq, \disj, and \gt (all with output~1) is
at least~$n$.  It's kind of amazing that these lower bounds can be
proved with so little work.

\section{Extended Formulations and Nondeterministic Communication
  Complexity}

What does communication complexity have to do with extended
formulations?  To forge a connection, we need to show that an extended
formulation with few inequalities is somehow useful for solving hard
communication problems.  
While this course includes a number of clever connections between
communication complexity and various computational models,  
this connection to extended formulations is perhaps the most
surprising and ingenious one of them all. 
Superficially, extended formulations with few inequalities can be
thought of as ``compressed descriptions'' of a polytope, and
communication complexity is generally useful for ruling out compressed
descriptions of various types.  
It is not at all obvious that this vague intuition can be turned into
a formal connection, let alone one that is useful for proving
non-trivial impossibility results.

\subsection{Faces and Facets}\label{ss:facets}

We discuss briefly some preliminaries about polytopes.
Let $P$ be a polytope in variables $\bfx \in \RR^n$.
By definition, an {\em extended formulation} of $P$ is a set of the
form
\[
Q = \{ \inputs \,:\, \Cx + \Dy \le \bfd \},
\]
where $\bfx$ and $\bfy$ are the original and auxiliary variables,
respectively, such that
\[
\{ \bfx \,:\, \exists \bfy \text{ s.t.\ } \inputs \in Q \} = P.
\]
This is, projecting $Q$ onto the original variables $\bfx$ yields the
original polytope $P$.  
The extended formulation of the permutahedron described in
Section~\ref{ss:ef} is a canonical example.
The {\em size} of an extended formulation is
the number of inequalities.\footnote{There is no need to keep track of
  the number of auxiliary variables --- there is no point in having
  an extended formulation of this type with more variables than
  inequalities (see Exercises).}

Recall that $\bfx \in P$ is a {\em vertex} if
it cannot be written as a non-trivial convex combination of other
points in $P$.  A {\em supporting hyperplane of $P$} is a vector $\bfa \in
\RR^n$ and scalar $b \in \RR$ such that $\bfa\bfx = b$ for all $\xP$.
Every supporting hyperplane $\bfa,b$ induces a {\em face} of $P$, defined
as $\{ \xP \,:\, \bfa\bfx = b \}$ --- the intersection of the boundaries of
$P$ and of the the halfspace defined by the supporting hyperplane.
(See Figure~\ref{f:suphyp}.)
Note that a face is generally induced by many different supporting
hyperplanes.  The empty set is considered a face.  
Note also that faces are nested --- in three dimensions, there
are vertices, edges, and sides.  
In general, if $f$ is a face of $P$, then the vertices of $f$ are
precisely the vertices of $P$ that are contained in $f$.

\begin{figure}
\centering
\includegraphics[width=.7\textwidth]{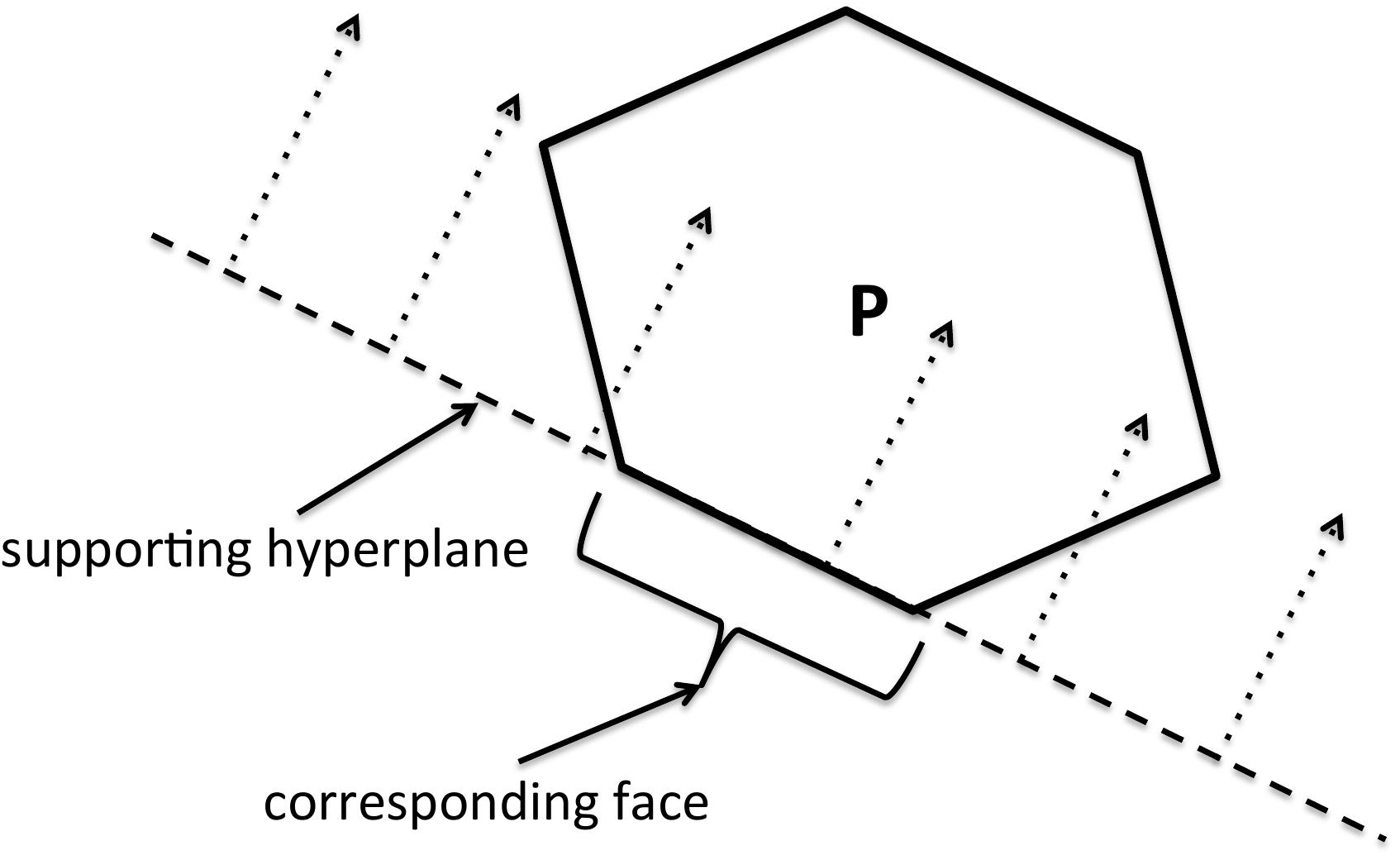}
\caption[A supporting hyperplane and the corresponding face]{A supporting hyperplane of a polytope~$P$ and the
  corresponding face of the polytope.}\label{f:suphyp}
\end{figure}

A {\em facet} of $P$ is a
maximal face --- a face that is not strictly contained in any other
face.  Provided $P$ has a non-empty interior, its facets are
$(n-1)$-dimensional. 

There are two different types of finite descriptions of a polytope,
and it is useful to go back and forth between them.  First, a polytope
$P$ equals the convex hull of its vertices.  Second, $P$ is the
intersection of the halfspaces that define its facets.\footnote{Proofs
  of all of these statements are elementary but outside the scope of
  this lecture; see e.g.~\cite{ziegler} for details.}

\subsection{Yannakakis's Lemma}

What good is a small extended formulation?
We next make up a contrived communication problem for which small
extended formulations are useful.  For a polytope $P$, in the
corresponding \fvp problem, Alice gets a face $f$ of $P$ (in the form
of a supporting hyperplane $\ab$) and Bob gets
a vertex $v$ of $P$.  The function $FV(f,v)$ is defined as~1 if $v$
does {\em not} belong to $f$, and 0 if $v \in f$.
Equivalently,
$FV(f,v) = 1$ if and only if $\av < b$, where $\ab$ is a supporting
hyperplane that induces~$f$.
Polytopes in $n$ dimensions generally have an exponential number of
faces and vertices.  Thus, trivial protocols for \fvp, where one party
reports their input to the other, can have communication cost
$\Omega(n)$.

A key result is the following.
\begin{lemma}[Yannakakis's Lemma~\citeyearpar{Y88}]\label{l:y}
If the polytope $P$ admits an extended formulation $Q$ with $r$
inequalities, then the nondeterministic communication complexity of
\fvp is at most $\log_2 r$.
\end{lemma}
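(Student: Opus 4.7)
My plan is to exhibit, for each of the $r$ inequalities of the extended formulation $Q = \{(\bfx,\bfy) : \Cx + \Dy \le \bfd\}$, a 1-rectangle of the \fvp matrix, and to show that these $r$ rectangles cover every 1-entry. Proposition~\ref{prop:cover1} then yields the claimed bound of $\log_2 r$. The rectangle corresponding to index $i \in \{1,\ldots,r\}$ will be $A_i \times B_i$ where $A_i$ is determined by a property of Alice's face and $B_i$ by a property of Bob's vertex; the guiding identity will come from LP duality applied to the EF.

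The key step is the following duality argument. Alice's input is a supporting hyperplane $(\bfa,b)$ with $f = \{\bfx \in P : \bfa^T\bfx = b\}$ non-empty, so $\max_{\xP} \bfa^T\bfx = b$. Because projecting $Q$ onto the $\bfx$-variables recovers $P$, this equals $\max_{(\bfx,\bfy)\in Q} \bfa^T\bfx$. The LP dual of this maximization is $\min \lambda^T \bfd$ subject to $\lambda \ge \zero$, $\lambda^T\C = \bfa^T$, $\lambda^T\D = \zero$, and by strong LP duality there exists a feasible $\lambda^*(\bfa,b) \ge \zero$ attaining value $b$. Fix such a $\lambda^*$ for each face, known to both players in advance. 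Similarly, for each vertex $v$ of $P$, fix some $\bfy_v$ with $(v,\bfy_v)\in Q$, and define the slack $s_i(v) = \bfd_i - \C_i v - \D_i \bfy_v \ge 0$. Using $\lambda^{*T}\D = \zero$ and $\lambda^{*T}\bfd = b$,
\[
b - \bfa^T v \;=\; \lambda^{*T}\bfd - \lambda^{*T}\C v \;=\; \lambda^{*T}(\bfd - \C v - \D \bfy_v) \;=\; \sum_{i=1}^r \lambda^*_i \, s_i(v),
\]
which is a sum of \emph{non-negative} terms.

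This identity drives the protocol. The prover writes an index $i \in \{1,\ldots,r\}$ on the blackboard; Alice accepts iff $\lambda^*_i(\bfa,b) > 0$, Bob accepts iff $s_i(v) > 0$. This determines a rectangle $A_i \times B_i$ for each $i$. If $FV(f,v) = 0$, i.e.\ $\bfa^T v = b$, then $\sum_i \lambda^*_i s_i(v) = 0$, so for every $i$ either $\lambda^*_i = 0$ or $s_i(v) = 0$, and no index causes both players to accept. If $FV(f,v) = 1$, i.e.\ $\bfa^T v < b$, then the same sum is strictly positive, so some index $i$ satisfies both $\lambda^*_i > 0$ and $s_i(v) > 0$, and the prover can send that $i$. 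Hence $\{A_i \times B_i\}_{i=1}^r$ covers exactly the 1-entries of the \fvp matrix, each rectangle is a 1-rectangle, and the nondeterministic cost is $\log_2 r$.

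The main thing to get right is the direction of duality and the role of the auxiliary variables: the condition $\lambda^{*T}\D = \zero$ is precisely what makes the slack-expansion identity independent of the particular lift $\bfy_v$ used for $v$, which in turn is what lets Alice's acceptance depend only on $(\bfa,b)$ and Bob's only on $v$ (i.e.\ what makes the covering sets actual rectangles). One also needs to verify that strong LP duality applies, which follows from the hypothesis that $f$ is a (non-empty) face of the bounded polytope $P$ so the primal max is attained and finite.
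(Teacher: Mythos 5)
Your proof is correct and essentially reproduces the paper's algebraic proof sketch (the second of two given): the dual multipliers $\lambda^*(\bfa,b)$ and the slacks $s_i(v)$ are precisely the paper's vectors $\lambda_f$ and $\mu_v$ in its nonnegative factorization of the slack matrix, and your prover-sends-an-index protocol is the same rectangle cover. The only differences are cosmetic --- you invoke strong LP duality where the paper cites Farkas's Lemma (which the paper itself notes are equivalent), and you inline the factorization rather than passing through the intermediate notion of nonnegative rank.
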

That is, if we can prove a linear lower bound on the nondeterministic
communication complexity of the \fvp problem, then we have ruled out
subexponential-size extended formulations of $P$.

Sections~\ref{ss:pf1} and~\ref{ss:pf2} give two different proof
sketches of Lemma~\ref{l:y}.
These are roughly equivalent, with the first emphasizing
the geometric aspects
(following~\cite{lovasz}) and the second the
algebraic aspects (following~\cite{Y88}).  In Section~\ref{s:lb_polytope} we put
Lemma~\ref{l:y} to use and 
prove strong lower bounds for a concrete polytope.

Remarkably, \cite{Y88} did not give any applications of his
lemma --- the lower bounds for extended formulations 
in~\cite{Y88} are for ``symmetric'' formulations and proved via
direct arguments.  Lemma~\ref{l:y} was suggested by~\cite{Y88} as a
potentially useful tool for more general impossibility results, and
finally in the past five years (beginning with~\cite{F+12}) this
prophecy has come to pass.

\subsection{Proof Sketch of Lemma~\ref{l:y}: A Geometric
  Argument}\label{ss:pf1} 

Suppose $P$ admits an extended formulation 
$Q = \{ \inputs \,:\, \Cx + \Dy \le \bfd \}$ with only $r$
inequalities.
Both $P$ and $Q$ are known to Alice and Bob before the protocol
begins.
A first idea is for Alice, who is given a face $f$ of the original
polytope $P$, to tell Bob the name of the ``corresponding face''
of $Q$.  Bob can then check whether or not his ``corresponding
vertex'' belongs to the named face or not, thereby computing the
function.

Unfortunately, knowing that $Q$ is defined by $r$ inequalities only
implies that it has at most $r$ {\em facets} --- it can have a very
large number of faces.  Thus Alice can no more afford to write down an
arbitrary face of $Q$ than a face of $P$.

We use a third-party prover to name a suitable facet of $Q$ than
enables Alice and Bob to compute the \fvp function; since $Q$ has at
most $r$ facets, the protocol's communication cost is only $\log_2 r$,
as desired.

Suppose the prover wants to convince Alice and Bob that Bob's vertex
$v$ of $P$ does not belong to Alice's face $f$ of $P$.  If the prover
can name a facet $f^*$ of $Q$ such that:
\begin{itemize}

\item [(i)] there exists $\bfy_v$ such that $(v,\bfy_v) \not\in f^*$; and

\item [(ii)] for every $\inputs \in Q$ with $\bfx \in f$, $\inputs \in
  f^*$;

\end{itemize}
then this facet $f^*$ proves that $v \not\in f$.  Moreover, given $f^*$,
Alice and Bob can verify~(ii) and~(i), respectively, without any
communication.  

All that remains to prove is that, when $v \notin f$, there exists a
facet $f^*$ of $Q$ such that~(i) and~(ii) hold.  First consider
the inverse image of $f$ in $Q$, $\tilde{f} = \{ \inputs \in Q \,:\, \bfx
\in f \}$.  Similarly, define $\tilde{v} = \{ (v,\bfy) \in Q \}$.
Since $v \notin f$, $\tilde{f}$ and $\tilde{v}$ are disjoint subsets
of $Q$.  It is not difficult to prove that $\tilde{f}$ and
$\tilde{v}$, as inverse images of faces under a linear map, are faces
of $Q$ (exercise).  An intuitive but non-trivial fact is that every
face of a 
polytope is the intersection of the facets that contain it.\footnote{This
  follows from Farkas's Lemma, or equivalently the Separating
  Hyperplane Theorem.  See~\cite{ziegler} for details.}
Thus, for every vertex $v^*$ of $Q$ that is contained
in $\tilde{v}$ (and hence not in $\tilde{f}$) --- and since
$\tilde{v}$ is non-empty, there is at least one --- we can choose a
facet $f^*$ of $Q$ that contains $\tilde{f}$ (property~(ii)) but
excludes $v^*$ (property~(i)).  This concludes the proof sketch of
Lemma~\ref{l:y}.

\subsection{Proof Sketch of Lemma~\ref{l:y}: An Algebraic
  Argument}\label{ss:pf2} 

The next proof sketch of Lemma~\ref{l:y} is a bit longer but
introduces some of the most important concepts in the study of
extended formulations.

The {\em slack matrix} of a polytope $P$ has rows indexed by faces~$F$
and 
columns indexed by vertices~$V$.  We identify each face with a canonical
supporting hyperplane $\ab$.
Entry $S_{fv}$ of the slack matrix is defined as $b - \av$, where
$\ab$ is the supporting hyperplane corresponding to the face $f$.
Observe that all entries of $S$ are nonnegative.  
Define the {\em support} $\supp(S)$ of the slack matrix $S$ as the
$F \times V$
matrix with 1-entries wherever $S$ has positive entries, and 0-entries
wherever $S$ has 0-entries.  Observe that $\supp(S)$ is a property
only of the polytope~$P$, independent of the choices of the supporting
hyperplanes for the faces of $P$.  Observe also that $\supp(S)$ is
precisely the answer matrix for the \fvp problem for the
polytope $P$.

We next identify a sufficient condition for \fvp to have low
nondeterministic communication complexity; later we explain why
the existence of a small extended formulation implies this sufficient
condition.  Suppose the slack matrix
$S$ has {\em nonnegative rank} $r$, meaning it is possible to write $S
= TU$ with $T$ a $|F| \times r$ nonnegative matrix and $U$ a $r
\times |V|$ nonnegative matrix (Figure~\ref{f:nmf}).\footnote{This is
  called a {\em nonnegative matrix factorization}.  It is the analog
  of the singular value decomposition (SVD), but with the extra
  constraint that the factors are nonnegative matrices.  It obviously
  only makes sense to ask for such decompositions for nonnegative
  matrices (like $S$).}
Equivalently, suppose we can write $S$ as the sum of $r$ outer
products of nonnegative vectors (indexed by $F$ and $V$):
\begin{equation}\label{eq:nmf}
S = \sum_{j=1}^r \alpha_j \cdot \beta_j^T,
\end{equation}
where the $\alpha_j$'s correspond to the columns of $T$ and the
$\beta_j$'s to the rows of $U$.

\begin{figure}
\centering
\includegraphics[width=.8\textwidth]{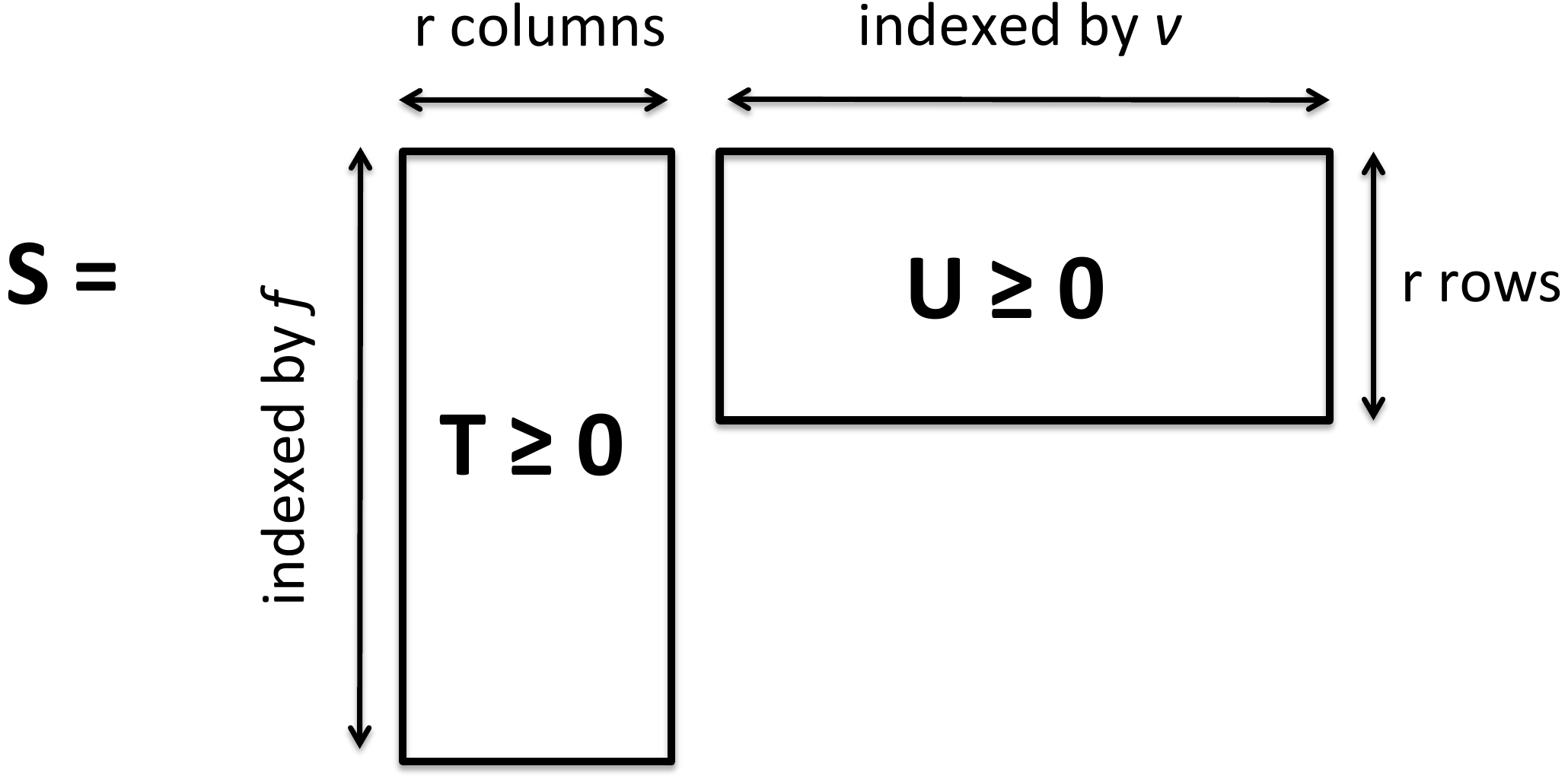}
\caption[Nonnegative matrix factorization]{A rank-$r$ factorization of the slack matrix~$S$ into
  nonnegative matrices $T$ and $U$.}\label{f:nmf}
\end{figure}

We claim that if the slack matrix $S$ of a polytope $P$ has
nonnegative rank $r$, then there is a nondeterministic communication
protocol for \fvp with cost at most $\log_2 r$.  
As usual, Alice and Bob can agree to the decomposition~\eqref{eq:nmf} in
advance.
A key observation is that, by inspection of~\eqref{eq:nmf}, $S_{fv} >
0$ if and only if 
there exists some $j \in \{1,2,\ldots,r\}$ with $\alpha_{fj},
\beta_{jv} > 0$.  (We are using here that everything is nonnegative
and so no cancellations are possible.)  Equivalently, the supports of
the outer products $\alpha_j \cdot \beta_j^T$ can be viewed as a
covering of the 1-entries of $\supp(S)$ by $r$ 1-rectangles.
Given this observation, the protocol for \fvp should be clear.
\begin{enumerate}

\item The prover announces an index $j \in \{1,2,\ldots,r\}$.

\item Alice accepts if and only if the $f$th component of $\alpha_j$
  is strictly positive.

\item Bob accepts if and only if the $v$th component of $\beta_j$
  is strictly positive.

\end{enumerate}
The communication cost of the protocol is clearly $\log_2 r$.  The key
observation above implies that there is a proof (i.e., an index $j \in
\{1,2,\ldots,r\}$) accepted by both Alice and Bob if and only if Bob's
vertex $v$ does not belong to Alice's face $f$.

It remains to prove that, whenever a polytope $P$ admits an extended
formulation with a small number of inequalities, its slack matrix
admits a low-rank nonnegative matrix factorization.\footnote{The
  converse also holds, and might well be the easier direction to
  anticipate.
See the Exercises for details.}  We'll show this by
exhibiting nonnegative $r$-vectors $\lambda_f$ (for all faces $f$ of
$P$) and $\mu_v$ (for all vertices $v$ of $P$) such that $S_{fv} =
\lambda_f^T\mu_v$ for all $f$ and $v$.  
In terms of Figure~\ref{f:nmf}, the $\lambda_f$'s and
$\mu_v$'s correspond to the rows of $T$ and columns of $U$,
respectively.

The next task is to understand better how an extended formulation $Q = \{
\inputs \,:\, \Cx + \Dy \le \bfd \}$ must be related to the original
polytope $P$.  Given that projecting $Q$ onto the variables $\bfx$
yields~$P$, it must be that every supporting hyperplane of $P$ is
logically implied by the inequalities that define $Q$.  To see one way
how this can happen, suppose there is a non-negative $r$-vector $\lambda
\in \RR^r_+$ with the following properties:
\begin{itemize}

\item [(P1)] $\lambda^T\C = \bfa^T$;

\item [(P2)] $\lambda^T\D = \zero$;

\item [(P3)] $\lambda^T\bfd = b$. 

\end{itemize}
(P1)--(P3) imply that, for every $\inputs$ in $Q$ (and so with
$\Cx+\Dy \le \bfd$), we have
\[
\underbrace{\lambda^T\C}_{=\bfa^T}\bfx + \underbrace{\lambda^T\D}_{=
  \zero}\bfy \le \underbrace{\lambda^T\bfd}_{= b}
\]
and hence $\axb$ (no matter what $\bfy$ is).

Nonnegative linear combinations $\lambda$ of the constraints of $Q$
that satisfy~(P1)--(P3) are one way in which the constraints of $Q$
imply constraints on the values of $\bfx$ in the projection of $Q$.  A
straightforward application of Farkas's Lemma (see e.g.~\cite{chvatal})
implies that such nonnegative linear combinations are the
{\em only way} in which the constraints of $Q$ imply constraints on
the projection of $Q$.\footnote{Farkas's Lemma is sometimes phrased as
the Separating Hyperplane
Theorem.  It can also be thought of as the feasibility version of
strong linear programming duality.}
Put differently, whenever $\axb$ is a
supporting hyperplane of $P$, there exists a nonnegative linear
combination $\lambda$ that proves it (i.e., that satisfies
(P1)--(P3)).  This clarifies what the extended formulation $Q$ really
accomplishes: ranging over all $\lambda \in \RR^r_+$ satisfying
(P2) generates all of the supporting hyperplanes $\ab$ of~$P$
(with $\bfa$ and $b$ arising as $\lambda^T\C$ and $\lambda^T\bfd$,
respectively).  

To define the promised $\lambda_f$'s and $\mu_v$'s, fix a face $f$ of
$P$ with supporting hyperplane $\axb$.  Since $Q$'s projection does not
include any points not in $P$, the
constraints of $Q$ imply this supporting hyperplane.
By the previous
paragraph, we can choose a nonnegative vector $\lambda_f$ so
that~(P1)--(P3) hold.

Now fix a vertex $v$ of $P$.  Since $Q$'s projection includes every
point of $P$, there exists a choice of $\bfy_v$ such that $(v,\bfy_v) \in
Q$.  Define $\mu_v \in \RR^r_+$ as the slack in $Q$'s constraints at
the point $(v,\bfy_v)$:
\[
\mu_v = \bfd - \Cv - \D\bfy_v.
\]
Since $(v,\bfy_v) \in Q$, $\mu_v$ is a nonnegative vector.

Finally, for every face $f$ of $P$ and vertex $v$ of $P$, we have
\[
\lambda^T_f\mu_v = \underbrace{\lambda^T_f\bfd}_{= b}
- \underbrace{\lambda^T_f\Cv}_{=\bfa^Tv} -
\underbrace{\lambda^T_f\D\bfy_v}_{=0} = b-\bfa^T\bfv = S_{fv},
\]
as desired.  This completes the second proof of Lemma~\ref{l:y}.

\section{A Lower Bound for the Correlation Polytope}\label{s:lb_polytope}

\subsection{Overview}

Lemma~\ref{l:y} reduces the task of proving lower bounds on the size
of extended formulations of a polytope $P$ to proving lower bounds on
the nondeterministic communication complexity of \fvp.  The case study
of the permutahedron (Section~\ref{ss:ef}) serves as a cautionary tale
here: the communication complexity of \fvp is surprisingly low
for some complex-seeming polytopes, so proving strong lower bounds,
when they exist, typically requires work and a detailed understanding
of the particular polytope of interest.

\cite{F+12} were the first to use Yannakakis's Lemma to
prove lower bounds on the size of extended formulations of interesting
polytopes.\footnote{This paper won the Best Paper Award at STOC '12.}
We follow the proof plan of~\cite{F+12}, which has two steps.
\begin{enumerate}

\item First, we exhibit a polytope that is tailor-made for proving a
  nondeterministic communication complexity lower bound on the
  corresponding \fvp problem, via a reduction from \disj.  We'll prove
  this step in full.

\item Second, we extend the consequent lower bound on the size of
  extended formulations to other problems, such as the Traveling
  Salesman Problem (TSP), via reductions.  These reductions are
  bread-and-butter $NP$-completeness-style reductions; see the
  Exercises for more details.

\end{enumerate}
This two-step plan does not seem sufficient to resolve the motivating
problem mentioned in Section~\ref{s:intro}, the non-bipartite matching
problem.  
For an $NP$-hard problem like TSP, 
we fully expect all extended formulations of the
convex hull of the characteristic vectors of solutions 
to be exponential; otherwise, we could use linear
programming to obtain a subexponential-time algorithm for the problem
(an unlikely result).
The non-bipartite matching problem is polynomial-time solvable,
so it's less clear what to expect.  \cite{rothvoss} proved that
every extended formulation of the convex hull of the perfect matchings
of the complete graph has exponential size.\footnote{This paper won
  the Best Paper Award at STOC '14.}  The techniques
in~\cite{rothvoss} are more sophisticated variations of the tools covered
in this lecture --- a reader of these notes is well-positioned to
learn them.

\subsection{Preliminaries}

We describe a polytope~$P$ for which it's relatively easy to prove
nondeterministic communication complexity lower bounds for the
corresponding \fvp problem.  The polytope was studied earlier for
other reasons~\citep{P91,dW03}.

Given a 0-1 $n$-bit vector $\bfx$, we consider the corresponding 
(symmetric and rank-1) outer product $\bfx\bfx^T$.  For example, if $\bfx =
10101$, then  
\[
\bfx\bfx^T =
\left(
\begin{array}{ccccc}
1 & 0 & 1 & 0 & 1\\
0 & 0 & 0 & 0 & 0\\
1 & 0 & 1 & 0 & 1\\
0 & 0 & 0 & 0 & 0\\
1 & 0 & 1 & 0 & 1
\end{array}
\right).
\]
For a positive integer $n$, we define \cor as the convex hull of all
$2^n$ such vectors $\bfx\bfx^T$ (ranging over $\bfx \in \zo^n$).  This is a
polytope in $\RR^{n^2}$, and its vertices are precisely the points
$\bfx\bfx^T$ with $\bfx \in \zo^n$.

Our goal is to prove the following result.
\begin{theorem}[\citealt{F+12}]\label{t:cor}
The \ndcc of \fvc is $\Omega(n)$.
\end{theorem}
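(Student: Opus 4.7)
I will prove the theorem by reducing from \udisj: the promise problem in which Alice and Bob receive $\bfa,\bfb\in\{0,1\}^n$ with the guarantee $|\bfa\cap\bfb|\in\{0,1\}$, and must decide which case holds. It is a classical (and nontrivial) strengthening of the fooling-set argument for \disj from Section~\ref{ss:lbs} that the nondeterministic communication complexity of \udisj for the output ``disjoint'' is $\Omega(n)$, and I will take this as a black box. Given such a reduction, any nondeterministic protocol for \fvc (for output~1, which is the output bounded by Yannakakis's Lemma) converts directly into a nondeterministic \udisj protocol of the same cost, proving the claimed lower bound.

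The key construction turns Alice's input $\bfa$ into a face $f_\bfa$ of \cor whose membership test detects precisely whether $|\bfa\cap\bfb|=1$. It rests on the identity
\[
\sum_{i,j} a_i a_j (\bfx\bfx^T)_{ij} \;-\; 2\sum_i a_i (\bfx\bfx^T)_{ii} \;+\; 1 \;=\; \langle\bfa,\bfx\rangle^2 - 2\langle\bfa,\bfx\rangle + 1 \;=\; (\langle\bfa,\bfx\rangle-1)^2,
\]
valid for every $\bfx\in\{0,1\}^n$. Because the right-hand side is nonnegative on every vertex of \cor, and \cor is the convex hull of its vertices, the affine function
\[
L_\bfa(Y) \;:=\; \sum_{i,j} a_i a_j Y_{ij} \;-\; 2\sum_i a_i Y_{ii} \;+\; 1
\]
is nonnegative on all of \cor; hence $L_\bfa(Y)=0$ is a valid supporting hyperplane, and I let $f_\bfa$ be the corresponding face. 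A vertex $\bfx\bfx^T$ lies on $f_\bfa$ iff $\langle\bfa,\bfx\rangle=1$, i.e., iff $|\bfa\cap\bfx|=1$.

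Bob maps his \udisj input $\bfb$ to the vertex $v_\bfb:=\bfb\bfb^T$ of \cor. Under the promise, $L_\bfa(v_\bfb)=(\langle\bfa,\bfb\rangle-1)^2$ is $0$ exactly when $|\bfa\cap\bfb|=1$ and is $1$ exactly when $|\bfa\cap\bfb|=0$. Therefore
\[
FV(f_\bfa,v_\bfb)=1 \;\Longleftrightarrow\; v_\bfb\notin f_\bfa \;\Longleftrightarrow\; \bfa \text{ and } \bfb \text{ are disjoint.}
\]
A nondeterministic \fvc protocol of cost $c$ certifying $FV=1$ thus becomes a nondeterministic \udisj protocol of cost $c$ certifying disjointness, forcing $c=\Omega(n)$.

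The main obstacle is the \udisj lower bound itself. The direct \disj fooling set $\{(\bfx,\ones-\bfx)\}$ does not work here, because mixing two of its pairs can produce inputs with intersection larger than~$1$, violating the \udisj promise. So one must either invoke the classical $\Omega(n)$ nondeterministic bound for \udisj (via a corruption-style argument in the spirit of Razborov) as a black box, or redo that argument in the \fvc language. Every other step of the proof --- writing down $L_\bfa$, checking that it is a supporting hyperplane, and verifying the reduction --- is a direct calculation.
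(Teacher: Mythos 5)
Your construction of the face is the same as the paper's: your affine functional
$L_\bfa(Y)=\sum_{i,j}a_ia_jY_{ij}-2\sum_ia_iY_{ii}+1$
is exactly the linearization of $(\sum_{i\in S}z_i-1)^2\ge 0$ that the paper derives in Lemma~\ref{l:corfaces}, and your observation that $L_\bfa$ evaluated on a vertex $\bfx\bfx^T$ equals $(\langle\bfa,\bfx\rangle-1)^2$ is the same argument for why it is a valid inequality for \cor. The reduction from \udisj and the direction of the $FV$ correspondence ($FV=1$ iff disjoint, under the promise) also match Proposition~\ref{prop:udisj}. Your caution that the plain \disj fooling set $\{(\bfx,\ones-\bfx)\}$ fails for \udisj, because mixing can land on $*$-inputs, is a correct and important observation.

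The one substantive divergence is at the last step. You invoke the $\Omega(n)$ nondeterministic lower bound for \udisj as a black box and suggest it comes from a Razborov-style corruption argument. The paper instead proves Theorem~\ref{t:udisj} from scratch by a purely combinatorial covering argument: there are $3^n$ $1$-inputs, and Lemma~\ref{l:udisj} shows (via a careful induction on the last coordinate) that any rectangle avoiding $0$-inputs --- but allowed to contain $*$-inputs --- has at most $2^n$ $1$-inputs, yielding a cover size of $(3/2)^n$. This is both more elementary than Razborov's distributional argument and better tuned to the nondeterministic model that Yannakakis's Lemma actually requires. (Razborov's bound does imply what you need, but it is considerably heavier machinery, and its conclusion is about randomized two-sided error rather than $1$-certificate covers; one has to then pass to the nondeterministic statement.) If you want a self-contained proof of Theorem~\ref{t:cor}, the inductive $2^n$-bound on $1$-rectangles of \udisj is the missing ingredient, and it is the only genuinely clever step beyond the face construction you already have.
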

This lower bound is clearly the best possible
(up to a constant factor), since Bob can communicate his vertex to
Alice using only $n$ bits (by specifying the appropriate $\bfx \in
\zo^n$).

Lemma~\ref{l:y} then implies that every extended formulation of 
the \cor polytope requires $2^{\Omega(n)}$ inequalities, no matter how
many auxiliary variables are added.  Note the dimension $d$ is
$\Theta(n^2)$, so this lower bound has the form
$2^{\Omega(\sqrt{d})}$.  

Elementary reductions (see the Exercises) translate
this extension complexity lower bound for the \cor polytope to a lower
bound of $2^{\Omega(\sqrt{n})}$ on the size of
extended formulations of
the convex hull of characteristic vectors of $n$-point traveling
salesman tours.

\subsection{Some Faces of the Correlation Polytope}

Next we establish a key connection between certain faces of the
correlation polytope and inputs to \disj.  Throughout, $n$ is a fixed
positive integer.

\begin{lemma}[\citealt{F+12}]\label{l:corfaces}
For every subset $S \sse \{1,2,\ldots,n\}$, there is a face $f_S$ of
\cor such that: for every $R \sse \{1,2,\ldots,n\}$ with
characteristic vector $\bfx_R$ and corresponding vertex $\bfv_R =
\bfx_R\bfx_R^T$ of \cor,
\[
\bfv_R \in f_S \quad \text{ if and only if } \quad |S \cap R| = 1.
\]
\end{lemma}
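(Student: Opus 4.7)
\begin{proofsketch}
The plan is to exhibit, for each $S \subseteq \{1,\dots,n\}$, an explicit supporting hyperplane of \cor whose induced face contains exactly those vertices $\bfv_R$ with $|S \cap R|=1$. The key observation is that we have one ``algebraic quantum'' of freedom: a linear functional on $\RR^{n \times n}$ applied to $\bfv_R = \bfx_R\bfx_R^T$ is a quadratic function of $\bfx_R$, and hence can express any quadratic function of the indicator entries of $R$. So we should try to engineer a linear functional that reproduces the quadratic expression $1-(|S\cap R|-1)^2$, which is maximized at value $1$ precisely when $|S\cap R|=1$ and is otherwise strictly smaller.

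First I would identify the two linear-in-$\bfv_R$ building blocks. Using the Frobenius inner product $\langle M, N\rangle = \mathrm{tr}(M^TN)$ and the fact that $\bfx_R$ is $0/1$, one checks that $\langle \mathrm{diag}(\bfx_S),\, \bfv_R\rangle = \sum_i (x_S)_i (x_R)_i^2 = |S\cap R|$, and $\langle \bfx_S\bfx_S^T,\, \bfv_R\rangle = (\bfx_S^T\bfx_R)^2 = |S\cap R|^2$. Combining these, define the symmetric matrix
\[
\bfa_S \;=\; 2\,\mathrm{diag}(\bfx_S) \;-\; \bfx_S\bfx_S^T
\qquad\text{and}\qquad b_S \;=\; 1,
\]
viewed as a vector in $\RR^{n^2}$ and a scalar. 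The main computation is then immediate:
\[
\langle \bfa_S,\, \bfv_R\rangle \;=\; 2|S\cap R| - |S\cap R|^2 \;=\; 1 - (|S\cap R|-1)^2.
\]

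Next I would verify that $(\bfa_S, b_S)$ is a supporting hyperplane of \cor. Since $(|S\cap R|-1)^2 \ge 0$ for every $R$, we have $\langle \bfa_S,\bfv_R\rangle \le b_S$ for every vertex of \cor, and because \cor is the convex hull of these vertices, the inequality extends to all points of the polytope. Moreover the hyperplane is attained: any singleton $R=\{i\}$ with $i\in S$ (which exists when $S\neq\emptyset$; the $S=\emptyset$ case is vacuous since $|S\cap R|=1$ is impossible, and we can take $f_\emptyset = \emptyset$, which is a face). Define $f_S = \{\bfz \in \cor : \langle \bfa_S,\bfz\rangle = b_S\}$.

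Finally I would read off the characterization of which vertices lie in $f_S$. By the displayed identity, $\bfv_R \in f_S$ iff $(|S\cap R|-1)^2 = 0$ iff $|S\cap R|=1$, which is exactly the claim. The main obstacle is really just finding the right linear functional; once one sees that a linear functional of $\bfv_R$ is a quadratic of $\bfx_R$ and that the desired condition $|S\cap R|=1$ is captured by the quadratic $1-(|S\cap R|-1)^2$, the rest is a one-line verification using the two inner-product identities above.
\end{proofsketch}
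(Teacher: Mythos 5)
Your proposal is correct and is essentially the same argument as the paper's, just expressed in matrix notation. The paper starts from the inequality $\sum_{i\in S}z_i-1\ge 0$, squares it, and linearizes $z_iz_j\mapsto y_{ij}$; your $\bfa_S = 2\,\mathrm{diag}(\bfx_S)-\bfx_S\bfx_S^T$ with $\langle\bfa_S,\bfv_R\rangle = 1-(|S\cap R|-1)^2$ is exactly the resulting linear form $-\sum_{i\in S}y_{ii}+\sum_{i\ne j\in S}y_{ij}+1\ge 0$ written as a Frobenius inner product, so the two proofs coincide step for step.
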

That is, among the faces of \cor are $2^n$ faces that encode the
``unique intersection property'' for each of the $2^n$ subsets $S$ of
$\{1,2,\ldots,n\}$.  Note that for a given $S$, the sets $R$ with $|S
\cap R|$ can be generated by (i) first picking a element of $S$; (ii)
picking a subset of $\{1,2,\ldots,n\} \sm S$.  Thus if $|S|=k$, there
are $k2^{n-k}$ sets $R$ with which it has a unique intersection.

Lemma~\ref{l:corfaces} is kind of amazing, but also not too hard to
prove.

\vspace{.1in}
\noindent
\begin{prevproof}{Lemma}{l:corfaces}
For every $S \sse \{1,2,\ldots,n\}$, we need to exhibit a supporting
hyperplane $\axb$ such that $\av_R = b$ if and only if $|S \cap R| =
1$, where $\bfv_R$ denotes $\bfx_R\bfx_R^T$ and $\bfx_R$ the characteristic
vector of $R \sse \{1,2,\ldots,n\}$.  

Fix $S \sse \{1,2,\ldots,n\}$.  We develop the appropriate supporting
hyperplane, in variables $\bfy \in \RR^{n^2}$, over several small steps.
\begin{enumerate}

\item For clarity, let's start in the wrong space, with variables $\bfz
  \in \RR^n$ 
  rather than $\bfy \in \RR^{n^2}$.  Here $\bfz$ is meant to encode the
  characteristic vector of a set $R \sse \{1,2,\ldots,n\}$.  One
  sensible inequality to start with is
\begin{equation}\label{eq:try1}
\sum_{i \in S} z_i -1 \ge 0.
\end{equation}
For example, if $S = \{1,3\}$, then this constraint reads $z_1 + z_3 -
1 \ge 0$.

The good news is that for 0-1 vectors $\bfx_R$, this inequality is
satisfied with equality if and only if $|S \cap R|=1$.  The bad news is
that it does not correspond to a supporting hyperplane: if 
$S$ and $R$ are disjoint,
then $\bfx_R$ violates the inequality.  How can we change the
constraint so that it holds with equality for $\bfx_R$ with $|S \cap R|
= 1$ and also valid for all $R$?

\item One crazy idea is to square the left-hand side of~\eqref{eq:try1}:
\begin{equation}\label{eq:try2}
\left( \sum_{i \in S} z_i -1 \right)^2 \ge 0.
\end{equation}
For example, if $S = \{1,3\}$, then the constraint reads (after
expanding) $z_1^2+z_3^2+2z_1z_3-2z_1-2z_3+1 \ge 0$.

The good news is that every 0-1 vector $\bfx_R$ satisfies this
inequality, and equality holds if and only if $|S \cap R|=1$.
The bad news is that the constraint is non-linear and hence does not
correspond to a supporting hyperplane.

\item The obvious next idea is to ``linearize'' the previous
  constraint.  Wherever the constraint has a $\bfz_i^2$ or a $\bfz_i$, we
  replace it by a variable $\bfy_{ii}$ (note these partially cancel out).
  Wherever the constraint has a 
  $2z_iz_j$ (and notice for $i \neq j$ these always come in pairs), we
  replace it by a $y_{ij} + y_{ji}$.   
Formally, the constraint now reads
\begin{equation}\label{eq:try3}
-\sum_{i \in S} y_{ii} + \sum_{i \neq j \in S} y_{ij} +1 \ge 0.
\end{equation}
Note that the new variable set is $\bfy \in \RR^{n^2}$.  
For example, if $S = \{1,3\}$, then the new constraint reads 
$y_{13}+y_{31}-y_{11}-y_{33} \ge -1$.

A first observation is that, for $\bfy$'s that are 0-1, symmetric, and
rank-1, 
with $\bfy = \bfz\bfz^T$ (hence $y_{ij} = z_i \cdot z_j$ for $i,j \in
\{1,2,\ldots,n\}$), the left-hand sides of~\eqref{eq:try2}
and~\eqref{eq:try3} are the same by definition.  Thus, for 
$\bfy = \bfx_R\bfx_R^T$ with $\bfx \in \zo^n$, $\bfy$
satisfies the (linear) inequality~\eqref{eq:try3}, and equality holds
if and only if~$|S \cap R| = 1$.

\end{enumerate}

We have shown that, for every $S \sse \{1,2,\ldots,n\}$, the linear
inequality~\eqref{eq:try3} is satisfied by every vector $\bfy \in
\RR^{n^2}$ of the form $\bfy = \bfx_R\bfx_R^T$ with $\bfx \in \zo^n$.  Since
\cor is by definition the convex hull of such vectors, every point of
\cor satisfies~\eqref{eq:try3}.  This inequality is therefore a
supporting hyperplane, and the face it induces contains precisely
those vertices of the form $\bfx_R\bfx_R^T$ with $|S \cap R| = 1$.  This
completes the proof.
\end{prevproof}


\subsection{\fvc and \udisj}

In the \fvc problem, Alice receives a face $f$ of \cor and Bob a
vertex $\bfv$ of \cor.  In the 1-inputs, $\bfv \not\in f$; in the
0-inputs, $\bfv \in f$.
Let's make the problem only easier by restricting Alice's possible
inputs to the $2^n$ faces (one per subset $S \sse \{1,2,\ldots,n\}$)
identified in Lemma~\ref{l:corfaces}.
In the corresponding matrix $M_U$ of this function, we can index the
rows by subsets $S$.  Since every vertex of \cor has the form $\bfy =
\bfx_R\bfx_R^T$ for $R \sse \{1,2,\ldots,n\}$, we can index the columns of
$M_U$ by
subsets $R$.  By Lemma~\ref{l:corfaces}, the entry $(S,R)$ of the
matrix $M_U$ is~1 if $|S \cap R| \neq 1$ and~0 of $|S \cap R|=1$.
That is, the 0-entries of $M_U$ correspond to pairs $(S,R)$ that
intersect in a unique element.

There is clearly a strong connection between the matrix $M_U$ above and
the analogous matrix $M_D$ for \disj.  They differ on entries $(S,R)$
with $|S \cap R| \ge 2$: these are 0-entries of $M_D$ but 1-entries of
$M_U$.  In other words, $M_U$ is the matrix corresponding to the
communication problem \nui: do the inputs $S$ and $R$ fail to have a
unique intersection?

The closely related \udisj problem is a ``promise'' version of \disj.
The task here is to distinguish between:
\begin{itemize}

\item [(1)] inputs $(S,R)$ of \disj with $|S \cap R| = 0$;

\item [(0)] inputs $(S,R)$ of \disj with $|S \cap R| = 1$.

\end{itemize}
For inputs that fall into neither case (with $|S \cap R| > 1$), the
protocol is off the hook --- any output is considered correct.
Since a protocol that solves \udisj has to do only less than one that
solves \nui, communication complexity lower bounds for former problem
apply immediate to the latter.

We summarize the discussion of this section in the following
proposition.
\begin{proposition}[\citealt{F+12}]\label{prop:udisj}
The \ndcc of \fvc is at least that of \udisj.
\end{proposition}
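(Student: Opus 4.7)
The plan is to establish the inequality by a direct reduction that restricts the inputs of \fvc so that it becomes (a specialization of) \nui, and then observes that \nui agrees with \udisj on all \udisj-promise inputs. Since a nondeterministic protocol for a function is also one for any function that agrees with it on the inputs where an answer is required, the reduction transfers low-cost protocols in the wrong direction, giving the desired lower bound on \fvc.

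First I would restrict Alice's input in \fvc to the $2^n$ faces $f_S$ produced by Lemma~\ref{l:corfaces} (one for each $S\sse\{1,2,\ldots,n\}$), and Bob's input to the $2^n$ vertices $\bfv_R=\bfx_R\bfx_R^T$ of \cor (one for each $R\sse\{1,2,\ldots,n\}$). Restricting the input domain only makes the function easier, so any nondeterministic protocol for \fvc (with output~1) automatically yields a protocol of the same cost for this restricted version. By Lemma~\ref{l:corfaces}, in the restricted problem $FV(f_S,\bfv_R)=1$ precisely when $\bfv_R\notin f_S$, i.e.\ when $|S\cap R|\neq 1$. So the restricted problem is exactly \nui, the ``not-unique-intersection'' function on pairs $(S,R)$.

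Next I would observe the key compatibility between \nui and \udisj. On inputs satisfying the \udisj promise we have $|S\cap R|\in\{0,1\}$; in the disjoint case ($|S\cap R|=0$), \nui outputs~1, matching the required \udisj answer~1, and in the intersecting case ($|S\cap R|=1$), \nui outputs~0, again matching. Hence any nondeterministic protocol that correctly computes \nui with output~1 is also a valid nondeterministic protocol for \udisj with output~1: on a ``yes'' instance of \udisj, some proof makes both players accept (because \nui outputs~1 there), and on a ``no'' instance of \udisj, no proof can make both players accept (because \nui outputs~0 there, and the soundness of the protocol for \nui rules out accepting proofs on 0-inputs). No promise is violated on inputs outside the \udisj domain, since those simply do not arise.

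Chaining the two observations gives $\mathrm{NCC}(\udisj)\le\mathrm{NCC}(\textrm{restricted }\fvc)\le\mathrm{NCC}(\fvc)$, which is the claim. There is no real obstacle here beyond bookkeeping: the substantive geometric content was already absorbed into Lemma~\ref{l:corfaces}, and the only thing to be careful about is the direction of each reduction and the fact that we are uniformly reasoning about the output value~1 (where nondeterministic protocols have asymmetric semantics).
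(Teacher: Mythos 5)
Your proposal is correct and tracks the paper's argument essentially step for step: restrict Alice's input to the faces $f_S$ of Lemma~\ref{l:corfaces} (no restriction on Bob is actually needed, since every vertex of \cor already has the form $\bfx_R\bfx_R^T$), identify the restricted matrix with \nui, and then observe that a nondeterministic protocol for \nui (output~1) is automatically one for the promise problem \udisj since the two agree on all promise inputs. Your slightly more explicit spelling out of completeness/soundness in the nondeterministic setting is a fine elaboration, not a different approach.
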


\subsection{A Lower Bound for \udisj}\label{ss:udisj}

\subsubsection{The Goal}

One final step remains in our proof of Theorem~\ref{t:cor}, and hence
of our lower bound on the size of extended formulations of the
correlation polytope.
\begin{theorem}[\citealt{F+12,KW15}]\label{t:udisj}
The \ndcc of \udisj is $\Omega(n)$.
\end{theorem}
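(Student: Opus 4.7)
I plan to prove Theorem~\ref{t:udisj} by establishing a rectangle-cover lower bound and invoking the promise-variant of Proposition~\ref{prop:cover2}. Let $\mathcal{D} = \{(S,R) \in 2^{\{1,\ldots,n\}} \times 2^{\{1,\ldots,n\}} : S \cap R = \emptyset\}$ denote the set of 1-inputs and $\mathcal{U} = \{(S,R) : |S \cap R| = 1\}$ the set of 0-inputs of \udisj. Call a combinatorial rectangle $A \times B$ \emph{valid} if $(A \times B) \cap \mathcal{U} = \emptyset$, i.e., no $S \in A$ and $R \in B$ intersect in exactly one element. As in the proof of Proposition~\ref{prop:cover2}, a nondeterministic protocol for \udisj with communication cost $c$ induces a cover of $\mathcal{D}$ by at most $2^c$ valid rectangles (don't-care inputs may be covered or not; all 1-inputs must be covered; no 0-input may be contained in any rectangle). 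So it suffices to show that any such cover has size $2^{\Omega(n)}$.

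The main step will be a \textbf{capacity bound}: equip $\mathcal{D}$ with the uniform distribution $\mu$ (equivalently, independently for each $i$, place $i$ in $S$, in $R$, or in neither, each with probability $1/3$), and prove that for every valid rectangle $A \times B$,
\[
\mu\bigl((A \times B) \cap \mathcal{D}\bigr) \le 2^{-\Omega(n)}.
\]
Given this, the union bound forces any cover to consist of at least $2^{\Omega(n)}$ valid rectangles, yielding the $\Omega(n)$ nondeterministic communication lower bound. The intuition is that while there are $3^n$ disjoint pairs, every valid rectangle is severely constrained and must leave out most of them.

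I would prove the capacity bound by following Kaibel--Weltge's combinatorial argument, inductively on $n$. The idea is to split on a single coordinate, say coordinate~$1$: write $A = A_0 \sqcup A_1$ according to whether $1 \in S$, and $B = B_0 \sqcup B_1$ analogously. The exclusion of unique intersections on coordinate~$1$ couples the four sub-rectangles: for instance, if $S \in A_1$ and $R \in B_1$ (so $1 \in S \cap R$ already), then $S \setminus \{1\}$ and $R \setminus \{1\}$ must be \emph{disjoint} on $\{2,\ldots,n\}$ (else $|S \cap R| \ge 2$), whereas if $S \in A_1$ and $R \in B_0$, then $S \setminus \{1\}$ and $R$ must avoid having \emph{exactly one} common element on $\{2,\ldots,n\}$. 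Tracking these restrictions through the recursion --- where the ``disjoint-on-a-sub-rectangle'' constraint is even stronger than avoiding unique intersections --- produces a recursion of the form $\mu(A \times B) \le \tfrac{2}{3}\cdot [\text{capacity at size } n-1]$ (or similar), giving the geometric decay to $c^{-n}$ for some $c > 1$.

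The main obstacle is precisely this capacity bound: it is exactly here that the promise structure of \udisj (vs.\ plain \disj) does all the work. One cannot simply recycle the $\Omega(n)$ nondeterministic lower bound for \disj from Corollary~\ref{cor:disj_cover}, because its fooling set $\{(S, S^c)\}$ produces crossed pairs $(S_1, S_2^c)$ of arbitrary intersection size, which are ``don't-care'' inputs for \udisj rather than genuine 0-inputs; the classical fooling-set argument therefore collapses under the promise. A similar obstruction rules out naive fooling-set constructions for \udisj itself, since same-size $S_1 \neq S_2$ almost never satisfy $|S_1 \setminus S_2| = 1$. The capacity/corruption route avoids this by replacing ``any two fooled pairs cannot coexist'' with a weaker but usable statement (``no single rectangle covers too much mass''), which turns out to survive the promise.
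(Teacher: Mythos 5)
Your high-level plan is precisely the one the paper follows: the paper's Lemma~\ref{l:udisj} \emph{is} your capacity bound, since bounding the number of 1-inputs in a valid rectangle by $2^n$ out of a total of $3^n$ is exactly the statement $\mu(A\times B) \le (2/3)^n$ under your uniform~$\mu$, and the $(3/2)^n$ covering lower bound then follows by union bound as you say. So the proofs agree at the top level. The problem is that your sketch of the inductive step --- which is the entire content of the theorem --- does not close as written.

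There is a local sign error in your sketch: for $(S,R)\in A_1\times B_1$, since $1\in S\cap R$ already, validity forces $S\setminus\{1\}$ and $R\setminus\{1\}$ to \emph{intersect} (pushing $|S\cap R|$ to at least~2, a don't-care), not to be disjoint; disjointness there would give the forbidden case $|S\cap R|=1$. More importantly, the naive four-way split cannot produce the $2/3$ factor by itself: by induction each of $A_0\times B_0$, $A_0\times B_1$, $A_1\times B_0$ can hold up to $2^{n-1}$ 1-inputs, and $3\cdot 2^{n-1}>2^n$. What actually saves a factor is the observation that if $(\bfx'a,\bfy'b)\in A\times B$ is a 1-input, then $A\times B$ cannot contain \emph{both} $(\bfx'0,\bfy'1)$ and $(\bfx'1,\bfy'0)$: the rectangle property would then force it to contain $(\bfx'1,\bfy'1)$, which is a genuine 0-input (its unique common coordinate is the split coordinate, since $(\bfx'a,\bfy'b)$ was a 1-input). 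This bounds by~2 the number of 1-inputs of $A\times B$ projecting to any fixed disjoint pair on the remaining coordinates. Even then the induction does not recurse on the sub-rectangles directly: one must split the 1-inputs of $A\times B$ into two carefully chosen sets $S_1,S_2$, zero out the split coordinate, pass to the rectangular hulls $\rect(S_1),\rect(S_2)$, and verify that these hulls are again valid 1-rectangles with one fewer live coordinate while $|S_1|+|S_2|$ still dominates the original count. That construction is the crux of Kaibel--Weltge's argument, and it is the piece your proposal defers to a citation rather than supplying.
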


\subsubsection{\disj Revisited}

As a warm-up, we revisit the standard \disj problem.
Recall that, in Lecture~\ref{cha:boot-camp-comm}, we proved that the
\ndcc of \disj is at least $n$ by
a fooling set argument (Corollary~\ref{cor:disj_cover}).  
Next we prove a slightly weaker lower bound,
via an argument that generalizes to \udisj.

The first claim is that, of the $2^n \times 2^n = 4^n$ possible inputs
of \disj, exactly $3^n$ of them are 1-inputs.  The reason is that the
following procedure, which makes $n$ 3-way choices, generates every
1-input exactly once: independently for each coordinate
$i=1,2,\ldots,n$, choose between the options (i) $x_i = y_i = 0$; (ii)
$x_i = 1$ and $y_i = 0$; and (iii) $x_i = 0$ and $y_i = 1$.

The second claim is that every 1-rectangle --- every subset $A$ of
rows of $M_D$ and $B$ of columns of $M_D$ such that $A \times B$
contains only 1-inputs --- has size at most $2^n$.  
To prove this, let $R = A \times B$ be a 1-rectangle.
We assert that, for every coordinate $i=1,2,\ldots,n$, either (i) $x_i
= 0$ for all $\bfx 
\in A$ or (ii) $y_i = 0$ for all $\bfy \in B$.  
That is, every coordinate has, for at least one of the two parties, a
``forced zero'' in $R$.
For if neither (i) nor
(ii) hold for a coordinate $i$, then since $R$ is a rectangle (and
hence closed under ``mix and match'') we can choose $\inputs \in R$
with $x_i = y_i = 1$; but this is a 0-input and $R$ is a 1-rectangle.
This assertion implies that the following procedure, which makes $n$
2-way choices, generates every 1-input of $R$ (and possibly other
inputs as well): independently for each coordinate $i=1,2,\ldots,n$,
set the forced zero ($x_i = 0$ in case~(i) or $y_i = 0$ in case~(ii))
and choose a bit for this coordinate in the other input.

These two
claims imply that every covering of the 1-inputs by 1-rectangles
requires at least $(3/2)^n$ rectangles.
Proposition~\ref{prop:cover2} then implies a lower bound of
$\Omega(n)$ on the \ndcc of \disj.

\subsubsection{Proof of Theorem~\ref{t:udisj}}

Recall that the 1-inputs $\inputs$ of \udisj are the same as those of
\disj (for each $i$, either $x_i = 0$, $y_i = 0$, or both).  
Thus, there are still exactly $3^n$ 1-inputs.  The
0-inputs $\inputs$ of \udisj are those with $x_i = y_i = 1$ in exactly
one coordinate $i$.  We call all other inputs, where the promise fails
to hold, {\em *-inputs}.
By a 1-rectangle, we now mean a rectangle with no
0-inputs (*-inputs are fine).  With this revised definition, it is
again true that every nondeterministic communication protocol that
solves \udisj using $c$ bits of communication induces a covering of
the 1-inputs by at most $2^c$ 1-rectangles.

\begin{lemma}\label{l:udisj}
Every 1-rectangle of \udisj contains at most $2^n$ 1-inputs.
\end{lemma}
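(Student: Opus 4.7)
The plan is to reduce the bound to a clean counting problem via a ternary encoding of $1$-inputs, then prove that counting bound by induction on $n$. For a $1$-input $(S,T) \in \mathcal{T}(R)$ the sets $S$ and $T$ are disjoint, so the rule ``$v_i = 1$ if $i \in S$, $v_i = 0$ if $i \in T$, and $v_i = *$ otherwise'' defines a vector $v(S,T) \in \{0,1,*\}^n$, and $(S,T) \mapsto v(S,T)$ is a bijection onto a set $V \subseteq \{0,1,*\}^n$. It therefore suffices to show $|V| \le 2^n$.

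First I would extract the constraint that the $1$-rectangle property puts on $V$. For any $v_1,v_2 \in V$ arising from $1$-inputs $(S_1,T_1),(S_2,T_2) \in \mathcal{T}(R)$, the mix-and-match pairs $(S_1,T_2)$ and $(S_2,T_1)$ also lie in $R = A \times B$ and hence cannot be $0$-inputs of \udisj. Since $|S_1 \cap T_2| = \#\{i : v_1(i) = 1,\ v_2(i) = 0\}$ and $|S_2 \cap T_1| = \#\{i : v_1(i) = 0,\ v_2(i) = 1\}$, this gives the combinatorial condition $(\dagger)$: for every $v_1,v_2 \in V$, neither of these two counts equals $1$.

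Next I would prove by induction on $n$ that every $V \subseteq \{0,1,*\}^n$ satisfying $(\dagger)$ has $|V| \le 2^n$, the case $n = 0$ being trivial. For the inductive step, split $V$ by its last coordinate and define, for $\lambda \in \{0,1,*\}$, the set $V_\lambda = \{v' \in \{0,1,*\}^{n-1} : (v',\lambda) \in V\}$, so that $|V| = |V_0| + |V_1| + |V_*|$. The argument hinges on two observations. First, each of $V_0 \cup V_*$ and $V_1 \cup V_*$ satisfies $(\dagger)$ on $[n-1]$: in $V_0 \cup V_*$ every appended last coordinate lies in $\{0,*\}$ and in particular is never $1$, so coordinate $n$ can never contribute to either of the two forbidden counts in $(\dagger)$, and the condition descends intact to the first $n-1$ coordinates; the argument for $V_1 \cup V_*$ is symmetric with ``$0$'' and ``$1$'' interchanged. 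Second, $V_0 \cap V_1 = \emptyset$: if some $v'$ lay in both, then $(v',0)$ and $(v',1)$ would be elements of $V$ agreeing on $[n-1]$ and differing only at coordinate $n$, producing $\#\{i : (v',0)(i) = 0,\ (v',1)(i) = 1\} = 1$, which violates $(\dagger)$.

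Finally I would combine these pieces. By the inductive hypothesis, $|V_0 \cup V_*|,\, |V_1 \cup V_*| \le 2^{n-1}$, and using $V_0 \cap V_1 = \emptyset$ one checks that $(V_0 \cup V_*) \cap (V_1 \cup V_*) = V_*$, so inclusion--exclusion gives
\[
|V_0 \cup V_*| + |V_1 \cup V_*| \;=\; |V_0 \cup V_1 \cup V_*| + |V_*| \;\ge\; |V_0| + |V_1| + |V_*| \;=\; |V|,
\]
whence $|V| \le 2 \cdot 2^{n-1} = 2^n$, closing the induction. The main conceptual hurdle is the pair of observations above: a naive three-way split would only deliver $|V| \le 3 \cdot 2^{n-1}$, and the missing factor is recovered precisely because the symbol ``$*$'' is compatible under $(\dagger)$ with both ``$0$'' and ``$1$'' (so $V_*$ may legitimately be bundled into both sub-inductions), while the symbols ``$0$'' and ``$1$'' cannot coexist at the same coordinate across two vectors in $V$ (so $V_0$ and $V_1$ are disjoint, preventing the double counting from dominating when we add the two bounds).
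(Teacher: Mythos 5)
Your proof is correct and follows essentially the same route as the paper: you induct on the number of active coordinates, split the $1$-inputs by their last-coordinate pattern (your symbols $1$, $0$, $*$ correspond to the paper's $(a,b)=(1,0)$, $(0,1)$, $(0,0)$), and the crux in both arguments is the incompatibility of $(1,0)$- and $(0,1)$-patterns at an identical prefix, which is your disjointness $V_0 \cap V_1 = \emptyset$ and is the paper's observation $(*)$. Your ternary encoding plus the inclusion--exclusion step is a tidier packaging of the same idea, replacing the paper's careful construction of the sets $S'_1,S'_2$ and its bookkeeping property $(**)$.
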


As with the argument for \disj, Lemma~\ref{l:udisj} completes the
proof of Theorem~\ref{t:udisj}: since there are $3^n$ 1-inputs and at
most $2^n$ per 1-rectangle, every covering by 1-rectangles requires at
least $(3/2)^n$ rectangles.  This implies that the \ndcc of \udisj is
$\Omega(n)$.

Why is the proof of Lemma~\ref{l:udisj} harder than 
in Section~\ref{ss:udisj}?  We can no longer
easily argue that, in a rectangle $R = A \times B$, for each
coordinate $i$, either $x_i = 0$ for all $\bfx \in A$ or $y_i = 0$ for
all $\bfy \in B$.  Assuming the opposite no longer yields a contraction:
exhibiting $\bfx \in A$ and $\bfy \in B$ with $x_i = y_i = 1$ does not
necessarily contradict the fact that $R$ is a 1-rectangle, since $\inputs$
might be a *-input.

\vspace{.1in}
\noindent
\begin{prevproof}{Lemma}{l:udisj}
The proof is one of those slick inductions that you can't help but
sit back and admire.  

We claim, by induction on $k=0,1,2,\ldots,n$, that if $R = A \times B$
is a 1-rectangle for which all $\bfx \in A$ and $\bfy \in B$ have 0s in
their last $n-k$ coordinates, then the number 
of 1-inputs in $R$ is at most $2^k$.  The lemma is equivalent to the
case of $k=n$.  The base case $k=0$ holds, because in this case the
only possible input in $R$ is $(\zero,\zero)$.

For the inductive step, fix a 1-rectangle $R = A \times B$ in which
the last $n-k$ coordinates of all
$\bfx \in A$ and all $\bfy \in B$ are 0.  To
simplify notation, from here on we ignore the last $n-k$ coordinates
of all inputs (they play no role in the argument).

Intuitively, we need
to somehow ``zero out'' the $k$th coordinate of all inputs in $R$ so
that we can apply the inductive hypothesis.
This motivates focusing on the $k$th coordinate, and we'll often write
inputs $\bfx \in A$ and $\bfy \in B$ as $\bfx'a$ and $\bfy'b$, respectively,
with $\bfx',\bfy' \in \zo^{k-1}$ and $a,b \in \zo$.  (Recall we're ignoring
that last $n-k$ coordinates, which are now always zero.)

First observe that, whenever $\inputss$ is a 1-input, we cannot have
$a = b = 1$.  Also:
\begin{itemize}

\item [(*)] If $\inputss \in R$ is a 1-input, then $R$ cannot contain
  both the inputs 
$(\bfx'0,\bfy'1)$ and $(\bfx'1,\bfy'0)$.

\end{itemize}
For otherwise, $R$ would also contain the 0-input $(\bfx'1,\bfy'1)$,
contradicting that $R$ is a 1-rectangle. 
(Since $\inputss$ is a 1-input, the unique coordinate of $(\bfx'1,\bfy'1)$
with a 1 in both inputs is the $k$th coordinate.) 

The plan for the rest of the proof is to define two sets $S_1,S_2$ of
1-inputs --- not necessarily rectangles --- such that:
\begin{itemize}

\item [(P1)] the number of 1-inputs in $S_1$ and $S_2$ combined is at
  least that in $R$;

\item [(P2)] the inductive hypothesis applies to $\rect(S_1)$ and
  $\rect(S_2)$, where $\rect(S)$ denotes the smallest rectangle
  containing a set $S$ of inputs.\footnote{Equivalently, the closure of
  $S$ under the ``mix and match'' operation on pairs of inputs.
  Formally, $\rect(S) = X(S) \times Y(S)$, 
where $X(S) = \{ \bfx \,:\,  \inputs \in S \text{ for some } \bfy \}$
and $Y(S) = \{ \bfy \,:\,  \inputs \in S \text{ for some } \bfx \}$.}

\end{itemize}
If we can find sets $S_1,S_2$ with properties (P1),(P2), then we are
done: by the inductive hypothesis, the $\rect(S_i)$'s have at most
$2^{k-1}$ 1-inputs each, the $S_i$'s are only smaller, and hence
(by~(P1)) $R$ has at most $2^k$ 1-inputs, as required.

We define the sets in two steps, focusing first on property~(P1).
Recall that every 1-input $\inputs \in R$ has the form
$(\bfx'1,\bfy'0)$, $(\bfx'0,\bfy'1)$, or $(\bfx'0,\bfy'0)$.  We put all 1-inputs
of the first type into a set $S'_1$, and all 1-inputs of the second
type into a set $S'_2$.  When placing inputs of the third type, we
want to avoid putting two inputs of the form $\inputss$ with the same
$\bfx'$ and $\bfy'$ into the same set (this would create problems in the
inductive step).  So, for an input $(\bfx'0,\bfy'0) \in R$, we put it in
$S'_1$ if and only if the input $(\bfx'1,\bfy'0)$ was not already put in
$S'_1$; and we put it in
$S'_2$ if and only if the input $(\bfx'0,\bfy'1)$ was not already put in
$S'_2$.
Crucially, observation~(*) implies that $R$ cannot contain two
1-inputs of the form $(\bfx'1,\bfy'0)$ and $(\bfx'0,\bfy'1)$, so the 1-input
$(\bfx'0,\bfy'0)$ is placed in at least one of the sets $S'_1,S'_2$.  (It
is placed in both if $R$ contains neither $(\bfx'1,\bfy'0)$ nor
$(\bfx'0,\bfy'1)$.)  By construction, the sets $S'_1$ and $S'_2$ satisfy
property~(P1).  

We next make several observations about $S'_1$ and $S'_2$.  By
construction:
\begin{itemize}

\item [(**)] for each $i=1,2$ and $\bfx',\bfy' \in \zo^{k-1}$, 
there is at most one input of $S'_i$ of the form $\inputss$.

\end{itemize}
Also, since $S_1',S_2'$ are subsets of the rectangle $R$,
$\rect(S'_1),\rect(S'_2)$ are also subsets of $R$.  Since $R$ is a
1-rectangle, so are $\rect(S'_1),\rect(S'_2)$.  Also, since every
input $\inputs$ of $S'_i$ (and hence $\rect(S'_i)$) has $y_k = 0$ (for
$i=1)$ or $x_k = 0$ (for $i=2)$, the $k$th coordinate contributes
nothing to the intersection of any inputs of $\rect(S'_1)$ or
$\rect(S'_2)$.

Now obtain $S_i$ from $S'_i$ (for $i=1,2$) by zeroing out the $k$th
coordinate of all inputs.  Since the $S'_i$'s only contain 1-inputs,
the $S_i$'s only contain 1-inputs.
Since property~(**) implies that $|S_i| = |S'_i|$ for $i=1,2$, we
conclude that property~(P1) holds also for $S_1,S_2$.

Moving on to property~(P2),
since $\rect(S'_1),\rect(S'_2)$ contain no 0-inputs and contain only
inputs with no intersection in the $k$th coordinate,
$\rect(S_1),\rect(S_2)$ contain no 0-inputs.\footnote{The concern is that
zeroing out an input in the $k$th coordinate turns some *-input (with
intersection size~2) into a 0-input (with intersection size~1);
but since there were no intersections in the $k$th coordinate, anyways,
this can't happen.}
Finally, since all inputs of $S_1,S_2$ have zeroes in their final
$n-k+1$ coordinates, so do all inputs of $\rect(S_1),\rect(S_2)$.
The inductive hypothesis applies to $\rect(S_1)$ and $\rect(S_2)$, so
each of them has at most $2^{k-1}$ 1-inputs.  This implies the
inductive step and completes the proof.
\end{prevproof}

\chapter{Lower Bounds for Data Structures}
\label{cha:lower-bounds-data}

\section{Preamble}

Next we discuss how to use communication complexity to prove lower
bounds on the performance --- meaning space, query time, and
approximation --- of data structures.  Our case study will be the
high-dimensional approximate nearest neighbor problem.

There is a large literature on data structure lower bounds.
There are several different ways to use communication complexity to
prove such lower bounds, and we'll unfortunately only have time to
discuss one of them.  For example, we discuss only a static
data structure problem --- where the data structure can only be
queried, not modified --- and lower bounds for dynamic data structures
tend to use somewhat different techniques.
See~\cite{peter_survey} and~\cite{mihai_thesis} for some 
starting points for further reading.

We focus on the approximate nearest neighbor problem for a few
reasons: it is obviously a fundamental problem, that gets solved all
the time (in data mining, for example); there are some non-trivial
upper bounds; for certain parameter ranges, we have matching lower
bounds; and the techniques used to prove these lower bounds are
representative of work in the area --- asymmetric communication
complexity and reductions from the ``Lopsided Disjointness''
problem. 

\section{The Approximate Nearest Neighbor Problem}

In the {\em nearest neighbor problem}, the input is a set $S$ of $n$
points that lie in a metric space $(X,\ell)$.  Most commonly, the
metric space is Euclidean space ($\RR^d$ with the $\ell_2$ norm).
In these lectures, we'll focus on the Hamming cube, where $X =
\{0,1\}^d$ and $\ell$ is Hamming distance.  
On the upper bound side, the high-level ideas (related to ``locality
sensitive hashing (LSH)'') we use are also relevant for Euclidean
space and other natural metric spaces --- we'll get a glimpse of this
at the very end of the lecture.
On the lower bound side, you
won't be surprised to hear that the Hamming cube is the easiest metric
space to connect directly to the standard communication complexity
model.  Throughout the lecture, you'll want to think of $d$ as pretty
big --- say $d=\sqrt{n}$.

Returning to the general nearest neighbor problem, the goal is to
build a data structure $D$ (as a function of the point set $S \sse X$)
to prepare for all possible nearest neighbor queries.  Such a query
is a point $q \in X$, and the responsibility of the algorithm is to
use $D$ to return the point $p^* \in S$ that minimizes $\ell(p,q)$
over $p \in S$.  One extreme solution is to build no data structure at
all, and given $q$ to compute $p^*$ by brute force.  Assuming that
computing $\ell(p,q)$ takes $O(d)$ time, this query algorithm runs in
time $O(dn)$.  The other extreme is to pre-compute the answer to every
possible query $q$, and store the results in a look-up table.  For the
Hamming cube, this solution uses $\Theta(d2^d)$ space, and the query
time is that of one look-up in this table.  The exact nearest neighbor
problem is believed to suffer from the ``curse of dimensionality,''
meaning that a non-trivial query time (sublinear in $n$, say) requires
a data structure with space exponential in $d$.

There have been lots of exciting positive results for
the {\em $(1+\eps)$-approximate} version of the nearest neighbor
problem, where the query algorithm is only required to return a point $p$ with
$\ell(p,q) \le (1+\eps)\ell(p^*,q)$, where $p^*$ is the (exact)
nearest neighbor of $q$.  This is the problem we discuss in these
lectures.  You'll want to think of $\eps$ as a not-too-small constant,
perhaps~1 or~2.  
For many of the motivating applications of the nearest-neighbor
problem --- for example, the problem of detecting near-duplicate
documents (e.g., to filter search results) --- such approximate
solutions are still practically relevant.

\section{An Upper Bound: Biased Random Inner Products}\label{s:ub_ds}

In this section we present a non-trivial data structure for the
$(1+\eps)$-approximate nearest neighbor
problem in the Hamming cube.  The rough
idea is to hash the Hamming cube and then precompute the answer for
each of the hash table's buckets --- the trick is to make sure that
nearby points are likely to hash to the same bucket.
Section~\ref{s:lb_ds} proves a sense
in which this data structure is the best possible: no
data structure for the problem with equally fast query time uses
significantly less space.

\subsection{The Key Idea (via a Public-Coin Protocol)}

For the time being, we restrict attention to the decision version of
the $(1+\eps)$-nearest neighbor problem.  Here, the data structure
construction depends both on the point set $S \sse \{0,1\}^d$ and
on a given parameter $L \in \{0,1,2,\ldots,d\}$. 
Given a query $\bfq$,
the algorithm only has to decide correctly between the following
cases: 
\begin{enumerate}

\item There exists a point $p \in S$ with $\ell(p,q) \le L$.

\item $\ell(p,q) \ge (1+\eps)L$ for every point $p \in S$.

\end{enumerate}
If neither of these two cases applies, then the algorithm is off the
hook (either answer is regarded as correct).  We'll see in
Section~\ref{ss:full} how, using this solution as an ingredient, we can
build a data structure for the original version of the 
nearest neighbor problem.

Recall that {\em upper bounds} on communication complexity are always
suspect --- by design, the computational model is extremely powerful
so that lower bounds are as impressive as possible.  There are cases,
however, where designing a good communication protocol reveals the
key idea for a solution that is realizable in a reasonable
computational model.  Next is the biggest example of this that we'll
see in the course.

In the special case where $S$ contains only one point,
the decision version of the $(1+\eps)$-approximate nearest neighbor
problem resembles two other problems that we've studied before in
other contexts, one easy and one hard.
\begin{enumerate}

\item \eq.  Recall that when Alice and Bob just want to
  decide whether their inputs are the same or different ---
  equivalently, deciding between Hamming distance~0 and Hamming
  distance at least~1 --- there is an
  unreasonably effective (public-coin) randomized communication
  protocol for the problem.  Alice and Bob interpret the first $2n$
  public coins as two random $n$-bits strings $\bfr_1,\bfr_2$.  Alice
  sends the inner product modulo 2 of her input $\bfx$ with $\bfr_1$ and
  $\bfr_2$ 
  (2 bits) to Bob.  Bob accepts if and only if the two inner products
  modulo~2 of his input $\bfy$ with $\bfr_1,\bfr_2$ match those of Alice.  This
  protocol never rejects inputs with $\bfx=\bfy$, and accepts inputs
  with $\bfx \neq \bfy$ with probability $1/4$.

\item \gh.  Recall in this problem Alice and Bob want to
  decide between the cases where the Hamming distance
  between their inputs is at most $\tfrac{n}{2}-\sqrt{n}$, or at
  least $\tfrac{n}{2} + \sqrt{n}$.  In Theorem~\ref{t:gh} of
  Section~\ref{s:ghlb}, we proved that this
  problem is hard for one-way communication protocols (via a clever
  reduction from \index); it is also hard
  for general communication protocols~\citep{CR10,S12,V11}.  Note however that
  the gap between the two cases is very small, corresponding to $\eps
  \approx \tfrac{2}{\sqrt{n}}$.  In the decision version of
  the $(1+\eps)$-approximate nearest neighbor problem, we're assuming a
  constant-factor gap in Hamming distance between the two cases, so
  there's hope that the problem is easier.

\end{enumerate}

Consider now the communication problem where Alice and Bob want to
decide if the Hamming distance $\ell_H(\bfx,\bfy)$ between their inputs
$\bfx,\bfy \in \{0,1\}^d$
is at most $L$ or at least $(1+\eps)L$.  
We call this the \epsgh problem.
We analyze the following
protocol; we'll see shortly how to go from this protocol to a data
structure.
\begin{enumerate}

\item Alice and Bob use the public random coins to choose $s$ random
  strings $R = \{\bfr_1,\ldots,\bfr_s\} \in \{0,1\}^d$, where $d =
  \Theta(\eps^{-2})$.
The strings are {\em not} uniformly random: each entry is chosen
independently, with probability $1/2L$ of being equal to~1.

\item Alice sends the $s$ inner products (modulo 2) 
$\ip{\bfx}{\bfr_1},\ldots,\ip{\bfx}{\bfr_s}$ of her input and
  the random strings to Bob --- a ``hash value'' $h_R(\bfx) \in
  \{0,1\}^s$. 

\item Bob accepts if and only if the Hamming distance between the
  corresponding hash value $h_R(\bfy)$ of his input --- the $s$ inner
  products (modulo 2) of $\bfy$ with the random strings in $R$ --- differs from
  $h_R(\bfx)$ in only a ``small'' (TBD) number of coordinates.

\end{enumerate}
Intuitively, the goal is to modify our randomized communication
protocol for \eq
so that it continues to accept strings that are close to
being equal.  A natural way to do this is to bias the coefficient vectors
significantly more toward~0 than before.  For example, if $\bfx,\bfy$
differ in only a single bit, then choosing $\bfr$ uniformly at random
results in $\ip{\bfx}{\bfr} \not\equiv \ip{\bfy}{\bfr} \bmod 2$ with probability
$1/2$ (if and only if $\bfr$ has a~1 in the coordinate where $\bfx,\bfy$
differ).  With probability $1/2L$ of a~1 in each coordinate of $\bfr$, the 
probability that $\ip{\bfx}{\bfr} \not\equiv \ip{\bfy}{\bfr} \bmod 2$ is only
$1/2L$.  Unlike our \eq protocol, this protocol for \epsgh has
two-sided error.

For the analysis, it's useful to think of each random choice of a
coordinate $r_{ji}$ as occurring in two stages: in the first stage,
the coordinate is deemed {\em relevant} with probability $1/L$ and
{\em irrelevant} otherwise.  In stage~2, $r_{ji}$ is set to~0 if the
coordinate is irrelevant, and chosen uniformly from $\{0,1\}$ if the
coordinate is relevant.  We can therefore think of the protocol as:
(i) first choosing a subset of relevant coordinates; (ii) running the
old \eq protocol on these coordinates only.
With this perspective, we see that if
$\ell_H(\bfx,\bfy) = \Delta$, then
\begin{equation}\label{eq:gap}
\prob[\bfr_j]{\ip{\bfr_j}{\bfx} \not\equiv \ip{\bfr_j}{\bfy} \bmod 2} = 
\frac{1}{2}
\left(
\left(1 - \left( 1 - \frac{1}{L} \right)^{\Delta} \right)
\right)
\end{equation}
for every $\bfr_j \in R$.
In~\eqref{eq:gap}, the quantity inside the outer parentheses
is exactly the probability that
at least one of the $\Delta$ coordinates on which $\bfx,\bfy$ differ is
deemed relevant.  This is a necessary condition for the event $\ip{\bfr_j}{\bfx}
\not\equiv \ip{\bfr_j}{\bfy} \bmod 2$ and, in this case, the conditional
probability of 
this event is exactly~$\tfrac{1}{2}$ (as in the old \eq
protocol).

The probability in~\eqref{eq:gap} is an increasing function of
$\Delta$, as one would expect.  
Let $t$ denote the probability in~\eqref{eq:gap} when $\Delta = L$.
We're interested in how much bigger
this probability is when $\Delta$ is at least $(1+\eps)L$.
We can take the difference between these two cases and
bound it below 
using the fact that $1-x \in [e^{-2x},e^{-x}]$ for $x
\in [0,1]$:
\[
\tfrac{1}{2}
\left[
\underbrace{
\left(
1 - \frac{1}{L}
\right)^L
}_{\ge e^{-2}}
\left(
1 - 
\underbrace{
\left(
1 - \frac{1}{L}
\right)^{\eps L}
}_{\le e^{-\eps}}
\right)
\right] \ge 
\frac{1}{2\eps^2}\left(1 - e^{-\eps} \right) := h(\eps).
\]
Note that $h(\eps)$ is a constant, depending on $\eps$ only.
Thus, with $s$ random strings, if
$\ell_H(\bfx,\bfy) \le \Delta$ 
then we expect $ts$ of the random inner
products (modulo 2) to be different; if $\ell_H(\bfx,\bfy) \ge
(1+\eps)\Delta$, then we expect at 
least $(t+h(\eps))s$ of them to be
different.  A routine application of Chernoff bounds implies the
following.
\begin{corollary}\label{cor:protocol}
Define the hash function $h_R$ as in the communication protocol above.
If $s = \Omega(\tfrac{1}{\eps^2} \log \tfrac{1}{\delta})$, then with
probability at least $1-\delta$ over 
the choice of $R$:
\begin{itemize}

\item [(i)] If $\ell_H(\bfx,\bfy) \le L$, then $\ell_H(h(\bfx),h(\bfy)) \le
  (t+\tfrac{1}{2}h(\eps)))s$.

\item [(ii)] If $\ell_H(\bfx,\bfy) \ge (1+\eps)L$, then $\ell_H(h(\bfx),h(\bfy)) >
  (t+\tfrac{1}{2}h(\eps))s$.

\end{itemize}
\end{corollary}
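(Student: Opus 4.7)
The plan is a straightforward concentration argument. Fix $\bfx,\bfy \in \{0,1\}^d$ and, for each $j = 1,\ldots,s$, let $Z_j \in \{0,1\}$ be the indicator that $\ip{\bfr_j}{\bfx} \not\equiv \ip{\bfr_j}{\bfy} \bmod 2$. Since the $\bfr_j$'s are drawn independently, the $Z_j$'s are i.i.d.\ Bernoulli variables, and $Z := \sum_{j=1}^s Z_j$ is exactly $\ell_H(h_R(\bfx),h_R(\bfy))$. So both parts of the claim reduce to a concentration statement about $Z$ around its mean.

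By~\eqref{eq:gap}, the common success probability of each $Z_j$ is a strictly increasing function of $\Delta := \ell_H(\bfx,\bfy)$. When $\Delta \le L$ we therefore have $\expect{Z_j} \le t$, so $\expect{Z} \le ts$. When $\Delta \ge (1+\eps)L$, the calculation already carried out in the excerpt shows that $\expect{Z_j} \ge t + h(\eps)$, so $\expect{Z} \ge (t + h(\eps))s$. Thus, in both cases it suffices to show that $|Z - \expect{Z}| \le \tfrac{1}{2}h(\eps)\,s$ with probability at least $1-\delta$: in case~(i) this gives $Z \le (t + \tfrac{1}{2}h(\eps))s$ and in case~(ii) it gives $Z > (t + \tfrac{1}{2}h(\eps))s$, which are exactly the desired conclusions.

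For the concentration step, I would apply a standard Chernoff--Hoeffding bound for sums of independent $\{0,1\}$ variables:
\[
\prob{|Z - \expect{Z}| > \lambda s} \le 2\exp(-2\lambda^2 s).
\]
Taking $\lambda = \tfrac{1}{2}h(\eps)$, the right-hand side is at most $\delta$ as long as $s \ge \frac{2 \ln(2/\delta)}{h(\eps)^2}$. Since $h(\eps) = \tfrac{1}{2}e^{-2}(1-e^{-\eps}) = \Theta(\eps)$ for small $\eps$ (by the Taylor expansion $1 - e^{-\eps} = \eps - O(\eps^2)$), we have $h(\eps)^{-2} = \Theta(\eps^{-2})$, and the condition on $s$ becomes $s = \Omega(\eps^{-2}\log(1/\delta))$, matching the hypothesis of the corollary.

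The only genuinely delicate point is verifying $h(\eps) = \Theta(\eps)$, which is where the $\eps^{-2}$ in the sample complexity comes from; everything else is bookkeeping on top of the computation done in the excerpt and a single invocation of Hoeffding's inequality. No union bound over pairs $(\bfx,\bfy)$ is needed here since the statement only asserts correctness of $h_R$ for one fixed pair.
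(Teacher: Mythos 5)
Your proof is correct and is exactly the ``routine application of Chernoff bounds'' the paper alludes to: both cases reduce to a two-sided Hoeffding deviation of $Z = \ell_H(h_R(\bfx),h_R(\bfy))$ around its mean by at most $\tfrac{1}{2}h(\eps)s$, your use of the monotonicity of~\eqref{eq:gap} in $\Delta$ to place $\expect{Z}$ below $ts$ (case~(i)) or above $(t+h(\eps))s$ (case~(ii)) is the intended reduction, and the hypothesis on $s$ falls out of $h(\eps) = \Theta(\eps)$ as you say. One small remark: you have silently corrected a typo in the preceding display, which writes $h(\eps) = \frac{1}{2\eps^2}(1-e^{-\eps})$ where the bounds $(1-\tfrac{1}{L})^L \ge e^{-2}$ and $(1-\tfrac{1}{L})^{\eps L} \le e^{-\eps}$ actually give $\frac{1}{2}e^{-2}(1-e^{-\eps})$; your form $h(\eps) = \tfrac{1}{2}e^{-2}(1-e^{-\eps})$ is the intended one (the literal $\eps^2$ denominator would make $h(\eps)$ blow up as $\eps \to 0$, which is impossible for a probability gap).
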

We complete the communication protocol above by defining ``small'' in
the third step as $(t+\tfrac{1}{2}h(\eps))s$.  We conclude that the
\epsgh problem can be solved by a public-coin randomized protocol with
two-sided error and communication cost $\Theta(\eps^{-2})$.

\subsection{The Data Structure (Decision Version)}\label{ss:dec}

We now show how to translate the communication protocol above into a
data structure for the $(1+\eps)$-nearest neighbor problem.
For the moment, we continue to restrict to the decision version of the
problem, for an a priori known value of~$L \in \{0,1,2,\ldots,d\}$.
All we do is precompute the answers for all possible hash
values of a query (an ``inverted index'').

Given a point set $P$ of $n$ points in $\zo^d$, we choose a set $R$ of $s =
\Theta(\eps^{-2} \log n)$ random strings $\bfr_1,\ldots,\bfr_s$ according
to the distribution of the previous section (with a ``1'' chosen with
probability $1/2L$).  We again define the hash function $h_R:\zo^d
\rightarrow \zo^s$ by setting the $j$th coordinate of $h_R(\bfx)$ to
$\ip{\bfr_j}{\bfx} \bmod 2$.
We construct a table with $2^s = n^{\Theta(\eps^{-2})}$ buckets,
indexed by $s$-bit strings.\footnote{Note the frightening dependence
  of the space on $\tfrac{1}{\eps}$.  This is why we suggested
  thinking of $\eps$ as a not-too-small constant.}
Then, for each point $\bfp \in P$, we insert $\bfp$ into every bucket $\bfb
\in \zo^s$ for which $\ell_H(h_R(\bfp),\bfb) \le (t + \tfrac{1}{2}h(\eps))s$,
where $t$ is defined as in the previous section (as the probability
in~\eqref{eq:gap} with $\Delta = L$).  This preprocessing 
requires $n^{\Theta(\eps^{-2})}$ time.

With this data structure in hand, answering a query $\bfq \in \zo^d$ is
easy: just compute the hash value $h_R(\bfq)$ and return an arbitrary
point of the corresponding bucket, or ``none'' if this bucket is
empty.

For the analysis, think of an adversary who chooses a query point
$\bfq \in \zo^d$, and then we subsequently flip our coins and
build the above data
structure.  (This is the most common way to analyze hashing, with
the query independent of the random coin flips used to choose the hash
function.)  Choosing the hidden constant in the definition of $s$
appropriately and applying Corollary~\ref{cor:protocol} with $\delta =
\tfrac{1}{n^2}$, we find that, for every point $\bfp \in P$,
with probability at least
$1-\tfrac{1}{n^2}$, $\bfp$ is in $h(\bfq)$ (if
$\ell_H(\bfp,\bfq) \le L$) or is not in $h(\bfq)$ (if $\ell_h(\bfp,\bfq) \ge
  (1+\eps)L$).  Taking a Union Bound over the $n$ points of $P$, we
  find that the data structure correctly answers the query $\bfq$ with
  probability at least $1-\tfrac{1}{n}$.

Before describing the full data structure, let's take stock of what
we've accomplished thus far.  We've shown that, for every constant
$\eps > 0$, there is a data structure for the decision version of the
$(1+\eps)$-nearest neighbor problem that uses space
$n^{O(\eps^{-2})}$, answers a query with a single random access to
the data structure, and for every query is correct with high
probability.  Later in this lecture, we show a matching lower bound:
every (possibly randomized) data structure with equally good search
performance for the decision version of the $(1+\eps)$-nearest
neighbor problem has space $n^{\Omega(\eps^{-2})}$.   
Thus, smaller space can only be achieved by increasing the query time
(and there are ways to do this, see e.g.~\cite{indyk_survey}).

\subsection{The Data Structure (Full Version)}\label{ss:full}

The data structure of the previous section is an unsatisfactory
solution to the $(1+\eps)$-nearest neighbor problem in two respects:
\begin{enumerate}

\item In the real problem, there is no a priori known value of $L$.
  Intuitively, one would like to take $L$ equal to the actual
  nearest-neighbor distance of a query point $\bfq$, a quantity that
is different
  for different $\bfq$'s.

\item Even for the decision version, the data structure can answer
  some queries incorrectly.  Since the data structure only guarantees
  correctness with probability at least $1-\tfrac{1}{n}$ for each
  query $\bfq$, it might be wrong on as many as $2^d/n$ different
  queries.  Thus, an adversary that knows the data structure's coin
  flips can exhibit a query that the data structure gets wrong.

\end{enumerate}

The first fix is straightforward: just build $\approx d$ copies
of the data structure of Section~\ref{ss:dec}, one for each relevant
value of $L$.\footnote{For $L \ge d/(1+\eps)$, the data structure can
  just return an arbitrary point of $P$.  For $L=0$, when the data
  structure of Section~\ref{ss:dec} is not well defined, 
  a standard data structure for set membership, such as a
  perfect hash table~\citep{FKS}, can be used.}  Given a query $\bfq$,
the data structure now uses 
binary search over~$L$ to compute a $(1+\eps)$-nearest neighbor
of~$\bfq$; see the Exercises for details.  Answering a query thus
requires $O(\log d)$ lookups to the data structure.  This also
necessitates blowing up the number $s$ of random strings used in each
data structure by a $\Theta(\log \log d)$ factor --- this reduces the
failure probability of a given lookup by a $\log d$ factor,
enabling a Union Bound over $\log d$ times as many lookups as
before.\footnote{With some cleverness, this $\log \log d$ factor can
  be avoided -- see the Exercises.}

A draconian approach to the second problem is to again replicate the
data structure above $\Theta(d)$ times.  Each query $\bfq$ is asked in
all $\Theta(d)$ copies, and majority vote is used to determine the
final answer.  Since each copy is correct on each of the $2^d$
possible queries with probability at least $1-\tfrac{1}{n} >
\tfrac{2}{3}$, the majority vote is wrong on a given query with
probability at most inverse exponential in~$d$.  Taking a Union Bound
over the $2^d$ possible queries shows that, with high probability over
the coin flips used to construct the data structure, the data
structure answers every query correctly.  
Put differently, for almost all outcomes of the coin flips, not even
an adversary that knows the coin flip outcomes can produce a
query on which the data structure is incorrect.
This solution blows up both
the space used and the query time by a factor of $\Theta(d)$.

An alternative approach is to keep $\Theta(d)$ copies of the data
structure as above but, given a query~$\bfq$, to answer the query using
one of the $\Theta(d)$ copies chosen uniformly at random.  With this
solution, the space gets blown up by a factor of $\Theta(d)$ but the
query time is unaffected.  The correctness guarantee is now slightly
weaker.  With high probability over the coin flips used to construct
the data structure (in the preprocessing), the data structure
satisfies: for every query $\bfq \in \zo^d$, with probability at least
$1-\Theta(\tfrac{1}{n})$ over the coins flipped at query time, the data
structure answers the query correctly (why?).  
Equivalently, think of an adversary who is privy to the outcomes of the
coins used to construct the data structure, but not those used to
answer queries.  For most outcomes of the coins used in the
preprocessing phase, no matter what query the adversary suggests, the
data structure answers the query correctly with high probability.

To put everything together, 
the data structure for a fixed $L$ (from Section~\ref{ss:dec})
requires $n^{\Theta(\eps^{-2})}$ space, the first fix blows up the
  space by a factor of $\Theta(d \log d)$, and the second fix blows up
  the space by another factor of $d$.  
For the query time, with the alternative
    implementation of the second fix, answering a query involves
    $O(\log d)$ lookups into the data structure.  Each lookup involves
    computing a hash value, which in turn involves computing inner
    products (modulo 2) with $s = \Theta(\eps^{-2} \log n \log \log
    d)$ random strings.  Each such inner product requires $O(d)$ time.

Thus, the final scorecard for the data structure is:
\begin{itemize}

\item Space: $O(d^2 \log d) \cdot n^{\Theta(\eps^{-2})}$.  

\item Query time: $O(\eps^{-2} d \log n \log \log d)$.

\end{itemize}
For example, for the suggested parameter values of $d = n^c$ for a
constant $c \in (0,1)$ and $\eps$ a not-too-small constant, we obtain
a query time significantly better than the brute-force (exact)
solution (which is $\Theta(dn)$),
while using only a polynomial amount of space.

\section{Lower Bounds via Asymmetric Communication Complexity}\label{s:lb_ds}

We now turn our attention from upper bounds for the $(1+\eps)$-nearest
neighbor problem to lower bounds.  We do this in three steps.  
In Section~\ref{ss:cpm}, we introduce a model for proving lower bounds
on
time-space trade-offs in data structures --- the {\em cell probe model}.  
In Section~\ref{ss:asym}, we
explain how to deduce lower bounds in the cell probe model from a
certain type of communication complexity lower bound.
Section~\ref{ss:lb} applies this machinery to the
$(1+\eps)$-approximate nearest neighbor problem, and proves a sense in
which the data structure of Section~\ref{s:ub_ds} is optimal: every data
structure for the decision version of the problem that uses $O(1)$
lookups per query has space $n^{\Omega(\eps^{-2})}$.  Thus, no
polynomial-sized data structure can be both super-fast and
super-accurate for this problem.

\subsection{The Cell Probe Model}\label{ss:cpm}

\subsubsection{Motivation}

The most widely used model for proving data structure lower bounds is
the {\em cell probe model}, introduced by Yao --- two years after he
developed the foundations of communication complexity~\citep{Y79} ---
in the paper 
``Should Tables Be Sorted?''~\citep{Y81}.\footnote{Actually, what is
  now called the cell probe model is a bit stronger than the model
  proposed by~\cite{Y81}.}
The point of this model is to prove lower bounds for data structures
that allow random access.  To make the model as powerful as possible, 
and hence lower bounds for it as strong as possible (cf.,
our communication models), a 
random access to a data structure counts as 1 step in this model, no
matter how big the data structure is. 

\subsubsection{Some History}

To explain the title of the paper by~\cite{Y81}, suppose your job is
to store $k$
elements from a totally ordered universe of $n$
elements ($\{1,2,\ldots,n\}$,
say) in an array of length~$k$.  The goal is to minimize the worst-case
number of array accesses necessary to check whether or not a given
element is in the array, over all subsets of $k$ elements that might
be stored and over the $n$  possible queries.

To see that this problem is non-trivial, suppose $n=3$ and $k=2$.
One strategy is to store the pair of elements in sorted order, leading
to the three possible arrays in Figure~\ref{f:cpm}(a).  This yields a
worst-case query time of~2 array accesses.  To see this, suppose we
want to check whether or not~2 is in the array.  If we initially query
the first array element and find a ``1,'' or
if we initially query the second array element and find a
``3,'' then we can't be sure whether or not~2 is in the array.

\begin{figure}
  \centering
  \subbottom[Sorted Array]{%
    \includegraphics[width=0.4\textwidth]{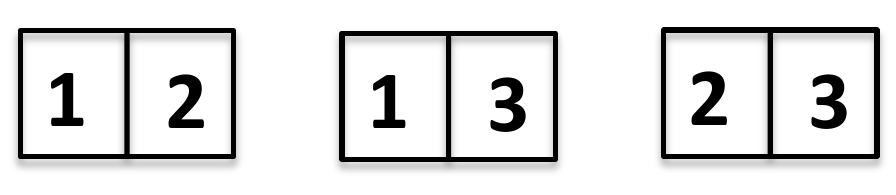}}
\qquad\qquad
  \subbottom[Unsorted Array]{%
    \includegraphics[width=0.4\textwidth]{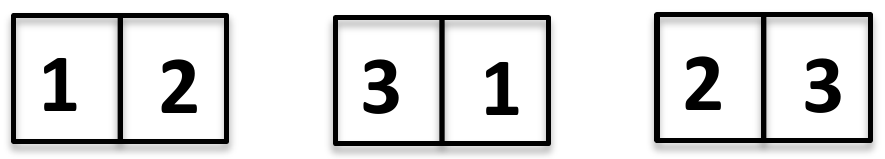}}
\caption[Should tables be sorted?]{Should tables be sorted?  For $n=3$ and $k=2$, it is suboptimal to
  store the array elements in sorted order.}
\label{f:cpm}
\end{figure}

Suppose, on the other hand, our storage strategy is as shown in
Figure~\ref{f:cpm}(b).
Whichever array entry is queried first, the answer uniquely determines
the other array entry.  Thus, storing the table in a non-sorted
fashion is necessary to achieve the optimal worst-case query time
of~1.  On the other hand, if $k=2$ and $n=4$, storing the elements in
sorted order (and using binary search to answer queries) is
optimal!\footnote{\cite{Y81} also uses Ramsey theory to prove
  that, provided the universe size is a sufficiently (really, really)
  large function of the array size, then binary search on a sorted
  array is optimal.  This result
assumes that no auxiliary storage is
  allowed, so solutions like perfect hashing~\citep{FKS} are
  ineligible.  If the universe size is not too much larger than
  the array size, then there are better solutions~\citep{FN93,GH10,GS12}, even
when there is no auxiliary storage.}

\subsubsection{Formal Model}

In the cell problem model, the goal is to encode a ``database'' $D$
in order to answer a set $Q$ of queries.  
The query set~$Q$ is known up front; the encoding scheme must work (in
the sense below) for all possible databases.

For example, in the $(1+\eps)$-approximate nearest neighbor problem, the 
database corresponds to the point set $P \sse \zo^d$, while the possible
queries $Q$ correspond to the elements of $\zo^d$.  Another canonical
example is the set membership problem: here, $D$ is a subset of a
universe~$U$, and each query $q \in Q$ asks ``is $i \in D$?'' for some
element $i \in U$.

A parameter of the cell probe model is the {\em word size} $w$; more
on this shortly.
Given this parameter, the design space consists of
the ways to encode databases $D$ as $s$ {\em cells} of $w$ bits each.
We can view such an encoding as an abstract data structure
representing the database, and
we view the number~$s$ of cells as the {\em space} used by
the encoding. 
To be a valid encoding, it must be possible to correctly answer every
query $q \in Q$ for the database~$D$ by reading enough bits of the
encoding. 
A query-answering algorithm accesses the encoding by specifying the
name of a cell; 
in return, the algorithm is given the contents of that cell.  Thus
every access to the encoding yields $w$ bits of information.
The {\em query time} of an algorithm (with respect to an
encoding scheme) is the maximum, over
databases $D$ (of a given size) and queries $q \in Q$, number
of accesses to the encoding used to answer a query.

For example, in the original array example from~\cite{Y81} mentioned
above, the word size $w$ is $\lceil \log_2 n \rceil$ --- just enough
to specify the name of an element.  The goal in~\cite{Y81} was, for
databases consisting of $k$ elements of the universe, to understand
when the minimum-possible query time 
is $\lceil \log_2 k \rceil$ 
under the constraint that the space is 
$k$.\footnote{The model in~\cite{Y81} was a bit more restrictive ---
  cells were required to contain names of elements in the database,
  rather than arbitrary $\lceil \log_2 n \rceil$-bit strings.}  

Most research on the cell-probe model seeks time-space trade-offs
with respect to a fixed value for the word size~$w$.
Most commonly, the word size is taken large enough so that a single element
of the database can be stored in a single cell, and ideally not too
much larger than this.  For nearest-neighbor-type problems involving
$n$-point sets in the $d$-dimensional hypercube, this guideline
suggests taking $w$ to be polynomial in $\max\{ d,\log_2 n\}$.  

For this choice of~$w$, the data structure in Section~\ref{ss:dec}
that solves the decision version of the $(1+\eps)$-approximate nearest
neighbor problem yields a (randomized) cell-probe encoding of point sets with
space $n^{\Theta(\eps^{-2})}$ and query time~1.  Cells of this
encoding correspond to all possible $s = \Theta(\eps^{-2} \log n)$-bit
hash values $h_R(\bfq)$ of a query $\bfq \in \zo^d$, and the
contents of a cell name an arbitrary point $\bfp \in P$ with hash value
$h_R(\bfp)$ sufficiently close (in Hamming distance in $\zo^s$) to that
of the cell's name (or ``NULL'' if no such $\bfp$ exists).
The rest of this lecture proves a matching lower bound in the cell-probe
model: constant query time can only be achieved by encodings (and
hence data structures) that use $n^{\Omega(\eps^{-2})}$ space.

\subsubsection{From Data Structures to Communication Protocols}

Our goal is to derive data structure lower bounds in the cell-probe
model from communication complexity lower bounds.
Thus, we need to extract low-communication protocols from good data
structures. Similar to our approach last lecture, we begin with a
contrived communication problem to forge an initial connection.  Later
we'll see how to prove lower bounds for the contrived problem via
reductions from other communication problems that we already
understand well.

Fix an instantiation of the cell probe model -- i.e., a set of
possible databases and possible queries.  For simplicity, we assume that
all queries are Boolean.
In the corresponding \qd problem, Alice gets a query $q$ and Bob gets
a database $D$.  (Note that in all natural examples, Bob's input is
{\em much} bigger than Alice's.)  The communication problem is to compute
the answer to~$q$ on the database $D$.

We made up the \qd problem so that the following lemma holds.
\begin{lemma}\label{l:qd}
Consider a set of databases and queries so that there is a cell-probe
encoding with word size $w$, space $s$, and query time $t$.  Then,
there is a communication protocol for the corresponding \qd problem
with communication at most
\[
\underbrace{t \log_2 s}_{\text{bits sent by Alice}} +
\underbrace{t w}_{\text{bits sent by Bob}}.
\]
\end{lemma}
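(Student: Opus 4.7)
The plan is to simulate the cell-probe query algorithm as a two-party protocol in which Alice drives the simulation and Bob serves as an oracle for the encoding of his database. Both players fix in advance (with no communication) the encoding scheme used by the data structure, so that Bob, knowing $D$, can compute the contents of any requested cell, and Alice, knowing the query-answering algorithm, can determine which cell to probe next based on the contents seen so far.

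Concretely, I would have Alice initialize the cell-probe query algorithm on her query $q$. At each of the (up to) $t$ rounds, the algorithm specifies the name of a cell it wishes to probe; Alice sends this name to Bob using $\lceil \log_2 s \rceil$ bits, since there are only $s$ cells. Bob, having computed the encoding of $D$ from his input, replies with the $w$-bit contents of that cell. Alice feeds this response into the query algorithm, which then either terminates (and Alice announces the answer) or specifies the next cell to probe. After at most $t$ rounds, by the query time guarantee, the simulation terminates with the correct answer. Summing the bits sent yields the claimed bounds of $t \lceil \log_2 s \rceil$ for Alice and $tw$ for Bob.

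The only subtle point, and thus the ``main obstacle'' if there is one, is to verify that the simulation is faithful despite the fact that cell-probe query algorithms are typically adaptive: the identity of the $i$th probed cell depends on the responses to the first $i-1$ probes. This is precisely why the cost of Alice's messages cannot be amortized below $\log_2 s$ per round --- each cell name must be transmitted before the response that informs the next choice is known. Since Alice's message in round $i$ is a deterministic function of $q$ and the cell contents received in rounds $1,\ldots,i-1$ (all of which she already knows), and Bob's message is a deterministic function of $D$ and the cell name just received, the protocol is well-defined and correctly implements the cell-probe algorithm. No further ideas are needed.
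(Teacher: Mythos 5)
Your proposal is correct and matches the paper's proof exactly: Alice simulates the query algorithm, sends $\log_2 s$ bits per probe to name the cell, Bob replies with the $w$-bit cell contents, and the process terminates within $t$ rounds. Your additional discussion of adaptivity is a fine sanity check but is implicit in the paper's one-line argument.
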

The proof is the obvious simulation: Alice simulates the
query-answering algorithm, sending at most $\log_2 s$ bits to Bob specify
each cell requested by the algorithm, and Bob sends $w$ bits 
back to Alice to describe
the contents of each requested cell.  By assumption, they only need to
go back-and-forth at most $t$ times to identify the answer to Alice's
query $q$.

Lemma~\ref{l:qd} reduces the task of proving data structure lower
bounds to proving lower bounds on the communication cost of protocols
for the \qd problem.\footnote{The two-party communication model
  seems strictly stronger than the data structure design problem that
  it captures --- in a communication protocol, Bob can remember
  which queries Alice asked about previously, while a (static) data structure
  cannot.  An interesting open research direction is to find
  communication models and problems that more tightly capture 
data structure design problems,  thereby implying strong lower bounds.}

\subsection{Asymmetric Communication Complexity}\label{ss:asym}

Almost all data structure lower bounds derived from communication
complexity use {\em asymmetric} communication complexity.  This is
just a variant of the standard two-party model where we keep track of
the communication by Alice and by Bob separately.  The most common
motivation for doing this is when the two inputs have very different
sizes, like in the protocol used to prove Lemma~\ref{l:qd} above.

\subsubsection{Case Study: \textsc{Index}}\label{sss:index}

To get a feel for asymmetric communication complexity and lower bound
techniques for it, let's revisit an old friend, the \index problem.  In
addition to the application we saw earlier in the course, \index
arises naturally as the \qd problem corresponding to the membership
problem in data structures.

Recall that an input of \index gives Alice an index $i \in
\{1,2,\ldots,n\}$, specified using $\approx \log_2 n$ bits, and Bob a
subset $S \sse \{1,2,\ldots,n\}$, or equivalently an $n$-bit
vector.\footnote{We've reversed the roles of the players relative to
  the standard description we gave in
  Lectures~\ref{cha:data-stre-algor}--~\ref{cha:lower-bounds-one}.
  This reverse 
  version is the one corresponding to the \qd problem induced by the
  membership problem.}  In Lecture~\ref{cha:lower-bounds-one} we proved that the
communication complexity of \index is $\Omega(n)$ for one-way randomized
protocols with two-sided error (Theorem~\ref{t:index}) --- Bob must send almost his entire
input to Alice for Alice to have a good chance of computing her desired
index.  This lower bound clearly does not apply to general
communication protocols, since Alice can just send her $\log_2 n$-bit input
to Bob.  It is also easy to prove a matching lower bound on
deterministic and nondeterministic protocols (e.g., by a fooling set
argument).

We might expect a more refined
lower bound to hold: to solve \index, not only do the players have to send
at least $\log_2 n$ bits total, but more specifically {\em Alice} has
to send at least $\log_2 n$ bits to Bob.  Well not quite: Bob could
always send his entire input to Alice, using $n$ bits of
communication while freeing Alice to use essentially no communication.
Revising 
our ambition, we could hope to prove that in every \index protocol,
{\em either} (i) Alice has to communicate most of her input; or (ii)
Bob has to communicate most of his input.  The next result states that
this is indeed the case.

\begin{theorem}[\citealt{M+98}]\label{t:index_asym}
For every $\delta > 0$, there exists a constant $N = N(\delta)$ such
that, for every $n \ge N$ and every randomized communication
protocol with two-sided error
that solves \index with $n$-bit inputs, either:
\begin{itemize}

\item [(i)] in the worst case (over inputs and protocol randomness),
  Alice communicates at least $\delta \log_2 n$ bits; or

\item [(ii)] in the worst case (over inputs and protocol randomness),
  Bob communicates at least $n^{1-2\delta}$ bits.\footnote{From the
    proof, it will be evident that $n^{1-2\delta}$ can be replaced by
    $n^{1-c\delta}$ for any constant $c > 1$.}

\end{itemize}
\end{theorem}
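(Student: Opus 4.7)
The plan is to combine Yao's minimax with the ``richness'' technique (introduced by Miltersen, Nisan, Safra, and Wigderson for exactly this sort of asymmetric question): pull out of any efficient protocol a large, nearly 1-monochromatic combinatorial rectangle, and then observe that the 1-inputs of $M(\text{\index})$ are too ``spread out'' to admit such a rectangle. Specifically, by Yao's minimax (Lemma~\ref{l:yao}) it suffices to fix the uniform distribution $\mu$ on $[n]\times\{0,1\}^n$ and rule out deterministic protocols with Alice sending $a\le\delta\log_2 n$ bits and Bob sending $b\le n^{1-2\delta}$ bits that have error at most $\tfrac{1}{8}$ (say) under $\mu$.

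The key structural step is to extract, from any such protocol, a 1-rectangle $A\times B$ (a rectangle on which the protocol outputs ``1'') such that: (i) at least $\tfrac{3}{4}$ of the pairs $(i,S)\in A\times B$ are genuine 1-inputs of \index, i.e.\ $S_i=1$; (ii) $|A|\ge n/2^{a+O(1)}$; and (iii) $|B|\ge 2^n/2^{a+b+O(1)}$. The inputs to the richness argument are the standard rectangle-partition structure (at most $2^{a+b}$ monochromatic rectangles), the observation that the $i$th row of $M(\text{\index})$ has exactly $2^{n-1}$ ones (so $f$ is ``$(n,2^{n-1})$-rich''), and the assumption of low error, which lets a pigeonhole/averaging argument ``balance'' the two dimensions of the extracted rectangle rather than settling for a bound on $|A|\cdot|B|$ alone.

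The rest is a counting/Chernoff contradiction. Let $N_S=|\{i\in A:S_i=1\}|$. Condition (i) gives $\sum_{S\in B}N_S\ge \tfrac{3}{4}|A||B|$, so by Markov at least $|B|/2$ strings in $B$ satisfy $N_S\ge \tfrac{3}{4}|A|$. But for uniform $S\in\{0,1\}^n$ and a fixed $A$ of size $k$, $N_S$ is Binomial$(k,\tfrac{1}{2})$, so by a Chernoff bound
\[
\Pr_S\!\bigl[N_S\ge \tfrac{3}{4}k\bigr]\le 2^{-c_0 k}
\]
for an absolute constant $c_0>0$. Hence $|B|/2\le 2^{n-c_0|A|}$. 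Comparing with the lower bound (iii) yields $a+b\ge c_0|A|-O(1)$, and substituting (ii) together with $a\le\delta\log_2 n$ gives $|A|\ge n^{1-\delta}/O(1)$, whence $a+b=\Omega(n^{1-\delta})$. Since $a\le\delta\log_2 n=o(n^{1-\delta})$, this forces $b\ge n^{1-2\delta}$ for all $n\ge N(\delta)$, the desired contradiction.

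The main obstacle is establishing the richness claim in the right form. A naive $2^{a+b}$-way partition argument only delivers $|A|\cdot|B|\ge n\cdot 2^n/2^{a+b+O(1)}$, which is compatible with pathologically thin rectangles ($|A|=1$, $|B|$ enormous) that would not yield any contradiction in the counting step. What is needed is a separate lower bound on $|A|$ of the form $n/2^{a+O(1)}$ — morally, that a short message from Alice cannot ``pin down'' her index to fewer than $n/2^a$ possibilities on average — and a matching separate bound on $|B|$. This is exactly what the MNSW richness lemma buys, but its proof requires the more delicate bookkeeping of tracking Alice's bits and Bob's bits against the two dimensions independently; everything else in the argument is relatively routine once this structural step is in place.
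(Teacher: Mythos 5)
Your high-level plan matches the paper's: extract a large $1$-rectangle via a richness argument, then show $M(\index)$ cannot contain such a rectangle. The differences are in scope and in the counting step. The paper proves and applies the Richness Lemma (Lemma~\ref{l:rich}) only for \emph{error-free deterministic} protocols, explicitly deferring the randomized case to~\cite{M+98} as ``very similar, just a little messier,'' and its counting step is the clean exact observation that every column of a $1$-rectangle with row set $A$ must correspond to a superset of $A$, hence $|B|\le 2^{n-|A|}$. You instead go after the randomized case directly --- Yao's Lemma, distributional error, a near-$1$-monochromatic rectangle, and a Chernoff tail bound in place of the exact superset count. That is the shape of the full MNSW argument, and your Chernoff step is the right replacement for the superset count once exact monochromaticity is lost; but you (correctly) flag that you have not proved the approximate richness extraction, and that is the genuine content missing. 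So the two writeups are in the same family with essentially the same gap: the paper chooses to fully prove a cleaner special case (deterministic), while you sketch the structure of the general case and cite the hard lemma.

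One concrete slip in your counting step: from $\sum_{S\in B} N_S \ge \tfrac34 |A||B|$ and $N_S\le |A|$, Markov does \emph{not} give that half of the $S\in B$ have $N_S\ge \tfrac34|A|$. If a $q$-fraction of $B$ has $N_S\ge\tau|A|$, then $\avg_{S\in B} N_S \le |A|\bigl(\tau + q(1-\tau)\bigr)$, so the guaranteed fraction is $q\ge\tfrac{\beta-\tau}{1-\tau}$ when $\avg_{S} N_S\ge\beta|A|$. At $\beta=\tau=\tfrac34$ this is zero, so the step as written does not follow. You need either a larger $1$-fraction in the extracted rectangle (e.g.\ $\beta=\tfrac{15}{16}$, which is what the error budget in the approximate richness lemma should be tuned to deliver) or a lower Chernoff threshold (e.g.\ $\tau=\tfrac58$, giving $q\ge\tfrac13$). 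Either fix works --- the Chernoff exponent $c_0$ shrinks but stays positive, and the conclusion $a+b=\Omega(n^{1-\delta})\gg n^{1-2\delta}$ survives --- but the constants need to be set so that $\tau$ is strictly between $\tfrac12$ and $\beta$.
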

Loosely speaking, Theorem~\ref{t:index_asym} states that 
the only way Alice can get away with sending
$o(\log n)$ bits of communication is if Bob sends at least
$n^{1-o(1)}$ bits of communication.

For simplicity, we'll prove Theorem~\ref{t:index_asym} only for
deterministic protocols.  The lower bound for randomized protocols
with two-sided error is very similar, just a little
messier (see~\cite{M+98}).

Conceptually, the proof of Theorem~\ref{t:index_asym} has the same flavor
as many of our previous lower bounds, and is based on covering-type
arguments.  The primary twist is that rather than keeping track only
of the size of monochromatic rectangles, we keep track of both the
height and width of such rectangles.  For example, we've seen in the
past that low-communication protocols imply the existence of a
large monochromatic rectangle --- if the players haven't had the
opportunity to speak much, then an outside observer hasn't had the
chance to eliminate many inputs as legitimate possibilities.  The
next lemma proves an analog of  
this, with the height and width of the monochromatic rectangle
parameterized by the communication used by Alice and Bob,
respectively.

\begin{lemma}[Richness Lemma~\citep{M+98}]\label{l:rich}
Let $f:X \times Y \rightarrow \{0,1\}$ be a Boolean function with
corresponding $X \times Y$ 0-1 matrix $M(f)$.  Assume that:
\begin{itemize}

\item [(1)] $M(f)$ has at least $v$ columns
that each have at least $u$ 1-inputs.\footnote{Such a matrix is 
  sometimes called {\em $(u,v)$-rich}.}

\item [(2)] There is a deterministic protocol that computes $f$ in
  which Alice and Bob always send at most $a$ and $b$ bits,
  respectively.\footnote{This is sometimes called an {\em
      $[a,b]$-protocol}.}

\end{itemize}
Then, $M(f)$ has a 1-rectangle $A \times B$ with $|A| \ge
\tfrac{u}{2^a}$ and $|B| \ge \tfrac{v}{2^{a+b}}$.
\end{lemma}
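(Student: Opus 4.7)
The plan is to prove the Richness Lemma by a double pigeonhole argument on the rectangle partition induced by the protocol, bookkeeping the communication of Alice and Bob separately.

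First, I would recall the basic rectangle structure from Lemma~\ref{l:rect}: any deterministic protocol partitions $X \times Y$ into combinatorial rectangles, one per transcript, and if the protocol computes $f$ then every such rectangle is monochromatic (Lemma~\ref{l:mono}). Since Alice sends at most $a$ bits and Bob at most $b$ bits in total over the protocol, there are at most $2^{a+b}$ distinct transcripts, hence at most $2^{a+b}$ monochromatic rectangles overall. Among these, I only care about the $1$-rectangles.

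The key refinement is to look column by column. Fix any column $\bfy \in Y$ and ask: how many distinct transcripts can arise as $\bfx$ ranges over $X$? Once $\bfy$ is fixed, every bit Bob sends is a deterministic function of $\bfy$ and the history so far, so the only freedom in the transcript comes from the $a$ bits sent by Alice. Hence at most $2^a$ distinct transcripts occur in column $\bfy$, so the $1$-entries in that column are covered by at most $2^a$ rectangles of the partition. Now suppose $\bfy$ is one of the $v$ ``rich'' columns with at least $u$ ones. By pigeonhole, at least one of these rectangles, call it $R(\bfy) = A(\bfy)\times B(\bfy)$, contains at least $u/2^a$ ones from column $\bfy$. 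In particular $|A(\bfy)| \ge u/2^a$, and of course $\bfy \in B(\bfy)$.

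This assigns to each of the $v$ rich columns a $1$-rectangle from the partition. Since the partition has at most $2^{a+b}$ rectangles in total, a second pigeonhole step gives a single rectangle $R = A \times B$ that is assigned by at least $v/2^{a+b}$ of the rich columns. Every such column lies in $B$, so $|B| \ge v/2^{a+b}$; and from the first step applied to any one of these columns, $|A| \ge u/2^a$. This is the desired monochromatic $1$-rectangle.

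I do not anticipate a real obstacle here — the argument is essentially two nested pigeonhole steps, with the only subtlety being the observation that for a fixed column only Alice's $a$ bits contribute to transcript variation (so the column is ``chopped'' into at most $2^a$ pieces by the rectangle partition, not $2^{a+b}$). The randomized two-sided-error version, which the author defers to \cite{M+98}, is the only part that requires genuinely more work, because one must first extract from a good-in-expectation randomized protocol a deterministic protocol that is still rich on a large sub-rectangle; but for the deterministic statement above, the two pigeonholes suffice.
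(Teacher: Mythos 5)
Your proof is correct, and it takes a genuinely different route from the paper's. The paper proceeds by a forward induction down the protocol tree: at every node it maintains the invariant that after Alice has sent $\alpha$ bits and Bob $\beta$ bits, the current rectangle $A\times B$ still has at least $v/2^{\alpha+\beta}$ columns each containing at least $u/2^{\alpha}$ ones in the rows of $A$; the asymmetry between $a$ and $b$ enters because Bob's moves only split columns (so only the column count halves) while Alice's moves split rows (so both the column count and the per-column count of ones may halve). Your proof instead works directly with the final partition and applies two clean pigeonhole steps: first, each of the $v$ rich columns is chopped by the partition into at most $2^a$ pieces (since once $\bfy$ is fixed, Bob's bits are forced and the transcript varies only with Alice's at-most-$a$ bits, giving at most $2^a$ leaves of the protocol tree that $\bfy$ can reach), so one piece picks up $\ge u/2^a$ ones of that column; second, averaging over the $\le 2^{a+b}$ rectangles in total, one rectangle is the designated piece for $\ge v/2^{a+b}$ of the rich columns. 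Both arguments exploit the same underlying asymmetry, but yours isolates it as a single, self-contained observation about how many transcript-rectangles can touch a fixed column, which some readers will find more transparent than tracking a two-parameter invariant through the induction; the paper's inductive formulation, on the other hand, makes explicit what is true at every intermediate node of the protocol tree, which is sometimes the statement one actually needs. The one small thing worth making explicit in your write-up is why the per-column bound is $2^a$ and not, say, $2^{a+1}-1$: the transcripts in which Alice sends a given prefix of bits form a prefix-free set (the protocol halts and no further bits are sent), so they correspond to leaves of a binary tree of depth at most $a$, of which there are at most $2^a$.
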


The proof of Lemma~\ref{l:rich} is a variation on the classic argument
that a protocol computing a function $f$ induces a partition of the
matrix $M(f)$ into monochromatic rectangles.  Let's recall the
inductive argument.  Let $\bfz$ be a transcript-so-far of the protocol,
and assume by induction that the inputs $\inputs$ that lead to $\bfz$
form a rectangle $A \times B$.  Assume that Alice speaks next (the
other case is symmetric).  Partition $A$ into $A_0,A_1$, with $A_\eta$
the inputs $\bfx \in A$ such Alice sends the bit $\eta$ next.  (As always,
this bit depends only on her input $\bfx$ and the transcript-so-far
$\bfz$.)  After Alice speaks, the inputs consistent with the resulting
transcript are either $A_0 \times B$ or $A_1 \times B$ --- either way,
a rectangle.  All inputs that generate the same final transcript $\bfz$
form a monochromatic rectangle --- since the protocol's output is
constant across these inputs and it computes the function $f$, $f$ is
also constant across these inputs.  

Now let's refine this argument to keep track of the dimensions
of the monochromatic rectangle, as a function of the number of times
that each of Alice and Bob speak.

\vspace{.1in}
\noindent
\begin{prevproof}{Lemma}{l:rich}
We proceed by induction on the number of steps of the protocol.  
Suppose the protocol has generated the transcript-so-far $\bfz$ and that
$A \times B$ is the rectangle of inputs consistent with this
transcript.  
Suppose that at least $c$ of the columns of $B$ have at least
$d$ 1-inputs in rows of $A$ (possibly with different rows for
different columns).

For the first case, suppose that Bob speaks next.  Partition $A \times B$
into $A \times B_0$ and $A \times B_1$, where $B_{\eta}$ are the
inputs $\bfy \in B$ such that (given the transcript-so-far $\bfz$) Bob sends
the bit~$\eta$.  At least one of the sets $B_0,B_1$ contains at least
$c/2$ columns that each contain at least $d$ 1-inputs in the rows of
$A$ (Figure~\ref{f:rich}(a)).

For the second case, suppose that Alice speaks.  Partition $A \times
B$ into $A_0 \times B$ and $A_1 \times B$.  It is not possible that
both 
(i) $A_0 \times B$ has strictly less that $c/2$
columns with $d/2$ or more 1-inputs in the rows of $A_0$ and
(ii) $A_1 \times B$ has strictly less that $c/2$
columns with $d/2$ or more 1-inputs in the rows of $A_1$.  For if
both~(i) and~(ii) held, then 
$A \times B$ would have less than $c$ columns with $d$ or
more 1-inputs in the rows of $A$, a contradiction
(Figure~\ref{f:rich}(b)).

\begin{figure}
  \centering
  \subbottom[When Bob Speaks]{%
    \includegraphics[width=0.25\textwidth]{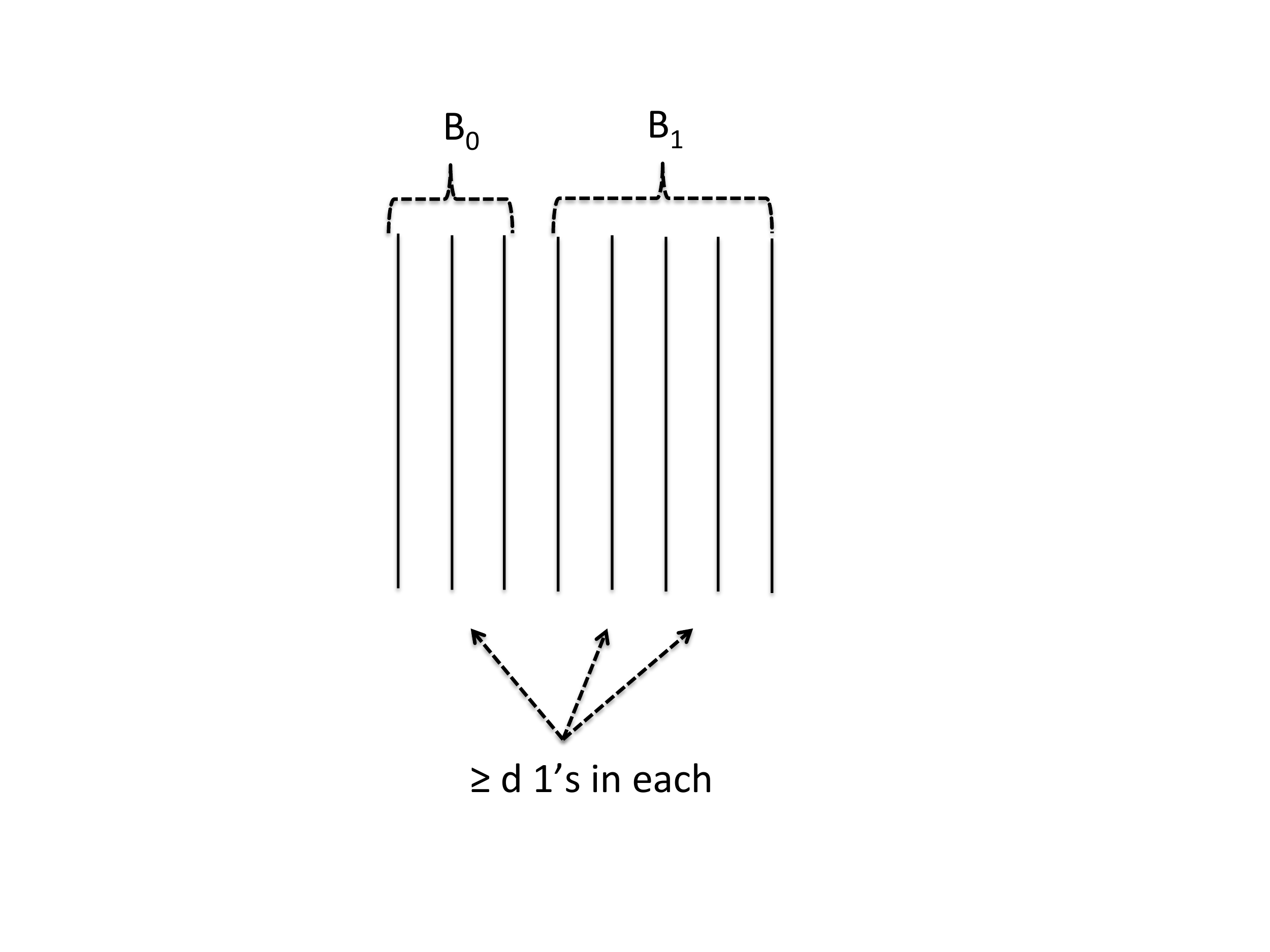}}
\qquad\qquad
  \subbottom[When Alice Speaks]{%
    \includegraphics[width=0.32\textwidth]{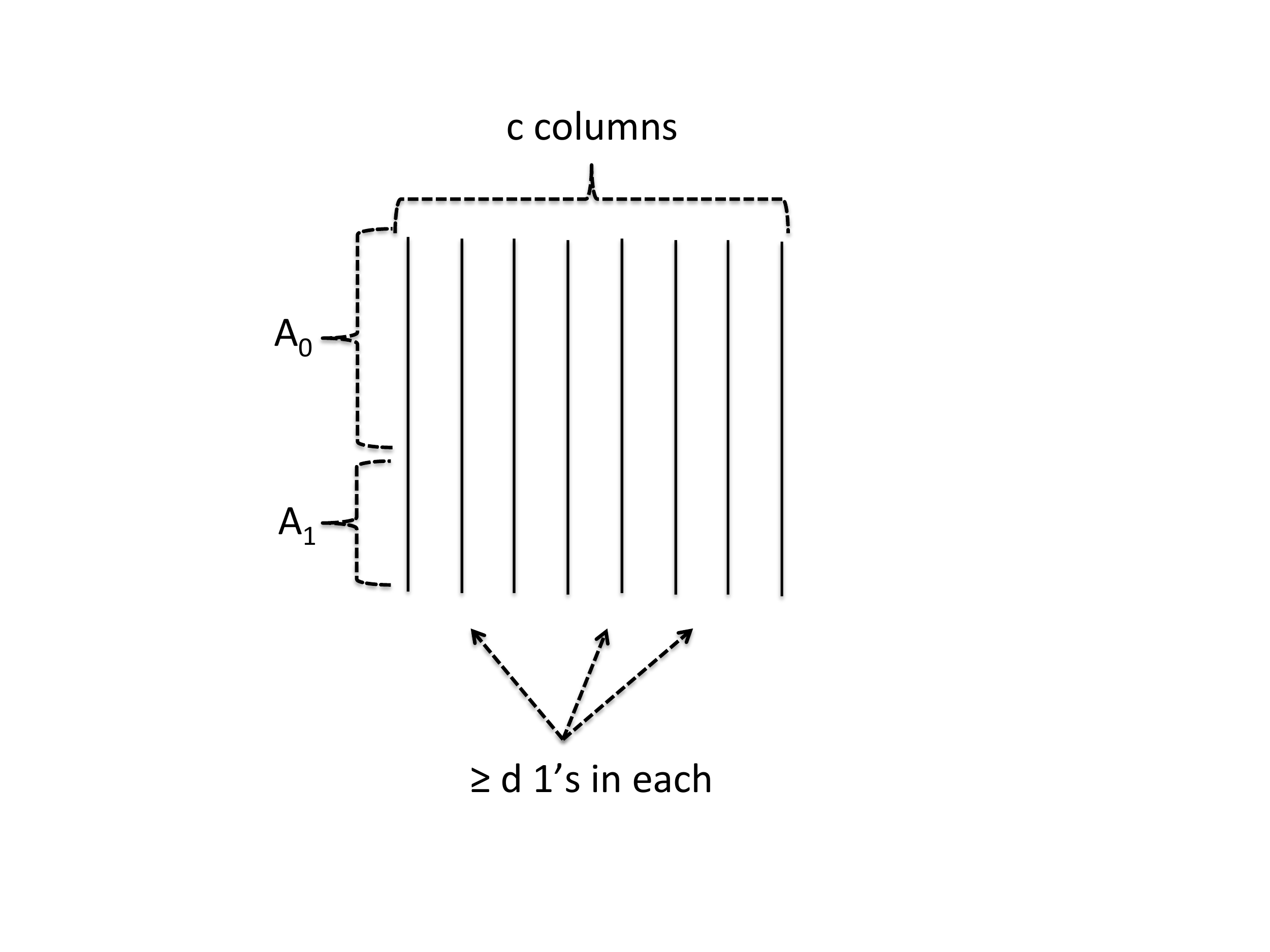}}
\caption[Proof of the Richness Lemma (Lemma~\ref{l:rich})]{Proof of the Richness Lemma (Lemma~\ref{l:rich}).  When Bob
  speaks, at least one of the corresponding subrectangles has at least
$c/2$ columns that each contain at least $d$ 1-inputs.
When Alice speaks, at least one of the corresponding
subrectangles has at least
$c/2$ columns that each contain at least $d/2$ 1-inputs.}
\label{f:rich}
\end{figure}

By induction, we conclude that there is a 1-input $\inputs$ such that,
at each point of the protocol's execution on $\inputs$ (with Alice and
Bob having sent $\alpha$ and $\beta$ bits so-far, respectively), the
current rectangle $A \times B$ of inputs consistent with the
protocol's execution has at least $v/2^{\alpha+\beta}$ columns (in $A$) that
each contain at least $u/2^{\alpha}$ 1-inputs (among the rows of $B$).  Since
the protocol terminates with a monochromatic rectangle of $M(f)$ and with
$\alpha \le a$, $\beta \le b$, the proof is complete.
\end{prevproof}

Lemma~\ref{l:rich} and some simple calculations prove
Theorem~\ref{t:index_asym}.

\vspace{.1in}
\noindent
\begin{prevproof}{Theorem}{t:index_asym}
We first observe that the matrix $M(\index)$ has a set of
$\binom{n}{n/2}$ columns that each contain $n/2$ 1-inputs:
choose the columns (i.e., inputs for Bob) that correspond to
subsets $S \sse \{1,2,\ldots,n\}$ of size $n/2$ and, for such a
column $S$, consider the rows (i.e., indices for Alice) that
correspond to the elements of $S$.

Now suppose for contradiction that there is a protocol that solves
\index in which Alice always sends at most $a = \delta \log_2 n$ bits and
Bob always sends at most $b = n^{1-2\delta}$ bits.  Invoking
Lemma~\ref{l:rich} proves that the matrix $M(\index)$ has a
1-rectangle of size at least 
\[
\frac{n}{2} \cdot \underbrace{\frac{1}{2^{a}}}_{=n^{-\delta}} \times
\underbrace{\binom{n}{n/2}}_{\approx 2^n/\sqrt{n}} \cdot
\underbrace{\frac{1}{2^{a+b}}}_{=n^{-\delta} \cdot 2^{-n^{1-2\delta}}}
  = \tfrac{1}{2}n^{1-\delta} \times c_2 2^{n-n^{1-2\delta}}, 
\]
where $c_2 > 0$ is a constant independent of~$n$.  (We're using
here that $n \ge N(\delta)$ is sufficiently large.)

On the other hand, how many columns can there be in a 1-rectangle with
$\tfrac{1}{2}n^{1-\delta}$ rows?  If these rows correspond to the set
$S \sse \{1,2,\ldots,n\}$ of indices, then every column of the
1-rectangle must correspond to a superset of $S$.  There are
\[
2^{n-|S|} = 2^{n - \tfrac{1}{2}n^{1-\delta}}
\]
of these.  But
\[
c_2 2^{n-n^{1-2\delta}} > 2^{n - \tfrac{1}{2}n^{1-\delta}},
\]
for sufficiently large~$n$, providing the desired contradiction.
\end{prevproof}

Does the asymmetric communication complexity lower bound in
Theorem~\ref{t:index_asym} have any interesting implications?
By Lemma~\ref{l:qd}, a data structure that supports membership queries
with query time~$t$, space $s$, and word size~$w$ induces a
communication protocol for \index in which Alice sends at
most $t \log_2 s$ bits and Bob sends at most $tw$ bits.
For example, suppose $t = \Theta(1)$ and $w$
at most poly-logarithmic in~$n$.  
Since Bob only sends $tw$ bits in
the induced protocol for \index, he certainly does not send
$n^{1-2\delta}$ bits.\footnote{Here $\delta \in (0,1)$ is a
constant and $n \ge N(\delta)$ is sufficiently large.
Using the version of Theorem~\ref{t:index_asym} with ``$n^{1-2\delta}$''
replaced by ``$n^{1-c\delta}$'' for an arbitrary constant $c > 1$, we
can take $\delta$ arbitrarily close to~1.}
Thus, Theorem~\ref{t:index_asym} implies that Alice must send at least
$\delta \log_2 n$ bits in the protocol.  This implies that
\[
t \log_2 s \ge \delta \log_2 n
\]
and hence $s \ge n^{\delta/t}$.  The good news is that this is a
polynomial lower bound for every constant~$t$.  The bad news is that
even for $t=1$, this argument will never prove a super-linear lower
bound.  We don't expect to prove a super-linear lower bound
in the particular case of the membership problem, since there is a
data structure for this problem with constant query time and linear
space (e.g., perfect hashing~\citep{FKS}).
For the $(1+\eps)$-approximate nearest neighbor problem, on the other
hand, we want to prove a lower bound of the form
$n^{\Omega(\eps^{-2})}$.  To obtain such a super-linear lower bound,
we need to reduce from a communication problem harder than \index ---
or rather, a communication problem in which Alice's input is bigger
than $\log_2 n$ and in which she still reveals almost her entire input
in every communication protocol induced by a constant-query data
structure.

\subsubsection{\textsc{$(k,\ell)$-Disjointness}}

A natural idea for modifying \index so that Alice's input is bigger
is to give Alice {\em multiple} indices; Bob's input remains an
$n$-bit vector.  The new question is whether or not for {\em at least 
  one} of Alice's indices, Bob's input is a~1.\footnote{This is
  reminiscent of 
  a ``direct sum,'' where Alice and Bob are given multiple
  instances of a communication problem and have to solve all of them.
  Direct sums are a fundamental part of communication
  complexity, but we won't have time to discuss them.}
This problem is essentially equivalent --- up to the details of how
Alice's input is encoded --- to \disj.

This section considers the special case of \disj where the sizes of
the sets given to Alice and Bob are restricted.  
If we follow the line of argument 
in the proof of Theorem~\ref{t:index_asym}, the best-case scenario 
is a space lower bound of
$2^{\Omega(a)}$, where $a$ is the length of Alice's input; see also
the proofs of Corollary~\ref{cor:simplelb} and Theorem~\ref{t:lb} at
the end of the lecture.
This is why the \index problem (where Alice's set is a 
singleton and $a = \log_2 n$)
cannot lead --- at least via Lemma~\ref{l:qd} --- to super-linear
data structure lower bounds.  The minimum $a$ necessary for the
desired space lower bound of $n^{\Omega(\eps^{-2})}$ is $\eps^{-2}
\log_2 n$.  This motivates considering instances of \disj in which Alice
receives a set of size $\eps^{-2}$.  Formally, we define \kldisj as
the communication problem in which Alice's input is a set of size $k$
(from a universe $U$) and Bob's input is a set of size $\ell$ (also
from $U$), and the goal is to determine whether or not the sets are
disjoint (a 1-input) or not (a 0-input).

We next extend the proof of Theorem~\ref{t:index_asym} to show the
following.
\begin{theorem}[\citealt{AIP06,M+98}]\label{t:kldisj}
For every $\eps,\delta > 0$ and every sufficiently large $n \ge
N(\eps,\delta)$, in every communication protocol that
solves \epsdisj with a universe of size $2n$, either:
\begin{itemize}

\item [(i)] Alice sends at least $\frac{\delta}{\eps^2} \log_2 n$
  bits; or

\item [(ii)] Bob sends at least $n^{1-2\delta}$ bits.

\end{itemize}
\end{theorem}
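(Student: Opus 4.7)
The plan is to mimic the proof of Theorem~\ref{t:index_asym} but with the parameters generalized to account for Alice holding a $k$-subset (with $k = 1/\eps^2$) rather than a single index. I will again invoke the Richness Lemma (Lemma~\ref{l:rich}), so the three things to pin down are: (a) a good pair $(u,v)$ of richness parameters; (b) upper bounds on the dimensions of any $1$-rectangle of the \epsdisj matrix; and (c) the arithmetic showing that (a) and (b) together contradict the hypothesized protocol.

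For (a), every column of the matrix --- that is, every $n$-subset $S \subseteq [2n]$ chosen by Bob --- yields exactly $\binom{n}{k}$ disjoint $k$-subsets for Alice. So the matrix is $(u,v)$-rich with $u = \binom{n}{k}$ and $v = \binom{2n}{n}$. If an $[a,b]$-protocol exists with $a = (\delta/\eps^2)\log_2 n = \delta k \log_2 n$ and $b = n^{1-2\delta}$, Lemma~\ref{l:rich} produces a $1$-rectangle $A \times B$ with $|A| \ge \binom{n}{k}/2^a$ and $|B| \ge \binom{2n}{n}/2^{a+b}$.

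For (b), the key observation is that if $A \times B$ is a $1$-rectangle and $T = \bigcup_{x \in A} x \subseteq [2n]$ is the union of all $k$-sets appearing in $A$, then every $n$-set in $B$ must avoid $T$, so $B$ lives inside $\binom{[2n]\setminus T}{n}$; also $A$ lives inside $\binom{T}{k}$. Writing $t = |T|$, we get $|A| \le \binom{t}{k}$ and $|B| \le \binom{2n-t}{n}$.

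For (c), the two inequalities from (a) and (b) become $\binom{n}{k}/\binom{t}{k} \le 2^a$ and $\binom{2n}{n}/\binom{2n-t}{n} \le 2^{a+b}$. A short calculation using $\binom{n}{k}/\binom{t}{k} \ge (n/t)^{k}$ (up to lower-order factors, using that $k$ is constant and $n$ is large) converts the first into $t \ge n \cdot 2^{-a/k} = n^{1-\delta}$; and telescoping $\binom{2n}{n}/\binom{2n-t}{n}$ as the product $\prod_{i=0}^{t-1}(2n-i)/(n-i) \ge 2^{t}$ (the factors are each at least~$2$ when $t \ll n$, which will hold a posteriori) converts the second into $t \le a+b$. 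The contradiction follows because $n^{1-\delta} \gg (\delta/\eps^2)\log_2 n + n^{1-2\delta}$ once $n \ge N(\eps,\delta)$. The main obstacle is just executing the two binomial-coefficient estimates cleanly enough that they survive the constant-factor slack and the regime $k = 1/\eps^2$; the Stirling-style bound $\binom{n}{k} \le (en/k)^k$ used earlier in the text (Lecture~\ref{cha:lower-bounds-one}) combined with the matching lower bound $\binom{n}{k} \ge (n/k)^k$ handles the first estimate, and the second is an elementary telescoping argument. As in Theorem~\ref{t:index_asym}, the extension from deterministic to randomized two-sided-error protocols follows the template of~\cite{M+98} and does not require new ideas.
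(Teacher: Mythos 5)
Your proposal is correct and follows essentially the same line of argument as the paper's proof: same richness parameters $\bigl(\binom{n}{\eps^{-2}}, \binom{2n}{n}\bigr)$, same application of the Richness Lemma, and the same key observation that the union of Alice's sets in the extracted $1$-rectangle must be avoided by every set of Bob's. The only difference is cosmetic arithmetic in the final step (your telescoping bound $\binom{2n}{n}/\binom{2n-t}{n} \ge 2^t$ giving $t \le a+b$ directly, versus the paper's comparison of two bounds on $|B|$), which does not change the substance of the argument.
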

As with Theorem~\ref{t:index_asym}, 
we'll prove Theorem~\ref{t:kldisj} for
the special case of deterministic protocols.  The theorem also holds
for randomized protocols with two-sided error~\citep{AIP06}, and we'll
use this stronger result in Theorem~\ref{t:lb} below.  (Our upper bound
in Section~\ref{ss:dec} is randomized, so we really want a randomized
lower bound.)  The proof for randomized protocols
argues along the lines of the $\Omega(\sqrt{n})$ lower bound for the
standard version of \disj, and is not as hard as the stronger
$\Omega(n)$ lower bound (recall the discussion in
Section~\ref{ss:disj_proof}).

\vspace{.1in}
\noindent
\begin{prevproof}{Theorem}{t:kldisj}
Let $M$ denote the 0-1 matrix corresponding to the \epsdisj function.
Ranging over all subsets of $U$ of size $n$, and for a given such set
$S$, over all subsets of $U \sm S$ of size $\tfrac{1}{\eps^2}$, we see
that $M$ has at least $\binom{2n}{n}$ columns that each have at least
$\binom{n}{\eps^{-2}}$ 1-inputs.

Assume for contraction that there is a communication protocol for
\epsdisj such that neither~(i) nor~(ii) holds.
By the Richness Lemma (Lemma~\ref{l:rich}), there
exists a 1-rectangle $A \times B$ where
\begin{equation}\label{eq:A}
|A| = \binom{n}{\eps^{-2}} \cdot 2^{-\delta \tfrac{\log
    n}{\eps^2}} \ge
(\eps^2 n)^{\tfrac{1}{\eps^2}} \cdot n^{-\tfrac{\delta}{\eps^2}} =
\eps^{\tfrac{2}{\eps^2}} n^{\tfrac{1}{\eps^2}(1-\delta)}
\end{equation}
and
\begin{equation}\label{eq:B}
|B| = 
\underbrace{\binom{2n}{n}}_{\approx 2^{2n}/\sqrt{2n}} \cdot 2^{-\delta
  \tfrac{\log  n}{\eps^2}} \cdot 
2^{-n^{1-2\delta}} 
\ge 
2^{2n-n^{1-3\delta/2}},
\end{equation}
where in~\eqref{eq:B} we are using that $n$ is sufficiently large.

Since $A \times B$ is a rectangle, $S$ and $T$ are disjoint for every
choice of $S \in A$ and $T \in B$.  This implies that $\cup_{S \in A}
S$ and $\cup_{T \in B} T$ are disjoint sets.
Letting 
\[
s = \left| \cup_{S \in A} S \right|,
\]
we have
\begin{equation}\label{eq:A2}
|A| \le \binom{s}{\eps^{-2}} \le s^{1/\eps^2}.
\end{equation}
Combining~\eqref{eq:A} and~\eqref{eq:A2} implies that
\[
s \ge \eps^2 n^{1-\delta}.
\]
Since every subset $T \in B$ avoids the $s$ elements in $\cup_{S \in
  A} S$,
\begin{equation}\label{eq:B2}
|B| \le 2^{2n-s} \le 2^{2n-\eps^2n^{1-\delta}}.
\end{equation}
Inequalities~\eqref{eq:B} and~\eqref{eq:B2} furnish the desired
contradiction.
\end{prevproof}

The upshot is that, 
for the goal of proving a communication lower bound of
$\Omega(\eps^{-2} \log n)$ (for Alice, in a \qd problem)
and a consequent data structure space lower
bound of $n^{\Omega(\eps^{-2})}$,
\epsdisj is a promising candidate to reduce from.

\subsection[Lower Bound for the Approximate Nearest
  Neighbor Problem]{Lower Bound for the $(1+\eps)$-Approximate Nearest
  Neighbor Problem}\label{ss:lb}


The final major step is to show that \epsdisj, which is hard by
Theorem~\ref{t:kldisj}, reduces to the \qd problem for the decision
version of the $(1+\eps)$-nearest neighbor problem.

\subsubsection{A Simpler Lower Bound of $n^{\Omega(\eps^{-1})}$}

We begin with a simpler reduction that leads to a suboptimal but still 
interesting space lower bound of $n^{\Omega(\eps^{-1})}$.  In this
reduction, we'll reduce from \edisj rather than \epsdisj.  
Alice is given a $\tfrac{1}{\eps}$-set $S$ (from a universe $U$ of
size $2n$), which we need to map to a
nearest-neighbor query.  Bob is given an $n$-set $T \sse U$, which we
need to map to a point set.

Our first idea is to map the input to \edisj to a nearest-neighbor
query in the $2n$-dimensional hypercube $\zo^{2n}$.  
Alice performs the 
obvious mapping of her input, from the set $S$ to a query point
$\bfq$ that is the characteristic vector of $S$ (which lies in
$\zo^{2n}$).  Bob maps his input $T$ to the point set $P = \{ \bfe_i
\,:\, i \in T \}$, where $\bfe_i$ denotes the characteristic vector of
the singleton $\{ i \}$ (i.e., the $i$th standard basis vector).

If the sets $S$ and $T$ are disjoint, then the corresponding query $\bfq$
has Hamming distance $\tfrac{1}{\eps}+1$ from every point in the
corresponding point set $P$.  If $S$ and $T$ are not disjoint, then
there exists a point $\bfe_i \in P$ such that the Hamming distance
between $\bfq$ and $\bfe_i$ is $\tfrac{1}{\eps} - 1$.  Thus, the \edisj
problem reduces to the $(1+\eps)$-approximate nearest neighbor problem
in the $2n$-dimensional Hamming cube, where $2n$ is also the size of
the universe from which the point set is drawn.

We're not done, because extremely high-dimensional nearest neighbor
problems are 
not very interesting.  The convention in nearest neighbor
problems is to assume that the word size $w$ --- recall
Section~\ref{ss:cpm} --- is at least the dimension.  
When $d$ is at least the size of the universe from which points are drawn, 
an entire point set can be described using a single word!  This means
that our reduction so far cannot possibly yield an interesting lower
bound in the cell probe model.  We fix this issue by applying
dimension reduction --- just as in our upper bound in
Section~\ref{ss:dec} --- to the instances produced by the above
reduction.

Precisely, we can use the following embedding lemma.
\begin{lemma}[Embedding Lemma \#1]\label{l:emb1}
There exists a randomized function $f$ from $\zo^{2n}$ to $\zo^d$ 
with $d = \Theta(\tfrac{1}{\eps^2} \log n)$
and a
constant $\alpha > 0$ such that, 
for every
  set $P \sse \zo^{2n}$ of $n$ points and query $\bfq \in \zo^{2n}$
  produced by the reduction above, with probability at least
  $1-\tfrac{1}{n}$:
\begin{itemize}

\item [(1)] if the nearest-neighbor distance between $\bfq$ and $P$ is
  $\tfrac{1}{\eps} -1$, then the nearest-neighbor distance between
  $f(\bfq)$ and $f(P)$ is at most $\alpha$;

\item [(2)] if the nearest-neighbor distance between $\bfq$ and $P$ is
  $\tfrac{1}{\eps} +1$, then the nearest-neighbor distance between
  $f(\bfq)$ and $f(P)$ is at least $\alpha(1+h(\eps))$, where $h(\eps) >
  0$ is a constant depending on $\eps$ only.

\end{itemize}
\end{lemma}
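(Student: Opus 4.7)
The plan is to take $f$ to be precisely the biased random inner product hash $h_R$ from Section~\ref{s:ub_ds}, with bias parameter $L = \tfrac{1}{\eps} - 1$ and with $d = C\eps^{-2}\log n$ output coordinates for a sufficiently large constant $C$. Explicitly, $f$ is determined by $d$ independent random vectors $\bfr_1,\ldots,\bfr_d \in \zo^{2n}$ each of whose entries are i.i.d.\ Bernoulli$(1/(2L))$, and the $i$th bit of $f(\bfx)$ is $\ip{\bfr_i}{\bfx} \bmod 2$. I will take the threshold to be $\alpha = \tfrac{1}{2}(p(L) + p(L+2))\cdot d$, where $p(\Delta) := \tfrac{1}{2}(1-(1-1/L)^\Delta)$, and let $h(\eps) = \Theta(\eps)$ be the resulting gap factor.

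First I would reuse the per-coordinate calculation in~\eqref{eq:gap} to observe that, for any fixed pair $(\bfq,\bfp)$ with $\ell_H(\bfq,\bfp) = \Delta$, the coordinates of $f(\bfq)$ and $f(\bfp)$ disagree independently with probability $p(\Delta)$, so $\ell_H(f(\bfq),f(\bfp))$ is a sum of $d$ i.i.d.\ Bernoulli$(p(\Delta))$ random variables with mean $p(\Delta)\,d$. A short computation (as in Section~\ref{s:ub_ds}, using $1-x \in [e^{-2x},e^{-x}]$) gives
\[
p(L+2)-p(L) \;=\; \tfrac{1}{2}(1-1/L)^L\bigl(1-(1-1/L)^2\bigr) \;=\; \Theta(\eps),
\]
so the expected image distances for the two cases $\Delta = L$ and $\Delta = L+2$ differ by $(p(L+2)-p(L))\cdot d = \Theta(\eps^{-1}\log n)$.

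Next I would apply a Chernoff/Hoeffding bound to the sum $\ell_H(f(\bfq),f(\bfp))$ to conclude that, for any fixed pair, the actual image distance deviates from its expectation by more than a quarter of the above gap with probability at most $1/n^2$, provided the constant $C$ is large enough. Indeed, for a deviation of size $t = \Theta(\eps^{-1}\log n)$ in a sum of $d = \Theta(\eps^{-2}\log n)$ bounded variables, Hoeffding's inequality gives failure probability at most $\exp(-\Omega(t^2/d)) = \exp(-\Omega(\log n))$. A union bound over the $n$ points of $P$ then gives the desired $1-1/n$ success probability, after which the image distance for every $\bfp \in P$ is concentrated in a window of width less than the gap around its expectation.

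With these concentrations in hand, both conclusions fall out: in case~(1) some $\bfp \in P$ has $\ell_H(\bfq,\bfp) = L$ and hence $\ell_H(f(\bfq),f(\bfp)) \le \alpha$; in case~(2) \emph{every} $\bfp \in P$ has $\ell_H(\bfq,\bfp) \ge L+2$ and hence $\ell_H(f(\bfq),f(\bfp)) \ge \alpha(1+h(\eps))$ for $h(\eps) = \Theta(\eps)$. The main delicate point is ensuring that the per-coordinate collision gap, which is only $\Theta(\eps)$ here rather than the $\Theta(1)$ gap enjoyed by the Section~\ref{s:ub_ds} analysis, is still large enough to beat Chernoff fluctuations: the relevant distances $L$ and $L+2$ differ only by a $(1+\Theta(\eps))$ factor, and this is precisely why $d$ must be $\Theta(\eps^{-2}\log n)$ rather than $\Theta(\log n)$.
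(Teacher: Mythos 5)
Your proof is correct and follows exactly the route the paper has in mind: the paper itself dismisses Lemma~\ref{l:emb1} as ``an almost immediate consequence of Corollary~\ref{cor:protocol}'' with $d = \Theta(\eps^{-2}\log n)$ biased random inner products, and you are simply unpacking that corollary's Chernoff/union-bound argument with the particular bias $L = \tfrac{1}{\eps}-1$ and distance thresholds $L$ and $L+2$. (One tiny quibble on your closing aside: the per-coordinate gap in the Section~\ref{s:ub_ds} analysis is also $\Theta(\eps)$, not $\Theta(1)$ --- Corollary~\ref{cor:protocol} already needs $s = \Theta(\eps^{-2}\log n)$ there for the same reason.)
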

Lemma~\ref{l:emb1} is an almost immediate consequence of
Corollary~\ref{cor:protocol} --- the map $f$ just takes $d =
\Theta(\eps^{-2} \log n)$ random inner products with $2n$-bit vectors,
where the probability of a ``1'' is roughly $\eps/2$.  We used this
idea in Section~\ref{ss:dec} for a data structure --- here we're using
it for a lower bound!

Composing our initial reduction with Lemma~\ref{l:emb1} yields the
following.
\begin{corollary}\label{cor:eps}
Every randomized asymmetric communication lower bound for \edisj
carries over to the \qd problem for the $(1+\eps)$-approximate nearest
neighbor problem in $d = \Omega(\eps^{-2} \log n)$ dimensions.
\end{corollary}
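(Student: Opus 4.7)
The plan is to give a direct reduction: any protocol for the \qd problem associated with the decision version of the $(1+\eps)$-approximate nearest neighbor problem in dimension $d = \Theta(\eps^{-2} \log n)$ induces a protocol for \edisj with identical per-player communication, modulo an extra $1/n$ failure probability that can be absorbed by constant-factor amplification. The two ingredients are already assembled in the preceding discussion, namely (i) the encoding $S \mapsto \bfq$ (characteristic vector), $T \mapsto P = \{\bfe_i : i \in T\}$ that turns a \edisj instance on universe size $2n$ into a decision nearest-neighbor instance in $\zo^{2n}$, and (ii) the randomized embedding $f : \zo^{2n} \to \zo^d$ from Lemma~\ref{l:emb1}. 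The reduction just composes them.

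Concretely, given an instance $(S,T)$ of \edisj, Alice and Bob first use the public coins to sample $f$; this costs no communication. Alice then locally computes $f(\bfq)$ and Bob locally computes $f(P)$, after which they invoke the assumed \qd protocol for the $(1+\eps)$-approximate nearest neighbor problem on query $f(\bfq)$, database $f(P)$, and threshold $L = \alpha$ (the constant from Lemma~\ref{l:emb1}). They output ``disjoint'' iff the \qd protocol reports that all database points lie strictly beyond distance $L$ from the query. Since each player forms their own image privately before calling the \qd protocol, Alice's and Bob's per-player communication counts in the induced \edisj protocol equal those of the \qd protocol verbatim.

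For correctness, the initial (no-embedding) reduction puts the Hamming nearest-neighbor distance in $\zo^{2n}$ at exactly $\tfrac{1}{\eps} - 1$ on ``not disjoint'' inputs and $\tfrac{1}{\eps} + 1$ on ``disjoint'' inputs. Lemma~\ref{l:emb1} then guarantees, with probability at least $1 - 1/n$ over the choice of $f$, a multiplicative gap of $1 + h(\eps)$ between the two cases in $\zo^d$ around the threshold $\alpha$, which is precisely the promise under which the \qd protocol is required to be correct (up to rescaling $\eps$ by the absolute factor that relates it to $h(\eps)$). A Union Bound over the embedding failure and the \qd protocol's failure, together with a standard constant-factor repetition-and-majority boost, restores the error tolerance demanded by the randomized version of Theorem~\ref{t:kldisj}. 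There is no real obstacle in the argument: the whole point of having isolated the two ingredients in the previous subsections is to make this final gluing step essentially mechanical, with the only thing to verify being that the per-player communication costs are preserved on each side of the reduction.
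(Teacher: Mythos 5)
Your proof is correct and takes essentially the same route as the paper: compose the combinatorial encoding of \edisj into a nearest-neighbor instance in $\zo^{2n}$ with the public-coin embedding of Lemma~\ref{l:emb1}, then simulate the assumed \qd protocol, noting that per-player communication is unchanged and that the $\eps$ parameter must be rescaled so the gap $1+h(\eps')$ matches the promise of the decision problem. The only cosmetic difference is that you invoke repetition-and-majority amplification, whereas the paper simply absorbs the extra $1/n$ embedding failure into the two-sided error bound (which stays a constant below $1/2$ without amplification), and flags in a footnote that the induced \edisj protocol is randomized even if the \qd protocol is deterministic, which is why the randomized form of Theorem~\ref{t:kldisj} is needed.
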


\begin{proof}
To recap our ideas, the reduction works as follows.  Given inputs to
\edisj, Alice interprets her input as a query and Bob interprets his
input as a point set (both in $\zo^{2n}$) as described at the
beginning of the section.  They use shared randomness to choose the
function $f$ of Lemma~\ref{l:emb1}
and use it to map their inputs to $\zo^d$ with $d =
\Theta(\eps^{-2} \log n)$.  They run the assumed protocol for the \qd
problem for the $(1+\eps)$-approximate nearest neighbor
problem in $d$ dimensions.
Provided the hidden constant in the definition of~$d$ is sufficiently large,
correctness (with high probability) is guaranteed by
Lemma~\ref{l:emb1}.  (Think of Lemma~\ref{l:emb1} as being invoked with
a parameter $\eps'$ satisfying $h(\eps') = \eps$.)
The amount of
communication used by the \edisj protocol is
identical to that of the \qd protocol.\footnote{The \edisj protocol
  is   randomized with two-sided error
even if the \qd protocol is deterministic.  This
  highlights our need for Theorem~\ref{t:kldisj} in its full
  generality.}
\end{proof}

Following the arguments of Section~\ref{sss:index} translates our
asymmetric communication complexity
lower bound (via Lemma~\ref{l:qd}) to a data
structure space lower bound.

\begin{corollary}\label{cor:simplelb}
Every data structure for the decision version of the
$(1+\eps)$-approximate nearest neighbors problem with query time $t =
\Theta(1)$ and word size $w = O(n^{1-\delta})$ for constant $\delta >
0$ uses space $s = n^{\Omega(\eps^{-1})}$.
\end{corollary}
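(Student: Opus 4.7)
The plan is to chain together the machinery already assembled: the cell-probe-to-protocol simulation of Lemma~\ref{l:qd}, the reduction of Corollary~\ref{cor:eps}, and an analog of Theorem~\ref{t:kldisj} tailored to \edisj rather than \epsdisj. Given a data structure with the stated parameters, Lemma~\ref{l:qd} produces a communication protocol for the \qd problem on the decision version of the $(1+\eps)$-approximate nearest neighbor problem in which Alice sends at most $t \log_2 s = O(\log s)$ bits and Bob sends at most $tw = O(n^{1-\delta})$ bits. Composing with Corollary~\ref{cor:eps} turns this into a randomized asymmetric protocol for \edisj on a universe of size $2n$, with identical per-player communication bounds (modulo constants absorbed into~$\delta$).

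What remains is an \edisj analog of Theorem~\ref{t:kldisj}: for every $\delta' > 0$ and large enough~$n$, any randomized two-sided-error protocol for \edisj requires Alice to send at least $\frac{\delta'}{\eps}\log_2 n$ bits or Bob to send at least $n^{1-2\delta'}$ bits. I would prove this by repeating the richness argument of Theorem~\ref{t:kldisj} with the parameter $1/\eps^2$ replaced by $1/\eps$. The matrix $M$ for \edisj is $(u,v)$-rich with $u = \binom{n}{1/\eps}$ and $v = \binom{2n}{n}$ (choose Bob's set $T$ of size $n$, then choose Alice's $1/\eps$-subset $S$ inside its complement). If neither side of the dichotomy held, the Richness Lemma (Lemma~\ref{l:rich}) would produce a 1-rectangle $A \times B$ with
\[
|A| \ge \eps^{1/\eps}\, n^{(1-\delta')/\eps}
\qquad\text{and}\qquad
|B| \ge 2^{2n - n^{1-3\delta'/2}},
\]
where the bound on $|B|$ uses Stirling for $\binom{2n}{n}$ and absorbs the $\delta' \log n / \eps$ term. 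Disjointness across the rectangle forces $s := |\bigcup_{S\in A} S|$ to satisfy $|A|\le s^{1/\eps}$, hence $s \ge \eps\, n^{1-\delta'}$, while every $T \in B$ avoids these $s$ elements, giving $|B| \le 2^{2n-s} \le 2^{2n - \eps\, n^{1-\delta'}}$. For $n$ large this contradicts the lower bound on $|B|$, since $\eps\, n^{1-\delta'} \gg n^{1-3\delta'/2}$. (The randomized version follows by the same modifications as in the \epsdisj case alluded to in the excerpt.)

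With this \edisj lower bound in hand, plug in the parameters of the induced protocol. Choose $\delta' < \delta/2$ so that $tw = O(n^{1-\delta}) = o(n^{1-2\delta'})$; Bob's communication therefore falls below the threshold of case~(ii), so case~(i) must apply. Thus
\[
t\log_2 s \;\ge\; \frac{\delta'}{\eps}\log_2 n,
\]
and since $t = \Theta(1)$ we conclude $s \ge n^{\Omega(1/\eps)}$, which is the claimed bound.

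The main obstacle I anticipate is verifying the \edisj analog of Theorem~\ref{t:kldisj}, i.e., making sure the Richness-Lemma calculation still closes with $k = 1/\eps$ instead of $k = 1/\eps^2$ and that the universe size $2n$ used in Corollary~\ref{cor:eps} is compatible. Everything else (Lemma~\ref{l:qd}, the embedding step inside Corollary~\ref{cor:eps}, and the final algebra of extracting $s$) is routine. Note also that the argument inherently cannot give a stronger exponent than $\Theta(1/\eps)$ via this route, because the \edisj hard instances have $|S| = 1/\eps$; obtaining the tight $n^{\Omega(\eps^{-2})}$ bound requires the \epsdisj reduction promised in Theorem~\ref{t:lb}, not the simpler \edisj reduction used here.
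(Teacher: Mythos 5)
Your proposal is correct and follows the same route as the paper: Lemma~\ref{l:qd} turns the data structure into an asymmetric protocol, Corollary~\ref{cor:eps} reduces to \edisj, and a richness argument gives the lower bound of $\Omega(\eps^{-1}\log n)$ on Alice's side. The only difference is that you are more careful than the paper in explicitly stating and verifying the \edisj analog of Theorem~\ref{t:kldisj} (with $k=1/\eps$ in place of $k=1/\eps^2$), whereas the paper invokes Theorem~\ref{t:kldisj} directly even though it is stated for \epsdisj, implicitly relying on the fact that the richness calculation goes through verbatim for general $(k,n)$-disjointness.
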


\begin{proof}
Since $tw = O(n^{1-\delta})$, in the induced communication protocol
for the \qd problem (and hence \edisj, via Corollary~\ref{cor:eps}),
Bob sends a sublinear number of bits.  
Theorem~\ref{t:kldisj} then implies that Alice sends at least
$\Omega(\eps^{-1} \log n)$ bits, and so (by Lemma~\ref{l:qd}) we
have $t \log_2 s = \Omega(\eps^{-1} \log n)$.  Since $t = O(1)$,
this implies that $s = n^{\Omega(\eps^{-1})}$.
\end{proof}

\subsubsection{The $n^{\Omega(\eps^{-2})}$ Lower Bound}

The culmination of this lecture is the following.
\begin{theorem}\label{t:lb}
Every data structure for the decision version of the
$(1+\eps)$-approximate nearest neighbors problem with query time $t =
\Theta(1)$ and word size $w = O(n^{1-\delta})$ for constant $\delta >
0$ uses space $s = n^{\Omega(\eps^{-2})}$.
\end{theorem}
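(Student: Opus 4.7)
The plan is to mirror the proof of Corollary~\ref{cor:simplelb} exactly, except reducing from \epsdisj rather than \edisj so that the stronger asymmetric communication lower bound from Theorem~\ref{t:kldisj} (Alice must send $\Omega(\eps^{-2} \log n)$ bits in any protocol where Bob sends $o(n)$ bits) upgrades the resulting space lower bound from $n^{\Omega(\eps^{-1})}$ to $n^{\Omega(\eps^{-2})}$. Concretely, given a data structure with query time $t = \Theta(1)$, space $s$, and word size $w = O(n^{1-\delta})$ for the decision version of $(1+\eps)$-NN in dimension $d = \Theta(\eps^{-2}\log n)$, Lemma~\ref{l:qd} converts it into a protocol for the induced \qd problem in which Alice sends at most $t \log_2 s$ bits and Bob sends at most $tw = O(n^{1-\delta})$ bits. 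The rest of the proof is a reduction from \epsdisj to this \qd problem whose communication cost matches the protocol's, after which the randomized form of Theorem~\ref{t:kldisj} forces $t \log_2 s = \Omega(\eps^{-2} \log n)$, hence $s = n^{\Omega(\eps^{-2})}$ when $t = O(1)$.

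The non-routine ingredient is an embedding lemma analogous to Lemma~\ref{l:emb1}, but tailored to \epsdisj: I need a publicly random map $f$ that takes an input $(S,T)$ of \epsdisj (with $|S|=1/\eps^2$, $|T|=n$, universe $[2n]$) to a query $\bfq = f(S) \in \zo^d$ and a point set $P = f(T) \sse \zo^d$ with $d = \Theta(\eps^{-2} \log n)$ and $|P|=n$, such that with probability at least $1-1/n$ the nearest-neighbor distance from $\bfq$ to $P$ is at most $L$ when $|S \cap T|=1$ and strictly greater than $(1+\eps)L$ when $S \cap T = \emptyset$, for some threshold~$L$. Given such an $f$, the reduction is immediate: Alice and Bob use shared randomness to compute $f(S)$ and $f(T)$ respectively and then run the assumed \qd protocol on $(\bfq,P)$, at communication cost identical to that of the \qd protocol up to constants.

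The hard part will be producing this embedding in only $d = \Theta(\eps^{-2} \log n)$ dimensions. The naive choice used for Corollary~\ref{cor:simplelb} (Alice sends the characteristic vector of $S$, Bob sends singleton indicators $\bfe_i$ for $i \in T$) yields Hamming distances $1/\eps^2 \pm 1$, a multiplicative gap of only $1 + O(\eps^2)$. The biased random-inner-product hash of Section~\ref{ss:dec}, which turns an input gap $1+c$ into an output gap $1+\Theta(c)$ while consuming $\Theta(c^{-2}\log n)$ coordinates, would therefore demand $d = \Theta(\eps^{-4}\log n)$ coordinates to preserve a $(1+\eps^2)$-gap --- a factor of $\eps^{-2}$ too many, and the resulting instance would be only a $(1+\eps^2)$-NN problem rather than a $(1+\eps)$-NN one. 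Overcoming this requires a cleverer initial encoding --- for instance, a code-based representation of universe elements that amplifies the intersection signal from a constant number of Hamming bits to roughly $\eps L$ bits before dimensionality reduction --- so that the effective input gap entering the biased-inner-product hash is $1+\Theta(\eps)$ rather than $1+\Theta(\eps^2)$. Once that amplification is in place, the biased-inner-product hash in $d=\Theta(\eps^{-2}\log n)$ dimensions produces the desired $(1+\eps)$-gap output, and the mechanical composition with Lemma~\ref{l:qd} and Theorem~\ref{t:kldisj} sketched in the first paragraph closes the proof.
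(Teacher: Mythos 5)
Your high-level plan (reduce from \epsdisj, use \lref{l:qd} to turn the data structure into a \qd protocol, apply the randomized form of Theorem~\ref{t:kldisj} to force $t \log_2 s = \Omega(\eps^{-2}\log n)$) is exactly the paper's, and you correctly identify the central obstruction: the naive Hamming encoding gives distances $1/\eps^2 \pm 1$, a gap of only $1+\Theta(\eps^2)$, which is too small to survive the dimension reduction of \lref{l:emb1} at $d = \Theta(\eps^{-2}\log n)$. But the fix you gesture at --- ``a code-based representation of universe elements that amplifies the intersection signal'' --- is left as a placeholder and is not the mechanism the paper uses, and it is not clear that it can be made to work: a codeword-per-element encoding in the hypercube scales the intersection savings and the total distance by the same factor and so does not improve the multiplicative gap.

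The paper's actual trick is to leave the hypercube and work in $(\RR^{2n},\ell_2)$, scaling Alice's query by $\eps$: she sends $\bfq = \eps\cdot\mathbf{1}_S$ and Bob sends $P = \{\bfe_i : i \in T\}$. Then for $i\notin S$ one has $\|\bfq - \bfe_i\|_2 = \sqrt{2}$, while for $i\in S$ one has $\|\bfq - \bfe_i\|_2 = \sqrt{2-2\eps} \le \sqrt{2}(1-\eps/2)$; the cross term $2\eps\langle \bfq, \bfe_i\rangle$ in $\|\bfq-\bfe_i\|_2^2$ is what produces a $\Theta(\eps)$ savings against a constant total, which is precisely the amplification you need and which Hamming/$\ell_1$ geometry cannot deliver (the same calculation in $\ell_1$ still gives only a $1+\Theta(\eps^2)$ gap). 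The paper then composes this with two standard metric embeddings --- $L_2$ into $L_1$, and $L_1$ into a (large, scaled) Hamming cube --- packaged as \lref{l:emb2}, to land back in a hypercube with the $1\pm\Theta(\eps)$ gap intact, and only then invokes \lref{l:emb1} to reduce to $d = \Theta(\eps^{-2}\log n)$ dimensions. So your proof is missing the key geometric idea (the $\ell_2$ re-encoding and the embedding chain back to the hypercube), which is where the real content of Theorem~\ref{t:lb} lies.
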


The proof is a refinement of the embedding arguments we used to prove
Corollary~\ref{cor:simplelb}.  In that proof, the reduction structure
was 
\[
S,T \sse U \mapsto \{0,1\}^{2n} \mapsto \{0,1\}^d,
\]
with inputs $S,T$ of \edisj mapped to the $2n$-dimensional Hamming
cube and then to the $d$-dimensional Hamming cube, with $d =
\Theta(\eps^{-2} \log n)$.  

The new plan is
\[
S,T \sse U \mapsto (\RR^{2n},\ell_2) \mapsto (\RR^{D},\ell_1)
\mapsto \zo^{D'} \mapsto \zo^d,
\]
where $d = \Theta(\eps^{-2} \log n)$ as before, and $D,D'$ can 
be very large.  Thus we map inputs $S,T$ of \epsdisj to
$2n$-dimensional Euclidean space (with the $\ell_2$ norm), which we
then map (preserving distances) to high-dimensional space with the
$\ell_1$ norm, then to the high-dimensional Hamming cube,
and finally to the $\Theta(\eps^{-2} \log n)$-dimensional Hamming cube
as before (via Lemma~\ref{l:emb1}).  The key insight is that switching
the initial embedding from the high-dimensional hypercube to
high-dimensional Euclidean space achieves a nearest
neighbor gap of $1\pm \eps$ even when Alice begins with a
$\tfrac{1}{\eps^2}$-set; the rest of the argument uses standard (if
non-trivial) techniques to eventually get back to a hypercube of
reasonable dimension.

To add detail to the important first step, consider inputs $S,T$ to
\epsdisj.
Alice maps her set $S$ to a query vector $\bfq$ that is $\eps$ times the
characteristic vector of~$S$, which we interpret as a point in
$2n$-dimensional Euclidean space.
Bob maps his input $T$ to the point set $P = \{ \bfe_i
\,:\, i \in T \}$, again in $2n$-dimensional Euclidean space, where
$\bfe_i$ denotes the $i$th standard basis vector.

First, suppose that $S$ and $T$ are disjoint.  Then, the $\ell_2$
distance between Alice's query~$\bfq$ and each point~$\bfe_i \in P$ is
\[
\sqrt{1 + \tfrac{1}{\eps^2} \cdot \eps^2} = \sqrt{2}.
\]
If $S$ and $T$ are not disjoint, then there exists a point $\bfe_i \in P$
such that the $\ell_2$ distance between $\bfq$ and $\bfe_i$ is:
\[
\sqrt{(1-\eps)^2 + \left(\tfrac{1}{\eps^2} - 1\right)\eps^2}
= \sqrt{2 - 2\eps} \le \sqrt{2} \left(1 - \tfrac{\eps}{2} \right).
\]
Thus, as promised, switching to the $\ell_2$ norm --- and tweaking
Alice's query -- allows us to get a $1\pm \Theta(\eps)$ gap in
nearest-neighbor distance between the ``yes'' and ``no'' instances of
\epsdisj.  This immediately yields (via Theorem~\ref{t:kldisj},
following the proof of Corollary~\ref{cor:simplelb}) 
lower bounds for the $(1+\eps)$-approximate nearest neighbor problem
in high-dimensional Euclidean space in the cell-probe model.
We can extend these lower bounds
to the hypercube through the following embedding lemma.

\begin{lemma}[Embedding Lemma \#2]\label{l:emb2}
For every $\delta > 0$
there exists a randomized function $f$ from $\RR^{2n}$ to $\zo^D$ (with
possibly large $D = D(\delta)$) such that, 
for every
  set $P \sse \zo^{2n}$ of $n$ points and query $\bfq \in \zo^{2n}$
  produced by the reduction above, with probability at least
  $1-\tfrac{1}{n}$,
\[
\ell_H(f(\bfp),f(\bfq)) \in (1 \pm \delta) \cdot
\ell_2(\bfp,\bfq)
\]
for every $\bfp \in P$.
\end{lemma}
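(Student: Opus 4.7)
The plan is to realize $f$ as a composition of two standard randomized embeddings: a Gaussian random projection that maps $(\RR^{2n},\ell_2)$ into $(\RR^{D_1},\ell_1)$, followed by a random-threshold embedding that maps $(\RR^{D_1},\ell_1)$ into the Hamming cube $\zo^{D}$ with $D = D_1 K$. Both stages are classical, and their concentration analyses, combined with a union bound over the $n$ difference vectors $\bfp - \bfq$ (one per $\bfp \in P$), suffice to handle the finite set of pairs relevant to the reduction.

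For the first stage I would draw $G \in \RR^{D_1 \times 2n}$ with i.i.d.\ standard Gaussian entries. For any fixed $v \in \RR^{2n}$ with $\|v\|_2 = r$, each coordinate $(Gv)_i$ is $\mathcal{N}(0, r^2)$, so $\expect{|(Gv)_i|} = c_1 r$ with $c_1 = \sqrt{2/\pi}$. Since $\tfrac{1}{D_1}\|Gv\|_1$ is the mean of $D_1$ i.i.d.\ sub-exponential variables, a Bernstein-type tail bound gives $\|Gv\|_1 \in (1 \pm \delta/3)\, c_1 D_1 \|v\|_2$ except with probability $\exp(-\Omega(\delta^2 D_1))$. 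Taking $D_1 = \Theta(\delta^{-2}\log n)$ and union-bounding over the $n$ difference vectors $\bfp - \bfq$, we get simultaneous $(1\pm\delta/3)$ preservation of all relevant $\ell_2$ distances as $\ell_1$ distances of Gaussian projections, with failure probability at most $1/(2n)$.

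For the second stage I would first control magnitudes: standard Gaussian tail bounds and a union bound over the $O(D_1 n)$ coordinates of the projected vectors give $\max_{\bfp \in P \cup \{\bfq\}} \|G\bfp\|_\infty \le M$ with $M = \Theta(\sqrt{\log n})$ w.h.p. Conditioning on this, for each $i \in \{1,\dots,D_1\}$ I would draw $K$ i.i.d.\ thresholds $t_{i,j}$ uniform on $[-M,M]$ and set the $(i,j)$-th output bit to $\mathbb{1}[(G\cdot)_i \ge t_{i,j}]$. For two vectors $u, u' \in [-M,M]^{D_1}$, the bit $(i,j)$ differs with probability exactly $|u_i - u'_i|/(2M)$, so $\expect{d_H(g(u), g(u'))} = \tfrac{K}{2M}\|u - u'\|_1$. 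The $D_1 K$ bits are independent, so a Chernoff/Hoeffding bound yields $(1\pm\delta/3)$ concentration with failure probability $\exp(-\Omega(\delta^2 K))$; choosing $K = \Theta(\delta^{-2} M \log n)$ and union-bounding over the $n$ pairs finishes the stage.

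Composing, $f = g \circ G$ maps $\RR^{2n}$ into $\zo^{D}$ in such a way that, for a universal scaling $\gamma = \tfrac{c_1 D_1 K}{2M}$, every $\bfp \in P$ satisfies $d_H(f(\bfp), f(\bfq)) \in (1\pm\delta)\gamma \cdot \|\bfp - \bfq\|_2$ with probability at least $1 - 1/n$. The scaling $\gamma$ is absorbed into the invocation of Lemma~\ref{l:emb1} downstream (or into the constant $\alpha$ of the nearest-neighbor gap), yielding the stated multiplicative $(1\pm\delta)$ approximation. The main obstacle is the second stage: it is critical to pin down the $\ell_\infty$-bound $M$ on the projected vectors, since both $K$ (the discretization resolution) and the per-bit deviation depend on $M$. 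Without this control the random-threshold step could either miss most of the signal (thresholds too coarse) or blow up the output dimension $D$ uncontrollably; the Gaussian tail bound giving $M = \Theta(\sqrt{\log n})$ is what keeps $D = D_1 K$ polynomial in $n$ and $1/\delta$.
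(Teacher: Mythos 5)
Your two-stage plan --- a Gaussian random projection to send $(\RR^{2n},\ell_2)$ into $(\RR^{D_1},\ell_1)$, followed by a threshold-based bit encoding to land in the Hamming cube --- is exactly the decomposition the paper sketches via the two ``standard facts'' ($L_2 \hookrightarrow L_1$ and $L_1 \hookrightarrow$ scaled Hamming cube) and delegates to the exercises; you have simply instantiated each stage concretely and supplied the concentration/union-bound bookkeeping. The only cosmetic difference is that the paper phrases the second stage as a deterministic embedding of a bounded $\ell_1$ ball (via discretization and unary encoding) while you use i.i.d.\ random thresholds, but both realize the same isometry up to scaling and your $\ell_\infty$-control via $M = \Theta(\sqrt{\log n})$ correctly supplies the boundedness that either variant needs.
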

Thus Lemma~\ref{l:emb2} says that one can re-represent a query~$\bfq$
and a set~$P$ of $n$ points in $\RR^{2n}$ in a high-dimensional
hypercube so that the nearest-neighbor distance --- $\ell_2$ distance
in the domain, Hamming distance in the range --- is approximately
preserved, with the approximation factor tending to~1 as the
number~$D$ of dimensions tends to infinity.  The Exercises outline the
proof, which combines two standard facts from the theory of metric
embeddings: 
\begin{enumerate}

\item ``$L_2$ embeds isometrically into $L_1$.''
For every $\delta > 0$ and dimension $D$
there exists a randomized function $f$ from $\RR^{D}$ to $\RR^{D'}$,
where $D'$ can depend on $D$ and $\delta$, such that,
for every set $P \sse \RR^D$ of $n$ points,
with probability at least $1-\tfrac{1}{n}$,
\[
\|f(\bfp)-f(\bfp')\|_1 \in (1 \pm \delta) \cdot \|\bfp-\bfp'\|_2
\]
for all $\bfp,\bfp' \in P$.

\item ``$L_1$ embeds isometrically into the 
(scaled) Hamming cube.''
For every $\delta > 0$,
there exists constants $M = M(\delta)$ and
$D'' = D''(D',\delta)$ 
and a function $g:\RR^{D'} \rightarrow \zo^{D''}$ such
that, for every set $P \sse \RR^{D'}$,
\[
d_H(g(\bfp,\bfp')) = M \cdot \|\bfp-\bfp'\|_1 \pm \delta
\]
for every $\bfp,\bfp' \in P$.

\end{enumerate}

With Lemma~\ref{l:emb2} in hand, we can prove Theorem~\ref{t:lb} by
following the argument in Corollary~\ref{cor:simplelb}.

\vspace{.1in}
\noindent
\begin{prevproof}{Theorem}{t:lb}
By Lemmas~\ref{l:emb1} and~\ref{l:emb2}, Alice and Bob can use a
communication protocol that solves the \qd problem for the decision
version of~$(1+\eps)$-nearest neighbors to solve the \epsdisj problem,
with no additional communication (only shared randomness, to pick the
random functions in Lemmas~\ref{l:emb1}
and~\ref{l:emb2}).\footnote{Strictly speaking, we're using a
  generalization of Lemma~\ref{l:emb1} (with the same proof) where the
  query and point set can lie in a hypercube of arbitrarily large
  dimension, not just~$2n$.}
Thus, the
(randomized) asymmetric communication lower bound for the latter
problem applies also to the former problem.

Since $tw = O(n^{1-\delta})$, in the induced communication protocol
for the \qd problem (and hence \epsdisj), Bob sends a sublinear number
of bits.  
Theorem~\ref{t:kldisj} then implies that Alice sends at least
$\Omega(\eps^{-2} \log_2 n)$ bits, and so (by Lemma~\ref{l:qd}) we
have $t \log_2 s = \Omega(\eps^{-2} \log_2 n)$.  Since $t = O(1)$,
this implies that $s = n^{\Omega(\eps^{-2})}$.
\end{prevproof}

\chapter{Lower Bounds in Algorithmic Game Theory}
\label{cha:lower-bounds-algor}

\section{Preamble}

This lecture explains some applications of communication complexity to
proving lower bounds in {\em algorithmic game theory (AGT)}, at the border
of computer science and economics.
In AGT, the natural description size
of an object is often exponential in a parameter of interest, 
and the goal is to perform
non-trivial computations in time polynomial in the parameter 
(i.e., logarithmic in the description size).  As we know, communication
complexity is a great tool for understanding when non-trivial
computations require looking at most of the input.

\section{The Welfare Maximization Problem}\label{s:wm}

The focus of this lecture is the following optimization problem, which
has been studied in AGT more than any other.
\begin{enumerate}

\item There are $k$ players.

\item There is a set $M$ of $m$ items.

\item Each player~$i$ has a {\em valuation} $v_i:2^M \rightarrow
  \RR_+$.  The number $v_i(T)$ indicates $i$'s value, or willingness
  to pay, for the items~$T \sse M$.
The valuation is the private input of player $i$ --- $i$ knows
  $v_i$ but none of the other $v_j$'s.  
We assume that $v_i(\emptyset) = 0$ and that the valuations are
{\em monotone}, meaning $v_i(S) \le v_i(T)$ whenever $S \sse T$.
To avoid bit complexity issues, we'll also assume that all of the
$v_i(T)$'s are integers with description length polynomial in $k$ and
$m$.

\end{enumerate}
Note that we may have more than two players --- more than just Alice
and Bob.  Also note that the description length of a player's valuation is
exponential in the number of items $m$.  

In the {\em welfare-maximization problem}, the goal is to partition
the items $M$ into sets $T_1,\ldots,T_k$ to maximize, at least
approximately, the welfare 
\[
\sum_{i=1}^k v_i(T_i),
\]
using communication polynomial in $n$ and $m$.  Note this amount of
communication is logarithmic in the sizes of the private inputs.

The main motivation for this problem is combinatorial auctions.
Already in the domain of government spectrum auctions, dozens of such
auctions have raised hundreds of billions of dollars of revenue.
They have also been used for other applications such as allocating
take-off and landing slots at airports.
For example, items could represent licenses for wireless spectrum ---
the right to use a certain frequency range in a certain geographic
area.  Players would then be wireless telecommunication companies.
The value $v_i(S)$ would be the amount of profit company~$i$ expects
to be able to extract from the licenses in~$S$.

Designing good combinatorial auctions requires careful attention to
``incentive issues,'' making the auctions as robust as possible to
strategic behavior by the (self-interested) participants.  
Incentives won't play much of a role in this lecture.
Our lower bounds for protocols in Section~\ref{s:lb_agt} apply even in the
ideal case where players are fully cooperative.
Our lower bounds for equilibria in Section~\ref{s:condpoa} effectively
apply no matter how incentive issues are resolved.

\section{Multi-Party Communication Complexity}\label{s:mpcc}

\subsection{The Model}

Welfare-maximization problems have an arbitrary number~$k$ of players,
so lower bounds for them follow most naturally from lower bounds for
{\em multi-party} communication protocols.  The extension from two to
many parties proceeds as one would expect, so we'll breeze through the
relevant points without much fuss.

Suppose we want to compute a Boolean function $f:\{0,1\}^{n_1} \times
  \{0,1\}^{n_2} \times \cdots \times \zo^{n_k} \rightarrow \zo$ that
  depends on the $k$ inputs $\bfx_1,\ldots,\bfx_k$.  We'll be interested 
  in the {\em number-in-hand (NIH)} model, where player~$i$ only knows
  $\bfx_i$.
What other model could there be, you ask?  There's also the stronger {\em
  number-on-forehead (NOF)} model, where player~$i$ knows everything
except $\bfx_i$.  (Hence the name --- imagine the players are sitting in a
circle.)  The NOF model is studied mostly for 
its connections to circuit complexity; it has few direct algorithmic
applications, so we won't discuss it in this course.  The NIH model is
the natural one for our purposes and, happily, it's also much easier
to prove strong lower bounds for it.

Deterministic protocols are defined as you would expect, with the
protocol specifying whose turn it is speak (as a function of the
protocol's transcript-so-far) and when the computation is complete.
We'll use the {\em blackboard model}, where we think of
the bits sent by each player as being written on a blackboard in
public view.\footnote{In the weaker {\em message-passing model},
  players communicate by point-to-point messages rather than via
  broadcast.}
Similarly, in a nondeterministic protocol, the prover writes a proof
on the blackboard, and the protocol accepts the input if and only if
all $k$ players accept the proof.

\subsection{The \mdisj Problem}

We need a problem that is hard for multi-party communication
protocols.  An obvious idea is to use an analog of \disj.  There is
some ambiguity about how to define a version of \disj for three or
more players.  For example, suppose there are three players, and
amongst the three possible pairings of them, two have disjoint sets
while the third have intersecting sets.  Should this count as a
``yes'' or ``no'' instance?  We'll skirt this issue by worrying only
about unambiguous inputs, that are either ``totally disjoint'' or
``totally intersecting.''

Formally, in the \mdisj problem, each of the $k$ players $i$ holds an
input $\bfx_i \in \zo^n$.  (Equivalently, a set $S_i \sse
\{1,2,\ldots,n\}$.)  
The task is to correctly identify inputs that
fall into one of the following two cases:
\begin{itemize}

\item [(1)] ``Totally disjoint,'' with $S_i \cap S_{i'} = \emptyset$ for
every $i \neq i'$.

\item [(0)] ``Totally intersecting,'' with $\cap_{i=1}^k S_i \neq
  \emptyset$.

\end{itemize}
When $k=2$, this is just \disj.  When $k > 2$, there are inputs that
are neither 1-inputs nor 0-inputs.  We let protocols off the hook on
such ambiguous inputs --- they can answer ``1'' or ``0'' with
impunity.

In the next section, we'll prove the following communication
complexity lower bound for \mdisj, credited to
Jaikumar Radhakrishnan and Venkatesh Srinivasan
in~\cite{N02}.
\begin{theorem}\label{t:mdisj}
The nondeterministic communication complexity of \mdisj, with $k$
players with $n$-bit inputs, is $\Omega(n/k)$.
\end{theorem}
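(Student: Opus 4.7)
The plan is to mimic the covering-based lower bound for two-party \disj from Section~\ref{ss:lbs}, suitably generalized to the $k$-party number-in-hand setting. The first observation is that Proposition~\ref{prop:cover2} extends without change to $k$ players: a nondeterministic protocol with $c$ bits of communication in the blackboard model induces a cover of the set of 1-inputs by at most $2^c$ 1-rectangles, where a combinatorial rectangle in this setting is a product $R = A_1 \times A_2 \times \cdots \times A_k$ with $A_i \sse \zo^n$, and a 1-rectangle is one that contains no 0-input (i.e., no totally intersecting input). Rectangles are still closed under ``mix and match'' across any subset of coordinates of the input tuple. Given this, it suffices to show that every cover of the 1-inputs of \mdisj by 1-rectangles requires at least $(1+\tfrac{1}{k})^n$ rectangles, since $\log_2 (1+\tfrac{1}{k})^n = \Theta(n/k)$.

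The argument then has two steps. First, I would count the 1-inputs: a totally disjoint input is specified by, for each coordinate $j \in \{1,\ldots,n\}$, either designating a unique player $i$ with $x_{i,j}=1$ or leaving coordinate $j$ empty, giving exactly $(k+1)^n$ 1-inputs in all. Second, and this is the main step, I would upper bound the number of 1-inputs contained in an arbitrary 1-rectangle $R = A_1 \times \cdots \times A_k$. The key lemma is that for every coordinate $j$ there exists a ``forced-zero player'' $i^*(j)$ such that $x_{i^*(j),j}=0$ for every $\bfx_{i^*(j)} \in A_{i^*(j)}$. Indeed, if for some $j$ every player $i$ had at least one $\bfx_i \in A_i$ with $x_{i,j}=1$, then by closing $R$ under the mix-and-match operation in the $j$th coordinate across all $k$ players, we would obtain a tuple in $R$ with a common~1 in position~$j$, i.e., a 0-input, contradicting that $R$ is a 1-rectangle.

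Combining the two steps, for each coordinate~$j$ a 1-input in $R$ assigns~$j$ either to no player or to one of the (at most) $k-1$ players other than $i^*(j)$, so there are at most $k$ choices per coordinate and hence at most $k^n$ 1-inputs per 1-rectangle. Dividing the total $(k+1)^n$ 1-inputs by this per-rectangle bound gives a lower bound of $((k+1)/k)^n = (1+\tfrac{1}{k})^n$ on the number of 1-rectangles in any cover, yielding the promised $\Omega(n/k)$ lower bound since $\log_2(1+\tfrac{1}{k}) = \Theta(1/k)$.

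The subtlety I expect to be the main thing to get right, rather than a serious obstacle, is the interaction between the promise structure of \mdisj and the definition of a 1-rectangle. The point is that a 1-rectangle need only avoid the 0-inputs (totally intersecting tuples); it may freely contain ambiguous inputs where some but not all players intersect. This is exactly what the forced-zero argument uses: the rectangle must avoid any tuple with a universally shared coordinate, and mix-and-match forces each coordinate to have at least one uniformly zero player. The argument goes through essentially verbatim as the $k=2$ case treated in Corollary~\ref{cor:disj_cover} and in Section~\ref{ss:udisj}, with the bound $2^n$ per rectangle there replaced by $k^n$ here.
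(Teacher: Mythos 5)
Your proposal is correct and matches the paper's proof essentially step for step: the same reduction to covering by boxes, the same count of $(k+1)^n$ 1-inputs, and the same ``ineligible/forced-zero player per coordinate'' argument bounding each 1-box by $k^n$ 1-inputs. You have also correctly anticipated the role of the promise (*-inputs may appear in a 1-box), which the paper handles the same way.
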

The nondeterministic lower bound is for verifying a 1-input.  (It is
easy to verify a 0-input --- the prover just suggests the index of an
element $r$ in $\cap_{i=1}^k S_i$, the validity of which is easily
checked privately by each of the players.)  

In our application in Section~\ref{s:lb_agt}, we'll be interested in the
case where $k$ is much smaller than $n$, such as $k = \Theta(\log
n)$.  Intuition might suggest that the lower bound should be
$\Omega(n)$ rather than $\Omega(n/k)$, but this is incorrect --- a
slightly non-trivial argument shows that Theorem~\ref{t:mdisj} is
tight for nondeterministic protocols
(for all small enough $k$, like $k = O(\sqrt{n})$).  
See the Homework for details.
This factor-$k$ difference won't matter for our applications, however.

\subsection{Proof of Theorem~\ref{t:mdisj}}

The proof of Theorem~\ref{t:mdisj} has three steps, all of which are
generalizations of familiar arguments.

\vspace{.1in}
\noindent
\textbf{Step 1:}
{\em Every deterministic protocol with communication cost $c$ induces
  a partition of 
  $M(f)$ into at most $2^c$ monochromatic boxes.}
By ``$M(f)$,'' we mean the $k$-dimensional array in which the $i$th
dimension is indexed by the possible inputs of player~$i$, and an
array entry contains the value of the function $f$ on the
corresponding joint input.  By a ``box,'' we mean the $k$-dimensional
generalization of a rectangle --- a subset of inputs that can be
written as a product $A_1 \times A_2 \times \cdots \times A_k$.
By ``monochromatic,'' we mean a box that does not contain both a
1-input and a 0-input.  (Recall that for the \mdisj problem there are
also wildcard (``*'') inputs --- a monochromatic box can contain any
number of these.)

The proof of this step is the same as in the two-party case.  We just
run the protocol and keep track of the joint inputs that are
consistent with the transcript.  The box of all inputs is consistent
with the empty transcript, and the box structure is preserved
inductively: when player $i$ speaks, it narrows down the remaining
possibilities for the input $\bfx_i$, but has no effect on the possible
values of the other inputs.  Thus every transcript corresponds to a
box, with these boxes partitioning $M(f)$.
Since the protocol's output is constant over such a box and the protocol
computes $f$, all of the boxes it induces are monochromatic with
respect to $M(f)$.

Similarly, every nondeterministic protocol with communication cost $c$
(for verifying 1-inputs)
induces a cover of the 1-inputs of $M(f)$ by at most $2^c$
monochromatic boxes. 

\vspace{.1in}
\noindent
\textbf{Step 2:}
{\em The number of $1$-inputs in $M(f)$ is $(k+1)^n$.}
This step and the next are easy generalizations of our second proof 
of our nondeterministic communication complexity lower bounds for
\disj (from Section~\ref{ss:udisj}): first we lower bound the
number of 1-inputs,
then we
upper bound the number of 1-inputs that can coexist in a single
1-box.  In a $1$-input $\minputs$, for every 
coordinate $\ell$, at most one of the $k$ inputs has a 1 in the $\ell$th
coordinate.  This yields $k+1$ options for each of the $n$ coordinates,
thereby generating a total of $(k+1)^n$ 1-inputs.

\vspace{.1in}
\noindent
\textbf{Step 3:}
{\em The number of $1$-inputs in a monochromatic box is at most
  $k^n$.}  Let $B = A_1 \times A_2 \times \cdots \times A_k$ be a 1-box.
The key claim here is: for each coordinate $\ell=1,\ldots,n$, there is
a player $i \in \{1,\ldots,k\}$ such that, for every input $\bfx_i \in
A_i$, the $\ell$th coordinate of $\bfx_i$ is 0.  That is, to each
coordinate we can associate an ``ineligible player'' that, in this box,
never has a 1 in that coordinate.  This is easily seen by
contradiction: otherwise, there exists a coordinate $\ell$ such that,
for every player $i$, there is an input $\bfx_i \in A_i$ with a~1 in the
$\ell$th coordinate.  As a box, this means that $B$ contains the input
$\minputs$.  But this is a 0-input, contradicting the assumption that
$B$ is a 1-box.

The claim implies the stated upper bound.  Every 1-input of $B$ can be
generated by choosing, for each coordinate $\ell$, an assignment of at
most one ``1'' in this coordinate to one of the $k-1$ eligible players
for this coordinate.  With only $k$ choices per coordinate, there are
at most $k^n$ 1-inputs in the box~$B$.

\vspace{.1in}
\noindent
\textbf{Conclusion:}
Steps~2 and~3 imply that covering of the 1s of the $k$-dimensional
array of the \mdisj function requires at least $(1+\tfrac{1}{k})^n$
1-boxes.  By the discussion in Step~1, this implies a lower bound of
$n \log_2 (1 + \tfrac{1}{k}) = \Theta(n/k)$ on the nondeterministic
communication complexity of the \mdisj function (and output~1).  This
concludes the proof of Theorem~\ref{t:mdisj}.

\begin{remark}[Randomized Communication Complexity of
    \mdisj]\label{rem:mdisj}
Randomized protocols with two-sided error also require communication
$\Omega(n/k)$ to solve \mdisj~\citep{G09,CKS03}.\footnote{There is also a 
  far-from-obvious matching upper bound of $O(n/k)$~\citep{HW07,CKS03}.}
This generalizes the $\Omega(n)$ lower bound that we stated (but did
not prove) in Theorem~\ref{t:disj_2sided}, so naturally we're not
going to prove this
lower bound either.  Extending the lower bound for \disj to \mdisj
requires significant work, but it is a smaller step than proving from
scratch a
 linear lower bound for \disj~\citep{KS92,R92}.  This is
especially true if one settles for the weaker lower bound of
$\Omega(n/k^4)$~\citep{AMS96}, which is good enough for our purposes in
this lecture.
\end{remark}

\section{Lower Bounds for Approximate Welfare Maximization}\label{s:lb_agt}

\subsection{General Valuations}

We now put Theorem~\ref{t:mdisj} to work and prove that it is
impossible to obtain a non-trivial approximation of the general
welfare-maximization problem with a subexponential (in $m$) amount of
communication.  First, we observe that a $k$-approximation is trivial.
The protocol is to give the full set of items~$M$ to the player with
the largest $v_i(M)$.  This protocol can clearly be implemented with a
polynomial amount of
communication.  To prove the approximation guarantee, consider a partition
$T_1,\ldots,T_k$ of $M$ with the maximum-possible welfare~$W^*$.
There is a player $i$ with $v_i(T_i) \ge W^*/k$.  The welfare obtained
by our simple protocol is at least $v_i(M)$; since we assume that valuations
are monotone, this is at least $v_i(T_i) \ge W^*/k$.

To apply communication complexity, it is convenient to turn the
optimization problem of welfare maximization into a decision problem.
In the \wm[k] problem, 
the goal is to correctly identify inputs that
fall into one of the following two cases:
\begin{itemize}

\item [(1)] Every partition $(T_1,\ldots,T_k)$ of the items has
  welfare at most~1.

\item [(0)] There exists a partition $(T_1,\ldots,T_k)$ of the items
  with welfare at least $k$.

\end{itemize}
Clearly, communication lower bounds for \wm[k] apply more generally to
the problem of obtaining a better-than-$k$-approximation of the
maximum welfare.

We prove the following.
\begin{theorem}[\citealt{N02}]\label{t:wmk}
The communication complexity of \wm[k] is $\exp \{ \Omega(m/k^2) \}$.
\end{theorem}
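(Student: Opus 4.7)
The plan is to reduce from \mdisj. By Theorem~\ref{t:mdisj}, the nondeterministic (hence deterministic) communication complexity of \mdisj on $k$-player $n$-bit inputs is $\Omega(n/k)$, so to obtain the claimed $\exp\{\Omega(m/k^2)\}$ lower bound on \wm[k] it suffices to encode a \mdisj instance with $n = \exp\{\Omega(m/k^2)\}$ into a \wm[k] instance on $m$ items, in a way that requires no communication beyond running the \wm[k] protocol itself.

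The encoding depends on a combinatorial object that I would construct by the probabilistic method. The object is a family $\{(B_\ell^1,\ldots,B_\ell^k)\}_{\ell=1}^n$ of partitions of $M$ with the property that for every $(\ell,i)\neq(\ell',j)$ with $\ell\neq\ell'$ one has $B_\ell^i\cap B_{\ell'}^j\neq\emptyset$. To see such a family exists, draw each partition independently by assigning every item uniformly at random to one of the $k$ players. For fixed $(\ell,i,\ell',j)$ with $\ell\neq\ell'$, each item lies in both $B_\ell^i$ and $B_{\ell'}^j$ independently with probability $1/k^2$, so these two sets are disjoint with probability $(1-1/k^2)^m \le e^{-m/k^2}$. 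A union bound over the $(nk)^2$ pairs succeeds as long as $n \le \frac{1}{k}\exp\{m/(3k^2)\}$, which is $\exp\{\Omega(m/k^2)\}$.

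Fix such a family (it is public knowledge, so no communication is needed to agree on it). Given a \mdisj instance $(S_1,\ldots,S_k)$, each player $i$ defines the valuation $v_i(T)=1$ if there exists $\ell\in S_i$ with $B_\ell^i\subseteq T$, and $v_i(T)=0$ otherwise; these are monotone with integer values in $\{0,1\}$. The correctness of the reduction follows from two observations. If $(S_1,\ldots,S_k)$ is a 0-instance of \mdisj, pick $\ell^*\in\bigcap_i S_i$ and allocate items according to $(B_{\ell^*}^1,\ldots,B_{\ell^*}^k)$; each player then gets value $1$, so welfare is $\ge k$ and we have a 0-instance of \wm[k]. Conversely, if $(S_1,\ldots,S_k)$ is a 1-instance of \mdisj (totally disjoint), then any allocation of welfare $\ge 2$ would yield distinct players $i\neq j$ with $v_i(T_i)=v_j(T_j)=1$, witnessed by $\ell\in S_i$, $\ell'\in S_j$ with $B_\ell^i\subseteq T_i$ and $B_{\ell'}^j\subseteq T_j$; since $S_i\cap S_j=\emptyset$ we have $\ell\neq\ell'$, and $T_i\cap T_j=\emptyset$ forces $B_\ell^i\cap B_{\ell'}^j=\emptyset$, contradicting the construction. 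Hence maximum welfare is at most $1$, a 1-instance of \wm[k].

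Composing the reduction with Theorem~\ref{t:mdisj} yields the bound: a $c$-bit protocol for \wm[k] solves \mdisj on $n$-bit inputs with $c$ bits, so $c=\Omega(n/k)=\exp\{\Omega(m/k^2)\}$. I expect the main obstacle to be setting the constants in the probabilistic construction carefully to make the union bound go through with $n$ truly as large as $\exp\{\Omega(m/k^2)\}$; the reduction itself is clean, and the same argument extends to randomized two-sided-error protocols by appealing instead to the randomized lower bound in Remark~\ref{rem:mdisj} (at worst losing extra factors of $k$ in the exponent).
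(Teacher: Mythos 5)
Your proposal is correct and follows essentially the same approach as the paper's proof: the same probabilistically constructed intersecting family of partitions, the same threshold valuations $v_i(T) = \mathbf{1}[\exists \ell \in S_i : B^i_\ell \subseteq T]$, and the same case analysis reducing \mdisj to \wm[k]. The only cosmetic difference is that you impose the intersection property for all pairs with $\ell \neq \ell'$ (including $i = j$), whereas the paper only requires it when both $i \neq i'$ and $j \neq \ell$ — your version is slightly stronger but costs nothing in the union bound, and in any case your correctness argument only invokes the case $i \neq j$.
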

Thus, if the number of items~$m$ is at least $k^{2+\eps}$ for some
$\eps > 0$, then the communication complexity of the \wm[k] problem is
exponential.  Because the proof is a reduction from \mdisj, the lower
bound applies to deterministic protocols, nondeterministic protocols
(for the output~1), and randomized protocols with two-sided error.

The proof of Theorem~\ref{t:wmk} relies on Theorem~\ref{t:mdisj} and a
combinatorial gadget.  We construct this gadget using the
probabilistic method.  As a thought experiment, consider $t$ random
partitions $P^1,\ldots,P^t$ of $M$, where $t$ is a parameter to be
defined later.  By a random partition $P^j = (P^j_1,\ldots,P^j_k)$, we
just mean that each of the $m$ items is assigned
to exactly one of the
$k$ players, independently and uniformly at random.

We are interested in the probability that two classes of different
partitions intersect: for all $i \neq i'$ and $j \neq \ell$, 
since the probability that a given item is assigned to $i$ in
$P^j$ and also to $i'$ in $P^{\ell}$ is $\tfrac{1}{k^2}$, we
have
\[
\prob{P^j_i \cap P^{\ell}_{i'} = \emptyset} = 
\left( 1 - \frac{1}{k^2} \right)^m \le e^{-m/k^2}.
\]
Taking a Union Bound over the $k$ choices for $i$ and $i'$ and the $t$
choices for $j$ and $\ell$, we have
\begin{equation}\label{eq:int}
\prob{\exists i \neq i', j \neq \ell \text{ s.t.\ } P^j_i \cap
  P^{\ell}_{i'} = \emptyset} \le k^2t^2e^{-m/k^2}.
\end{equation}
Call $P^1,\ldots,P^t$ an {\em intersecting family} if $P^j_i \cap
P^{\ell}_{i'} \neq \emptyset$ whenever $i \neq i'$, $j \neq \ell$.
By~\eqref{eq:int}, the probability that our random experiment fails to
produce an intersecting family is less than~1 provided $t <
\tfrac{1}{k}e^{m/2k^2}$.  The following lemma is immediate.
\begin{lemma}\label{l:int}
For every $m,k \ge 1$,
there exists an intersecting family of partitions $P^1,\ldots,P^t$
with $t = \exp \{ \Omega(m/k^2) \}$.
\end{lemma}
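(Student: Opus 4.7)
The plan is to establish Lemma~\ref{l:int} by the probabilistic method, following essentially verbatim the thought experiment sketched immediately before the lemma statement. I will sample $t$ partitions $P^1,\ldots,P^t$ of $M$ independently, where each partition is generated by assigning every item independently and uniformly at random to one of the $k$ classes, and then show that for $t$ as large as $\exp\{\Omega(m/k^2)\}$ the resulting collection is intersecting with positive probability. Since existence is all the lemma asks for, producing a single good outcome suffices.

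The core estimate is the one already carried out in~\eqref{eq:int}: for any fixed pair of partitions $P^j,P^\ell$ with $j\neq\ell$ and any fixed pair of class indices $i\neq i'$, each item independently fails to witness $P^j_i \cap P^\ell_{i'}\neq\emptyset$ with probability $1-1/k^2$, so the probability that the intersection is empty is at most $(1-1/k^2)^m \le e^{-m/k^2}$. A union bound over the $\binom{t}{2}$ ordered pairs of distinct partitions and the $k(k-1)$ ordered pairs of distinct class indices yields a total failure probability at most $k^2 t^2 e^{-m/k^2}$.

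To finish, I will choose $t$ so that this bound is strictly below $1$. Setting $t = \tfrac{1}{k}e^{m/2k^2}$ (or any slightly smaller quantity) gives $k^2 t^2 e^{-m/k^2} = e^{-m/k^2} \cdot e^{m/k^2} = 1$ up to the missing constant, and taking $t$ a small constant factor smaller makes the expression strictly less than $1$. Hence with positive probability the sampled collection $P^1,\ldots,P^t$ is intersecting in the sense of the lemma, and $t = \exp\{\Omega(m/k^2)\}$ as claimed.

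Because the entire argument reduces to a single Union Bound over an explicitly enumerated family of bad events, there is no real obstacle here; the only thing one has to be mildly careful about is making sure the counting in the Union Bound correctly separates ``different partitions'' from ``different class indices,'' which is exactly how the bound $k^2 t^2 e^{-m/k^2}$ in~\eqref{eq:int} was derived. No additional ideas beyond those already used in the paragraphs preceding the lemma are required.
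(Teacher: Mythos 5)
Your proof is correct and follows exactly the paper's own argument: the paper derives the bound $k^2 t^2 e^{-m/k^2}$ in the displayed inequality preceding the lemma, observes that it is below $1$ whenever $t < \tfrac{1}{k}e^{m/2k^2}$, and then declares the lemma "immediate." There is nothing to add; you have reproduced the intended probabilistic-method argument verbatim.
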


A simple combination of Theorem~\ref{t:mdisj} and Lemma~\ref{l:int}
proves Theorem~\ref{t:wmk}.

\vspace{.1in}
\noindent
\begin{prevproof}{Theorem}{t:wmk}
The proof is a reduction from \mdisj.
Fix $k$ and $m$.  (To be interesting, $m$ should be significantly
bigger than $k^2$.)  
Let $(S_1,\ldots,S_k)$ denote an input to \mdisj
with $t$-bit inputs, where $t = \exp \{ \Omega(m/k^2) \}$ is the same
value as in Lemma~\ref{l:int}.  We can assume that the players have
coordinated in advance on an intersecting family of $t$ partitions of a
set~$M$ of $m$ items.  Each player~$i$ uses this family and its input
$S_i$ to form the following valuation:
\[
v_i(T) = 
\left \{
\begin{array}{cl}
1 & \text{if $T \supseteq P^j_i$ for some $j \in S_i$}\\
0 & \text{otherwise.}
\end{array}
\right.
\]
That is, player~$i$ is either happy (value~1) or unhappy (value~0),
and is happy if and only if it receives all of the items in the
corresponding class $P^j_i$ of some partition $P^j$ with index~$j$
belonging to its input to \mdisj.
The valuations $v_1,\ldots,v_k$ define an input to \wm[k].
Forming this input requires no communication between the players.

Consider the case where the input to \mdisj is a 1-input, with $S_i
\cap S_{i'} = \emptyset$ for every $i \neq i'$.  We claim that the
induced input to \wm[k] is a 1-input, with maximum welfare at most~1.
To see this, consider a partition $(T_1,\ldots,T_k)$ in which some
player~$i$ is happy (with $v_i(T_i) = 1$).  For some $j \in S_i$,
player $i$ receives all the items in $P^j_i$.  Since $j \not\in S_{i'}$
for every $i' \neq i$, the only way to make a second player~$i'$ happy
is to give it all the items in $P^{\ell}_{i'}$ in some other partition
$P^{\ell}$ with $\ell \in S_{i'}$ (and hence $\ell \neq j$).  Since
$P^1,\ldots,P^t$ is an intersecting family, this is impossible ---
$P^j_i$ and $P^{\ell}_{i'}$ overlap for every $\ell \neq j$.

When the input to \mdisj is a 0-input, with an element $r$ in the
mutual intersection $\cap_{i=1}^k S_i$, we claim that the induced
input to \wm[k] is a 0-input, with maximum welfare at least~$k$.  This
is easy to see: for $i=1,2,\ldots,k$, assign the items of $P^r_i$ to
player~$i$.  Since $r \in S_i$ for every $i$, this makes all $k$
players happy.

This reduction shows that a (deterministic, nondeterministic, or
randomized) protocol for \wm[k] yields one for \mdisj (with $t$-bit
inputs) with the same communication.  We conclude that the
communication complexity of \wm[k] is $\Omega(t/k) = \exp \{
\Omega(m/k^2) \}$.
\end{prevproof}

\subsection{Subadditive Valuations}

To an algorithms person, Theorem~\ref{t:wmk} is depressing,
as it rules out any non-trivial positive results.
A natural idea is to seek positive results by imposing additional
structure on players' valuations.  Many such restrictions have been
studied.  We consider here the case of {\em subadditive} valuations,
where each $v_i$ satisfies $v_i(S \cup T) \le v_i(S) + v_i(T)$ for every
pair $S,T \sse M$.  

Our reduction in Theorem~\ref{t:wmk} immediately yields a weaker
inapproximability result for welfare maximization with subadditive
valuations.  Formally, 
define the \wm[2] problem as that of
identifying inputs that
fall into one of the following two cases:
\begin{itemize}

\item [(1)] Every partition $(T_1,\ldots,T_k)$ of the items has
  welfare at most~$k+1$.

\item [(0)] There exists a partition $(T_1,\ldots,T_k)$ of the items
  with welfare at least $2k$.

\end{itemize}
Communication lower bounds for \wm[2] apply to
the problem of obtaining a better-than-$2$-approximation of the
maximum welfare.

\begin{corollary}[\citealt{DNS05}]\label{t:wm2}
The communication complexity of \wm[2] is\\ $\exp \{ \Omega(m/k^2) \}$,
even when all players have subadditive valuations.
\end{corollary}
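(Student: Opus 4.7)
The plan is to reduce from \mdisj exactly as in the proof of Theorem~\ref{t:wmk}, re-using the same intersecting family of partitions $P^1,\ldots,P^t$ with $t = \exp\{\Omega(m/k^2)\}$ guaranteed by Lemma~\ref{l:int}, but to replace the $0/1$ indicator valuations there with ``shifted'' valuations that are subadditive yet still encode the \mdisj input. Concretely, given an \mdisj input $(S_1,\ldots,S_k)$, each player~$i$ forms
\[
\tilde v_i(T) \;=\;
\begin{cases}
0 & \text{if } T = \emptyset,\\
1 + \mathbb{1}\bigl[\exists\, j \in S_i \text{ s.t.\ } T \supseteq P^j_i\bigr] & \text{if } T \neq \emptyset.
\end{cases}
\]
These valuations take values in $\{0,1,2\}$, are clearly monotone, and each player can form its own with no communication.

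The first step is to verify subadditivity: for any $A,B \sse M$, either one of them is empty (in which case $\tilde v_i(A \cup B) = \tilde v_i(A) + \tilde v_i(B)$ trivially) or both are nonempty (in which case $\tilde v_i(A\cup B) \le 2 \le \tilde v_i(A) + \tilde v_i(B)$). So $\tilde v_i$ is subadditive. The second step is to track the welfare in the two cases, re-using the combinatorial work already done for Theorem~\ref{t:wmk}:

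\begin{itemize}
\item[(1)] If $(S_1,\ldots,S_k)$ is totally disjoint, then in any partition $(T_1,\ldots,T_k)$ the intersecting-family argument from the proof of Theorem~\ref{t:wmk} shows that at most one player can have the indicator $\mathbb{1}[\exists j \in S_i\!: T_i \supseteq P^j_i]$ equal to $1$. Every player contributes at most $1$ from the baseline ``$+1$'' term (and $0$ if $T_i = \emptyset$), so the total welfare is at most $k + 1$.

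\item[(0)] If there is an element $r \in \bigcap_i S_i$, then assigning $T_i = P^r_i$ yields a valid partition (since $P^r$ partitions $M$) in which every player receives its indicator bonus; each $\tilde v_i(T_i) = 2$, so the welfare is exactly $2k$.
\end{itemize}

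Thus the reduction maps 1-inputs of \mdisj to 1-inputs of \wm[2] and 0-inputs to 0-inputs, using only subadditive valuations, and the communication cost is preserved. Invoking Theorem~\ref{t:mdisj} (or the randomized strengthening in Remark~\ref{rem:mdisj}, to cover randomized protocols) yields the lower bound $\exp\{\Omega(m/k^2)\}$.

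The only genuine design choice here is the ``$+1$ for nonempty'' shift: the original $0/1$ valuations of Theorem~\ref{t:wmk} are not subadditive (disjoint proper subsets of some $P^j_i$ have value~$0$ while their union has value~$1$), and any attempt to fix this by, say, capping a fractional XOS-style valuation, would collapse the gap between the two cases to a factor strictly less than~$2$. Inflating the baseline by exactly~$1$ absorbs any potential superadditivity into the value~$2$ ceiling while simultaneously compressing the $(1,k)$ welfare gap of Theorem~\ref{t:wmk} into the desired $(k+1, 2k)$ gap, and this is the one place the argument required real thought.
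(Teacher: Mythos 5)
Your proposal is correct and takes essentially the same approach as the paper: the paper's proof also picks up from the Theorem~\ref{t:wmk} reduction and has each player add~$1$ to its value on every non-empty bundle, yielding $0$-$1$-$2$ valuations that vanish only on $\emptyset$. You spell out the subadditivity check and the $(k+1,\,2k)$ welfare gap that the paper merely asserts, but the underlying argument is identical.
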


\begin{proof}
Picking up where the reduction in the proof of Theorem~\ref{t:wm2}
left off, every player~$i$ adds~1 to its valuation
for every non-empty set of items.
Thus, the previously 0-1 valuations become 0-1-2
valuations that are only~0 for the empty set.  Such functions always
satisfy the subadditivity condition ($v_i(S \cup T) \le v_i(S) +
v_i(T)$).  1-inputs and 0-inputs of \mdisj now become 1-inputs and
0-inputs of \wm[2], respectively.  The communication complexity lower
bound follows.
\end{proof}

There is also a quite non-trivial matching upper bound of~2 for
deterministic, polynomial-communication protocols~\citep{F06}.

\section{Lower Bounds for Equilibria}\label{s:condpoa}

The lower bounds of the previous section show that
every protocol for the welfare-maximization problem that interacts
with the players and then explicitly
computes an allocation has either a bad approximation ratio or 
high communication cost.  Over the past five years, many researchers have
aimed to shift the work from the protocol to the players, by analyzing the
equilibria of simple auctions.  Can such equilibria bypass the
communication complexity lower bounds proved in Section~\ref{s:lb_agt}?
The answer is not obvious, because equilibria are defined
non-constructively, and not through a low-communication
protocol.\footnote{This question was bothering your instructor back in
  CS364B (Winter '14) --- hence, Theorem~\ref{t:condpoa}.}

\subsection{Game Theory}

Next we give the world's briefest-ever game theory tutorial.
See e.g.~\cite{SLB}, or the instructor's CS364A lecture notes, for a more
proper introduction.  We'll be brief because the details of these
concepts do not play a first-order role in the arguments below.

\subsubsection{Games}

A {\em (finite, normal-form) game} is specified by:
\begin{enumerate}

\item A finite set of $k \ge 2$ {\em players}.

\item For each player $i$, a finite {\em action set} $A_i$.

\item For each player $i$, a {\em utility function} $u_i(\actions)$
  that maps an action profile $\actions \in A_1 \times \cdots \times
  A_k$ to a real number.  The utility of a player generally depends
  not only on its action, but also those chosen by the other players.

\end{enumerate}
For example, in ``Rock-Paper-Scissors (RPS),'' there are two players,
each with three actions.  
A natural choice of utility functions is depicted in Figure~\ref{f:rps}.

\begin{figure}
\centering
\begin{tabular}{|c|c|c|c|}
  \hline
   & Rock & Paper & Scissors \\ \hline
  Rock & 0,0 & -1,1 & 1,-1 \\
  Paper & 1,-1 & 0,0 & -1,1 \\
  Scissors & -1,1 & 1,-1 & 0,0 \\
  \hline
\end{tabular}
\caption[Player utilities in Rock-Paper-Scissors]{Player utilities in Rock-Paper-Scissors.  The pair of numbers in a matrix entry denote the utilities of the row and column players, respectively, in a given outcome.}\label{f:rps}
\end{figure}

For a more complex and relevant example of a game, consider {\em
  simultaneous first-price auctions (S1As)}.  There are $k$ players.
An action $\actioni$ of a player~$i$ constitutes a bid $b_{ij}$
on each item $j$ of a set $M$ of $m$ items.\footnote{To keep the game
  finite, let's agree that each bid has to be an integer between 0 and
  some known upper bound $B$.}
In a S1A, each item is sold separately in parallel using a
``first-price auction'' --- 
the item is awarded to the highest bidder, and the price is whatever
that player bid.\footnote{You may have also heard of the {\em Vickrey}
  or {\em second-price} auction, where the winner does not pay their
  own bid, but rather the highest bid by someone else (the
  second-highest overall).  We'll stick with S1As for simplicity, but
  similar results are known for simultaneous second-price
  auctions, as well.}
To specify the utility functions, we assume that each player $i$ has a
valuation $v_i$ as in Section~\ref{s:wm}.  We define
\[
u_i(\actions) = \underbrace{v_i(S_i)}_{\text{value of items won}} -
\underbrace{\sum_{j \in S_i} b_{ij}}_{\text{price paid for them}},
\]
where $S_i$ denotes the items on which $i$ is the highest bidder
(given the bids of $\actions$).\footnote{Break ties in an arbitrary
  but consistent way.}  Note that the utility of a bidder
depends both on its own action and those of the other bidders.
Having specified the players, their actions, and their utility
functions, we see that an S1A is an example of a game.

\subsubsection{Equilibria}

Given a game, how should one reason about it?  The standard approach
is to define some notion of ``equilibrium'' and then study the
equilibrium outcomes.  There are many useful notions of equilibria
(see e.g.~the instructor's CS364A notes); for simplicity, we'll stick
here with the most common notion, (mixed) Nash
equilibria.\footnote{For the auction settings we study, ``Bayes-Nash
  equilibria'' are more relevant.  These generalize Nash equilibria,
  so our lower bounds immediately apply to them.}

A {\em mixed strategy} for a player~$i$ is a probability distribution
over its actions --- for example, the uniform distribution over
Rock/Paper/Scissors.  
A {\em Nash equilibrium} is a collection
$\sigma_1,\ldots,\sigma_k$ of mixed strategies, one per player, so
that each player is performing a ``best response'' to the others.
To explain, adopt the perspective of player~$i$.  We think of~$i$ as
knowing the mixed strategies $\sigmasmi$ used by the other $k-1$
players (but not their coin flips).  Thus, player~$i$ can compute the
expected payoff of each action $\actioni \in \Actioni$, where the
expectation assumes that the other $k-1$ players randomly and
independently select actions from their mixed strategies.  Every
action that maximizes $i$'s expected utility is a {\em best response}
to $\sigmasmi$.  Similarly, every probability distribution over best
responses is again a best response (and these exhaust the best
responses).
For example, in Rock-Paper-Scissors, both players playing the uniform
distribution yields a Nash equilibrium.  (Every action of a player has
expected utility~0 w.r.t.\ the mixed strategy of the other player, so
everything is a best response.)

Nash proved the following.
\begin{theorem}[\citealt{N50}]\label{t:nash}
In every finite game, there is at least one Nash equilibrium.
\end{theorem}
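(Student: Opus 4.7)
The plan is to prove Theorem~\ref{t:nash} via a fixed-point argument, which is the standard route (and essentially Nash's original approach). First I would set up the geometry: let $\Delta_i$ denote the simplex of mixed strategies over $A_i$, and let $\Delta = \Delta_1 \times \cdots \times \Delta_k$ be the space of mixed strategy profiles. This $\Delta$ is a nonempty, compact, convex subset of a finite-dimensional Euclidean space. I would then extend each utility $u_i$ multilinearly to $\Delta$ by taking expectations, so that $u_i(\sigmas)$ is continuous in $\sigmas$. The key observation, which I would record as a lemma, is that $\sigmas$ is a Nash equilibrium if and only if for every player $i$, every action $\actioni$ in the support of $\sigmai$ maximizes $u_i(\actioni', \sigmasmi)$ over $\actioni' \in \Actioni$.

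Next I would define a continuous self-map of $\Delta$ whose fixed points are exactly the Nash equilibria. The cleanest choice is Nash's ``gain'' map: for each player $i$ and each action $\actioni \in \Actioni$, let
\[
g_{i,\actioni}(\sigmas) = \max\bigl\{0,\; u_i(\actioni,\sigmasmi) - u_i(\sigmas)\bigr\},
\]
and define $f(\sigmas) = \sigmas'$ by
\[
\sigmai'(\actioni) = \frac{\sigmai(\actioni) + g_{i,\actioni}(\sigmas)}{1 + \sum_{\actioni' \in \Actioni} g_{i,\actioni'}(\sigmas)}.
\]
Each $g_{i,\actioni}$ is continuous (as $u_i$ is), so $f$ is continuous, and by construction $\sigmai'$ is a probability distribution on $\Actioni$, so $f:\Delta \to \Delta$. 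Since $\Delta$ is nonempty, compact, and convex, Brouwer's fixed-point theorem yields some $\sigmas^*$ with $f(\sigmas^*) = \sigmas^*$.

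The final step is to verify that any fixed point of $f$ is a Nash equilibrium. Fix a player $i$ and suppose for contradiction that some action $\actioni$ in the support of $\sigmai^*$ is not a best response to $\sigmasmi^*$; then (by the lemma above) there exists $\actioni'$ with $u_i(\actioni',\sigmasmi^*) > u_i(\sigmas^*) \ge u_i(\actioni,\sigmasmi^*)$, so the average utility $u_i(\sigmas^*)$ strictly exceeds $u_i(\actioni,\sigmasmi^*)$ for at least one in-support action $\actioni$. For that action $\actioni$, $g_{i,\actioni}(\sigmas^*) = 0$, while $\sum_{\actioni'} g_{i,\actioni'}(\sigmas^*) > 0$, so the fixed-point equation forces $\sigmai^*(\actioni) > \sigmai^*(\actioni)$, a contradiction. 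Hence every action in the support is a best response, and $\sigmas^*$ is a Nash equilibrium.

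The main obstacle is invoking Brouwer's fixed-point theorem, which we are taking as a black box from topology; crafting the map $f$ so that its fixed points coincide exactly with Nash equilibria (rather than merely containing them) is the subtle modeling step, and checking the ``fixed point implies equilibrium'' direction is where one has to be a little careful with the in-support vs.\ out-of-support case analysis. Everything else (continuity, convexity, multilinearity of expected utility) is routine.
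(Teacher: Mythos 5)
The paper does not prove this theorem---it cites Nash~\citeyearpar{N50} and merely remarks afterward that the result ``can be derived from, and is essentially equivalent to, Brouwer's Fixed-Point Theorem.'' Your proof is the standard realization of that remark (Nash's gain map fed into Brouwer), and its overall structure is sound, but the ``fixed point $\Rightarrow$ equilibrium'' step---which you rightly single out as the delicate one---has some slips as written.

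For the particular non-best-response support action $a_i$ you fix at the outset, the inequality $u_i(\sigma^*) \ge u_i(a_i,\sigma^*_{-i})$ need not hold: $u_i(\sigma^*)$ is a weighted average of the support payoffs, and $a_i$'s own payoff can exceed that average (e.g., support payoffs $(5,3,1)$ with weights $(0.1,0.1,0.8)$, taking $a_i$ to be the payoff-$3$ action, which is not a best response yet beats the average of $1.6$). Your subsequent pivot to ``strictly exceeds $\ldots$ for at least one in-support action'' can also fail, namely when every support action has the same (suboptimal) payoff, so the average equals each of them. The correct move is to take $a_i$ to be a support action of \emph{smallest} payoff; then $u_i(\sigma^*) \ge u_i(a_i,\sigma^*_{-i})$ holds automatically (non-strict), and non-strict is exactly what you need to conclude $g_{i,a_i}(\sigma^*)=0$. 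Finally, the concluding inequality should be $\sigma^*_i(a_i) < \sigma^*_i(a_i)$, not $>$: with $g_{i,a_i}(\sigma^*)=0$ and $\sum_{a'_i} g_{i,a'_i}(\sigma^*)>0$, the fixed-point equation reads $\sigma^*_i(a_i) = \sigma^*_i(a_i)/(1+\sum_{a'_i} g_{i,a'_i}(\sigma^*))$, so a positive probability is equated with something strictly smaller, forcing $\sigma^*_i(a_i)=0$ and contradicting membership in the support. None of this requires a new idea---after selecting the minimizing support action and relaxing to a non-strict inequality, the argument closes exactly as you intend.
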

Theorem~\ref{t:nash} can be derived from, and is essentially
equivalent to, Brouwer's Fixed-Point Theorem.  Note that
a game can have a large number of \nes --- if you're trying
to meet a friend in New York City, with actions equal to
intersections, then every intersection corresponds to a Nash
equilibrium.

An {\em \ene} is the relaxation of a \ne in which no player can
increase its expected utility by more than $\eps$ by switching to a
different strategy.  Note that the set of \enes is nondecreasing with
$\eps$.  Such approximate Nash equilibria seem crucial to the lower
bound in Theorem~\ref{t:condpoa}, below.

\subsubsection{The Price of Anarchy}

So how good are the equilibria of various games, such as S1As?  To
answer this question, we use an analog of the approximation ratio,
adapted for equilibria.  Given a game (like an S1A) and a nonnegative
maximization objective function on the outcomes (like welfare), the
{\em price of 
  anarchy (POA)}~\citep{KP99} is defined as the ratio between the
objective function value of an optimal solution, and that of the worst
equilibrium.\footnote{Recall that games generally have multiple
  equilibria.  Ideally, we'd like an approximation guarantee that
  applies to {\em all} equilibria --- this is the point of the POA.}
If the equilibrium involves randomization, as with mixed 
strategies, then we consider its expected objective function value.

The POA of a game and a maximization objective function is always at
least~1.  It is common to identify ``good performance'' of a system
with strategic participants as having a POA close to~1.\footnote{An
  important issue, outside the scope of these notes, is the
  plausibility of a system reaching an 
  equilibrium.  A natural solution is to relax the notion of
  equilibrium enough so that it become ``relatively easy'' to reach an
  equilibrium.  See e.g.~the instructor's CS364A notes for much more on
  this point.}

For example,
the equilibria of S1As are surprisingly good in fairly general
settings.
\begin{theorem}[\citealt{FFGL13}]\label{t:ffgl}
In every S1A with subadditive bidder valuations, the POA is at most~2.
\end{theorem}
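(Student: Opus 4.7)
The plan is to prove this via the smoothness framework for price-of-anarchy bounds, in which a single per-player deviation inequality, once established, upper-bounds the POA by plugging in an arbitrary \ne. So, let $\sigma = (\sigma_1,\ldots,\sigma_k)$ be any \ne of the S1A, let $(O_1^*,\ldots,O_k^*)$ denote the welfare-optimal allocation, and write $OPT = \sum_i v_i(O_i^*)$. For each player $i$ I will construct a (randomized) deviation bid $b_i'$, depending on $v_i$, on $O_i^*$, and on the bids $\actionsmi$ of the others, such that, when summed over $i$,
\[
\sum_{i=1}^k \expect[\sigma_{-i},b_i']{u_i(b_i',\sigma_{-i})} \;\ge\; \tfrac{1}{2}\,OPT \;-\; \expect[\sigma]{\text{SW}(\sigma)}.
\]
Since $\sigma$ is a \ne, each term on the left is at most $\expect[\sigma]{u_i(\sigma)}$. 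Summing and using $\sum_i u_i(\sigma) \le \text{SW}(\sigma)$ (bidders pay nonnegative prices), this rearranges to $\expect[\sigma]{\text{SW}(\sigma)} \ge \tfrac{1}{2} OPT$, i.e.\ POA $\le 2$.

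The construction of $b_i'$ is the crux. Given the other players' (mixed) bids $\sigma_{-i}$, define for each item $j$ the (random) threshold price $p_j := \max_{i' \neq i} b_{i'j}$, and in particular the prices $\{p_j\}_{j \in O_i^*}$ that $i$ must beat to win the items in its optimum bundle. The deviation will be: sample a subset $T \sse O_i^*$ from a carefully chosen distribution (depending on $v_i$ and the observed/expected $p_j$'s), then bid $p_j + \eps$ on each $j \in T$ and $0$ elsewhere. This deterministically wins exactly $T$ and yields utility $v_i(T) - \sum_{j \in T} p_j$. The key technical lemma I need is a ``subadditive covering'' statement: for every subadditive $v_i$, every bundle $S = O_i^*$, and every nonnegative price vector $\{p_j\}_{j \in S}$, there is a distribution over $T \sse S$ with
\[
\expect{v_i(T) - \sum_{j \in T} p_j} \;\ge\; \tfrac{1}{2}\, v_i(S) \;-\; \sum_{j \in S} p_j\,\prob{j\text{ wins bundle containing }j}.
\]
Put differently, after summing over $i$, the ``price'' terms telescope against the expected revenue paid by the high bidders under $\sigma$, which in turn is bounded by $\expect{\text{SW}(\sigma)}$ (bids are at most values at equilibrium, by a standard no-overbidding argument). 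This yields the smoothness inequality above.

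The main obstacle is establishing the subadditive covering lemma: for XOS valuations it is essentially immediate (take the supporting additive function and include each item independently with probability $\tfrac{1}{2}$), but for genuinely subadditive $v_i$ there is no additive lower bound to work with. The standard workaround, which I would carry out, is Dobzinski's construction: for each threshold $\tau$, consider the ``$\tau$-level'' set $S_\tau = \{j \in S : p_j \le \tau\}$, choose a random $\tau$ drawn from a suitable distribution on $[0,v_i(S)/|S|]$, and pick $T$ to be $S_\tau$. Subadditivity, applied to the decomposition $S = T \cup (S\setminus T)$ together with a monotone rearrangement over $\tau$, then gives the factor-$\tfrac{1}{2}$ in the covering lemma. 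Once this lemma is in hand, summing over players and invoking the equilibrium inequality completes the proof of POA $\le 2$.
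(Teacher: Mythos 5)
The paper doesn't actually prove this theorem --- it explicitly says ``Theorem~\ref{t:ffgl} is non-trivial and we won't prove it here'' and defers to \cite{FFGL13} --- so I'm comparing your proposal to their argument. Your high-level plan is the right one: a per-player deviation inequality exploiting subadditivity, summed over players and combined with the equilibrium condition. But three things in your writeup don't hold up.

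First, the algebra in your smoothness inequality yields POA~$\le 4$, not~$2$. From
\[
\sum_i \expect{u_i(b_i',\sigma_{-i})} \ge \tfrac12\, OPT - \expect{\text{SW}(\sigma)}
\]
and the \ne condition, you get $\sum_i \expect{u_i(\sigma)} \ge \tfrac12\, OPT - \expect{\text{SW}(\sigma)}$; bounding the left side by $\expect{\text{SW}(\sigma)}$ (via $\sum_i u_i \le \text{SW}$) gives $2\,\expect{\text{SW}(\sigma)} \ge \tfrac12\, OPT$, i.e.\ a factor~4. To get~2 you must keep \emph{revenue}, not welfare, on the right of the deviation inequality, and then use the exact first-price accounting identity $\text{SW}(\sigma) = \sum_i u_i(\sigma) + \text{revenue}(\sigma)$ rather than the lossy inequality $\sum_i u_i \le \text{SW}$. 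That is: prove $\sum_i \expect{u_i(b_i',\sigma_{-i})} \ge \tfrac12\, OPT - \expect{\text{revenue}(\sigma)}$, and then $\expect{\text{SW}(\sigma)} = \sum_i \expect{u_i(\sigma)} + \expect{\text{revenue}(\sigma)} \ge \tfrac12\, OPT$.

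Second, the deviation ``bid $p_j+\eps$ on each $j\in T$'' is not a legal deviation in a mixed \ne: player $i$ does not observe the realized bids of the others, only the mixed strategies $\sigmasmi$. The deviation must be a function of the distribution $D$ over the threshold prices $(p_j)_{j\in O_i^*}$, not of their realization. The fix in \cite{FFGL13} is precisely what makes the argument work for subadditive valuations: sample an \emph{independent} price vector $p'\sim D$ and bid $p'_j$ on each $j\in O_i^*$.

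Third, the covering lemma you state is garbled (the quantity $\sum_{j\in S} p_j\,\prob{j\text{ wins bundle containing }j}$ is not well-defined), and the ``Dobzinski threshold'' construction you gesture at is not the argument. The lemma you actually need is: for monotone subadditive $v$, set $S$, and distribution $D$ over $(p_j)_{j\in S}$, the sample-and-mirror deviation has expected utility at least $\tfrac12 v(S) - \sum_{j\in S}\expect{p_j}$. The proof is a short exchangeability argument, not a level-set argument: letting $T = \{j : p'_j \ge p_j\}$ be the set won (with favorable tie-breaking), the sets $T$ and $\{j : p_j > p'_j\}$ cover $S$, so subadditivity gives $v(T) + v(\{j : p_j > p'_j\}) \ge v(S)$; taking expectations and using that $p$ and $p'$ are exchangeable, $2\,\expect{v(T)} \ge v(S)$. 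The payment is at most $\sum_{j\in S} p'_j$, with expectation $\sum_{j\in S}\expect{p_j}$. Bidding a single random scalar $\tau$ on all of $S$ does not obviously reproduce this for general subadditive $v$ and correlated price distributions, and you haven't checked that it does.

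So the skeleton of your plan is sound, but the deviation inequality is a factor of~2 too weak, the deviation itself is invalid, and the key lemma is both misstated and proved by a construction that doesn't clearly work. Filling all three gaps essentially reproduces the \cite{FFGL13} proof.
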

Theorem~\ref{t:ffgl} is non-trivial and we won't prove it here (see
the paper or the instructor's CS364B notes for a proof).
This result is particularly impressive because achieving an
approximation factor of~2 for the welfare-maximization problem with
subadditive bidder valuations by any means (other than brute-force
search) is not easy (see~\cite{F06}).

A recent result shows that the analysis of~\cite{FFGL13} is tight.
\begin{theorem}[\citealt{CKST14}]\label{t:ckst14}
The worst-case POA of S1As with subadditive bidder valuations is at
least~2.
\end{theorem}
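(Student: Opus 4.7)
The approach is to construct an explicit family of S1A instances with subadditive valuations whose worst \ne has welfare approaching $\tfrac{1}{2}\mathrm{OPT}$. The key intuition is a ``coordination failure'' in equilibrium: bidders spread their bids across items in a way that causes many items to go to the wrong bidders, wasting roughly half of the achievable welfare. This parallels the direct construction style used for simultaneous second-price auctions~\citep{CKS08}, but requires new care because in an S1A a bidder pays its own bid, so overbidding cannot be used to trivially inflate welfare.

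Concretely, I plan to work with $n$ bidders and items arranged in a combinatorial design. Each bidder $i$ is endowed with a unit-demand-like valuation over a designated collection $C_i$ of $r$ bundles: $v_i(S) = 1$ if $S$ contains some bundle in $C_i$, and $v_i(S) = 0$ otherwise. These $\{0,1\}$-valuations are automatically subadditive. By construction, the optimum allocates each bidder one disjoint winning bundle from its collection, giving $\mathrm{OPT} = n$.

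Next I would exhibit a symmetric mixed \ne in which each bidder picks a bundle $B$ uniformly at random from $C_i$ and bids a small amount $\eps > 0$ on each item of $B$ (and $0$ on all other items). By symmetry combined with a careful regularity condition on the set system $\{C_i\}$ (each item lying in the same number of collections), verifying the equilibrium condition reduces to checking bidder indifference across the bundles in each $C_i$, plus ruling out deviations to bids outside $C_i$. Given equilibrium, the welfare computation reduces to a collision-probability argument: bidder $i$ receives value $1$ only when no other bidder targets an overlapping bundle; tuning the overlap parameters of the design drives this probability to $\tfrac12$, giving expected welfare $\tfrac{n}{2}$ and POA approaching $2$.

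The main obstacle will be simultaneously ensuring (i) that the candidate profile is a true \ne rather than merely a symmetric local best response --- in particular, no bidder prefers to bid on an item outside any bundle in its collection to break symmetry --- (ii) that the welfare ratio converges to exactly $2$ in the limit rather than a weaker constant such as $e/(e-1)$ coming from naive uniform-random collision analysis, and (iii) that subadditivity is preserved under any perturbation needed to pin down the equilibrium. Reaching the tight constant $2$ likely demands a carefully balanced overlap design --- for instance, a regular bipartite incidence structure where each pair of collections shares exactly one item --- together with a small additive perturbation to the valuations (say a uniform value $\delta$ on every non-empty set) that discourages off-support deviations while preserving subadditivity. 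If a clean pure construction proves elusive, the backup plan is to instead prove the bound in the Bayes--Nash setting by drawing the collections $C_i$ from a symmetric distribution and analyzing the resulting ``prior-revealing'' equilibrium, which matches the framework under which Theorem~\ref{t:ffgl} is actually stated.
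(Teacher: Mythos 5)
The paper does not actually prove Theorem~\ref{t:ckst14}; it cites~\cite{CKST14} and summarizes the argument as an explicit construction of subadditive valuations together with a \ne whose welfare is half of optimal. So the relevant question is whether your proposed construction could work, and it cannot as written because of a concrete error: the valuations you define are not subadditive. A monotone valuation $v$ with $v(\emptyset)=0$ taking values in $\{0,1\}$ is subadditive exactly when its $0$-sets are closed under union, which forces $v(S) = \mathbf{1}[S \cap Z \neq \emptyset]$ for some fixed $Z \sse M$. Your valuation $v_i(S) = \mathbf{1}[S \supseteq B \text{ for some } B \in C_i]$ is not of this form whenever some bundle $B \in C_i$ has $|B| \ge 2$: split $B$ into non-empty $B_1, B_2$ and observe $v_i(B_1) = v_i(B_2) = 0$ while $v_i(B_1 \cup B_2) = 1$. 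This is exactly the failure the paper itself flags and repairs in the proof of Corollary~\ref{t:wm2}, and there the repair is to add $1$ (not a small $\delta$) to every non-empty bundle.

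Your ``small additive perturbation $\delta$'' cannot restore subadditivity: a $\{0,\delta,1+\delta\}$-valued monotone valuation (with $0$ only on $\emptyset$) is subadditive only if $1+\delta \le 2\delta$, i.e.\ $\delta \ge 1$, and a perturbation of that size wrecks the welfare calculation. With the resulting $\{0,1,2\}$-valued valuations you have $\mathrm{OPT} = 2n$, but any outcome assigning every bidder a non-empty set already yields welfare at least $n$, so the collision events you want to drive the equilibrium welfare toward $n/2$ in fact leave it at $n$ or more, bounding the POA strictly below $2$. Reaching the tight factor of $2$ therefore requires a genuinely different family of subadditive valuations, not single-minded $\{0,1\}$ valuations lifted by a constant --- that is precisely where the difficulty in the cited construction lies. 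Two secondary issues: bidding the same $\eps$ on every item in a chosen bundle creates ubiquitous ties in a first-price auction, so the tie-breaking rule must be specified and analyzed as part of verifying the \ne conditions; and the Bayes--Nash fallback proves a formally weaker statement, since every \ne is a degenerate Bayes--Nash equilibrium but not conversely, so a Bayes--Nash POA lower bound does not by itself imply Theorem~\ref{t:ckst14} as stated.
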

The proof of Theorem~\ref{t:ckst14} is an ingenious explicit
construction --- the authors exhibit a choice of subadditive bidder
valuations and a Nash equilibrium of the corresponding S1A so that the
welfare of this equilibrium is only half of the maximum possible.
One reason that proving results like Theorem~\ref{t:ckst14} is
challenging is that it can be difficult to solve for a (bad) equilibrium
of a complex game like a S1A.

\subsection[POA Lower Bounds from Communication
  Complexity]{Price-of-Anarchy Lower Bounds from Communication Complexity}

Theorem~\ref{t:ffgl} motivates an obvious question: can we do better?
Theorem~\ref{t:ckst14} implies that the analysis in~\cite{FFGL13}
cannot be improved, but can we reduce the POA by considering a
different auction?  Ideally, the auction would still be ``reasonably
simple'' in some sense.  Alternatively, perhaps no ``simple'' auction
could be better than S1As?  If this is the case, it's not clear how to
prove it directly --- proving lower bounds via explicit constructions
auction-by-auction does not seem feasible.

Perhaps it's a clue that the POA upper bound of~2 for S1As
(Theorem~\ref{t:ffgl}) gets stuck at the same threshold for which
there is a lower bound for protocols that use polynomial communication
(Theorem~\ref{t:wm2}).  It's not clear, however, that a lower bound
for low-communication protocols has anything to do with equilibria.
In the spirit of the other reductions that we've seen in this course,
can we extract a low-communication protocol from an equilibrium?

\begin{theorem}[\citealt{R14}]\label{t:condpoa}
Fix a class $\V$ of possible bidder valuations.
Suppose there exists no nondeterministic protocol with 
subexponential (in $m$) communication for the 1-inputs of the
following promise version of the welfare-maximization problem
with bidder valuations in~$\V$: 
\begin{itemize}

\item [(1)] Every allocation has welfare at most $W^*/\alpha$.

\item [(0)] There exists an allocation with welfare at least $W^*$.

\end{itemize}
Let $\eps$ be bounded below by some inverse polynomial function of $n$
and $m$.
Then, for every auction with sub-doubly-exponential (in $m$) actions
per player, the worst-case POA of \enes with bidder valuations in $\V$
is at least $\alpha$.
\end{theorem}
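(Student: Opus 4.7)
The plan is to prove the contrapositive. Suppose some auction $\mathcal{A}$ with sub-doubly-exponentially-many actions per player has POA of \enes strictly less than $\alpha$; I will build a nondeterministic protocol with subexponential communication for the 1-inputs of the welfare-maximization gap problem, contradicting the theorem's hypothesis. Fix $\beta < \alpha$ so that every \ene $\sigmas$ of the instance of $\mathcal{A}$ induced by valuations $v_1,\ldots,v_k$ has expected welfare at least $\mathrm{OPT}/\beta$, and fix a threshold $T$ with $W^*/\alpha < T < W^*/\beta$.

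The protocol is as follows. The prover writes on the blackboard (i) a mixed strategy profile $\sigmas = (\sigmai[1],\ldots,\sigmai[k])$ and (ii) claimed per-player expected welfare contributions $w_1,\ldots,w_k$. Each player~$i$, using only its private valuation $v_i$ and the blackboard, verifies that $\sigmai$ is an $\eps$-best response to $\sigmasmi$ in $\mathcal{A}$ and that the claimed $w_i$ equals $\expect[\actions \sim \sigmas]{v_i(S_i(\actions))}$, where $S_i(\actions)$ is the bundle $\mathcal{A}$ awards to player~$i$ under action profile~$\actions$; both checks only require $v_i$ and what is on the blackboard. The protocol accepts iff every player accepts and $\sum_i w_i \le T$. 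On a 1-input, $\mathrm{OPT} \le W^*/\alpha$, so every valid \ene has expected welfare at most $W^*/\alpha < T$, and the prover can certify the input by writing down any \ene (one exists by Nash's theorem applied to the finite game $\mathcal{A}$). On a 0-input, $\mathrm{OPT} \ge W^*$, so the POA assumption forces every \ene to have welfare at least $W^*/\beta > T$, and no proof can pass the threshold test.

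To bound the communication, I invoke the Alth\"ofer/Lipton-Markakis-Mehta sparsification result: every \ene admits a nearby $\eps'$-Nash equilibrium (with $\eps'$ only polynomially larger than $\eps$, hence still within the inverse-polynomial regime) in which each $\sigmai$ is uniform over a multiset of $O(\log k / \eps^2)$ actions. Writing such a sparse profile on the blackboard costs $O(k \log k \cdot \log A / \eps^2)$ bits, where $A$ is the number of actions per player, and writing the $w_i$'s to polynomial precision adds $\poly(k,m)$ more. Since $k$ and $1/\eps$ are polynomial in $m$ and $\log A$ is subexponential in $m$ (by the sub-doubly-exponential hypothesis on $A$), the total communication is subexponential in $m$, yielding the desired contradiction and hence the lower bound POA $\ge \alpha$.

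The main obstacle is the sparsification step: Nash equilibria of $\mathcal{A}$ can easily have supports of (doubly-exponential) full size, so one cannot afford to write down an arbitrary equilibrium in its natural encoding. The Alth\"ofer/Lipton-Markakis-Mehta argument, which samples actions from the mixed strategies and uses Hoeffding-style concentration to preserve best-response utilities to within $\eps'$, is essential: it reduces the cost of certifying an equilibrium from $\approx A$ to $O(\log A \cdot \log k / \eps^2)$ per player, which is precisely what converts the sub-doubly-exponential bound on $A$ into a subexponential communication bound. Secondary technicalities will be choosing the slack in $T$ so that the $\eps \to \eps'$ rounding still fits strictly inside the interval $(W^*/\alpha, W^*/\beta)$, and discretizing both probabilities and claimed welfares to polynomial bit complexity so that verification is well-defined on a finite blackboard.
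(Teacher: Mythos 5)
Your overall approach matches the paper's: extract a nondeterministic protocol from the POA bound by having the prover announce a sparse approximate equilibrium together with claimed per-player welfare contributions, having each player locally verify its own best-response condition and its $w_i$, and accepting iff the total claimed welfare clears a threshold. The interpolating threshold $T \in (W^*/\alpha, W^*/\beta)$ is a mild elaboration --- the paper compares directly against $W^*/\alpha$ --- but is harmless.

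There is one genuine slip in how you invoke the sparsification lemma. You write that ``every \ene admits a nearby $\eps'$-Nash equilibrium'' with $\eps'$ strictly larger than $\eps$, while in your protocol the players verify that each strategy is an $\eps$-best response. These two choices are inconsistent. A sparse $\eps'$-NE with $\eps' > \eps$ need not pass an $\eps$-BR check, so on a 1-input the honest prover may have nothing that passes. If instead you weaken the checks to $\eps'$-BR so that the sparse profile passes, then accepted proofs are only guaranteed to be $\eps'$-NE, and your contrapositive hypothesis (POA of $\eps$-NE $< \alpha$) says nothing about $\eps'$-NE, which form a strictly larger set of equilibria whose POA could be worse. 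The clean fix --- and what the paper does via Lemma~\ref{l:lmm} --- is to invoke LMM in the one-parameter form: for every game and every $\eps$, there exists an $\eps$-NE with description length polynomial in $k$, $\log A$, and $1/\eps$, obtained by sampling from an exact Nash equilibrium (which exists by Nash's theorem). Then the protocol's verification step, the prover's candidate, and the POA hypothesis all refer to exactly the same set of $\eps$-equilibria, there is no second parameter to reconcile, and the inverse-polynomial lower bound on $\eps$ is precisely what keeps the description length subexponential in $m$.
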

Theorem~\ref{t:condpoa} says that lower bounds for nondeterministic
protocols carry over to all ``sufficiently simple'' auctions, where
``simplicity'' is measured by the number of actions available
to each player.  These POA lower bounds follow from communication
complexity lower bounds, and do not require any new explicit
constructions.

To get a feel for the simplicity constraint, note that
S1As with integral bids between 0 and~$B$ have $(B+1)^m$ actions per
player --- singly exponential in~$m$.  On the other hand, in a
``direct-revelation'' auction, where each bidder is allowed to
submit a bid on each bundle $S \sse M$ of items, each player has a
doubly-exponential (in $m$) number of actions.\footnote{Equilibria can
  achieve the optimal welfare in direct-revelation mechanisms, so the
  bound in Theorem~\ref{t:condpoa} on the number of actions is
  necessary.  See the Exercises for further details.}

The POA lower bound promised by Theorem~\ref{t:condpoa} is
only for \ene; since the POA is a worst-case measure and the set of
\enes is nondecreasing with $\eps$, this is weaker than a lower bound for
exact \nes.  It is an open question whether or not
Theorem~\ref{t:condpoa} holds also for the POA of exact \nes.
Arguably, Theorem~\ref{t:condpoa} is good enough for all
practical purposes --- a POA upper bound that holds for exact \nes and
does not hold (at least approximately) for \enes with very small $\eps$
is too brittle to be meaningful.

Theorem~\ref{t:condpoa} has a number of interesting corollaries.
First, since S1As have only a singly-exponential (in $m$) 
number of actions per
player, Theorem~\ref{t:condpoa} applies to them.  Thus, 
combining it with Theorem~\ref{t:wm2} recovers the POA lower bound of
Theorem~\ref{t:ckst14} --- modulo the exact vs.\ approximate \nes issue
--- and shows the optimality of the upper bound in
Theorem~\ref{t:ffgl} without an explicit construction.
More interestingly, this POA lower bound of~2 (for subadditive bidder
valuations) applies not only to S1As, but more generally to all
auctions in which each player has a sub-doubly-exponential number of
actions.  Thus, S1As are in fact {\em optimal} among the class of
all such auctions when bidders have subadditive valuations
(w.r.t.\ the worst-case POA of \enes).

We can also combine Theorem~\ref{t:condpoa} with Theorem~\ref{t:wmk}
to prove that no ``simple'' auction gives a non-trivial (better than
$k$-) approximation for general bidder valuation.  Thus with general
valuations, complexity is essential to any auction format that offers
good equilibrium guarantees.

\subsection{Proof of Theorem~\ref{t:condpoa}}

Presumably, the proof of Theorem~\ref{t:condpoa} extracts a
low-communication protocol from a good POA bound.  The hypothesis of
Theorem~\ref{t:condpoa} offers the clue that we should be looking to
construct a nondeterministic protocol.  So what could we use an
all-powerful prover for?  We'll see that a good role for the prover is
to suggest a \ne to the players.

Unfortunately, it's too expensive for the prover to even write down
the description of a \ne, even in S1As.  Recall that a mixed strategy is a
distribution over actions, and that each player has an exponential (in
$m$) number of actions available in a S1A.  Specifying a \ne thus
requires an exponential number of probabilities.  To circumvent this
issue, we resort to \enes, which are guaranteed to exist even if we
restrict ourselves to distributions with small descriptions.

\begin{lemma}[\citealt{LMM03}]\label{l:lmm}
For every $\eps > 0$ and every game with $k$ players with action sets
$A_1,\ldots,A_k$, there exists an \ene with description length
polynomial in $k$, $\log (\max_{i=1}^k |A_i|)$, and $\tfrac{1}{\eps}$.
\end{lemma}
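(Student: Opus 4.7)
The plan is to prove Lemma~\ref{l:lmm} via the classic sampling argument of Lipton-Markakis-Mehta: start from any exact Nash equilibrium (guaranteed by Theorem~\ref{t:nash}), then show that sampling a small number of actions from each player's mixed strategy yields, with positive probability, an $\eps$-Nash equilibrium supported on a small multiset. After normalizing so that all utilities lie in $[-1,1]$ (which can be done with only a logarithmic blow-up in bit complexity given our standing assumption that each $u_i$ has polynomial description length), fix an exact \ne $(\sigma_1,\ldots,\sigma_k)$, set $t=\Theta(\eps^{-2}\log(k\max_i|A_i|/\eps))$, and for each player $i$ draw actions $\action_i^1,\ldots,\action_i^t$ i.i.d.\ from $\sigmai$. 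Let $\tilde\sigmai$ be the uniform distribution over the multiset $\{\action_i^1,\ldots,\action_i^t\}$.

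The main step is to show that with positive probability the profile $(\tilde\sigmai[1],\ldots,\tilde\sigmai[k])$ is an \ene. For any fixed player $i$ and any fixed pure deviation $\action_i'\in\Actioni$, the quantity $u_i(\action_i',\tilde\sigmasmi)$ is an average of $t^{k-1}$ values (one per realization of the other players' samples), each bounded in $[-1,1]$ and with expectation $u_i(\action_i',\sigmasmi)$ when viewed as a sum of independent contributions across the $t$ draws of each opponent. Applying Hoeffding's inequality (with a routine martingale-style decomposition to handle the product structure across the $k-1$ opponents) shows that
\[
\prob{\bigl|u_i(\action_i',\tilde\sigmasmi)-u_i(\action_i',\sigmasmi)\bigr|>\tfrac{\eps}{3}}\le 2\exp\{-\Omega(\eps^2 t)\}.
\]
The same bound applies to $u_i(\tilde\sigmai,\tilde\sigmasmi)$ versus $u_i(\sigmai,\sigmasmi)$. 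Union-bounding over all $k\sum_i|A_i|\le k\cdot k\max_i|A_i|$ choices of $(i,\action_i')$ and our chosen~$t$, the failure probability is strictly less than~$1$, so a good sample exists.

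On such a good sample, every pure deviation $\action_i'$ satisfies $u_i(\action_i',\tilde\sigmasmi)\le u_i(\action_i',\sigmasmi)+\tfrac{\eps}{3}\le u_i(\sigmai,\sigmasmi)+\tfrac{\eps}{3}\le u_i(\tilde\sigmai,\tilde\sigmasmi)+\eps$, where the middle inequality is the exact Nash property of $\sigmai$. Since any mixed deviation is a convex combination of pure deviations, this certifies an \ene. Finally, the description length of the resulting profile is $O(tk\log\max_i|A_i|)=\mathrm{poly}(k,\log\max_i|A_i|,\tfrac{1}{\eps})$, as claimed. The only mildly delicate point is the concentration step, where one must be careful that $\tilde\sigmasmi$ is a product of $k-1$ empirical distributions rather than a single one, but this is handled by expanding $u_i(\action_i',\tilde\sigmasmi)$ as a $(k-1)$-fold average and applying Hoeffding coordinate by coordinate (or by using McDiarmid's inequality directly on the $t(k-1)$ independent samples).
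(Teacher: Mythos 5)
Your proposal follows the same route as the paper's own sketch: start from an exact Nash equilibrium, sample $t$ actions per player, pass to the empirical mixed strategies, show concentration via Hoeffding/Chernoff, and close with a union bound over all players and deviations. That is precisely the four-step outline the paper gives, so the argument is essentially the paper's.

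The one substantive slip is the claimed sample size $t = \Theta(\eps^{-2}\log(k\max_i|A_i|/\eps))$. That is the right order for two players, but for $k>2$ the product structure of $\tilde\sigma_{-i}$ costs a polynomial (at least linear) factor in $k$ in front of the log, not merely inside it. Concretely, $u_i(a_i',\tilde\sigma_{-i})$ is a function of the $t(k-1)$ independent opponent samples, and changing any one sample moves the $(k-1)$-fold average by at most $2/t$; McDiarmid then gives a tail of $\exp\bigl(-\Theta(\eps^2 t/(k-1))\bigr)$, not $\exp(-\Omega(\eps^2 t))$ with an absolute constant. So you need $t = \Theta(k\eps^{-2}\log(k\max_i|A_i|))$ for the union bound to close, and the ``Hoeffding coordinate by coordinate'' telescoping you mention is even a bit lossier, giving $\Theta(k^2\eps^{-2}\log(\cdot))$. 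None of this affects the lemma, which only requires the description length to be polynomial in $k$ (and the paper's sketch is careful to take $T$ polynomial, not logarithmic, in $k$) — but the bound on $t$ as you stated it would not survive the union bound you invoke.
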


We give the high-level idea of the proof of Lemma~\ref{l:lmm}; see the
Exercises for details.
\begin{enumerate}

\item Let $(\sigma_1,\ldots,\sigma_k)$ be a \ne.  (One exists, by
  Nash's Theorem.)

\item Run $T$ independent trials of the following experiment: draw
  actions $a^t_1 \sim \sigma_1,\ldots,a^t_k \sim \sigma_k$ for the
  $k$ players independently, according to their mixed strategies in
  the \ne.

\item For each $i$, define $\hat{\sigma}_i$ as the empirical
  distribution of the $a^t_i$'s.  (With the probability of~$a_i$ in
  $\hat{\sigma}_i$ equal to the fraction of trials in which $i$ played
  $a_i$.)

\item Use Chernoff bounds to prove that, if $T$ is at least a
  sufficiently large polynomial in $k$, $\log (\max_{i=1}^k |A_i|)$, and
  $\tfrac{1}{\eps}$, then with high probability
  $(\hat{\sigma}_1,\ldots,\hat{\sigma}_k)$ is an \ene.  Note that the
    natural description length of
    $(\hat{\sigma}_1,\ldots,\hat{\sigma}_k)$ --- for example, just by
      listing all of the sampled actions --- is polynomial in $n$,
$\log(\max_{i=1}^k |A_i|)$, and~$\tfrac{1}{\eps}$.

\end{enumerate}
The intuition is that, for $T$ sufficiently large, expectations with
respect to $\sigma_i$ and with respect to $\hat{\sigma}_{i}$ should
be roughly the same.  Since there are $|A_i|$ relevant expectations
per player (the expected utility of each of its actions) and Chernoff
bounds give deviation probabilities that have an inverse exponential
form, we might expect a $\log |A_i|$ dependence to show up in the
number of trials.

We now proceed to the proof of Theorem~\ref{t:condpoa}.

\vspace{.1in}
\noindent
\begin{prevproof}{Theorem}{t:condpoa}
Fix an auction with at most $A$ actions per player, and a value for
$\eps = \Omega(1/\poly(k,m))$.
Assume that, no matter what the bidder valuations
$v_1,\ldots,v_k \in \V$ are, the POA of \enes of the auction is at most
$\rho < \alpha$.  We will show that $A$ must be doubly-exponential in
$m$.

Consider the following nondeterministic protocol for computing a
1-input of the welfare-maximization problem --- for convincing the $k$
players that every allocation has welfare at most $W^*/\alpha$.  See
also Figure~\ref{f:condpoa}.  The
prover writes on a publicly visible blackboard an \ene
$(\sigma_1,\ldots,\sigma_k)$ of the auction,
with description length polynomial in $k$, $\log A$, and
$\tfrac{1}{\eps} = O(\poly(k,m))$ as guaranteed by Lemma~\ref{l:lmm}.
The prover also writes down the expected welfare contribution
$\expect{v_i(S)}$ of each bidder~$i$ in this equilibrium.

\begin{figure}
\centering
\includegraphics[width=.6\textwidth]{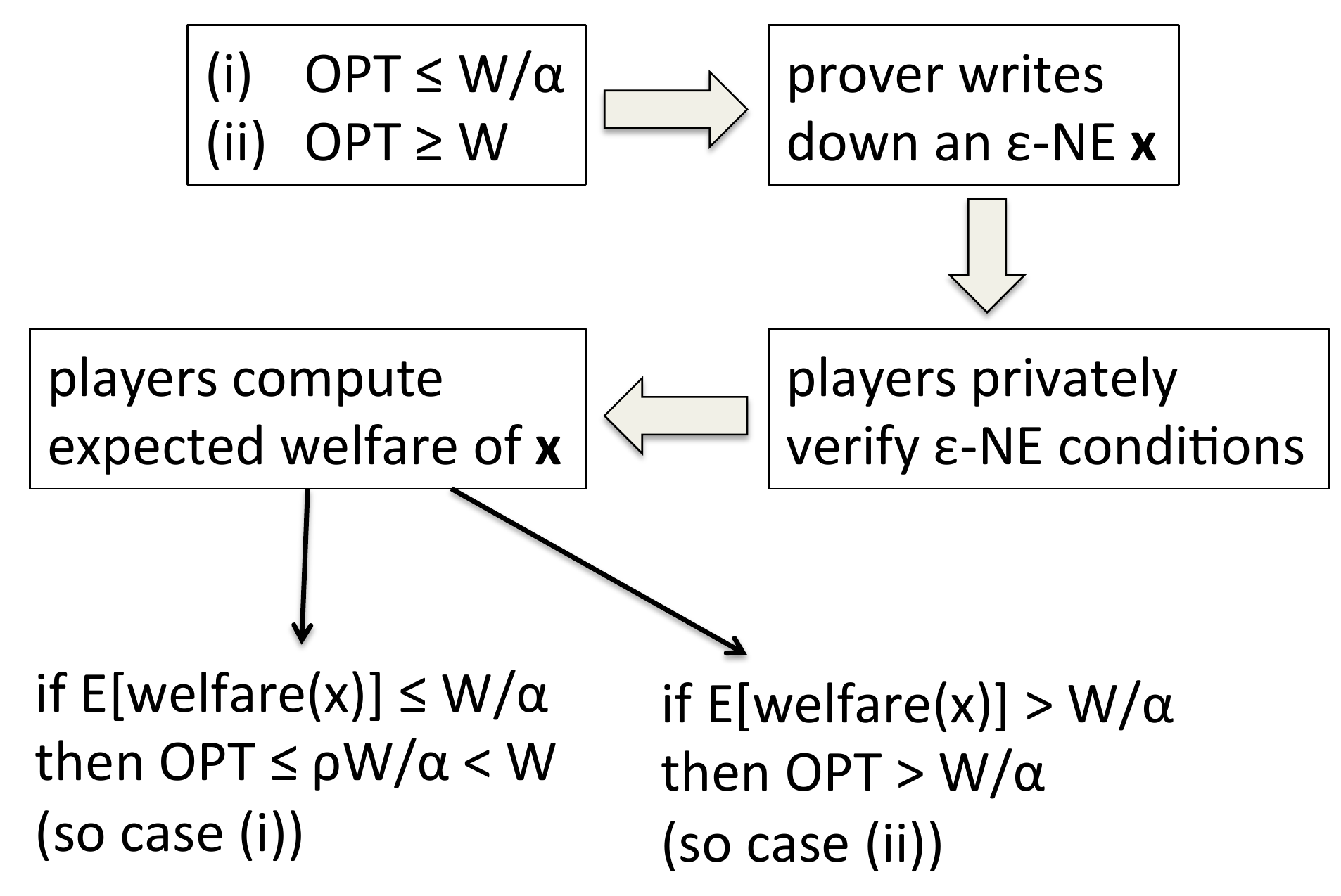}
\caption[Proof of Theorem~\ref{t:condpoa}]{Proof of Theorem~\ref{t:condpoa}.  How to extract a
  low-communication nondeterministic protocol from a good
  price-of-anarchy bound.}\label{f:condpoa}
\end{figure}

Given this advice, each player~$i$ verifies that $\sigma_i$ is indeed
an $\eps$-best response to the other $\sigma_j$'s and that its
expected welfare is as claimed when all players play the mixed strategies
$\sigma_1,\ldots,\sigma_k$.  Crucially, player~$i$ is fully equipped
to perform both of these checks without any communication --- it knows
its valuation $v_i$ (and hence its utility in each outcome of the
game) and the mixed strategies used by all players, and this is all
that is needed to verify the \ene conditions that apply to it and to
compute its expected contribution to the welfare.\footnote{These
  computations may take a super-polynomial amount of time, but they
  do not contribute to the protocol's cost.}  Player~$i$
accepts if and only if the prover's advice passes these two tests, and
if the expected welfare of the equilibrium is at most~$W^*/\alpha$.

For the protocol correctness, consider first the case of a 1-input,
where every allocation has welfare at most $W^*/\alpha$.  If the prover
writes down the description of an arbitrary \ene and the appropriate
expected contributions to the social welfare, then all of the players
will accept (the expected welfare is obviously at most $W^*/\alpha$).
We also need to argue that, for the case of a 0-input --- where
some allocation has welfare at least $W^*$ --- there is no proof that causes
all of the players to accept.  We can assume that the prover writes
down an \ene and its correct expected welfare~$W$, since otherwise at
least one player will reject.  Since the maximum-possible welfare is
at least $W^*$ and (by assumption) the POA of \enes is at most $\rho <
\alpha$, the expected welfare of the given \ene must satisfy $W \ge
W^*/\rho > W/\alpha$.  Since the players will reject such a proof, we
conclude that the protocol is correct.  
Our assumption then implies that the protocol has communication cost
exponential in~$m$.
Since the cost of the protocol is polynomial in $k$, $m$, and $\log A$, 
$A$ must be doubly exponential in $m$.
\end{prevproof}
Conceptually, the proof of Theorem~\ref{t:condpoa} argues
that, when the POA of \enes is small, every \ene provides a
privately verifiable proof of a good upper bound
on the maximum-possible welfare.  When such upper bounds require large
communication, the equilibrium description length (and hence the
number of actions) must be large.

\subsection{An Open Question}

While Theorems~\ref{t:wm2}, \ref{t:ffgl}, and~\ref{t:condpoa}
pin down the best-possible POA achievable by simple auctions with
subadditive bidder valuations, there are still open questions for
other valuation classes.  For example, a valuation $v_i$ is {\em
  submodular} if it satisfies
\[
v_i(T \cup \{j\}) - v_i(T) \le v_i(S \cup \{j\}) - v_i(S)
\]
for every $S \sse T \subset M$ and $j \notin T$.  This is a
``diminishing returns'' condition for set functions.  Every 
submodular function is also subadditive, so welfare-maximization with
the former valuations is only easier than with the latter.

The worst-case POA of S1As is exactly $\tfrac{e}{e-1} \approx 1.58$
when bidders have submodular valuations.
The upper bound was proved in~\cite{ST13}, the lower bound
in~\cite{CKST14}.   It is an open question whether or not there is a
simple auction with a smaller worst-case POA.  The best lower bound
known --- for nondeterministic protocols and hence, by
Theorem~\ref{t:condpoa}, for the POA of \enes of simple auctions --- is
$\tfrac{2e}{2e-1} \approx 1.23$.  Intriguingly, there is an upper
bound (slightly) better than $\tfrac{e}{e-1}$ for 
polynomial-communication
protocols~\citep{FV06} --- can this better upper bound also be realized
as the POA of a simple auction?  What is the best-possible
approximation guarantee, either for polynomial-communication protocols
or for the POA of simple auctions?

\chapter{Lower Bounds in Property Testing}
\label{cha:lower-bounds-prop}

\section{Property Testing}

We begin 
in this section with a brief introduction to the field of property testing.
Section~\ref{s:blr} explains the famous example of ``linearity testing.''
Section~\ref{s:ub_pt} gives upper bounds for the canonical problem of
``monotonicity testing,'' and Section~\ref{s:lb_pt} shows how to derive
property testing lower bounds from communication complexity lower
bounds.\footnote{Somewhat amazingly, this connection was only discovered
  in 2011~\citep{BBM11}, even though the connection is simple and
  property testing is a relatively mature field.}
These lower bounds will follow from our existing communication
complexity toolbox (specifically, \disj); no new results are
required.

Let $D$ and $R$ be a finite domain and range, respectively.  In this
lecture, $D$ will always be $\{0,1\}^n$, while $R$ might or might not
  be $\zo$.  A {\em property} is simply a set $\P$ of functions from
  $D$ to $R$.  Examples we have in mind include:
\begin{enumerate}

\item Linearity, where $\P$ is the set of linear functions 
(with $R$ a field and $D$ a vector space over $R$).

\item Monotonicity, where $\P$ is the set of monotone functions (with
  $D$ and $R$ being partially ordered sets).

\item Various graph properties, like bipartiteness (with functions
  corresponding to characteristic vectors of edge sets, with respect to a fixed vertex set).

\item And so on.  The property testing literature is vast.
  See~\cite{ron_survey} for a starting point.

\end{enumerate}

In the standard property testing model, one has ``black-box access''
to a function $f:D \rightarrow R$.  That is, one can only learn about
$f$ by supplying an argument $x \in D$ and receiving the function's
output  $f(x) \in R$.  The goal is to test membership in $\P$ by
querying $f$ as few times as possible.  Since the goal is to use a
small number of queries (much smaller than $|D|$), there is no hope of
testing membership exactly.  For example, 
suppose you derive $f$ from your
favorite monotone function by changing its value at a single point to
introduce a non-monotonicity.  There is little hope of detecting this
monotonicity violation with a small number of queries.  We therefore
consider a relaxed ``promise'' version of the membership problem.

Formally, we say that a function $f$ is {\em $\eps$-far} from the
property $\P$ if, for every $g \in \P$, $f$ and $g$ differ in at least
$\eps|D|$ entries.  Viewing functions as vectors indexed by
$D$ with coordinates in $R$, this definition says that $f$ has
distance at least $\eps|D|$ from its nearest neighbor in~$\P$ (under
the Hamming metric).  Equivalently, repairing $f$ so that it belongs
to $\P$ would require changing at least an $\eps$ fraction of its
values.  A function $f$ is {\em $\eps$-close} to $\P$ if it is not
$\eps$-far --- if it can be turned into a function in $\P$ by
modifying its values on strictly less than $\eps|D|$ entries.

The property testing goal is to query a function $f$ a small number of
times and then decide if:
\begin{enumerate}

\item $f \in \P$; or

\item $f$ is $\eps$-far from $\P$.

\end{enumerate}
If neither of these two conditions applies to $f$, then the tester is
off the hook --- any declaration is treated as correct.

A {\em tester} specifies a sequence of queries to the unknown function
$f$, and a declaration of either ``$\in \P$'' or ``$\eps$-far from
$\P$'' at its conclusion.
Interesting property testing results almost always require
randomization.  Thus, we allow the tester to be randomized, and allow
it to err with probability at most~$1/3$.  As with communication
protocols, testers come in various flavors.  
{\em One-sided error} means that functions in $\P$ are accepted with
probability~1, with no false negative allowed.  Testers with {\em
  two-sided error} are allowed both false positives and false
negatives (with probability at most~$1/3$, on every input that
satisfies the promise).
Testers can be {\em non-adaptive}, meaning that they flip all
their coins and specify all their queries up front, or {\em adaptive},
with queries chosen as a function of the answers to previously asked
queries.
For upper bounds, we prefer the weakest model of non-adaptive testers
with 1-sided error.
Often (though not always) in property testing, neither adaptivity
nor two-sided error leads to more efficient testers.
Lower bounds can be much more difficult to prove for adaptive testers
with two-sided error, however.

For a given choice of a class of testers, the {\em query complexity}
of a property $\P$ is the minimum (over testers) worst-case (over
inputs) number of queries used by a tester that solves the testing
problem for $\P$.  The best-case scenario is that the query complexity
of a property is a function of $\eps$ only; sometimes it depends on
the size of $D$ or $R$ as well.

\section{Example: The BLR Linearity Test}\label{s:blr}

The unofficial beginning of the field of property testing
is~\cite{BLR93}.  (For the official beginning, see~\cite{RS96}
and~\cite{GGR98}.)
The setting is $D = \{0,1\}^n$ and $R = \{0,1\}$, and the property is
the set of {\em linear functions}, meaning functions $f$ such that
$f(\bfx+\bfy) = f(\bfx) + f(\bfy)$ (over $\FF_2$) for all $\bfx,\bfy \in
\zo^n$.\footnote{Equivalently, these are the functions that can be
  written as $f(\bfx) = 
\sum_{i=1}^n a_ix_i$ for some $a_1,\ldots,a_n \in \zo$.}
The {\em BLR linearity test} is the following:
\begin{enumerate}

\item Repeat $t = \Theta(\tfrac{1}{\eps})$ times:

\begin{enumerate}

\item Pick $\bfx,\bfy \in \zo^n$ uniformly at random.

\item If $f(\bfx+\bfy) \neq f(\bfx)+f(\bfy)$ (over $\FF_2$), then REJECT.

\end{enumerate}

\item ACCEPT.

\end{enumerate}

It is clear that if $f$ is linear, then the BLR linearity test accepts
it with probability~1.  That is, the test has one-sided error.  The
test is also non-adaptive --- the $t$ random choices of $\bfx$ and $\bfy$
can all be made up front.  The non-trivial statement is that only
functions that are close to linear pass the test with large
probability.
\begin{theorem}[\citealt{BLR93}]\label{t:blr}
If the BLR linearity test accepts a function $f$ with probability
greater than $\tfrac{1}{3}$, then $f$ is $\eps$-close to the set of
linear functions.
\end{theorem}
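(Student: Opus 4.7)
My plan is to use the ``self-correction'' paradigm: construct an auxiliary function $g$ from $f$, show that $g$ is $\eps$-close to $f$, and show that $g$ is linear. The theorem then follows because $f$ has Hamming distance at most $\eps|D|$ from the linear function~$g$.

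The first step is to reduce the problem to a per-iteration statement. If the full $t$-iteration test accepts with probability greater than $\tfrac{1}{3}$, then a single iteration must reject with probability at most $\delta$, where $\delta$ satisfies $(1-\delta)^t > \tfrac{1}{3}$. Choosing the hidden constant in $t = \Theta(1/\eps)$ sufficiently large ensures $\delta \le \min\{\eps/2,\, 1/16\}$. Thus I may assume
\[
\prob[\bfx,\bfy \sim \zo^n]{f(\bfx+\bfy) \ne f(\bfx)+f(\bfy)} \le \delta.
\]
Now define $g(\bfx) \in \zo$ as the plurality value of $f(\bfx+\bfy) - f(\bfy)$ over uniform $\bfy \in \zo^n$. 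The plan reduces to proving (i) $f$ and $g$ agree on at least a $1-\eps$ fraction of inputs, and (ii) $g$ is linear.

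The technical lemma I will lean on is a ``two-samples'' identity, which asserts that for every $\bfx$,
\[
\prob[\bfy_1,\bfy_2]{f(\bfx+\bfy_1) - f(\bfy_1) = f(\bfx+\bfy_2) - f(\bfy_2)} \ge 1 - 2\delta.
\]
To see this, set $\bfz = \bfy_1+\bfy_2$ and rewrite the desired equality as $f(\bfx+\bfy_1+\bfz) - f(\bfx+\bfy_1) = f(\bfy_1+\bfz) - f(\bfy_1)$; both sides equal $f(\bfz)$ unless the single-iteration test fails on one of the two marginally uniform pairs $(\bfx+\bfy_1,\bfz)$ or $(\bfy_1,\bfz)$, so a union bound gives the claim. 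Because $\sum_v p_v^2 \le \max_v p_v$ for any distribution $\{p_v\}$, the two-samples lemma implies $\prob[\bfy]{f(\bfx+\bfy) - f(\bfy) = g(\bfx)} \ge 1 - 2\delta$ for every $\bfx$.

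Property (i) follows by averaging: letting $\delta_\bfx = \prob[\bfy]{f(\bfx+\bfy) \ne f(\bfx)+f(\bfy)}$, we have $\expect[\bfx]{\delta_\bfx} \le \delta$, so Markov gives $\prob[\bfx]{\delta_\bfx \ge 1/2} \le 2\delta \le \eps$; and whenever $\delta_\bfx < \tfrac{1}{2}$ the plurality value of $f(\bfx+\bfy) - f(\bfy)$ is $f(\bfx)$, so $g=f$ off a set of measure at most $\eps$. Property (ii) is the real heart of the proof. Fix $\bfx_1,\bfx_2$ and draw independent uniform $\bfy_1,\bfy_2$; I will union-bound over five events: $g(\bfx_i) = f(\bfx_i+\bfy_i) - f(\bfy_i)$ for $i=1,2$, $g(\bfx_1+\bfx_2) = f(\bfx_1+\bfx_2+\bfy_1+\bfy_2) - f(\bfy_1+\bfy_2)$, and the two test identities $f((\bfx_1+\bfy_1)+(\bfx_2+\bfy_2)) = f(\bfx_1+\bfy_1)+f(\bfx_2+\bfy_2)$ and $f(\bfy_1+\bfy_2) = f(\bfy_1)+f(\bfy_2)$. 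The first three each fail with probability at most $2\delta$, the last two with probability at most $\delta$ each, so the total failure probability is at most $8\delta < 1$. When all five hold simultaneously, elementary cancellation yields $g(\bfx_1) + g(\bfx_2) = g(\bfx_1+\bfx_2)$; since this identity is deterministic once $\bfx_1,\bfx_2$ are fixed, existence of one good $(\bfy_1,\bfy_2)$ forces it to hold unconditionally, proving $g$ linear. The main conceptual obstacle is identifying the self-corrected function $g$ and the five-event decomposition that makes its linearity fall out; once those are in hand, the rest is a careful union bound.
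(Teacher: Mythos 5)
The paper does not prove Theorem~\ref{t:blr}: it states the result and explicitly defers the proof to O'Donnell's Fourier-analytic exposition and to the ``more direct'' argument in the cited paper of Blum, Luby, and Rubinfeld. Your proposal is precisely that direct self-correction proof---majority decoding to define $g$, the two-samples lemma to show both that $g$ agrees with $f$ outside a $2\delta$-fraction of the domain and that the plurality is a $\ge 1-2\delta$ majority at every point, and the five-event union bound (with total failure probability $8\delta<1$, which your choice $\delta\le 1/16$ guarantees) producing a witness $(\bfy_1,\bfy_2)$ for the deterministic identity $g(\bfx_1)+g(\bfx_2)=g(\bfx_1+\bfx_2)$---and it is correct as written; since the paper contains no proof of its own, there is nothing further to compare.
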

The modern and slick proof of Theorem~\ref{t:blr} uses Fourier
analysis --- indeed, the elegance of this proof serves as convincing
motivation for the more general study of Boolean functions from a
Fourier-analytic perspective.  See Chapter~1 of~\cite{odonnell} for a
good exposition.  There are also more direct proofs of
Theorem~\ref{t:blr}, as in~\cite{BLR93}.  None of these proofs are
overly long, but we'll spend our time on monotonicity testing
instead.  We mention the BLR test for the following reasons:
\begin{enumerate}

\item If you only remember one property testing result, 
  Theorem~\ref{t:blr} and the BLR linearity test would be a good one.

\item The BLR test is the thin end of the wedge in constructions of
  probabilistically checkable proofs (PCPs).  Recall that a language is in
  $NP$ if membership can be efficiently verified --- for example,
  verifying an alleged satisfying assignment to a SAT formula is easy
  to do in polynomial time.  The point of a PCP is to rewrite such a
  proof of membership so that it can be probabilistically verified
  after reading only a constant number of bits.  The BLR test does
  exactly this for the special case of linearity testing --- for
  proofs where ``correctness'' is equated with being the
  truth table of a linear function.
The BLR test effectively means that one can assume without loss of
generality that a proof encodes a linear function --- the BLR test
can be used as a preprocessing step to reject alleged proofs that are
not close to a linear function.  Subsequent testing steps can then
focus on whether or not the encoded linear function is close to a
subset of linear functions of interest.

\item Theorem~\ref{t:blr} highlights a consistent theme in property
  testing --- establishing connections between ``global'' and
  ``local'' properties of a function.  Saying that a function $f$ is
  $\eps$-far from a property $\P$ refers to the entire domain $D$ and
  in this sense asserts a ``global violation'' of  the property.
Property testers work well when there are ubiquitous ``local violations''
of the property.  Theorem~\ref{t:blr} proves that, for the property of
linearity, a global violation necessarily implies lots of local
violations.  We give a full proof of such a ``global to local''
statement for monotonicity testing in the next section.

\end{enumerate}

\section{Monotonicity Testing: Upper Bounds}\label{s:ub_pt}

The problem of monotonicity testing was introduced in~\cite{G+98} and
is one of the central problems in the field.  We discuss the Boolean case,
where there have been several breakthroughs in just the past few
months, in Sections~\ref{ss:bool} and~\ref{ss:recent}.  We discuss the
case of larger 
ranges, where communication complexity has been used to prove strong
lower bounds, in Section~\ref{ss:gen}.

\subsection{The Boolean Case}\label{ss:bool}

In this section, we take $D = \zo^n$ and $R = \zo$.
For $b \in \zo$ and $\xmi \in \zo^{n-1}$,
we use the notation $(b,\xmi)$ to denote a vector of $\zo^n$ in which
the $i$th bit is $b$ and the other $n-1$ bits are $\xmi$.
A function $f:\zo^n \rightarrow \zo$ is {\em monotone} if flipping a
coordinate of an input from~0 to~1 can only increase the function's
output:
\[
f(0,\xmi) \le f(1,\xmi)
\]
for every $i \in \{1,2,\ldots,n\}$ and $\xmi \in \zo^{n-1}$.

It will be useful to visualize the domain $\zo^n$ as the
$n$-dimensional hypercube; see also Figure~\ref{f:cube}.
This graph has $2^n$ vertices and
$n2^{n-1}$ edges.  An edge can be uniquely specified by a coordinate
$i$ and vector $\xmi \in \zo^{n-1}$ --- the edge's endpoints are then
$(0,\xmi)$ and $(1,\xmi)$.  By the {\em $i$th slice} of the hypercube,
we mean the $2^{n-1}$ edges for which the endpoints differ (only) in the
$i$th coordinate.  The $n$ slices form a partition of the edge set of
the hypercube, and each slice is a perfect matching of the hypercube's
vertices.  A function $\zo^n \rightarrow \zo$ can be visualized as a
binary labeling of the vertices of the hypercube.

\begin{figure}
\centering
\includegraphics[width=.3\textwidth]{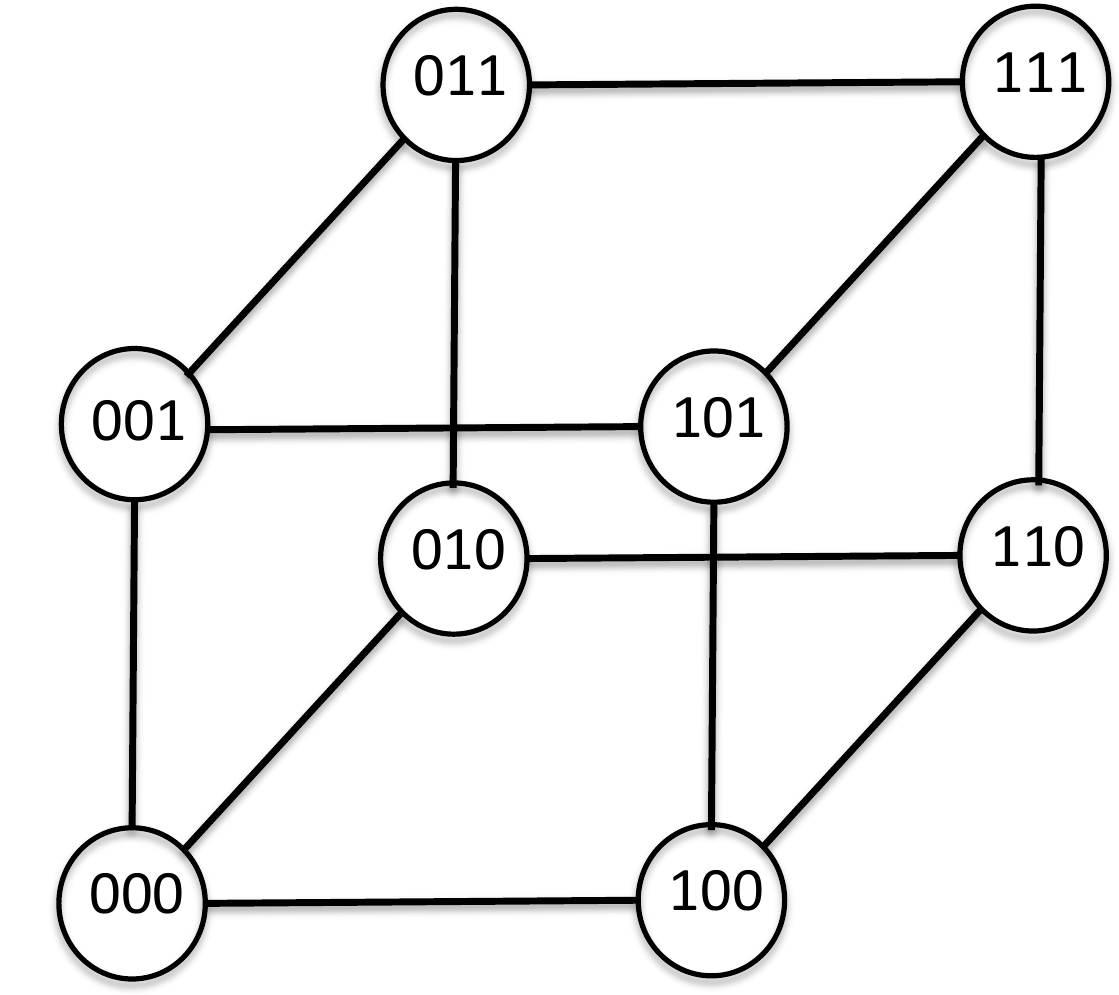}
\caption[$\{0,1\}^n$ as an $n$-dimensional
hypercube]{$\{0,1\}^n$ can be visualized as an $n$-dimensional hypercube.}\label{f:cube}
\end{figure}

We consider the following {\em edge tester}, which picks random
edges of the hypercube and rejects if it ever finds a monotonicity
violation across one of the chosen edges.
\begin{enumerate}

\item Repeat $t$ times:

\begin{enumerate}

\item Pick $i \in \{1,2,\ldots,n\}$ and $\xmi \in \zo^{n-1}$
uniformly at random.

\item If $f(0,\xmi) > f(1,\xmi)$ then REJECT.

\end{enumerate}

\item ACCEPT.

\end{enumerate}
Like the BLR test, it is clear that the edge tester has 1-sided error
(no false negatives) and is non-adaptive.  The non-trivial part is to
understand the probability of rejecting a function that is $\eps$-far
from monotone --- how many trials $t$ are necessary and sufficient for
a rejection probability of at least~$2/3$?  Conceptually, how
pervasive are the local failures of monotonicity for a function that
is $\eps$-far from monotone?

The bad news is that, in contrast to the BLR linearity test, taking
$t$ to be a constant (depending only on $\eps$) is not good enough.
The good news is that we can take $t$ to be only logarithmic in the size of
the domain.
\begin{theorem}[\citealt{G+98}]\label{t:ub}
For $t=\Theta(\tfrac{n}{\eps})$, the edge tester rejects every
function that is $\eps$-far from monotone with probability at least
2/3.
\end{theorem}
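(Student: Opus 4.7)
The plan is to reduce the theorem to a density-of-violations bound: if $f$ is $\eps$-far from monotone, then a uniformly random hypercube edge is a \emph{violating edge} (one with $f(0,\xmi) = 1 > 0 = f(1,\xmi)$) with probability at least $\eps/n$. Granted this, the edge tester fails to reject with probability at most $(1 - \eps/n)^t$, which is below $1/3$ for $t = \Theta(n/\eps)$ with an appropriate constant.

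So the real work is to prove the density bound: letting $V(f)$ denote the total number of violating edges, a function that is $\eps$-far from monotone satisfies $V(f) \ge \eps \cdot 2^{n-1}$. Since the hypercube has exactly $n \cdot 2^{n-1}$ edges, this yields the desired per-trial rejection probability. I would establish this by exhibiting an explicit repair procedure that turns $f$ into a monotone function while changing at most $2V(f)$ of its $2^n$ values, forcing $\eps \cdot 2^n \le 2V(f)$. For each coordinate $i$, let $V_i(f)$ count the violating edges in slice $i$ (so $V = \sum_i V_i$), and define the \emph{slice-sorting operator} $T_i$ that, for every slice-$i$ edge on which $f(0,\xmi) > f(1,\xmi)$, swaps the two endpoint values. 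By construction $T_i$ changes exactly $2V_i(f)$ values and leaves slice $i$ free of violations; the goal is to show that $T_n \circ T_{n-1} \circ \cdots \circ T_1$ applied to $f$ is monotone, with total value changes bounded by $\sum_i 2V_i(f) = 2V(f)$.

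The crux of the argument, and the main obstacle, is a monotonicity-preserving claim: $T_i$ creates no new violations in any other slice $j \neq i$. To verify it, fix all coordinates outside $\{i,j\}$ to some $\bfz$ and consider the resulting two-dimensional Boolean face with values $a = f(0_i,0_j,\bfz)$, $b = f(1_i,0_j,\bfz)$, $c = f(0_i,1_j,\bfz)$, $d = f(1_i,1_j,\bfz)$; the two slice-$j$ edges of this face are non-violating exactly when $a \le c$ and $b \le d$. After $T_i$ acts on this face, the two slice-$i$ edges carry the sorted pairs $(\min(a,b), \max(a,b))$ and $(\min(c,d), \max(c,d))$. A direct check — for instance, WLOG $\min(a,b) = a$, whence $a \le c$ and $a \le b \le d$ give $\min(a,b) \le \min(c,d)$, and symmetrically for the maxima — shows that both slice-$j$ edges remain non-violating. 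Hence applying $T_1, \ldots, T_n$ in order eliminates all violations without ever reintroducing any, producing a monotone function and establishing the density bound. Everything else is bookkeeping and the amplification over $t$ trials; the only subtle step is this 2D face calculation, which is what would need the most care to generalize beyond the Boolean range.
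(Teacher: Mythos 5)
Your overall strategy is the same as the paper's: do a slice-by-slice repair with a swap ("sort") operator, show this terminates at a monotone function after changing at most $2V(f)$ values, and deduce that a single random edge is violated with probability at least $\eps/n$. The difference is in the key lemma, and there is a genuine gap.

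You claim that ``$T_i$ creates no new violations in any other slice $j \neq i$,'' but this is false even for Boolean functions. Take the $2$-dimensional face with $a = f(0_i,0_j,\bfz) = 1$, $b = f(1_i,0_j,\bfz) = 0$, $c = f(0_i,1_j,\bfz) = 0$, $d = f(1_i,1_j,\bfz) = 0$. Before $T_i$, the slice-$j$ edge at $x_i=0$ (values $a,c=1,0$) is violated and the one at $x_i=1$ (values $b,d=0,0$) is not. After $T_i$ sorts in direction $i$, the corner values become $0,1,0,0$: the edge at $x_i=0$ is now fine, but the edge at $x_i=1$ (values $1,0$) is \emph{newly} violated. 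The violation moved. The correct statement is the paper's weaker Key Claim: the \emph{number} of violated slice-$j$ edges does not increase.

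Your $2$-D face argument proves only one case of this: when both slice-$j$ edges of the face are non-violating (i.e., $a \le c$ and $b \le d$), both stay non-violating after sorting. That is enough to show the process terminates at a monotone function --- once slice $j$ is violation-free, subsequent $T_i$'s keep it so --- but it is \emph{not} enough to bound the number of changes. The changes made by $T_i$ number $2V_i(g_{i-1})$ where $g_{i-1}$ is the function after $T_1,\ldots,T_{i-1}$, and you need $V_i(g_{i-1}) \le V_i(f)$ to conclude that the total is at most $2V(f)$. Your argument is silent on what happens to slice $i$ while $T_1,\ldots,T_{i-1}$ are running, because in that window slice $i$ may still have violations and your lemma only applies to violation-free slices. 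To close the gap you need the remaining cases of the case analysis (exactly one slice-$j$ edge violated, both violated), as the paper carries out, and you should be careful to state the conclusion as a count bound rather than as ``no new violations,'' since the latter is simply untrue.
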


\begin{proof}
A simple calculation shows that it is enough to prove that a single
random trial of the edge test rejects a function that is $\eps$-far from
monotone with probability at least $\tfrac{\eps}{n}$.

Fix an arbitrary function $f$.  There are two quantities that we need
to relate to each other --- the rejection probability of~$f$, and the
distance between $f$ and the set of monotone functions.  We do this by
relating both quantities to the sizes of the following sets: for
$i=1,2,\ldots,n$, define
\begin{equation}\label{eq:a}
|A_i| = \{ \xmi \in \zo^{n-1} \,:\, f(0,\xmi) > f(1,\xmi) \}.
\end{equation}
That is, $A_i$ is the edges of the $i$th slice of the hypercube across
which $f$ violates monotonicity.  By the definition of the edge
tester, the probability that a single trial rejects~$f$ is exactly
\begin{equation}\label{eq:ub1}
\underbrace{\sum_{i=1}^n |A_i|}_{\text{\# of
    violations}}/\underbrace{n2^{n-1}}_{\text{\# of edges}}.
\end{equation}

Next, we upper bound the distance between $f$ and the set of monotone
functions, implying that the only way in which the $|A_i|$'s
(and hence the rejection probability) can be small is if $f$ is close to
a monotone function.  To upper bound the distance, all we need to do
is exhibit a single monotone function close to $f$.  Our plan is to
transform $f$ into a monotone function, coordinate by coordinate,
tracking the number of changes that we make along the way.  The next
claim controls what happens when we ``monotonize'' a single
coordinate.

\vspace{.1in}
\noindent
\noindent
\textbf{Key Claim: } Let $i \in \{1,2,\ldots,n\}$ be a coordinate.
Obtain $f'$ from $f$ by, for each violated edge $((0,\xmi),(1,\xmi))
\in A_i$ of the $i$th slice, swapping the values of $f$ on its
endpoints (Figure~\ref{f:swap}).  That is, set $f'(0,\xmi) = 0$ and
$f'(1,\xmi) = 1$.  (This operation is well defined because the edges
of $A_i$ are disjoint.)  For every coordinate $j=1,2,\ldots,n$, $f'$
has no more monotonicity violations in the $j$th slice than does $f$.

\vspace{.1in}
\noindent
\noindent
\textbf{Proof of Key Claim: } The claim is clearly true for $j=i$: by
construction, the swapping operation fixes all of the monotonicity
violations in the $i$th slice, without introducing any new violations
in the $i$th slice.  The interesting case is when $j \neq i$, since
new monotonicity violations can be introduced (cf., Figure~\ref{f:swap}).
The claim asserts that the overall number of violations cannot
increase (cf., Figure~\ref{f:swap}).

\begin{figure}
  \centering
  \subbottom[Fixing the first slice]{%
    \includegraphics[width=0.7\textwidth]{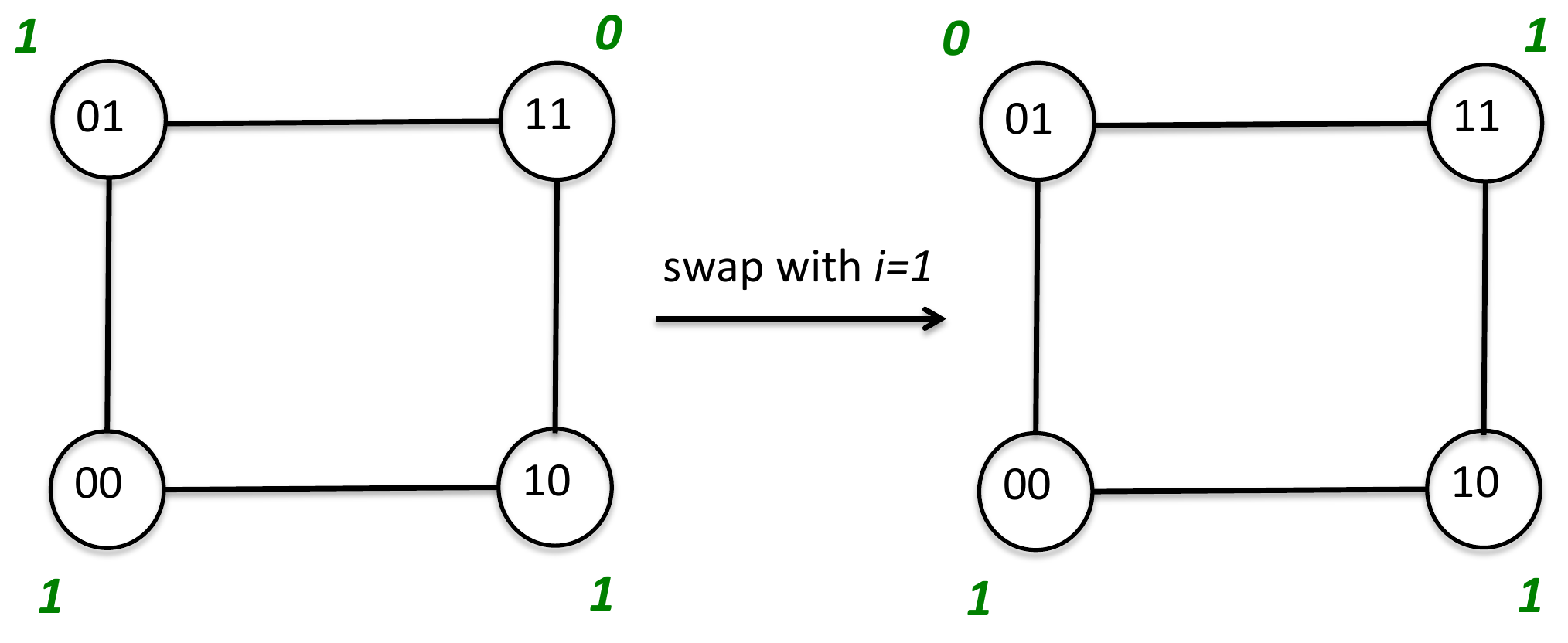}}
\qquad\qquad
  \subbottom[Fixing the second slice]{%
    \includegraphics[width=0.7\textwidth]{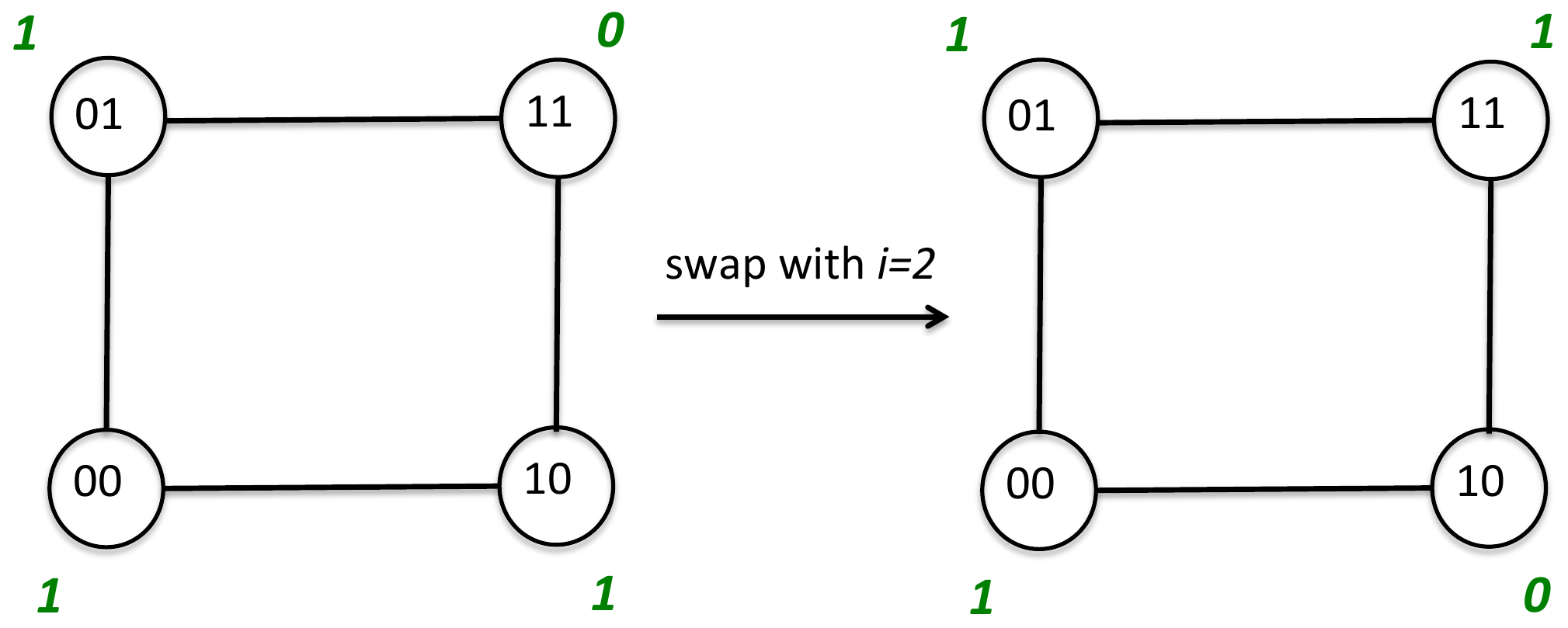}}
\caption[Swapping values to eliminate the monotonicity
  violations in the $i$th slice]{Swapping values to eliminate the monotonicity
  violations in the $i$th slice.}
\label{f:swap}
\end{figure}


We partition the edges of the $j$th slice into edge pairs as
follows.  We use $\xmj^0$ to denote an assignment to the $n-1$
coordinates other than $j$ in which the $i$th coordinate is 0, and
$\xmj^1$ the corresponding assignment in which the $i$th coordinate
is flipped to~1.  For a choice of~$\xmj^0$, we can
consider the ``square'' formed by the vertices $(0,\xmj^0)$,
$(0,\xmj^1)$, $(1,\xmj^0)$, and $(1,\xmj^1)$; see
Figure~\ref{f:square}.  The edges $((0,\xmj^0),(1,\xmj^0))$ and
$((0,\xmj^1),(1,\xmj^1))$ belong to the $j$th slice, and ranging over 
the $2^{n-2}$ choices for $\xmj^0$ --- one binary choice per
coordinate other than~$i$ and~$j$ --- generates each such edge exactly
once.

\begin{figure}
\centering
\includegraphics[width=.8\textwidth]{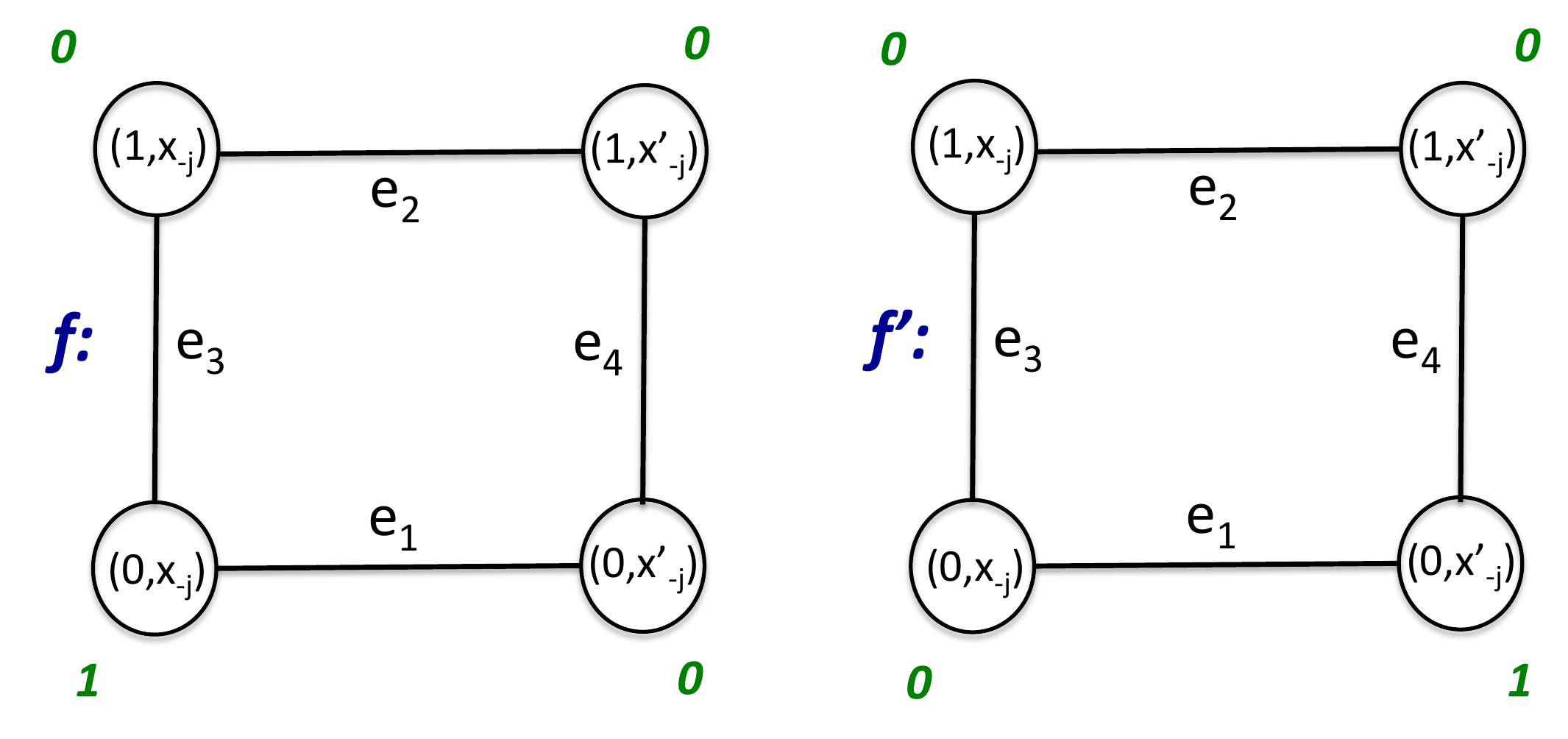}
\caption[Tracking the number of monotonicity violations]{The number of monotonicity violations on edges
$e_3$ and $e_4$ is at least as large under~$f$ as under~$f'$.}
\label{f:square}
\end{figure}

Fix a choice of $\xmj^0$, and label the edges of the corresponding
square $e_1,e_2,e_3,e_4$ as in Figure~\ref{f:square}.  A simple case
analysis shows that the number of monotonicity violations on edges
$e_3$ and $e_4$ is at least as large under~$f$ as under~$f'$.
If neither $e_1$ nor $e_2$ was violated under~$f$, then $f'$ agrees
with $f$ on this square and the total number of monotonicity violations
is obviously the same.  If both $e_1$ and $e_2$ were violated under
$f$, then values were swapped along both these edges; hence $e_3$
(respectively, $e_4$) is violated under $f'$ if and only if 
$e_4$ (respectively, $e_3$) was violated under $f$.
Next, suppose the endpoints of $e_1$ had their values swapped, while
the endpoints of $e_2$ did not.  
This implies that $f(0,\xmj^0) = 1$ and $f(0,\xmj^1) = 0$, and hence
$f'(0,\xmj^0) = 0$ and $f(0,\xmj^1) = 1$.  
If the endpoints $(1,\xmj^0)$ and $(1,\xmj^1)$ of $e_2$ have the
values~0 and~1 (under both $f$ and $f'$), then the number of
monotonicity violations on $e_3$ and $e_4$ drops from~1 to~0.  The
same is true if their values are~0 and~0.  If their values are~1
and~1, then the monotonicity violation on edge $e_4$ under $f$
moves to one on edge $e_3$ under $f'$, but the number of violations
remains the same.  The final set of cases, when the endpoints of $e_2$
have their values swapped while the endpoints of~$e_1$ do not, is
similar.\footnote{Suppose we corrected only one endpoint of an edge to
  fix a monotonicity violation, rather than swapping the endpoint
  values.  Would the proof still go through?}

Summing over all such squares --- all choices of $\xmj^0$ --- we
conclude that the number of monotonicity violations in the $j$th slice
can only decrease.
{$\blacksquare$\vskip
\belowdisplayskip}

Now consider turning a function $f$ into a monotone function $g$ by
doing a single pass through the coordinates, fixing all monotonicity
violations in a given coordinate via swaps as in the Key Claim.
This process terminates with a monotone function: immediately after
coordinate $i$ is treated, there are no monotonicity violations in the
$i$th slice by construction; and by the Key Claim, fixing future
coordinates does not break this property.  
The Key Claim also implies that, in the iteration where this procedure
processes the $i$th coordinate, the number of monotonicity violations 
that need fixing is at most the number
$|A_i|$ of monotonicity violations in this slice under the
original function $f$.  Since the procedure makes two modifications to
$f$ for each monotonicity violation that it fixes (the two endpoints
of an edge), we conclude that $f$ can be made monotone by changing at
most
$2 \sum_{i=1}^n |A_i|$
of its values.  If $f$ is $\eps$-far from monotone, then 
$2\sum_{i=1}^n |A_i| \ge \eps2^n$.  Plugging this into~\eqref{eq:ub1},
we find that a single trial of the edge tester rejects such an $f$
with probability at least
\[
\frac{\tfrac{1}{2} \eps 2^n}{n2^{n-1}} = \frac{\eps}{n},
\]
as claimed.
\end{proof}

\subsection{Recent Progress for the Boolean Case}\label{ss:recent}

An obvious question is whether or not we can improve over the
query upper bound in Theorem~\ref{t:ub}.  The analysis in
Theorem~\ref{t:ub} of the edge tester is tight up to a constant factor
(see Exercises), so an improvement would have to come from a different
tester.  There was no progress on this problem for 15 years, but
recently there has been a series of breakthroughs on the problem.
\cite{CS13} gave the first improved upper bounds, of
$\tilde{O}(n^{7/8}/\eps^{3/2})$.\footnote{The notation
  $\tilde{O}(\cdot)$ suppresses logarithmic factors.}
A year later, \cite{CST14} gave an upper bound of
$\tilde{O}(n^{5/6}/\eps^{4})$.
Just a couple of months ago, \cite{KMS15} gave a bound of
$\tilde{O}(\sqrt{n}/\eps^2)$.
All of these improved upper bounds are for {\em path testers}.  The
idea is to sample a random monotone path from the hypercube (checking
for a violation on its endpoints), rather than a random edge.  One way
to do this is: pick a random point $\bfx \in \zo^n$; pick a random
number $z$ between 0 and the number of zeroes of $\bfx$ (from some
distribution); and obtain $\bfy$ from $\bfx$ by choosing at random $z$ of
$\bfx$'s 0-coordinates and flipping them to~1.  Given that a function
that is $\eps$-far from monotone must have lots of violated edges (by
Theorem~\ref{t:ub}), it is plausible that path testers, which aspire
to check many edges at once, could be more effective than edge
testers.  The issue is that just because a path contains one or more 
violated edges does not imply that the path's endpoints 
will reveal a monotonicity violation.  Analyzing path testers seems
substantially more complicated than the edge
tester~\citep{CS13,CST14,KMS15}.  Note that path testers are non-adaptive
and have 1-sided error.

There have also been recent breakthroughs on the lower bound side.  It
has been known for some time that all non-adaptive testers with
1-sided error require $\Omega(\sqrt{n})$ queries~\citep{F+02}; see also
the Exercises.  For non-adaptive testers with two-sided error,
\cite{CST14} proved a lower bound
of~$\tilde{\Omega}(n^{1/5})$ and 
\cite{CDST14} improve this to $\Omega(n^{(1/2)-c})$ for every
constant $c > 0$.  Because the gap in query complexity between
adaptive and non-adaptive testers can only be exponential (see
Exercises), these lower bounds also imply that adaptive testers (with
two-sided error) require $\Omega(\log n)$ queries.  The gap between
$\tilde{O}(\sqrt{n})$ and $\Omega(\log n)$ for adaptive testers remains open;
most researchers think that adaptivity cannot help and that the upper
bound is the correct answer.

An interesting open question is whether or not 
communication complexity is useful for proving interesting lower
bounds for the monotonicity testing of Boolean functions.\footnote{At the
very least, some of the techniques we've learned in previous lectures
are useful.  The
  arguments in~\cite{CST14} and~\cite{CDST14} use an analog of Yao's Lemma (Lemma~\ref{l:yao}))
  to switch from randomized to distributional lower bounds.
The hard part is then to come up with a distribution over both monotone
functions and functions $\eps$-far from monotone such that no
deterministic tester can reliably distinguish between the two cases
using few queries to the function.}
We'll see in
Section~\ref{s:lb_pt} that it is useful for proving lower bounds in the
case where the range is relatively large.

\subsection{Larger Ranges}\label{ss:gen}

In this section we study monotonicity testing with the usual domain
$D=\{0,1\}^n$ but with a range $R$ that is an arbitrary finite,
totally ordered set.  Some of our analysis for the Boolean case
continues to apply.  For example, the edge tester continues to be a
well-defined tester with 1-sided error.  Returning to the proof of
Theorem~\ref{t:ub}, we can again define each $A_i$ as the set of
monotonicity violations --- meaning $f(0,\xmi) > f(1,\xmi)$ --- along
edges in the $i$th slice.  The rejection probability again equals
the quantity in~\eqref{eq:ub1}.

We need to revisit the major step of the proof of Theorem~\ref{t:ub},
which for Boolean functions gives an upper bound of $2\sum_{i=1}^n
|A_i|$ on the distance from a function $f$ to the set of monotone
functions.  
One idea is to again
do a single pass through the coordinates, swapping
the function values of the endpoints of the edges in the current slice
that have monotonicity violations.
In contrast to the Boolean case, this idea does not always result 
in a monotone function (see Exercises).

We can extend the argument to general finite ranges $R$ by doing
multiple passes over 
the coordinates.  
The simplest approach uses one pass over the
coordinates, fixing all monotonicity violations that involve a vertex
$\bfx$ with $f(\bfx) = 0$; a second pass, fixing all monotonicity
violations that involve a vertex~$\bfx$ with $f(\bfx) = 1$; and so on.
Formalizing this argument yields a bound of $2|R| \sum_{i=1}^n |A_i|$ on
the distance between $f$ and the set of monotone functions, which
gives a query bound of $O(n|R|/\eps)$~\cite{G+98}.

A divide-and-conquer approach gives a better upped bound~\cite{D+99}.
Assume without loss of generality (relabeling if necessary) that $R =
\{0,1,\ldots,r-1\}$, and also (by padding) that $r=2^k$ for a positive
integer $k$.
The first pass over the coordinates fixes all monotonicity violations
that involve values that differ in their most significant bit
--- one value that is less than $\tfrac{r}{2}$ and one value
that is at least $\tfrac{r}{2}$.  The second pass fixes all
monotonicity violations involving two values that differ in their
second-highest-order bit.  And so on.  The Exercises ask you to prove
that this idea can be made precise and show that the distance between
$f$ and the set of monotone functions is at most $2 \log_2 |R|
\sum_{i=1}^n |A_i|$.  This implies an upper bound of
$O(\tfrac{n}{\eps} \log |R|)$ on the number of queries used by the
edge tester for the case of general finite ranges.
The next section shows a lower bound of $\Omega(n/\eps)$ when $|R| =
\Omega(\sqrt{n})$; in these cases, this upper bound is the best
possible, up to the $\log R$ factor.\footnote{It is an
  open question to reduce the dependence on $|R|$.  Since we can
  assume that $|R| \le 2^n$ (why?), any sub-quadratic upper
  bound~$o(n^2)$ would constitute an improvement.}

\section{Monotonicity Testing: Lower Bounds}\label{s:lb_pt}

\subsection{Lower Bound for General Ranges}

This section uses communication complexity to prove a lower bound on
the query complexity of testing monotonicity for sufficiently large
ranges. 
\begin{theorem}[\cite{BBM11}]\label{t:bbm}
For large enough ranges $R$ and $\eps = \tfrac{1}{8}$, every
(adaptive) monotonicity tester with two-sided error uses $\Omega(n)$
queries.
\end{theorem}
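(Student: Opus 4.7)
The plan is a reduction from the two-party \disj problem, whose randomized two-sided-error communication complexity is $\Omega(n)$ by Theorem~\ref{t:disj_2sided}. Given an input $(x,y) \in \{0,1\}^n \times \{0,1\}^n$ to \disj, Alice (holding $x$) and Bob (holding $y$) will implicitly define a function $f_{x,y}:\{0,1\}^n \to R$ enjoying three properties: (i) $f_{x,y}$ is monotone whenever $x \cap y = \emptyset$; (ii) $f_{x,y}$ is $(1/8)$-far from monotone whenever $x \cap y \neq \emptyset$; and (iii) at any query point $z$, Alice and Bob can jointly compute $f_{x,y}(z)$ using only $O(1)$ bits of communication, for appropriately large $|R|$. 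Given such a construction, any $q$-query adaptive tester with two-sided error is simulated by a randomized \disj protocol of total cost $O(q)$ (Alice and Bob share the tester's randomness and answer each oracle call on the fly), so Theorem~\ref{t:disj_2sided} forces $q = \Omega(n)$.

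As a first attempt at the construction, one can try $f_{x,y}(z) = 2\sum_i z_i - 2\sum_i x_i y_i z_i$. Flipping $z_i$ from $0$ to $1$ changes the value by $2 - 2x_i y_i$, which is positive unless $x_i = y_i = 1$; so $f_{x,y}$ is weakly monotone iff $x,y$ are disjoint, and any coordinate $i^*$ with $x_{i^*} = y_{i^*} = 1$ turns every edge in the $i^*$th slice into a monotonicity violation, which via the hypercube-monotonization style of argument from Theorem~\ref{t:ub} forces constant distance from monotone (after a small adjustment of coefficients to get $\eps = 1/8$). Properties (i) and (ii) are thus easy. Property (iii) is what breaks: computing $\sum_i x_i y_i z_i$ is essentially a disjointness-like inner product, and itself needs $\Omega(n)$ communication per query. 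So the naive decomposition fails.

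The main obstacle, and the place that requires the real work, is reconciling the monotonicity-distance structure with constant-per-query communication. My plan is to partition $\{0,1\}^n$ into $n$ publicly known regions $D_1, \dots, D_n$ of comparable size, via a fixed labeling $i(\cdot):\{0,1\}^n \to [n]$, and to design $f_{x,y}$ so that on $D_i$ its value depends only on the pair $(x_i, y_i)$ together with a universal monotone skeleton $g(z)$ shared across all inputs. Then at query $z$, Alice sends $x_{i(z)}$ and Bob sends $y_{i(z)}$ (one bit each), giving property (iii). The design challenge is to choose the regions and the intra-region values (exploiting the freedom of a sufficiently large range $R$) so that (a) whenever $x_{i^*} y_{i^*} = 1$ for some $i^*$, the local anti-monotone structure on $D_{i^*}$ forces $\Omega(2^n)$ vertex modifications to repair globally, not merely inside $D_{i^*}$; and (b) in the disjoint case, the per-region values glue onto $g$ into a single globally monotone function. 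These pull against one another — (a) wants one coordinate to have global consequences while (b) wants $f_{x,y}$ to be locally determined — and threading this needle, plausibly by choosing $|R|$ polynomial in $n$ to provide enough vertical slack for the encoding, is the combinatorial heart of the argument.
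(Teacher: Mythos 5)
Your reduction framework — simulate an adaptive two-sided-error tester by a communication protocol, sharing randomness, paying $O(1)$ bits per query — is exactly the paper's, and your diagnosis of the core difficulty (the function value at each query point must be computable with $O(1)$ bits while being $\eps$-far from monotone on intersecting inputs) is exactly right. However, both of your proposed constructions have gaps, and you have not found the key idea that resolves the tension you correctly identify.

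Your first attempt $f_{x,y}(z) = 2\sum_i z_i - 2\sum_i x_i y_i z_i$ does not work even before one worries about communication: flipping $z_i$ from $0$ to $1$ changes the value by $2(1-x_iy_i)$, which is $0$ when $x_i=y_i=1$ and $2$ otherwise, so the function is \emph{always} weakly monotone. There are no violated edges, let alone $\Omega(2^n)$ of them. Your second plan — partition $\{0,1\}^n$ into $n$ regions $D_i$ and make $f$ on $D_i$ depend only on $(x_i,y_i)$ — runs into the obstacle you yourself name: a single intersecting coordinate then affects only $D_{i^*}$, which is a $1/n$ fraction of the cube, so you cannot hope for $\eps = 1/8$. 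That tension is real, and the region-based approach cannot resolve it; the paper takes a different route.

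The missing idea is \emph{additive separability without locality}. The paper's construction is
\[
h_{AB}(S) \;=\; \underbrace{2|S|}_{\text{public}} \;+\; \underbrace{(-1)^{|S\cap A|}}_{\text{Alice alone}} \;+\; \underbrace{(-1)^{|S\cap B|}}_{\text{Bob alone}}.
\]
Every query $S$ costs exactly two bits: Alice sends $(-1)^{|S\cap A|}$ and Bob sends $(-1)^{|S\cap B|}$. Crucially, each summand depends globally on $S$ — so an intersection coordinate $i^*$ can cause violations across the whole $i^*$-slice, not just on a small region — yet each summand is computable by one party alone. When $A \cap B = \emptyset$, adding $i$ to $S$ increases $2|S|$ by $2$, and at most one of the two sign terms can drop by $2$, so $h_{AB}$ is monotone. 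When $A \cap B = \{i^*\}$, adding $i^*$ to an $S$ for which both $|S\cap A|$ and $|S\cap B|$ are even drops both signs simultaneously, yielding $h_{AB}(S\cup\{i^*\}) - h_{AB}(S) = -2$; this happens for at least a $1/4$ fraction of the $i^*$-slice, and all those violations live on disjoint edges, forcing $1/8$-farness. Two remarks: first, this construction is a variant of yours with $z_i$ replaced by the parity $(-1)^{|S\cap A|}$, which is the twist that trades your coupled term $x_iy_iz_i$ for two uncoupled ones; second, you need to reduce from \udisj rather than \disj, since the independence used in the $1/4$ count requires $|A\cap B| = 1$, and Razborov's hard distribution (Section~\ref{ss:disj_proof}) gives $\Omega(n)$ for \udisj directly.
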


Note that Theorem~\ref{t:bbm} separates the case of a general range
$R$ from the case of a Boolean range, where $\tilde{O}(\sqrt{n})$ queries are
enough~\cite{KMS15}.
With the right communication complexity tools, Theorem~\ref{t:bbm} is
not very hard to prove.  Simultaneously with~\cite{BBM11},
Bri\"et et al.~\cite{B+12} gave a non-trivial proof from scratch of a similar
lower bound, but it applies only to non-adaptive testers with 1-sided
error.  Communication complexity 
techniques naturally lead to lower bounds
for adaptive testers with two-sided error.

As always, the first thing to try is a reduction from \disj, with the
query complexity somehow translating to the communication cost.
At first this might seem weird --- there's only one ``player'' in
property testing, so where do Alice and Bob come from?  But as we've seen
over and over again, starting with our applications to streaming lower
bounds, it can be useful to invent two parties just for the sake of
standing on the shoulders of communication complexity lower bounds.
To implement this, we need to show how a low-query tester for
monotonicity leads to a low-communication protocol for \disj.

It's convenient to reduce from a ``promise'' version of \disj that is
just as hard as the general case.  In the \udisj problem, the goal is
to distinguish between inputs where Alice and Bob have sets $A$ and
$B$ with $A \cap B = \emptyset$, and inputs where $|A \cap B| = 1$.
On inputs that satisfy neither property, any output is considered
correct.  The \udisj problem showed up a couple of times in
previous lectures; let's review them.
At the conclusion of our 
lecture on the extension complexity of polytopes,
we proved that the nondeterministic
communication complexity of the problem is $\Omega(n)$ using a
covering argument with a clever inductive proof
(Theorem~\ref{t:udisj}). 
In our boot camp
(Section~\ref{ss:disj_proof}), we discussed the high-level approach of
Razborov's proof that every randomized protocol for \disj with
two-sided error requires $\Omega(n)$ communication.  Since the hard
probability 
distribution in this proof makes use only of inputs with intersection
size 0 or 1, the lower bound applies also to the \udisj problem.

Key to the proof of Theorem~\ref{t:bbm} is the following lemma.
\begin{lemma}\label{l:bbm}
Fix sets $A,B \sse U = \{1,2,\ldots,n\}$.  Define the function
$h_{AB}:2^U \rightarrow \ZZ$ by
\begin{equation}\label{eq:h}
h_{AB}(S) = 2|S| + (-1)^{|S \cap A|} + (-1)^{|S \cap B|}.
\end{equation}
Then:
\begin{itemize}

\item [(i)] If $A \cap B = \emptyset$, then $h$ is monotone.

\item [(ii)] If $|A \cap B| = 1$, then $h$ is $\tfrac{1}{8}$-far from
  monotone. 

\end{itemize}
\end{lemma}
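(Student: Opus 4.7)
For part (i), the plan is a direct one-edge case analysis. Walking from $S$ to $S \cup \{j\}$ across a hypercube edge, the term $2|S|$ increases by exactly $2$, while each term $(-1)^{|S \cap A|}$ and $(-1)^{|S \cap B|}$ either stays the same (if $j$ is not in the corresponding set) or flips by $\pm 2$ (if it is). When $A \cap B = \emptyset$, the element $j$ lies in at most one of $A, B$, so at most one of the two sign terms can flip, and the net change in $h_{AB}$ is at least $2 - 2 = 0$. Hence $h_{AB}$ is monotone.

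For part (ii), let $j^*$ denote the unique element of $A \cap B$. The plan is to exhibit a large matching of violated edges and invoke the standard ``matching lower bound'' on distance to monotonicity: for any monotone $g$ and any edge $(S, S \cup \{j\})$ on which $h_{AB}$ strictly decreases, $g$ must disagree with $h_{AB}$ on at least one endpoint (otherwise $g$ would also strictly decrease along this edge). If the violated edges are pairwise vertex-disjoint, then the number of disagreements is at least the number of such edges, yielding a lower bound on the distance.

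Repeating the case analysis from (i), an edge $(S, S \cup \{j\})$ is violated exactly when both sign terms decrease by $2$; this forces $j \in A \cap B$, and hence $j = j^*$, together with $j^* \notin S$ and $|S \cap A|, |S \cap B|$ both even. In particular, every violated edge lies in the $j^*$-slice of the hypercube, so the violated edges automatically form a matching. It remains to count them. Writing $A' = A \setminus \{j^*\}$, $B' = B \setminus \{j^*\}$, and $C = U \setminus (A \cup B)$, these sets partition $U \setminus \{j^*\}$ and $A', B'$ are disjoint, so we are counting $S \subseteq U \setminus \{j^*\}$ with $|S \cap A'|$ and $|S \cap B'|$ both even and $S \cap C$ unconstrained. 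Using that exactly half the subsets of a nonempty finite set have even size gives $2^{n-3}$ violated edges (the boundary cases $|A'|=0$ or $|B'|=0$ must be checked separately but give the same or larger count). This yields at least $2^{n-3}$ forced disagreements with any monotone $g$, so the distance is at least $2^{n-3}/2^n = 1/8$.

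The main obstacle is not deep: it is purely bookkeeping --- precisely tracking when each $\pm 2$ change in the sign terms occurs, and handling the edge cases $|A|=1$ or $|B|=1$ in the count of violated edges. The structural point, namely that every violation must traverse the unique coordinate $j^*$ (hence violations live in one slice and are automatically a matching), is what makes the construction $h_{AB}$ in~\eqref{eq:h} clean and is the entire reason the ``$2|S|$'' offset was chosen large enough to suppress all violations \emph{except} those forced by the common element.
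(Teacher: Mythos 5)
Your proof is correct and takes essentially the same route as the paper: part (i) is the identical edge-by-edge case analysis, and part (ii) identifies the same structural fact that every violation must cross the unique $A\cap B$ coordinate (so the violated edges form a matching in one slice) and counts $2^{n-3}$ of them, the only cosmetic difference being that you count directly over subsets while the paper phrases the same computation probabilistically (choosing $S$ uniformly and multiplying $\tfrac{1}{4}\cdot 2^{n-1}$).
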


We'll prove the lemma shortly; let's first see how to use it to prove
Theorem~\ref{t:bbm}.  Let $Q$ be a tester that distinguishes between
monotone functions from $\zo^n$ to $R$ and functions that are
$\tfrac{1}{8}$-far from monotone.  We proceed to construct a 
(public-coin randomized) protocol for the \udisj problem.

Suppose Alice and Bob have sets $A,B \sse \{1,2,\ldots,n\}$.  The idea
is for both parties to run local copies of the tester $Q$ to test the
function $h_{AB}$, communicating with each other as needed to carry
out these simulations.  In more detail, Alice and Bob first use the
public coins to agree on a random string to be used with the tester
$Q$.  Given this shared random string, $Q$ is deterministic.  Alice
and Bob then simulate local copies of $Q$ query-by-query:
\begin{enumerate}

\item Until $Q$ halts:

\begin{enumerate}

\item Let $S \sse \{1,2,\ldots,n\}$ be the next query that $Q$ asks
  about the function $h_{AB}$.\footnote{As usual, we're not distinguishing
    between subsets of $\{1,2,\ldots,n\}$ and their characteristic
    vectors.} 

\item Alice sends $(-1)^{|S \cap A|}$ to Bob.

\item Bob sends $(-1)^{|S \cap B|}$ to Alice.

\item Both Alice and Bob evaluate the function $h_{AB}$ at $S$, and
  give the result to their respective local copies of $Q$.

\end{enumerate}

\item Alice (or Bob) declares ``disjoint'' if $Q$ accepts the function
  $h_{AB}$, and ``not disjoint'' otherwise.

\end{enumerate}
We first observe that the protocol is well defined.  Since Alice and
Bob use the same random string and simulate $Q$ in lockstep, both
parties know the (same) relevant query $S$ to $h_{AB}$ in every
iteration, and thus are positioned to send the relevant bits
($(-1)^{|S \cap A|}$ and $(-1)^{|S \cap B|}$) to each other.
Given these bits, they are able to evaluate $h_{AB}$ at the point $S$
(even though Alice doesn't know $B$ and Bob doesn't know $A$).

The communication cost of this protocol is twice the number of queries
used by the tester $Q$, and it doesn't matter if $Q$ is adaptive or
not.  Correctness of the protocol follows immediately
from Lemma~\ref{l:bbm}, with the error of the protocol the same as that of
the tester~$Q$.  Because every randomized protocol (with two-sided
error) for \udisj has communication complexity $\Omega(n)$, we
conclude that every (possibly adaptive) tester~$Q$ with two-sided
error requires $\Omega(n)$ queries for monotonicity testing.  This
completes the proof of Theorem~\ref{t:bbm}.

\vspace{.1in}
\noindent
\begin{prevproof}{Lemma}{l:bbm}
For part~(i), assume that $A \cap B = \emptyset$ and consider any set
$S \sse \{1,2,\ldots,n\}$ and $i \notin S$.  
Because $A$ and $B$ are disjoint, $i$ does not belong to at least one
of $A$ or $B$.
Recalling~\eqref{eq:h}, in the expression $h_{AB}(S \cup \{i\}) -
h_{AB}(S)$, the difference between the first terms is~2, the
difference in either the second terms (if $i \notin A$) or in the third
terms (if $i \notin B$) is zero, and the difference in the remaining
terms is at least~-2.  Thus, $h_{AB}(S \cup \{i\}) -               
h_{AB}(S) \ge 0$ for all $S$ and $i \notin S$, and $h_{AB}$ is
monotone.

For part~(ii), let $A \cap B = \{i\}$.
For all $S \sse \{1,2,\ldots,n\} \sm \{i\}$
such that $|S \cap A|$ and $|S \cap B|$ are both even,
$h_{AB}(S \cup \{i\}) - h_{AB}(S) = -2$.
If we choose such an $S$ uniformly at random, then
$\prob{|S \cap A| \text{ is even}}$ is~1 (if $A = \{i\}$) or~$\tfrac{1}{2}$
(if $A$ has additional elements, using the Principle of Deferred
Decisions).  Similarly, $\prob{|S \cap B| \text{ is even}} \ge
\tfrac{1}{2}$.
Since no potential element of $S \sse \{1,2,\ldots,n\} \sm \{i\}$ is
a member of both $A$ and $B$, these two events are independent and
hence 
$\prob{|S \cap A|, |S \cap B| \text{ are both even}} \ge \tfrac{1}{4}$.
Thus, for at least $\tfrac{1}{4} \cdot 2^{n-1} = 2^n/8$ choices of
$S$, $h_{AB}(S \cup \{i\}) < h_{AB}(S)$.  Since all of these
monotonicity violations involve different values of $h_{AB}$ --- in
the language of the proof of Theorem~\ref{t:ub}, they are all edges of
the $i$th slice of the hypercube --- fixing all of them requires
changing $h_{AB}$ at $2^n/8$ values.  We conclude that $h_{AB}$ is
$\tfrac{1}{8}$-far from a monotone function.
\end{prevproof}

\subsection{Extension to Smaller Ranges}

Recalling the definition~\eqref{eq:h} of the function $h_{AB}$, we see
that the proof of Theorem~\ref{t:bbm} establishes a query complexity
lower bound of $\Omega(n)$ provided the range~$R$ has
size~$\Omega(n)$.  It is not difficult to extend the lower bound to
ranges of size $\Omega(\sqrt{n})$.  The trick is to consider a
``truncated'' version of $h_{AB}$, 
call it $h'_{AB}$, where values of $h_{AB}$
less than $n - c\sqrt{n}$ are rounded up to $n-c\sqrt{n}$ and values
more than $n + c\sqrt{n}$ are rounded down to $n+c\sqrt{n}$.
(Here~$c$ is a sufficiently large constant.)  The range of $h'_{AB}$
has size $\Theta(\sqrt{n})$ for all $A,B \sse \{1,2,\ldots,n\}$.

We claim that Lemma~\ref{l:bbm} still holds for $h'_{AB}$, with the
``$\tfrac{1}{8}$'' in case~(ii) replaced by ``$\tfrac{1}{16}$;''
the new version of Theorem~\ref{t:bbm} then follows.  Checking that
case~(i) in 
Lemma~\ref{l:bbm} still holds is easy: truncating a monotone function
yields another monotone function.  For case~(ii), it is enough to show
that $h_{AB}$ and $h'_{AB}$ differ in at most a $\tfrac{1}{16}$ 
fraction of
their entries; since Hamming distance satisfies the triangle
inequality, this implies that $h'_{AB}$ must be $\tfrac{1}{16}$-far
from the set of monotone functions.  Finally, consider choosing $S
\sse \{1,2,\ldots,n\}$ uniformly at random: up to an ignorable
additive term in $\{-2,-1,0,1,2\}$, the value of $h_{AB}$ lies in $n
\pm c\sqrt{n}$ with probability at least $\tfrac{15}{16}$,
provided~$c$ is a sufficiently large constant (by Chebyshev's
inequality). 
This implies that $h_{AB}$ and $h'_{AB}$ agree on all but a
$\tfrac{1}{16}$ fraction of the domain, completing the proof.

For even smaller ranges $R$, the argument above can be augmented by a
padding argument to prove a query complexity lower bound of
$\Omega(|R|^2)$; see the Exercises.

\section{A General Approach}

It should be clear from the proof of Theorem~\ref{t:bbm} that its
method of deriving property testing lower bounds from communication
complexity lower bounds is general, and not particular to the problem
of testing monotonicity.  The general template for deriving lower
bounds for testing a property $\P$ is:
\begin{enumerate}

\item Map inputs $\inputs$ of a communication problem $\Pi$ with
  communication complexity at least $c$ to a function $h_{\inputs}$
  such that:

\begin{enumerate}

\item 1-inputs $\inputs$ of $\Pi$ map to functions $h_{\inputs}$
  that belong to $\P$;

\item 0-inputs $\inputs$ of $\Pi$ map to functions $h_{\inputs}$
  that are $\eps$-far from $\P$.

\end{enumerate}

\item Devise a communication protocol for evaluating $h_{\inputs}$
  that has cost $d$.  (In the proof of Theorem~\ref{t:bbm}, $d=2$.)

\end{enumerate}
Via the simulation argument in the proof of Theorem~\ref{t:bbm},
instantiating this template yields a query complexity lower bound of
$c/d$ for testing the property $\P$.\footnote{There is an analogous
  argument that uses one-way communication complexity lower bounds to
  derive query complexity lower bounds for {\em non-adaptive} testers;
  see the Exercises.}

There are a number of other applications of this template to various
property testing problems, such as testing if a function admits a
small representation (as a sparse polynomial, as a small decision
tree, etc.).  See~\cite{BBM11,G13} for several examples.

A large chunk of the property testing literature is about testing
graph properties~\cite{GGR98}.  
An interesting open question is if communication complexity can be
used to prove strong lower bounds for such problems.



\end{document}
